\documentclass[letter,11pt]{article}

\usepackage{tikz}
\usetikzlibrary {shapes.geometric}

\usepackage{mathtools}
\usepackage{subcaption}
\usepackage{enumerate}
\usepackage{amsmath}
\usepackage{amsfonts}
\usepackage{graphicx}
\usepackage{tcolorbox}
\usepackage{amssymb}
\usepackage{multirow}
\usepackage{geometry}
\usepackage{soul}
\usepackage{colortbl}
\usepackage{wrapfig}
\usepackage{algorithm}
\usepackage{algorithmicx}
\usepackage{algpseudocode}
\usepackage{amsthm}
\usepackage{todonotes}
\usepackage{mathtools}
\usepackage{mathrsfs}

\usepackage[pdftex, plainpages = false, pdfpagelabels, 
                 bookmarks=false,
                 bookmarksopen = true,
                 bookmarksnumbered = true,
                 breaklinks = true,
                 linktocpage,
                 pagebackref,
                 colorlinks = true,  
                 linkcolor = blue,
                 urlcolor  = blue,
                 citecolor = red,
                 anchorcolor = green,
                 hyperindex = true,
                 hyperfigures
                 ]{hyperref}

\usepackage{thmtools} 
\usepackage{thm-restate}
\usepackage{bm}

\newcommand{\ilog}[2]{{\sf ilog}(#1,#2)}

\newcommand{\threecoloring}{{\sc $3$-Coloring}}

\algtext*{EndWhile}
\algtext*{EndIf}
\algtext*{EndFor}

\newtheorem{lemma}{Lemma}
\newtheorem*{remark}{Remark}
\numberwithin{lemma}{section}
\newtheorem{theorem}[lemma]{Theorem}
\newtheorem{theorem*}[lemma]{Theorem*}
\newtheorem{corollary}[lemma]{Corollary}

\newtheorem{fact}[lemma]{Fact}
\newtheorem{claim}[lemma]{Claim}
\newtheorem{proposition}[lemma]{Proposition}

\title{Fine-Grained Bounds for Courcelle's Theorem}
\author{Daniel Lokshtanov\footnote{University of California, Santa Barbara, USA. Email: \texttt{daniello@ucsb.edu}} \and Fahad Panolan\footnote{University of Leeds, UK. Email: \texttt{F.panolan@leeds.ac.uk}} \and Saket Saurabh\footnote{Institute of Mathematical Sciences, Chennai, India. Email: \texttt{saket@imsc.res.in}} \and Jie Xue\footnote{New York University Shanghai, China. Email: \texttt{jiexue@nyu.edu}} \and Meirav Zehavi\footnote{Ben-Gurion University, Israel. Email: \texttt{meiravze@bgu.ac.il}}}
\date{}

\def\subparagraph{\paragraph}

\begin{document}


\maketitle

\begin{abstract}
Courcelle's theorem states that there exists an algorithm that takes as input a graph $G$ of treewidth at most $t$ and a MSO formula $\phi$, and determines whether $G$ satisfies $\phi$ in time $f(\phi,t) \cdot n$.
It is folklore that the the function $f$ contains a tower of exponentials whose height depends as a linear function of the number of quantifier alternations of the input formula $\phi$.
A classic reduction of Frick and Grohe shows that, assuming the Exponential Time Hypothesis (ETH), the linear growth of the height of the tower is unavoidable.
Nevertheless, there is still a huge gap between existing upper and lower bounds -- after all, there is quite a difference between a single exponential and a double exponential running time. 
In addition, this only gives us a very coarse understanding in the time complexity of Courcelle's theorem.
In this paper, we prove a fine-grained version of Courcelle's theorem with nearly ETH-tight dependence on the treewidth parameter $t$ and the quantifier structure of $\phi$ (specifically, the number of first order and second order variables in each quantifier alternation block).
\end{abstract}



\section{Introduction}
Courcelle's theorem~\cite{courcelle1990monadic,BoriePT92} is one of the most celebrated algorithmic meta-theorems, which shows that every graph property expressible in monadic second-order (MSO) logic can be checked in linear time on graphs of bounded treewidth.
Formally, it states that given an $n$-vertex graph $G$ and an MSO formula $\phi$, one can test whether $G$ satisfies $\phi$ in $f(\phi,t) \cdot n$ time for some (computable) function $f$, where $t=\mathbf{tw}(G)$ denotes the treewidth of $G$.

Due to the generality of the MSO logic and the importance of treewidth as a structural graph parameter, Courcelle's theorem has brought a profound impact on the theory of parameterized complexity.
Specifically, it implies that a large variety of NP-hard graph problems are fixed-parameter tractable (FPT) parameterized by treewidth.
In addition, for parameterized graph problems that can be defined using MSO formulas depending on the problem parameter $k$, Courcelle's theorem results in FPT algorithms parameterized by both treewidth and $k$.

While the running time of the algorithm in Courcelle's theorem is linear in $n$ (which is optimal), its dependency on $\phi$ and $t$ is rather intricate and less understood.
It was known~\cite{courcelle1990monadic,kneis2009practical,lampis2023first} that the function $f(\phi,t)$ in the bound is not elementary and contains a tower of exponentials whose height depends on $\phi$.
The seminal work of Frick and Grohe~\cite{frick2004complexity} proved that, assuming the ETH, having such a tower of exponentials in the time complexity is unavoidable even when $\phi$ is a first-order (FO) logic formula and $G$ is a tree.
These results, however, only provide us a very coarse understanding in what $f$ should look like in the worst case.
Therefore, a more ``fine-grained'' study for the function $f(\phi,t)$ in Courcelle's theorem turns out to be appealing.
While a lot of efforts have been made to understand the optimal time complexity for specific instances of MSO-expressible problems over years~\cite{DBLP:journals/talg/LokshtanovMS18,DBLP:journals/siamcomp/LokshtanovMS18,DBLP:journals/talg/CyganNPPRW22,DBLP:conf/stacs/HartmannM25,DBLP:conf/icalp/EsmerFMR24,DBLP:conf/soda/FockeMINSSW23}, little work focused on the general MSO testing problem.

Ideally, one wishes to give some concrete function $f(\phi,t)$ that can describe (either exactly or approximately), for \textit{every} $\phi$ and $t$, the minimum amount of time required to test property $\phi$ on graphs of treewidth $t$.
However, this is unfortunately impossible.
Indeed, it is not difficult to show the undecidability of the following problem under the assumption P$\neq$NP (we sketch a proof in Appendix~\ref{apx-undecidable}): given as input an MSO formula $\phi$, decide whether testing $\phi$ on graphs is polynomial-time solvable or not.
This hardness result implies that there is even no way to characterize the formulas $\phi$ for which the function $f(\phi,t)$ can be made polynomial in $t$.
As such, one cannot hope for any ``reasonable'' bound on $f(\phi,t)$ that is optimal for every individual $\phi$.

Given this frustrating fact, the next best thing one can do towards a fine-grained understanding in Courcelle's theorem is to establish bounds on $f(\phi,t)$ which, while not being optimal for individual formulas, are optimal (or near-optimal) in terms of certain structural parameter $S(\phi)$ of $\phi$ that can capture the ``complexity'' of $\phi$ as precisely as possible; here $S(\phi)$ can be a number or something more complicated.
More formally, we want to shoot for the best function $f(S,t)$, which gives the time required to test an MSO property $\phi$ with $S(\phi) = S$ on graphs of treewidth $t$.



Motivated by the above discussion, in this paper, we initiate a systematic study on the complexity of Courcelle's theorem, with the goal of understanding its dependency on $\phi$ and $t$ in a fine-grained way.
We prove (almost) matching upper and lower bounds for the complexity of Courcelle's theorem in terms of the \textit{quantifier structure} of $\phi$ (the detailed results will be discussed in the next section).
We expect this work to be a starting point of the long-term research towards thoroughly understanding the time complexity of Courcelle's theorem.

\paragraph{Other related work.}
We briefly summarize the existing work regarding algorithms and lower bounds on bounded-treewidth graphs.
A more detailed discussion can be found in Appendix~\ref{apx-literature}.

A large body of work~\cite{DBLP:journals/talg/LokshtanovMS18,DBLP:journals/siamcomp/LokshtanovMS18,DBLP:journals/talg/CyganNPPRW22,DBLP:conf/stacs/HartmannM25,DBLP:conf/icalp/EsmerFMR24,DBLP:conf/soda/FockeMINSSW23} focused on obtaining tight bounds for classic (MSO-expressible) NP-hard problems parameterized by treewidth.
Problems solvable in $2^{O(t)} \cdot n^{O(1)}$ time include \textsc{Vertex Cover}, \textsc{Dominating Set}, \textsc{$q$-Coloring}, \textsc{Max Cut}, \textsc{Hamiltonian Cycle}, \textsc{Steiner Tree} etc.
Problems solvable in $2^{O(t \log t)} \cdot n^{O(1)}$ time include \textsc{Cycle Packing}, \textsc{Chromatic Number}, etc.
The time bounds for these problems are all known to be tight.
Several works focused on \emph{classes} of problems. 
Prominent examples arise around \textsc{Dominating Set}, yielding optimal bounds for variants such as \textsc{$r$-Domination}~\cite{DBLP:conf/iwpec/BorradaileL16} and \textsc{$(\sigma,\rho)$-Domination}~\cite{DBLP:conf/soda/FockeMINSSW23}, among many others~\cite{DBLP:conf/stacs/HartmannM25,DBLP:conf/icalp/EsmerFMR24}.

Besides, there has been extensive work on identifying \emph{subclasses} of bounded-treewidth graphs on which every MSO/FO property is decidable with running time bounded by an \emph{elementary} function, in contrast to the non-elementary bounds implied by Courcelle’s theorem.
For example, Lampis~\cite{DBLP:journals/algorithmica/Lampis12} showed that MSO properties can be decided in double exponential time on graphs with bounded vertex cover number, and FO properties can be decided in single exponential time on graphs with bounded max-leaf number.
In a recent work, Lampis~\cite{DBLP:conf/icalp/Lampis23} proved that FO properties can be decided in elementary-function-bounded time on graphs with bounded \emph{pathwidth} (as opposed to treewidth).
\textcolor{blue}{Gajarský et al.~\cite{GajarskyLICS} characterised subgraph-closed graph classes for which the FO-model checking problem is fixed-parameter tractable with an elementary dependency on the formula size.} 
Gajarský and Hliněný~\cite{DBLP:journals/corr/abs-1204-5194} showed that in the universe of colored trees of fixed height, any MSO-expressible problem with \(r\) quantifiers admits a finite family of kernels whose size is bounded by an elementary function of \(r\) and the number of colors.

Kreutzer and Tazari~\cite{DBLP:conf/soda/KreutzerT10} showed that for graph classes with mild closure properties, the presence of graphs with sufficiently large treewidth (already polylogarithmic in \(n\)) precludes polynomial-time model checking for \(\mathsf{MSO}_2\) formulas. In this sense, bounded treewidth forms the effective boundary for tractable \(\mathsf{MSO}_2\) model checking (also see~\cite{DBLP:conf/lics/KreutzerT10}).

\subsection{Our results}

In order to discuss our results, we first need to define formally the ``quantifier structure'' of an MSO formula under consideration.
For simplicity of exposition, in this section, we only consider formulas in \textit{prenex normal form} (PNF), which requires all quantifiers to appear at the beginning of the formula.
The definition and our results apply to general MSO formulas\footnote{While every MSO formula can be modified to PNF, such a modification might increase the quantifier rank of the formula and thus makes the formula to have a more complex quantifier structure. As such, we do not make such a modification in our algorithms.} as well.

Consider an MSO formula $\phi$ in PNF, which consists of a sequence $\mathcal{Q}$ of quantifiers followed by a quantifier-free MSO formula on those quantified variables.
We can describe the quantifier structure of $\phi$ by considering the following three aspects of $\mathcal{Q}$.

\begin{itemize}
    \item \textbf{Quantifier alternations.}
    The number of quantifier alternations in an MSO formula turns out to be an important parameter to measure its complexity, which influences the height of the tower of exponentials in the running time of Courcelle's theorem~\cite{courcelle1990monadic,frick2004complexity}.
    As $\phi$ is in PNF, this parameter can be simply defined as the smallest integer $d \in \mathbb{N}$ such that one can partition the quantifier sequence $\mathcal{Q}$ into $d$ consecutive ``blocks'' $\mathcal{Q}_1,\dots,\mathcal{Q}_d$ each of which contains quantifiers of the same $(\exists,\forall)$-type.
    
    \item \textbf{Number of quantifiers.}
    Naturally, the number of quantifiers in the formula also captures how complex it is.
    Suppose $\mathcal{Q}$ is already partitioned into blocks $\mathcal{Q}_1,\dots,\mathcal{Q}_d$ according to the quantifier alternations (assume $\mathcal{Q}_1,\dots,\mathcal{Q}_d$ are sorted from left to right, or from outermost to innermost).
    Instead of simply considering the total number of quantifiers in $\mathcal{Q}$, we should consider the number of quantifiers in each individual block $\mathcal{Q}_i$.
    Note that the roles of $\mathcal{Q}_1,\dots,\mathcal{Q}_d$ in $\phi$ are not exchangeable, and as we will see later in our results, the numbers of quantifiers in $\mathcal{Q}_1,\dots,\mathcal{Q}_d$ indeed contribute to the time complexity in different ways.
    
    \item \textbf{Variable types.}
    In an MSO formula, there are two types of variables, i.e., vertex variables and set variables, which correspond to a single vertex and a set of vertices in the graph, respectively.
    An MSO formula with only vertex variables is just an FO formula.
    In many cases, checking FO graph properties is substantially easier than checking MSO graph properties.
    For example, testing a fixed FO formula on (general) graphs can always be done in polynomial time, while the problem of testing a fixed MSO formula can be NP-hard (e.g., $3$-\textsc{Coloring}).
    As such, when considering the quantifiers in $\phi$, we should distinguish the ones for vertex variables (called \textit{vertex quantifiers}) and the ones for set quantifiers (called \textit{set quantifiers}).
    For each block $\mathcal{Q}_i$, we use $k_i$ to denote the number of vertex quantifiers in $\mathcal{Q}_i$ and use $s_i$ to denote the number of set quantifiers in $\mathcal{Q}_i$.
    Note that the ordering of the $k_i+s_i$ quantifiers in $\mathcal{Q}_i$ does not matter, as all these quantifiers are of the same $(\exists,\forall)$-type.
\end{itemize}

Based on the above discussion, we can now naturally represent the quantifier structure of $\phi$ using the sequence $S = ((k_1,s_1),\dots,(k_d,s_d))$.
We call $\phi$ an \textit{$S$-MSO formula}.
Formally, an MSO formula in PNF is an $S$-MSO formula if its quantifier sequence can be partitioned into $d$ consecutive blocks $\mathcal{Q}_1,\dots,\mathcal{Q}_d$ (sorted from left to right) such that each block $\mathcal{Q}_i$ consists of $k_i$ vertex quantifiers and $s_i$ set quantifiers of the same $(\exists,\forall)$-type.
(One can further require the quantifiers in adjacent blocks to have different $(\exists,\forall)$-type, but this is not necessary.)
The notion of $S$-MSO formulas can be easily generalized to general MSO formulas.
The main focus of this paper is to understand the function $f(\phi,t)$ in Courcelle's theorem in terms of the sequence $S$ representing the quantifier structure of $\phi$.
We formulate this as the following parameterized problem.

\begin{tcolorbox}[colback=gray!5!white,colframe=gray!75!black]
        \textsc{MSO Testing} \hfill \textbf{Parameter:} $t \in \mathbb{N}$ and $S = ((k_1,s_1),\dots,(k_d,s_d))$
        \vspace{0.2cm} \\
        \textbf{Input:} A graph $G$ with $\mathbf{tw}(G) \leq t$ and an $S$-MSO formula $\phi$
        \vspace{0.1cm} \\
        \textbf{Goal:} Decide whether $G$ satisfies $\phi$ or not
\end{tcolorbox}

A particularly important special case of \textsc{MSO Testing} is the \textsc{FO Testing} problem, in which $\phi$ is an FO formula.
Since FO formulas are just an MSO formulas without set variables, we can also represent their quantifier structures using sequences.
Formally, for $S = (k_1,\dots,k_d)$, we define an $S$-FO formula as a $S^+$-MSO formula where $S^+ = ((k_1,0),\dots,(k_d,0))$.

\begin{tcolorbox}[colback=gray!5!white,colframe=gray!75!black]
        \textsc{FO Testing} \hfill \textbf{Parameter:} $t \in \mathbb{N}$ and $S = (k_1,\dots,k_d)$
        \vspace{0.2cm} \\
        \textbf{Input:} A graph $G$ with $\mathbf{tw}(G) \leq t$ and an $S$-FO formula $\phi$
        \vspace{0.1cm} \\
        \textbf{Goal:} Decide whether $G$ satisfies $\phi$ or not
\end{tcolorbox}

Our main results are (almost) matching upper and lower bounds for the complexity of solving \textsc{MSO/FO Testing} (and their variants/extensions).
Below we discuss these results in detail. 

\paragraph{Upper bounds.}
To present our algorithmic results, we need to first introduce some notations.
We define $\exp^{(0)}(x) = x$ and $\exp^{(i)}(x) = 2^{\exp^{(i-1)}(x)}$ for all integer $i \geq 1$.
In other words, $\exp^{(i)}(x)$ is a tower of exponentials of base $2$ and height $i$ with $x$ on top of it.
The notation $\hat{O}(\cdot)$ denotes the big-$O$ that hides subpolynomial factors, i.e., $\hat{O}(x) = x^{1+o(1)}$.
In other words, $\hat{O}(x)$ describes the bound that is \textit{almost} linear in $x$.
For a graph $G$ and a number $t \in \mathbb{N}$, we denote by $T_\mathsf{td}(G,t)$ the time required for computing a tree decomposition $G$ with width $t^{O(1)}$, provided that $\mathbf{tw}(G) \leq t$.
Our main algorithmic result is the following.

\begin{restatable}{theorem}{thmMSO} \label{thm-MSO}
    There exists an algorithm for \textsc{MSO Testing} that solves an instance $(G,t,S,\phi)$ with $|V(G)| = n$ and $S = ((k_1,s_1),\dots,(k_d,s_d))$ in time $T_\mathsf{td}(G,t)+$
    {\small\begin{equation*}
         f(d) \cdot \left(\sum_{i=1}^d \sum_{j=1}^{i-1} \exp^{(i)}(\hat{O}(s_j k_i)) + \sum_{i=1}^d 2^{O(s_d k_i)} + \sum_{i=1}^d \sum_{j=1}^i \exp^{(i)}(\hat{O}(t_j k_i)) + \sum_{i=1}^d \exp^{(i)}(\hat{O}(k_i \log t)) \right) \cdot (n+|\phi|^{O(1)})
    \end{equation*}}for a computable function $f$, where $t_j = \min\{k_j,t\}$ for $j \in [d]$ and $|\phi|$ is the description size of $\phi$.
\end{restatable}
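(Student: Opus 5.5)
We compute a tree decomposition of width $t^{O(1)}$ in time $T_\mathsf{td}(G,t)$, convert it into a nice (rooted, binary) decomposition with $O(n)$ nodes, and then run a bottom-up dynamic program over \emph{types}. The idea is to generalize the Feferman--Vaught / Ehrenfeucht--Fraïssé-game approach behind Courcelle's theorem, but to count the number of types block by block instead of by total quantifier rank.

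For a node $x$ with bag $B_x$ and a tuple of already-fixed variables, we define the \emph{$S$-type} of the boundaried graph $(G_x,B_x)$ inductively over the blocks. The level-$d$ type, for the innermost block, is the set of quantifier-free behaviours: how the $k_d+s_d$ innermost variables and the outer variables relate to one another and to $B_x$. The level-$i$ type is then the set of level-$(i+1)$ types realizable by extending with some choice of the $k_i$ vertex variables and $s_i$ set variables of block $\mathcal{Q}_i$. We prove a composition lemma: the type of a join or introduce/forget node is computable from the types of its children. This holds because a block of quantifiers of equal polarity distributes over a gluing along a separator $B_x$. We also prove that the type at the root determines whether $G\models\phi$, and that the types of leaves are computable directly. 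The correctness argument is routine once types are defined to be closed under this gluing.

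The real content is counting the types at each level, since the running time per node is polynomial in that number. At level $i$ a type is a subset of level-$(i+1)$ types indexed by where the $k_i$ vertices sit. We aim to show the following.

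1. A vertex variable either lies in $G_x\setminus B_x$, in which case only its interaction with earlier quantified vertices matters, or it lies in $B_x$. Placements relative to the boundary therefore give at most $(t_j+k_i)^{O(k_i)}$ possibilities, and at most $t^{O(k_i)} = 2^{O(k_i\log t)}$ when vertices hit the bag. This yields the terms $\exp^{(i)}(\hat O(t_j k_i))$ and $\exp^{(i)}(\hat O(k_i\log t))$. Here $t_j=\min\{k_j,t\}$ arises because the relevant part of the boundary seen by block $j$ has size at most $\min\{k_j,t\}$.

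2. Each earlier set variable (from block $j<i$) refines the structure seen by the $k_i$ vertices by one bit each. This gives a factor of roughly $2^{s_j k_i}$ at the base of the tower, hence the term $\exp^{(i)}(\hat O(s_jk_i))$.

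3. The innermost set variables contribute only $2^{O(s_dk_i)}$, since they are never quantified over further.

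Taking logarithms $i$ times, the tower height matches the block index $i$, and the hidden $f(d)$ absorbs the constants of each composition step.

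The main obstacle is step (1), and specifically keeping the exponent at the top of the tower \emph{almost linear} in $k_i t_j$ rather than in a product over all blocks. A naive type definition lets the number of types at level $i$ depend multiplicatively on all deeper levels, which pushes products like $k_ik_{i+1}\cdots$ into the exponent. The first thing to try is a normalization/canonization of types, in the spirit of Gaifman locality within the bag: identify two extensions whenever the $k_i$ chosen vertices induce the same pattern with respect to a small set of at most $t_j$ representative boundary vertices. The subpolynomial slack hidden by $\hat O$ would then pay for the bookkeeping. Finally, we would check that evaluating $\phi$'s quantifier-free part at the leaves costs $|\phi|^{O(1)}$, which gives the additive $|\phi|^{O(1)}$ term.
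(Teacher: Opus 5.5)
\paragraph{Where the proposal breaks.} Your overall architecture matches the paper's. Your block-wise types are the paper's $S$-signatures of boundaried graphs, with the bag encoded by giving its vertices distinct colors. The composition lemma across separations, the bottom-up DP on a nice tree decomposition, and reading off $G\models\phi$ from the root type are all as in the paper. The gap is in the counting step. You correctly call it the real content, but you do not establish it, and your step~(1) is false as stated.

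A vertex placed in $G_x\setminus B_x$ is not characterized by its relation to earlier quantified vertices and to the boundary, because deeper blocks see all of $G_x$. For example, a leaf and a vertex of degree $3$, both non-adjacent to $B_x$, are told apart by a next-block formula asking for two distinct neighbours. So the $n^{k_i}$ interior placements do not collapse to $(t_j+k_i)^{O(k_i)}$ classes. For the same reason, your proposed Gaifman-style canonization relative to a few representative boundary vertices is not a sound equivalence for MSO types, which are not Gaifman-local. Bounding by the bag alone gives only the naive count: a level-$i$ type is a subset of level-$(i+1)$ types of graphs carrying $p+k_1+\cdots+k_i$ labels. That puts $k_1$ at the top of a height-$d$ tower, which is exactly the problem you noted.

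\paragraph{The missing idea.} You need to use the bounded treewidth of the whole subgraph $G_x$, not just of its bag, via the propeller decomposition of Lemma~\ref{lem-decomp}. For any set $R$ of at most $r$ labeled vertices, it gives sets $V_0,\dots,V_{6r}$ with $R\subseteq V_0$ and $|V_0|=O(rt)$. Each $V_i\setminus V_0$ is attached to $V_0$ only through $4t$ vertices.

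Lemma~\ref{lem-character} then shows, by repeatedly applying the composition lemma along these separations, what determines the $S'$-signature of a $(p+k)$-labeled graph. It is fixed by the small labeled center $G[V_0]$ together with the $S'$-signatures of the $O(p+k)$ blades. Each blade carries at most $\min\{p+k,4t\}$ labels. This yields the recurrence of Lemma~\ref{lem-recurrence}:
\[
\delta_S(\mathcal{G}_{p,P,t})\le ((p+k)t)^{O((p+k)t^2)}\,(2^s|P|)^{O((p+k)t)}\,\bigl(\Delta_{S'}(\mathcal{G}_{\min\{p+k,4t\},\,P\times\{0,1\}^s\times[4t]_0,\,t})\bigr)^{6(p+k)} .
\]
Here $k$ appears only as the exponent of a power, and the label count passed to the next level is capped at $4t$. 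This is what keeps $k_i$ at tower height $i$. It is also where the $t_j=\min\{k_j,t\}$ terms genuinely arise, from the sparsity of bounded-treewidth extensions and the cap on blade labels. The $\log t$ terms come from the $[4t]_0$ blade colors and the bag coloring. Without this decomposition, or an equivalent argument, your proposal does not prove the stated bound.
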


\smallskip

Using the previous algorithms for computing tree decompositions, e.g.~\cite{bodlaender2016c,fomin2018fully,korhonen2023single}, we can take either $T_\mathsf{td}(G,t) = t^{O(1)} n \log n$ or $T_\mathsf{td}(G,t) = 2^{O(t)} n$.
When $d \geq 2$, if we set $T_\mathsf{td}(G,t) = 2^{O(t)} n$, then $T_\mathsf{td}(G,t)$ is dominated by the other part of the bound of Theorem~\ref{thm-MSO} and thus can be removed from the bound for free.
When $d = 1$, it results in an overhead of either $t^{O(1)} n \log n$ or $2^{O(t)} n$.
Theorem~\ref{thm-MSO} directly implies the following result for \textsc{FO Testing}.

\begin{corollary} \label{cor-FO}
    There exists an algorithm for \textsc{FO Testing} that solves an instance $(G,t,S,\phi)$ with $|V(G)| = n$ and $S = (k_1,\dots,k_d)$ in time $T_\mathsf{td}(G,t)+$
    {\small \begin{equation*}
         f(d) \cdot \left(\sum_{i=1}^d \sum_{j=1}^i \exp^{(i)}(\hat{O}(t_j k_i)) + \sum_{i=1}^d \exp^{(i)}(\hat{O}(k_i \log t)) \right) \cdot (n+|\phi|^{O(1)})
    \end{equation*}}for a computable function $f$, where $t_j = \min\{k_j,t\}$ for $j \in [d]$ and $|\phi|$ is the description size of $\phi$.
\end{corollary}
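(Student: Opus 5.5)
This follows directly from Theorem~\ref{thm-MSO} by specialization. An $S$-FO formula with $S=(k_1,\dots,k_d)$ is, by definition, an $S^+$-MSO formula with $S^+=((k_1,0),\dots,(k_d,0))$. So an instance $(G,t,S,\phi)$ of \textsc{FO Testing} is literally an instance $(G,t,S^+,\phi)$ of \textsc{MSO Testing}, with the same graph, treewidth bound, formula and description size $|\phi|$. The plan is to run the algorithm of Theorem~\ref{thm-MSO} on this instance. Correctness is inherited, and it remains to evaluate the running-time bound with $s_1=\dots=s_d=0$.

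Substituting $s_j=0$ for all $j$ affects the four groups of terms in the bound of Theorem~\ref{thm-MSO} as follows.
\begin{itemize}
\item The first group, $\sum_{i}\sum_{j<i}\exp^{(i)}(\hat{O}(s_jk_i))$, becomes $\sum_i\sum_{j<i}\exp^{(i)}(\hat{O}(0))$. Each summand is a tower of height $i\le d$ over a constant, hence a constant depending only on $d$.
\item The second group, $\sum_i 2^{O(s_dk_i)}$, becomes $\sum_i 2^{O(0)}=O(d)$.
\item The third and fourth groups do not involve the $s_j$ and survive unchanged. The parameters $t_j=\min\{k_j,t\}$ are defined exactly as in the theorem.
\end{itemize}

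The first two groups together contribute at most $g(d)$ for some computable $g$. There are at most $d^2+d$ such summands, each bounded by a tower of height at most $d$ over a fixed constant. The remaining sum (third plus fourth group) is at least $1$, because it contains, e.g., the $i=1$ term $\exp^{(1)}(\cdot)\ge 1$. Hence
\[
g(d)+\Sigma \le (g(d)+1)\,\Sigma ,
\]
where $\Sigma$ denotes that remaining sum. Setting $f'(d)=f(d)\cdot(g(d)+1)$, which is computable, gives exactly the claimed bound $T_\mathsf{td}(G,t)+f'(d)\cdot\Sigma\cdot(n+|\phi|^{O(1)})$. The additive $T_\mathsf{td}(G,t)$ term is carried over unchanged.

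The only point requiring a moment of care is the $\hat{O}(0)$ convention. Read literally, $x^{1+o(1)}$ at $x=0$ is degenerate, so I would interpret $\hat{O}(s_jk_i)$ as $\hat{O}(\max\{1,s_jk_i\})$, as is implicit in the theorem. Under that reading each vanished term is $\exp^{(i)}(O(1))$, which depends only on $d$ and is therefore absorbed into $f$ as above. Nothing else is needed.
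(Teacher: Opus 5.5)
Your proposal is correct and follows the paper's route exactly: the paper simply states that Theorem~\ref{thm-MSO} directly implies the corollary, which amounts to setting $s_1=\dots=s_d=0$. Your additional bookkeeping spells out what the paper leaves implicit, namely absorbing the vanished $s$-dependent towers into $f(d)$ and reading $\hat{O}(0)$ as a constant.
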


The bound in Theorem~\ref{thm-MSO} is complicated, and as we will see later, it is essentially the best one can hope for.
Before moving to the lower bound part, we briefly discuss the bound in Theorem~\ref{thm-MSO} and how it relies on the various parameters.

First, we consider the parameters $k_1,s_1,\dots,k_d,s_d$, while assuming the treewidth parameter $t$ is a constant.
In this case, the bound in Theorem~\ref{thm-MSO} becomes
\begin{equation*}
    f(d) \cdot \left(\sum_{i=1}^d \sum_{j=1}^{i-1} \exp^{(i)}(\hat{O}(s_j k_i)) + \sum_{i=1}^d 2^{O(s_d k_i)} \right) \cdot (n+|\phi|^{O(1)}),
\end{equation*}
which is the time complexity of the algorithm in Theorem~\ref{thm-MSO} when applied to \textsc{MSO Testing} on trees (or graphs whose treewidth is a constant).
If we further assume that $s_1,\dots,s_d$ are constant numbers, then the bound simply becomes $f(d) \cdot (\sum_{i=1}^d \exp^{(i)}(\hat{O}(k_i))) \cdot (n+|\phi|^{O(1)})$, which is also the time for solving \textsc{FO Testing} on bounded-treewidth graphs.
In other words, the dependency on $k_i$, i.e., the number of vertex quantifiers in the $i$-th block, forms an exponential tower of height $i$ (rather than $d$) with $k_i$ on top of it.
On the other hand, the dependency on $s_1,\dots,s_d$, i.e., the numbers of set quantifiers in the $d$ blocks, is much worse.
If $k_1,\dots,k_d$ are constant numbers, then the bound becomes $f(d) \cdot (\sum_{i=1}^{d-1} \exp^{(d)}(\hat{O}(s_i)) + 2^{\hat{O}(s_d)}) \cdot (n+|\phi|^{O(1)})$.
That says, all of the parameters $s_1,\dots,s_d$, except $s_d$, appear on top of the highest exponential towers, which are of height $d$.
Somewhat counterintuitively, however, the dependency on $s_d$ is single exponential.

Next, we consider the treewidth parameter $t$, while assuming $k_1,s_1,\dots,k_d,s_d$ are constant numbers.
In this case, the bound in Theorem~\ref{thm-MSO} becomes $f(d) \cdot \exp^{(d)}(\hat{O}(\log t))  \cdot (n+|\phi|^{O(1)})$.
In fact, a more careful analysis can give us an improved bound in this case, in which the tower is $\exp^{(d)}(O(\log t))$, i.e., $\exp^{(d-1)}(t^{O(1)})$; see Theorem~\ref{thm-MSOtw}.
Therefore, the dependency on $t$ forms an exponential tower of height $d-1$.
This implies, for example, that for a fixed formula $\phi$ where the number of quantifier alternations is $1$, the algorithm runs in polynomial time in $t$, and when the number of quantifier alternations is $2$ (such as \textsc{Independent Set} and \textsc{$3$-Coloring}), the running time is single exponential in $t$.
Figure~\ref{fig-dependency} gives an intuitive illustration for the dependency of our algorithm on the parameters, showing the level of the highest exponential tower on top of which each parameter appears (as a polynomial).


\begin{figure}
    \centering
    \includegraphics[height=5cm]{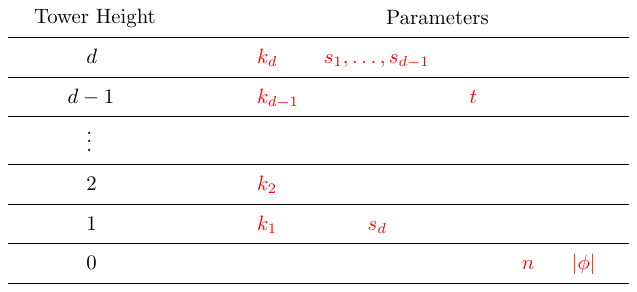}
    \caption{Dependency of the time complexity of our algorithms on various parameters.}
    \label{fig-dependency}
\end{figure}


\paragraph{Lower bounds.}
To complement our algorithmic results, we prove ETH-based lower bounds for \textsc{MSO Testing}, which demonstrates that the time complexity in Theorem~\ref{thm-MSO} is already tight, modulo the subpolynomial factors hidden in the $\hat{O}(\cdot)$-notation.
Specifically, our lower bounds imply that every tower of exponentials in the bound of Theorem~\ref{thm-MSO} is necessary (and the height of tower cannot be decreased).
All of our lower bounds hold even for the case where $d$ is a constant and $\phi$ is in PNF.
For convenience, we say a multivariate function $f(x_1,\dots,x_r)$ is \textit{independent} of the variable $x_i$ if $f(x_1,\dots,x_r) = g(x_1,\dots,x_{i-1},x_{i+1},\dots,x_r)$ for some function $g$.
Similarly, we can define functions independent of multiple variables.

We first consider the towers $\exp^{(i)}(\hat{O}(s_j k_i))$ and $2^{O(s_d k_i)}$.
This part is independent of the treewidth parameter $t$, and we can show the corresponding lower bounds even for the problem on \textit{trees}.
The results are presented in the following two theorems.


\begin{theorem}
\label{thm:intro-LW1}
    Let $d,i,j \in \mathbb{N}$ such that $j<i \leq d$.
    Assuming the ETH, if an algorithm solves \textsc{MSO Testing} on trees with $S = ((k_1,s_1),\dots,(k_d,s_d))$ in time $f(k_1,s_1,\dots,k_d,s_d) \cdot \exp^{(i)}(T(s_j, k_i)) \cdot$ $(n+|\phi|)^{O(1)}$ for a function $f(k_1,s_1,\dots,k_d,s_d)$ independent of $s_j$ and $k_i$, then $T(x,y) = \Omega(xy)$.
\end{theorem}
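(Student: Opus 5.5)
We prove the bound by a reduction from $3$-SAT that uses the ETH in its standard form: no $2^{o(N)}$-time algorithm exists for $3$-SAT with $N$ variables (after sparsification, $O(N)$ clauses). Given $x,y$ and a formula with $N$ variables, we build in polynomial time a tree $T$ and a formula $\phi$ in PNF with $d$ blocks. In $\phi$ we set $s_j=x$ and $k_i=y$, every other $k_\ell,s_\ell$ is a constant, and $\exp^{(i-1)}(xy)$ is roughly $N$ (more precisely $\exp^{(i-1)}(\Theta(xy))\geq N$). Then $G\models\phi$ if and only if the $3$-SAT instance is satisfiable. Suppose $T(x,y)=o(xy)$. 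Choose $x,y$ as slowly growing functions of $N$ so that $f(\text{const},x,\dots,y,\dots)$ stays $2^{o(N)}$. Then $\exp^{(i)}(T(x,y))=2^{\exp^{(i-1)}(o(xy))}=2^{o(N)}$, which contradicts the ETH. So the real work is the construction.

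The construction extends the Frick--Grohe tower encoding on trees. A tree of depth $h$ can encode numbers of height-$h$ tower size: a node's children code the set of smaller numbers whose subtrees appear. FO formulas of size polynomial in $h$ can compare these encodings (equality, bit membership) using a constant number of quantifier alternations per level. Our plan is to encode an assignment of the $N$ variables by a set chosen in the existential-type block $\mathcal{Q}_j$. The $x$ set variables of that block together label each "index node" with one of $2^x$ colors, which multiplies the information per node by $x$. The $y$ vertex variables of block $i$ then select $y$ positions in a gadget tree. Together these positions address an index in a space of size $\exp^{(i-1)}(\Theta(xy))$: each of the $y$ vertices picks one coordinate of a tuple, and the $x$ colors refine every coordinate. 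The intermediate blocks $j+1,\dots,i-1$, which have constant size, let the formula climb from the single-exponential level created by the sets to the tower of height $i-1$, as in Frick--Grohe. The tree consists of gadget subtrees, one per clause and per variable index, each encoding its index in this tower representation. The matrix then states the following: the assignment chosen in block $j$, read at the positions addressed by the $y$ vertices, satisfies every clause. Here block $i$ is universal and quantifies over the clause being checked.

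The main obstacle is making both parameters count multiplicatively. Achieving $\exp^{(i)}(\Omega(x))$ alone or $\exp^{(i)}(\Omega(y))$ alone is easier. Getting the product $xy$ at the top of the tower requires that the $y$ vertex variables and the $x$ set variables jointly address $\exp^{(i-1)}(xy)$ cells, and that the formula size stays polynomial. I would first try a "product-of-digits" encoding: an index is a $y$-tuple of digits, each digit is an $x$-bit color vector of a node's label, and these are iterated through the tower levels. I would also check that equality of such composite indices can be tested with only the constant number of extra alternations allowed, so that $d$ stays fixed. Lastly, the dependence on unrelated parameters must be kept constant, so that $f$ contributes no more than $2^{o(N)}$.
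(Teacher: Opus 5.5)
Your plan has a genuine gap: it puts the guess in the wrong block. You choose the assignment with a set in block $\mathcal{Q}_j$ and give blocks $1,\dots,j-1$ no role. That leaves only blocks $j+1,\dots,i$ below the guess for the identifier tests. Your own count shows the consequence. Climbing from the single-exponential level through blocks $j+1,\dots,i-1$ reaches $\exp^{(i-j)}(\Theta(xy))$, not $\exp^{(i-1)}(\Theta(xy))$, so for $j\ge 2$ there is no contradiction with ETH. Better bookkeeping cannot fix this. With the outer $j-1$ blocks idle, $\phi$ is equivalent to a $((k_j,s_j),\dots,(k_d,s_d))$-MSO formula, and Theorem~\ref{thm-MSO} then places the pair $(s_j,k_i)$ only at height $i-j+1$. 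For the same reason, quantifying the clause in block $i$ is wrong when $i\ge 3$, because the Frick--Grohe lowering of identifiers must happen \emph{inside} the per-clause check. The paper therefore makes the guess in block $1$. It then quantifies the edge $f_s$, its endpoints and the matching nodes universally in block $2$. Lemmas~\ref{lem:ID-level-d} and~\ref{lem:ID-level-d-FirstMSO} nest the identifier tests through the blocks up to $i$.

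The second missing idea concerns the $x$ set variables. They cannot both carry the guessed assignment and act as colours that refine addresses. If the existential player picks those colours freely, identifier comparisons no longer reflect the tree. For example, she can make every variable identifier differ from every literal identifier, and then a universal check ``matching identifiers $\Rightarrow$ literal satisfied'' holds vacuously. In the paper, $W_1,\dots,W_s$ are \emph{forced} digit labels. Each identifier-carrying node is replaced by a path of $\Theta(k)$ nodes holding base-$2^s$ digits. The universal formula $\psi_{\sf set}$, built on auxiliary paths whose length equals the digit, determines the $W$'s uniquely. Uniqueness gives $\exists W\,(\psi_{\sf set}(W)\wedge\chi)\equiv\forall W\,(\psi_{\sf set}(W)\Rightarrow\chi)$. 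That equivalence is exactly what lets the sets sit in block $j$ whatever its quantifier type, while the path vertices sit in block $i$ and $n\le\exp^{(i-1)}(\frac{k-9}{2}s)$. The real guess is made in block $1$ by $O(\alpha)$ vertex variables that select partial $3$-colourings of $V_1,\dots,V_\alpha$ in a tree of size $O(3^{n/\alpha})$. This is why the paper reduces from \textsc{$3$-Coloring} with a $2^{\varepsilon n}$-size instance. Your outline would go through if you move the guess to block $1$ (one extra set variable there would also do, and would keep your tree polynomial) and make the $x$ sets forced labels in block $j$. As written, its central construction, which you leave at ``I would first try'', is not there.
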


\begin{theorem}
\label{thm:intro-LW2}
    Let $d,i \in \mathbb{N}$ such that $i \leq d$.
    Assuming the ETH, if an algorithm solves \textsc{MSO Testing} on trees with $S = ((k_1,s_1),\dots,(k_d,s_d))$ in time $f(k_1,s_1,\dots,k_d,s_d) \cdot 2^{T(s_d, k_i)} \cdot (n+|\phi|)^{O(1)}$ for a function $f(k_1,s_1,\dots,k_d,s_d)$ independent of $s_d$ and $k_i$, then $T(x,y) = \Omega(xy)$.
\end{theorem}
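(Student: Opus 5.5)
Since $T$ is only constrained as a function of two variables, it suffices to exhibit, for every pair $x,y$ with $xy$ large, a family of instances on trees with $s_d=x$, $k_i=y$ and all other parameters fixed constants. I would build these by reduction from an ETH-hard problem, aiming for formula size $|\phi| = (xy)^{O(1)}$ and $n$ polynomial in $N$, where $N$ is the size of the source instance. The parameters are chosen so that $2^{o(xy)}$ becomes $2^{o(N)}$. The natural source is \threecoloring\ (or 3-SAT) on $N$ vertices with $O(N)$ edges, which by ETH has no $2^{o(N)}$ algorithm; after sparsification we may assume $N = \Theta(xy)$. Since the other parameters are constants, $f$ is a constant on the family, and a running time of $2^{o(xy)}\cdot(n+|\phi|)^{O(1)}$ would refute ETH. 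This is exactly the claim $T(x,y)=\Omega(xy)$.

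The encoding I plan to use is as follows. The innermost block has $x$ set variables $X_1,\dots,X_x$ of one quantifier type. Combined with $y$ vertex variables $v_1,\dots,v_y$ from block $i$, the pairs $(X_a, v_b)$ give $xy$ ``bits'' $[v_b\in X_a]$. These bits encode a coloring, or truth assignment, of $N$ source variables, indexed by the pair $(a,b)$. The tree $G$ will be small, for example a star or a short path of length $\Theta(y)$. The vertex variables $v_1,\dots,v_y$ are forced (by the formula, a constant-quantifier-block gadget on the tree) to be $y$ distinct, distinguishable vertices. I would arrange this with a path of $y$ labelled positions, identified via an FO-definable order or via distinct-degree leaves. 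Distinguishing $y$ positions using only $O(1)$ extra quantifiers is where some care is needed. One option is to let the formula say ``$v_1,\dots,v_y$ are distinct'', which is expressible quantifier-free, and to use a symmetric encoding.

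Block $i$ may be $\forall$ or $\exists$ and may sit at any depth below $d$; the other blocks are padded with dummy quantifiers. If block $i$ and block $d$ have the same type, the whole thing becomes existential: guess $v$'s and sets, and check all constraints quantifier-free with a formula of size $\mathrm{poly}(N)$. Each source constraint involves $O(1)$ bits $[v_b\in X_a]$, which are atomic formulas. If the types differ, I would dualize: solve the complement problem, or negate $\phi$ appropriately. Negation swaps all quantifiers but keeps the structure $S$, since ETH-hardness also holds for the complement because the algorithm is deterministic.

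The main obstacle is symmetry. Quantifying over $v_1,\dots,v_y$ with the wrong type, or with freedom to pick any vertices, must not let the adversary or the guesser break the encoding. I would handle this by making the tree rigid: a path with $y$ distinct leaf-markers, where $v_b$ is required to be the $b$-th marker. This requirement is expressed via set-free FO distance conditions on a constant-size gadget, using only the vertex variables already present plus quantifier-free adjacency chains. If that is too costly, I would encode positions by subdivided pendant paths of distinct lengths $b$, checked by quantifier-free adjacency among the $y$ variables themselves. That variant costs $O(y^2)$ formula size, which is still polynomial.
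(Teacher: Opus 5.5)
Your proposal is correct and is essentially the paper's proof: reduce from \textsc{$3$-CNF SAT}, pick $k_i,s_d$ with $N\le k_is_d\le N+O(1)$, and encode each source variable as an atom $v_b\in X_a$ inside a purely existential formula evaluated on any tree with at least $k_i$ vertices (intermediate blocks left empty, so $f$ is a constant). A $2^{o(s_dk_i)}$ algorithm would then solve \textsc{$3$-CNF SAT} in $2^{o(N)}$ time. The symmetry obstacle, the rigid gadgets and the dualization you worry about never arise: $S$ does not fix the block types, and the bits are indexed by variable names rather than by specific vertices, so any tuple of distinct vertices works.
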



Next, we consider the towers $\exp^{(i)}(\hat{O}(t_j k_i))$ and $\exp^{(i)}(\hat{O}(k_i \log t))$, where $t_j = \min\{k_j,t\}$.
This part is independent of the parameters $s_1,\dots,s_d$ describing the numbers of set quantifiers, and we can show the corresponding lower bounds even for \textsc{FO Testing}.

\begin{theorem}
\label{thm:intro-LW3}
    Let $d,i,j \in \mathbb{N}$ such that $j \leq i \leq d$.
    Assuming the ETH, if an algorithm solves \textsc{FO Testing} with $S = (k_1,\dots,k_d)$ in time $f(k_1,\dots,k_d,t) \cdot \exp^{(i)}(T(k_j,t,k_i)) \cdot (n+|\phi|)^{O(1)}$ for a function $f(k_1,\dots,k_d,t)$ that is independent of $t$, $k_j$, and $k_i$, then $T(x,y,z) = \Omega(\min\{x,y\} \cdot z)$.
\end{theorem}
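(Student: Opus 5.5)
We prove Theorem~\ref{thm:intro-LW3} by a reduction from a problem with a known ETH lower bound, namely \textsc{3-SAT} (or, more conveniently, $k \times k$ \textsc{Grid Tiling} or \textsc{Multicolored Clique}, which cannot be solved in $f(k) n^{o(k)}$ time). The reduction produces an instance of \textsc{FO Testing} whose size is polynomial in the source instance. Its quantifier structure is chosen so that an algorithm running in time $\exp^{(i)}(o(\min\{k_j,t\}\cdot k_i))$ would yield a subexponential algorithm.

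\textbf{Step 1: base case $i=1$.} We first settle the case $j=i=1$. Take a clique-type instance with $k$ parts of $N$ vertices each. Encode each vertex by a binary string of length $\log N$. The idea is to let $k_1$ existential vertex variables, each pointing to a gadget vertex, collectively select one vertex per part. Each gadget carries about $t$ bits of local information, packed via a treewidth-$t$ attachment, so one quantified vertex can address roughly $2^{\min\{k_1,t\}}$ choices. We tune the parameters so that $N \approx 2^{t}$, giving $n^{\Theta(k)}$ candidate selections. With $k_1 \approx k$ and $t \approx \log N$, a $2^{o(t k_1)}$ algorithm then runs in $N^{o(k)}$ time, contradicting ETH. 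The $\min$ with $k_j$ enters because $t$ only helps up to the number of variables available to ``read'' the $t$ bits: a vertex chosen in block $j$ can be tested against at most $k_j$ bag positions. So we also build a variant in which the information per selected vertex is decoded using $k_j$ auxiliary variables in block $j$, and treewidth $t \ge k_j$ is used.

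\textbf{Step 2: lifting the tower.} To raise the height to $i$, we use the classic Frick--Grohe style encoding. Numbers up to $\exp^{(i-1)}(m)$ are represented as trees of height $O(i)$, where a node encodes a number through the set of its children's numbers. With a constant number of quantifier alternations, FO formulas of size $f(i)$ can test equality and adjacency of such encodings. Because these formulas reuse a bounded number of variables per block, their contribution goes into $f(d)$, not into $k_i$ or $k_j$. We then attach to the instance from Step 1 a universe of $\exp^{(i-1)}(\cdot)$ encoded elements. The $k_i$ variables of block $i$ now range over encoded objects whose count is about $\exp^{(i-1)}(\min\{k_j,t\})$, while the earlier blocks $1,\dots,i-1$ are spent on verification. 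A $\exp^{(i)}(o(\min\{k_j,t\} k_i))$ algorithm would then collapse to the forbidden bound of Step 1 applied to a source instance of size $\exp^{(i-1)}(\cdot)$. The $n$ is polynomial in that size, since the encoding trees have size about $\exp^{(i-1)}(m)$.

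\textbf{Main obstacle.} The hard part is doing the treewidth packing and the tower lifting simultaneously. The encoding trees must be treewidth-$1$ attachments to a treewidth-$t$ core. Moreover, the $k_j$ variables in an \emph{earlier} block $j<i$ must be able to read $t$ bits that influence the choice space of block $i$. I would first try to let the block-$j$ variables pick a bag of the core, i.e.\ a subset of $\min\{k_j,t\}$ special vertices. The block-$i$ variables would then pick encoded objects whose leaves are adjacent to subsets of that bag, so that the choice space per block-$i$ variable is $\exp^{(i-1)}(2^{\min\{k_j,t\}})$-like. Finally, I would verify the exponent bookkeeping so that the total is exactly $\exp^{(i)}(\Theta(\min\{k_j,t\} k_i))$.
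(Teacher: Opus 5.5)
\paragraph{Main gap: Step~2.} Your Step~2 does not work as described, and it is where all the content for $i\ge 2$ lies.

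The statement ties the height of the tower to the \emph{index} of the block holding the $k_i$ variables. So the lifting must cost exactly one quantifier block per exponential level.

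The Frick--Grohe set-of-children encoding does not give that. Equality of two height-$h$ codes says that every child of one has an equal child of the other, and vice versa. That is a $\forall\exists$ prefix in front of the height-$(h-1)$ test, so each level costs two alternations. Blocks $2,\dots,i$ then lift you by only about $(i-1)/2$ exponentials. This is exactly the $\exp^{(d/2-O(1))}$ loss the paper says it must overcome, so ``a constant number of alternations depending on $i$'' is not enough.

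The missing idea is the paper's identifier test (Lemmas~\ref{lem:L1-ex-IDtest}, \ref{lem:L1-uni-IDtest} and~\ref{lem:ID-level-d}). Write $\mathsf{id}(z)\in[n]$ in binary, and attach to each bit a position node whose own id lies in $[\lceil\log n\rceil]$. Then express $\mathsf{id}(z)=\mathsf{id}(y)$ as ``for all bit-carriers of $z$ and of $y$, equal positions imply equal bits''. The lower-level position test occurs \emph{negated}, so it flips quantifier type and merges with the surrounding block. Concretely, the bottom test (a path of digit vertices) is purely existential, and its negation joins the surrounding $\forall$. Iterating yields blocks $(6,\dots,6,O(\log^{(i-1)}n))$ of alternating type, i.e.\ one block per logarithm.

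\paragraph{Block~$i$ and the outer block.} The rest of the architecture in Step~2 also has to change.

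Block~$i$ is innermost (the theorem must hold with $d=i$), so earlier blocks cannot verify what it picks. Its variables can only be compared quantifier-free. In the paper each of them is one digit carrying $\min\{k_j,t\}$ bits, read off its adjacency to $\min\{k_j,t\}$ apex vertices named by block-$j$ variables. This is the part of your last paragraph that does match Theorem~\ref{thm:low-MSO-TW1}.

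Per-variable ranges of size $\exp^{(i-1)}(\min\{k_j,t\})$ or $\exp^{(i-1)}(2^{\min\{k_j,t\}})$ cannot be realized this way. They also lose the product $\min\{k_j,t\}\cdot k_i$ at the top of the tower: for $i\ge 3$, $k_i$ such values give only $2^{k_i\exp^{(i-2)}(\cdot)}$ combinations. In the paper the tower comes from the compression chain above block $i$, giving roughly $n\le\exp^{(i-1)}(\min\{k_j,t\}\cdot k_i/2)$.

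Also, when $j>1$ the function $f$ may depend arbitrarily on $k_1$, so the outer selection block must have $O(1)$ variables. You therefore cannot build on your Step~1 instance, where $k_1\approx k\approx\log N$ grows and $f(k_1)$ can swallow everything.

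The paper reduces from \threecoloring\ using $\alpha$ groups of $3^{n/\alpha}$ precomputed partial colorings:
\begin{itemize}
\item block~1 picks one partial coloring per group, so $k_1=O(\alpha)$;
\item the instance has size $O(3^{n/\alpha})$, so the $(n+|\phi|)^{O(1)}$ factor is only $2^{O(n/\alpha)}$;
\item block~2 universally picks an edge and the two relevant vertex nodes;
\item the negated id test, nested inside block~2, fills the blocks up to~$i$.
\end{itemize}

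\paragraph{Step~1.} Step~1 is repairable but underspecified. With only the apexes and the $2^{\min\{k,t\}}$ subset-vertices in the graph, the source constraints must sit in the quantifier-free part of $\phi$, which is allowed to be polynomially large. The apex-naming variables count towards $k_1$, and you need a regime such as $k\approx t\approx\log N$. The paper does exactly this from \textsc{$3$-CNF SAT}, replacing each variable $x_{a,b}$ by $\mathsf{adj}(v_a,w_b)$ (Theorem~\ref{thm:red-SAT-FO}).
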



\begin{theorem}
\label{thm:intro-LW4}
    Let $d,i \in \mathbb{N}$ such that $i \leq d$.
    Assuming the ETH, if an algorithm solves \textsc{FO Testing} with $S = (k_1,\dots,k_d)$ in time $f(k_1,\dots,k_d,t) \cdot \exp^{(i)}(T(k_i,t)) \cdot (n+|\phi|)^{O(1)}$ for a function $f(k_1,\dots,k_d,t)$ that is independent of $t$ and $k_i$, then $T(x,y) = \Omega(x \log y)$.
\end{theorem}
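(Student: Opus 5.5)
We prove the statement by a reduction from a problem whose ETH lower bound has the right shape. Since the bound must hold at level $i$ of the tower, with exponent $k_i \log t$, we want a source problem that, assuming ETH, needs time roughly $\exp^{(i)}(\Omega(k \log t))$. We then encode its instances as \textsc{FO Testing} instances on graphs of treewidth $O(t)$ with $k_i = O(k)$ and all other parameters bounded by constants (or by functions of $k$ that the factor $f$ may absorb). The base case $i=1$ amounts to saying that time $2^{o(k\log t)}$ is impossible. The source here is $k\times k$ \textsc{Clique}, or a grid-tiling variant with $t$ colours per cell, which under ETH admits no $f(k) t^{o(k)}$ algorithm. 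We realise it on a graph of treewidth $O(t)$ formed by $k$ gadgets. Each gadget encodes a choice among $t$ values and has $O(t)$ boundary vertices, so the whole graph has small treewidth. The FO formula existentially quantifies $k$ vertices, one value-vertex per gadget, and checks pairwise compatibility with a quantifier-free part. That part reads the compatibility bits from the gadgets: each value is encoded in binary by $O(\log t)$ pendant-marker vertices, and the relevant adjacencies are placed inside a tree-like gadget. This requires care, because a naive clique encoding has treewidth $n$. The standard fix is the Frick--Grohe-style encoding, where the instance is written into a tree of numbers and only compared through FO-definable equality of binary strings of length $\log t$.

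For larger $i$ we iterate the Frick--Grohe tower construction. We use trees encoding hereditarily finite sets of rank $i-1$ built over an alphabet of size $t$. A tree of treewidth $1$, with labels simulated by attached paths, can encode numbers up to $\exp^{(i-1)}(t)$, and FO formulas of quantifier depth $O(1)$ per level, with $O(1)$ alternations per level, can express equality and bit-access between such encoded numbers. The source problem at level $i$ is then a \textsc{Clique}-like or $k$-variable constraint problem over a domain of size $N=\exp^{(i-1)}(t)$. Brute force on it costs $N^{k}=2^{k\log N}$, which does not match the target bound, so the $i$-th level instead comes from an alternation of $k$ existential guesses of elements of a set of size $\exp^{(i-1)}(\log t)$ that is encoded using width-$t$ bags. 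The concrete plan is to encode an ETH-hard instance of size $m$ by choosing $t$ and $k$ with $k\log t \approx \log^{(i-1)} m$ style scaling, so that an $\exp^{(i)}(o(k\log t))$ algorithm would solve $3$-SAT in $2^{o(m)}$. The formula has $d$ blocks: the $i$-th block carries the $k$ vertex quantifiers, the blocks before it use $O(1)$ variables to navigate the encoding levels, and the later blocks are dummy blocks with constant size.

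The main obstacle is designing the encoding so that the dependency is precisely $k\log t$ at level $i$, with treewidth appearing only logarithmically. Two things must be balanced. Treewidth $t$ must buy only $\log t$ bits per quantified vertex, since each bag vertex carries $\log t$ bits of "which of $t$ positions". At the same time, FO must still be able to verify the encoding with a constant number of variables outside block $i$. We would first try a grid-of-paths construction: $k$ chosen vertices each select one of $t$ parallel tracks in a width-$t$ decomposition, and each track hangs a Frick--Grohe tree of height $i-1$. We then check that $f$ can absorb every other parameter, since $d$ is fixed and $k_j$ for $j\neq i$ is constant. Finally we argue that choosing $t$ as a growing function of $m$ defeats any $T(x,y)=o(x\log y)$.
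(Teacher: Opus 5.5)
Your plan gets the bookkeeping right: the instance size is comparable to $\exp^{(i-1)}(k\log t)$, every block other than block $i$ has constant size, and each variable of block $i$ should carry about $\log t$ bits. But the construction that achieves this is exactly what you leave open as the ``main obstacle'', so as it stands there is no reduction. There is a genuine gap.

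The paper fills it with three ingredients.
\begin{itemize}
\item \emph{Selecting a solution cheaply.} The $3$-\textsc{Coloring} instance is split into $\alpha$ groups, with one tree node per proper colouring of each group, giving $O(3^{n/\alpha})$ nodes. Block $1$ picks one colouring per group using $O(\alpha)$ variables. This keeps the graph of size $2^{\epsilon n}$, so the $(n+|\phi|)^{O(1)}$ factor is harmless. Your proposal never addresses this point.
\item \emph{Lowering identifiers.} Blocks $2,\dots,i-1$ repeatedly replace an identifier in $[N]$ by $\lceil\log N\rceil$ children, each carrying a bit label and a position identifier in $[\lceil\log N\rceil]$. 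Each such round costs $6$ variables and one alternation. After $i-2$ rounds the identifiers lie in $[\lceil\log^{(i-2)}n\rceil]$. The condition $n\le\exp^{(i-1)}(k'\log t)$ means they fit into $k'$ base-$t$ digits, which are written along a path.
\item \emph{Hub gadget in block $i$.} Add $t$ hub vertices $h_1,\dots,h_t$, and make each digit vertex adjacent to the hub of its digit. The graph is a tree plus $t$ vertices, so its treewidth is at most $t+1$. Block $i$ tests equality by $\exists a_j,a'_j,q_j$ with ${\sf adj}(a_j,q_j)\wedge{\sf adj}(a'_j,q_j)$, so $3k'$ variables compare numbers up to $t^{k'}$.
\end{itemize}
Your grid-of-paths sketch appears to have this nesting reversed. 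If the $k$ track-selecting vertices come first and the Frick--Grohe trees hang below them, the navigation quantifiers land after block $i$, and $k$ no longer sits at height $i$ of the tower. The $t$-way choices have to be made last, in block $i$, after the outer blocks have navigated down to identifiers of size at most $t^{k'}$.

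For $i=1$ the difficulty you perceive does not exist. Nothing in the statement requires $t\ll n$. The paper simply evaluates $\exists v_1\cdots\exists v_k\bigwedge_{i'<j'}{\sf adj}(v_{i'},v_{j'})$ on the input graph with $t=n$, so time $2^{o(k\log t)}$ would give $n^{o(k)}$ for \textsc{Clique}, contradicting ETH.

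Your gadget version would not work as described. A quantifier-free formula over the $k$ chosen vertices only sees adjacency and equality among them, so it cannot read binary markers on pendant vertices. The extra quantifiers needed to read them are not accounted for.

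That connects to your parenthetical that the other $k_j$ may be ``functions of $k$ that $f$ may absorb'', which is wrong. $f$ is only independent of $t$ and $k_i$, so it may grow arbitrarily in $k_j$ for $j\ne i$, and those $k_j$ must be constants.

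Finally, grid tiling is the wrong source problem. With one variable per cell, its $t^{o(k)}$ lower bound only yields $2^{o(\sqrt{k_1}\log t)}$.
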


\paragraph{Extensions.}
Just as the original algorithm in Courcelle's theorem, our algorithm in Theorem~\ref{thm-MSO} can also be extended to problems in more general settings.

The first extension is to the MSO$_2$ logic, in which the variables can represent not only (sets of) vertices but also (sets of) edges.
Courcelle's theorem applies to MSO$_2$ logic as well.
Similarly, Theorem~\ref{thm-MSO} can be generalized to MSO$_2$ logic for free.
Indeed, one can easily reduce an \textsc{MSO$_2$ Testing} instance $(G,t,S,\phi)$ to an \textsc{MSO Testing} instance $(G',t',S,\phi')$ with $|V(G')| = O(t|V(G)|)$, $t' \leq t+1$, and $|\phi'| = |\phi|^{O(1)}$; see Section~\ref{sec-MSO2}.

The second extension is to the counting and optimization versions of \textsc{MSO Testing}.
Consider an MSO formula $\phi(x_1,\dots,x_k,X_1,\dots,X_s)$ with free vertex variables $x_1,\dots,x_k$ and free set variables $X_1,\dots,X_s$.
In the counting problem, our goal is to compute the number of satisfying assignments of $\phi$ in a graph $G$.
In the optimization variants, we are further given a weight function $w:V(G) \rightarrow \mathbb{R}$.
Define the weight of an assignment $(v_1,\dots,v_k,V_1,\dots,V_s)$ of $\phi$ to be $\sum_{i=1}^k w(v_i)+\sum_{i=1}^s \sum_{v \in V_i} w(v)$.
Then our goal is to compute the minimum (or maximum) weight satisfying assignment of $\phi$.

We formulate a problem, called \textsc{MSO Scoring}, using semi-fields, which simultaneously generalizes the two problems above.
Let $G$ be a graph and $\phi = \phi(x_1,\dots,x_k,X_1,\dots,X_s)$ be an MSO-formula.
Also, let $\mathbb{F}$ be a semi-field\footnote{A \textit{semi-field} is the same as a field except that the elements are not required to have additive inverses.} and $w:V(G) \rightarrow \mathbb{F}$ be a function.
Assume we are provided an oracle that can do additions/multiplications and find multiplicative inverses on $\mathbb{F}$ in constant time.
For each assignment $\alpha = (v_1,\dots,v_k,V_1,\dots,V_s)$ of $\phi$ where $v_1,\dots,v_k \in V(G)$ and $V_1,\dots,V_s \subseteq V(G)$, we write $w(\alpha) = (\prod_{i=1}^k w(v_i)) \cdot (\prod_{i=1}^s \prod_{v \in V_i} w(v))$.
Then the \textit{score} of $\phi$ on the vertex weighted graph $(G,w)$ is defined as $\mathsf{scr}(G,w,\phi) = \sum_{\alpha \in \mathcal{A}_\phi(G)} w(\alpha)$.
Here $\mathcal{A}_\phi(G)$ denotes the set of all satisfying assignments of $\phi$ in the graph $G$.
For a sequence $S = ((k_1,s_1),\dots,(k_d,s_d))$, an \textit{$S$-MSO$^*$} formula is defined as a $((k_2,s_2),\dots,(k_d,s_d))$-MSO formula with $k_1$ free vertex variables and $s_1$ free set variables.
Then \textsc{MSO Scoring} is defined as follows.

\begin{tcolorbox}[colback=gray!5!white,colframe=gray!75!black]
        \textsc{MSO Scoring} \hfill \textbf{Parameter:} $t \in \mathbb{N}$ and $S = ((k_1,s_1),\dots,(k_d,s_d))$
        \vspace{0.2cm} \\
        \textbf{Input:} A graph $G$ with $\mathbf{tw}(G) \leq t$, a function $w: V(G) \rightarrow \mathbb{F}$, and an $S$-MSO$^*$ formula $\phi$
        \vspace{0.1cm} \\
        \textbf{Goal:} Compute $\mathsf{scr}(G,w,\phi)$
\end{tcolorbox}

The intuition behind our definition of $S$-MSO$^*$ formulas is the following: testing an $S$-MSO formula $\phi$ on a graph $G$ can be reduced to counting satisfying assignments of an $S$-MSO$^*$ formula $\phi'$ on $G$, where $\phi'$ is obtained from $\phi$ by removing the first block of quantifiers and replacing the corresponding quantified variables with free variables.

It is easy to see that when $\mathbb{F} = \mathbb{R}$ with the normal addition and multiplication operators and $w(v) = 1$ for all $v \in V(G)$, $\mathsf{scr}(G,w,\phi)$ is just the number of satisfying assignments of $\phi$ and hence \textsc{MSO Scoring} generalizes the counting problem.
Also, when $\mathbb{F} = \mathbb{R} \cup \{-\infty,\infty\}$ with addition operator $\min\{\cdot,\cdot\}$ (resp., $\max\{\cdot,\cdot\}$) and multiplication operator that is the normal $+$ on real numbers, $\mathsf{scr}(G,w,\phi)$ is just the minimum (resp., maximum) weight of an satisfying assignment of $\phi$ and hence \textsc{MSO Scoring} generalizes the optimization problems.
We have the following theorem, which is a generalization of Theorem~\ref{thm-MSO}.

\begin{restatable}{theorem}{thmMSOS} \label{thm-MSO*}
    There exists an algorithm for \textsc{MSO Scoring} that solves an instance $(G,w,t,S,\phi)$ with $|V(G)| = n$ and $S = ((k_1,s_1),\dots,(k_d,s_d))$ in time $T_\mathsf{td}(G,t)+$
    {\small \begin{equation*}
         f(d) \cdot \left(\sum_{i=1}^d \sum_{j=1}^{i-1} \exp^{(i)}(\hat{O}(s_j k_i)) + \sum_{i=1}^d 2^{O(s_d k_i)} + \sum_{i=1}^d \sum_{j=1}^i \exp^{(i)}(\hat{O}(t_j k_i)) + \sum_{i=1}^d \exp^{(i)}(\hat{O}(k_i \log t)) \right) \cdot (n+|\phi|^{O(1)})
    \end{equation*}}for a computable function $f$, where $t_j = \min\{k_j,t\}$ for $j \in [d]$ and $|\phi|$ is the description size of $\phi$.
\end{restatable}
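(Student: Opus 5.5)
We prove Theorem~\ref{thm-MSO*} by induction on $d$, running a dynamic program over a nice tree decomposition in which each bag carries a finite \emph{type table} rather than a single Boolean. First we compute, in time $T_\mathsf{td}(G,t)$, a tree decomposition of width $t' = t^{O(1)}$ and make it nice with $O(n)$ nodes. For a node $x$ with bag $B_x$ and subgraph $G_x$ (the vertices introduced below $x$), we define the \emph{$S$-type} of a boundaried partial assignment recursively. Let $S' = ((k_2,s_2),\dots,(k_d,s_d))$. The $S$-type of $(G_x,B_x)$, together with an assignment of the $k_1$ free vertex variables and $s_1$ free set variables, records three things: the equality/adjacency pattern between the assigned vertices and $B_x$; the traces $V_\ell \cap B_x$ of the set variables on the boundary; and the \emph{set} of $S'$-types that can be realized by extending the assignment inside $G_x$.

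Semiring scoring only happens at the outermost block. For each $S$-type $\tau$ at node $x$ we store $\sum_\alpha w(\alpha)$ over the partial assignments $\alpha$ of type $\tau$. Introduce, forget and join nodes combine these sums with $+$ and $\cdot$ of the semi-field. Multiplicative inverses are used only to undo double counting of bag vertices at join nodes. This works because the $S$-type of a join is a function of the types of the two children (a Feferman--Vaught style composition lemma), and the same holds for forget and introduce nodes. At the root, we sum the stored values over the types that entail $\phi$. Hence the running time is $O(n)$ times the number of types times the cost of a composition step, plus $|\phi|^{O(1)}$ preprocessing.

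The main work, and the main obstacle, is bounding the number of $S$-types finely enough to reproduce each term of the stated bound. We proceed inductively. Let $N_{d-i+1}$ be the number of types for the suffix starting at block $i$; then $N_i$ is roughly the number of boundary patterns at level $i$ times $2^{N_{i+1}}$. Reaching $\exp^{(i)}(\hat O(\cdot k_i))$ instead of $\exp^{(d)}$ needs three ideas.

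First, vertex variables of block $i$ that are not in $B_x$ need not be described individually. A tuple of $k_i$ vertices is represented by its pattern relative to the boundary, and we apply a ``$k_i$-fold'' compression: a type-set is a set of tuples of single-vertex types. The exponent at height $i$ is therefore $k_i$ times the logarithm of a lower-level count, which gives the product forms $s_jk_i$, $t_jk_i$ and $k_i\log t$.

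Second, a variable of block $j$ can touch the bag in at most $t_j = \min\{k_j,t\}$ vertices. So the information about the boundary relevant to block $i$ is of size $\hat O(\sum_{j\le i} t_j + \sum_{j<i} s_j + \log t)$ per vertex variable.

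Third, the innermost set block $s_d$ is handled without a further exponentiation. Its quantifier-free matrix only queries membership of the $k_1+\dots+k_d$ vertex variables in the sets, so we guess membership bits directly. This contributes only $2^{O(s_d k_i)}$.

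I expect the second idea to be the delicate part: the $t_j$-truncation and the accompanying equivalence-relation argument show that types over different bag subsets are isomorphic, so we may range over $\binom{t'}{\le k}$ positions, costing $k_i\log t$ rather than $t$ in the exponent. The $\hat O$ slack absorbs the logarithmic factors from encoding these patterns. We finally bound the per-node composition cost by a polynomial in the number of types, which is swallowed by $\hat O$ inside the towers, and handle non-PNF formulas by applying the same recursion to subformulas.
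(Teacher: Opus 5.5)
\textbf{Verdict: there is a genuine gap.} Your DP skeleton and semiring bookkeeping are close in spirit to the paper's evaluation maps. That bookkeeping stores, per inner type, the total weight of outer-block assignments realizing it, divides out double-counted bag contributions at joins, and sums at the root over types entailing $\phi$. But the content of the theorem is the bound on the table sizes, and that bound is not established.

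\textbf{Why your types are too fine.} They are defined relative to the bag $B_x$ and store the traces $V_\ell\cap B_x$ and the adjacencies to $B_x$. Earlier-block vertices then act as extra boundary for the inner types, so the innermost patterns include up to $2^{k_ak_b}$ adjacency possibilities between blocks $a$ and $b$. The recursion you sketch, $N_i\approx M_i\cdot 2^{N_{i+1}}$ with $M_i$ the number of boundary patterns, therefore places $s_j|B_x|$, $|B_x|$ and the products $k_ak_b$ on top of a tower of height $d$.
\begin{itemize}
\item For $d=1$ and $S=((1,s_1))$, the claimed cost per vertex is $2^{O(s_1)}+2^{\hat{O}(\log t)}$, while the traces alone give $2^{s_1|B_x|}$ entries.
\item For $S=((k_1,0),(k_2,0))$ you obtain $\exp^{(2)}(\Omega(k_1k_2))$, but only $\exp^{(2)}(\hat{O}(\min\{k_1,t\}\,k_2))$ is allowed.
\item After $T_{\mathsf{td}}$ the bags have size $t^{O(1)}$, so any purely bag-relative count sees $t^{O(1)}$ where the bound has $t$, for instance in $t_j=\min\{k_j,t\}$.
\end{itemize}

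\textbf{Why your three ideas do not repair this.}
\begin{itemize}
\item The first is false as stated. The inner type after assigning a $k_i$-tuple is not a function of the $k_i$ single-vertex types: adjacency inside the tuple, and which later extensions exist, depend on the joint configuration. So ``a set of tuples of single-vertex types'' is not a sound encoding.
\item The second (``a variable of block $j$ touches the bag in at most $t_j$ vertices'') is not meaningful for a single vertex, and it does not explain how $\min\{k_j,t\}$ caps anything.
\item The third is right in spirit, but it contradicts your own type definition, which stores the innermost sets' traces on the bag.
\end{itemize}

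\textbf{The missing ingredient} is an argument that uses the treewidth of the whole subgraph below $x$, not just the bag.
\begin{itemize}
\item In the paper the DP state is $\mathsf{sgn}_S(G[\gamma(x)],-,\mu^*_x)$. Bag vertices appear only as colors and the leaves are cores on the labeled vertices, so nothing about unlabeled bag vertices is stored. The bag then costs only $\log|P|=O(\log t)$ per labeled vertex.
\item The size bound applies the propeller decomposition (Lemma~\ref{lem-decomp}) to the $p+k$ labeled vertices. This gives a core $V_0$ of size $O((p+k)t)$ plus $6(p+k)$ pieces, each attached through $4t$ vertices and hence carrying at most $\min\{p+k,4t\}$ labels.
\item Lemma~\ref{lem-character} shows that the core together with the pieces' signatures determines the signature. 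This yields Lemma~\ref{lem-recurrence}, in which $\Delta_{S'}(\mathcal{G}_{\min\{p+k,4t\},P\times\{0,1\}^s\times[4t]_0,t})$ depends on $k$ only through $\min\{p+k,4t\}$ and is raised only to the power $6(p+k)$.
\end{itemize}
This is the correct version of your ``$k_i$-fold compression'': it works per propeller piece, not per vertex. Its effects are:
\begin{itemize}
\item $k_i$ stays at height $i$;
\item earlier $k_j$ reach deeper levels only through $\min\{k_j,t\}$;
\item $s_jk_i$ arises from $\log|P'|$ accumulating the earlier $s_j$;
\item $k_i\log t$ arises from the bag colors;
\item the $s_d$ term is single-exponential because the last-level signature records membership bits of labeled vertices only.
\end{itemize}
Without this, or an equivalent neighborhood-complexity argument on the full subgraph, the stated running time does not follow.
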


\paragraph{Our approaches.}
We briefly summarize the approaches we use to prove the main theorems.
A more detailed presentation is given in Section~\ref{sec-overview}.
To prove Theorem~\ref{thm-MSO}, the high-level framework of our algorithm is still the typical one: dynamic programming on a tree decomposition of $G$.
One of the main new insights is that when doing DP at a node $x$ of the tree decomposition, we exploit not only the small size of the bag of $x$, but also the small treewidth of the entire subgraph ``below $x$''.
Formally, let $(T,\beta)$ be a small-width tree decomposition of $G$.
For a node $x \in V(T)$, denote by $\gamma(x)$ the union of $\beta(y)$ for all nodes $y$ in the subtree of $T$ rooted at $x$.
In most DP algorithms on tree decompositions (in particular, the previous proofs of Courcelle's theorem), when computing the DP table at some node $x \in V(T)$, the algorithm actually no longer cares about whether the graph $G[\gamma(x)]$ has a small treewidth or not, and the time cost of this single step only depends on the size of $\beta(x)$ rather than the structure of $G[\gamma(x)]$.
In contrast, in our proof, we make heavy use of the bounded treewidth of the graphs $G[\gamma(x)]$ in the DP procedure, in order to reduce the size of the DP tables as well as the time for computing them.
To this end, we introduce a \textit{combinatorial} invariant of graphs, called \textit{$S$-signatures}, which characterizes the satisfiability of all $S$-MSO formulas on a graph.
Essentially, we prove the following nice properties of the signatures.
\begin{enumerate}[(i)]
    \item The (description) size of the $S$-signatures of \textit{bounded-treewidth} graphs is small: it only depends on $S$ and the treewidth parameter $t$, and satisfies the bound in Theorem~\ref{thm-MSO}.
    \item For each $x \in V(T)$, the $S$-signature of $G[\gamma(x)]$ can be computed given the $S$-signatures of $G[\gamma(y)]$ for all children $y$ of $x$, in time polynomial in the sizes of the signatures.
    In particular, one can efficiently compute the $S$-signature of $G$ by applying DP on $(T,\beta)$.
    \item Given the $S$-signature of $G$, one can test whether $G$ satisfies an $S$-MSO formula $\phi$ in time polynomial in the size of the signature and $|\phi|$.
\end{enumerate}
Our algorithm then uses property (ii) to compute the $S$-signature of $G$ and then uses property (iii) to test whether $G$ satisfies $\phi$ or not.
Finally, property (i) bounds the time complexity of the entire algorithm.
Among the three properties, the proof of property (i) is the most interesting, which requires a clever combination of the propeller decomposition technique~\cite{fomin2012planar,lokshtanov2018beating} and various structural properties of the signatures.
See Section~\ref{sec-overview1} for a more detailed discussion.

To prove the lower bounds, we build on the basic ideas in the reduction of Frick and Grohe~\cite{frick2004complexity}, and apply additional tricks to make the lower bound tight and more general.
The proof of Frick and Grohe implies (while not stated explicitly in~\cite{frick2004complexity}) that for any given $c \in \mathbb{N}$, one can reduce a problem with ETH lower bound $2^{\Omega(n)}$ to an \textsc{FO Testing} instance on trees with an $(k_1,\dots,k_d)$-FO formula $\phi$, where $d = 2c + O(1)$, $k_i = O(1)$ for all $i \in [d-1]$, and $\exp^{(c)}(k_d) = 2^{O(n)}$.
A drawback of this result is that it only gives us a lower bound roughly $\exp^{(d/2-O(1))}(\Omega(k_d))$, far away from the lower bound $\exp^{(d)}(\Omega(k_d))$ we want.
To achieve the desired bound, we need a much more careful reduction.
In particular, we give a more efficient way to encode numbers by trees in the sense that the (in)equality of two numbers can be checked using FO formulas with a much smaller number of quantifier alternations.
Furthermore, we construct new gadgets that allow us to obtain lower bounds regarding the numbers $s_1,\dots,s_d$ of set quantifiers and the treewidth parameter $t$, which are not considered in~\cite{frick2004complexity}.
Again, we provide more details in Section~\ref{sec-overview2}.

\paragraph{Organization.}
The rest of the paper is organized as follows.
In Section~\ref{sec-overview}, we overview the main ideas behind our proofs.
In Section~\ref{sec-pre}, we introduce the basic definitions and preliminaries required for our results.
Section~\ref{sec-ub} and Section~\ref{sec-lb} represent our proofs for the upper bounds and lower bounds, respectively.
Finally, Section~\ref{sec-conclusion} conclude the paper and poses some open problems.


\section{Technical overview} \label{sec-overview}
In this section, we provide an informal overview for the ideas used to prove our upper bounds and lower bounds.
We shall focus on the main insights, while omitting the details and calculations.

\subsection{Upper bounds} \label{sec-overview1}

As mentioned in the introduction, to prove Theorem~\ref{thm-MSO}, we need to introduce the notion of \textit{$S$-signatures}, which is a combinatorial invariant of graphs that characterizes the satisfiability of $S$-MSO formulas on a graph.
Formally, the \textit{satisfiability} of $S$-MSO formulas on a graph $G$ can be defined as a function $\mathsf{SAT}_{S,G}$ that maps each $S$-MSO formula $\phi$ to $\mathsf{True}$ if $G$ satisfies $\phi$ and to $\mathsf{False}$ if $G$ does not satisfy $\phi$.
If we use $\mathsf{sgn}_S(G)$ to denote the $S$-signature of $G$, we want the following condition to hold: for any graphs $G$ and $H$, $\mathsf{sgn}_S(G) = \mathsf{sgn}_S(H)$ iff $\mathsf{SAT}_{S,G} = \mathsf{SAT}_{S,H}$.

To get some intuition, consider a sequence $S = ((k_1,s_1),\dots,(k_d,s_d))$.
Every $S$-MSO formula (in PNF) can be written of the form $\phi = \mathsf{Q}x_1\dots\mathsf{Q}x_{k_1} \mathsf{Q}X_1\dots\mathsf{Q}X_{s_1}\ \phi'(x_1,\dots,x_{k_1},X_1,\dots,X_{s_1})$, where $\mathsf{Q} \in \{\exists,\forall\}$ and $\phi'$ is an $S'$-MSO formula (with free variables) for $S' = ((k_2,s_2),\dots,(k_d,s_d))$.
This recursive definition allows us to relate the satisfiability of $S$-MSO formulas to that of $S'$-MSO formulas as follows.
Let $\mathcal{A}_{G,k_1,s_1}$ be the set of sequences $(v_1,\dots,v_{k_1},V_1,\dots,V_{s_1})$ where $v_1,\dots,v_{k_1} \in V(G)$ and $V_1,\dots,V_{s_1} \subseteq V(G)$.
We view it as the set of assignments to the formulas of the form $\phi'(x_1,\dots,x_{k_1},X_1,\dots,X_{s_1})$ in $G$.
For each $A \in \mathcal{A}_{G,k_1,s_1}$, define $\mathsf{SAT}_{S',G,A}$ as the function that maps each $S'$-MSO formula $\phi'(x_1,\dots,x_{k_1},X_1,\dots,X_{s_1})$ to the value $\phi'(A) \in \{\mathsf{True},\mathsf{False}\}$ evaluated in $G$.
It turns out that $\mathsf{SAT}_{S,G} = \mathsf{SAT}_{S,H}$ iff both of the following are true:
\begin{itemize}
    \item for every $A \in \mathcal{A}_{G,k_1,s_1}$, there exists $B \in \mathcal{A}_{H,k_1,s_1}$ such that $\mathsf{SAT}_{S',G,A} = \mathsf{SAT}_{S',H,B}$,
    \item for every $B \in \mathcal{A}_{H,k_1,s_1}$, there exists $A \in \mathcal{A}_{G,k_1,s_1}$ such that $\mathsf{SAT}_{S',G,A} = \mathsf{SAT}_{S',H,B}$.
\end{itemize}
Equivalently, $\mathsf{SAT}_{S,G} = \mathsf{SAT}_{S,H}$ iff $\{\mathsf{SAT}_{S',G,A}: A \in \mathcal{A}_{G,k_1,s_1}\} = \{\mathsf{SAT}_{S',H,B}: B \in \mathcal{A}_{H,k_1,s_1}\}$.
Inspired by this nice relation, we obtain a natural idea to define the $S$-signatures: defining them \textit{inductively} based on the $S'$-signatures.
However, so far this idea does not quite work, since the satisfiability of $S'$-MSO formulas we use is already different from the original definition -- it takes into account free variables and assignments.

In order to make the idea work, we need to introduce a more general class of graphs, called \textit{labeled} and \textit{colored} graphs, which can ``encode'' assignments to free variables.
Let $G$ be a graph.
For $p \in \mathbb{N}_0$ (here $\mathbb{N}_0 = \{0,1,2,\dots\}$), a \textit{$p$-labeling} on $G$ is a function $\lambda: [p] \rightarrow V(G)$.
If $\lambda: [p] \rightarrow V(G)$ is a $p$-labeling on $G$ and $\lambda': [p'] \rightarrow V(G)$ is a $p'$-labeling on $G$, we define $\lambda \oplus \lambda': [p+p']  \rightarrow V(G)$ as
$(\lambda \oplus \lambda')(i) = \lambda(i)$ if $i \leq p$ and $(\lambda \oplus \lambda')(i) = \lambda'(i-p)$ if $i > p$, which is a $(p+p')$-labeling on $G$.
For a set $P$, a \textit{$P$-coloring} on $G$ is a function $\mu: V(G) \rightarrow P$.
If $\mu: V(G) \rightarrow P$ is a $P$-coloring on $G$ and $\mu': V(G) \rightarrow P'$ is a $P'$-coloring on $G$, then we define a function $\mu \otimes \mu': V(G) \rightarrow P \times P'$ as $(\mu \otimes \mu')(v) = (\mu(v),\mu'(v))$, which is a $(P \times P')$-coloring on $G$.
A \textit{$p$-labeled and $P$-colored graph} is a triple $(G,\lambda,\mu)$ where $G$ is a graph, $\lambda$ is a $p$-labeling on $G$, and $\mu$ is a $P$-coloring on $G$.
Two $p$-labeled and $P$-colored graphs $(G,\lambda,\mu)$ and $(G',\lambda',\mu')$ are \textit{isomorphic} if there exists an isomorphism $\pi: V(G) \rightarrow V(G')$ of $G$ and $G'$ such that $\lambda' = \pi \circ \lambda$ and $\mu = \mu' \circ \pi$.
Isomorphic labeled and colored graphs are viewed as the \textit{same} (or in other words, we only care about the isomorphic type of such graphs).
In particular, if $\mathcal{G}$ is a set of $p$-labeled and $P$-colored graphs, then different elements in $\mathcal{G}$ are always non-isomorphic.
If $(G,\lambda,\mu)$ is a $p$-labeled and $P$-colored graph and $A \subseteq V(G)$ is a subset, then we can naturally obtain a $|\lambda^{-1}(A)|$-labeled and $P$-colored graph $(G[A],\lambda_A,\mu_A)$ as follows.
The coloring $\mu_A$ is simply defined as $\mu_A = \mu_{|A}$.
To define $\lambda_A$, suppose $\lambda^{-1}(A) = \{x_1,\dots,x_r\}$ where $x_1 < \cdots < x_r$.
Then we define $\lambda_A: [|\lambda^{-1}(A)|] \rightarrow A$ by setting $\lambda_A(i) = \lambda(x_i)$.
We call $(G[A],\lambda_A,\mu_A)$ the \textit{restriction} of $(G,\lambda,\mu)$ to $A$.

Now we explain how to use labeled and colored graphs to encode assignments.
Again, consider a graph $G$ and some assignment $A = (v_1,\dots,v_k,V_1,\dots,V_s) \in \mathcal{A}_{G,k,s}$.
Naturally, the part $(v_1,\dots,v_k)$ for vertex variables can be represented as a $k$-labeling $\lambda: [k] \rightarrow V(G)$ where $\lambda(i) = v_i$.
Also, the part $(V_1,\dots,V_s)$ for set variables can be represented as a $\{0,1\}^s$-coloring $\mu: V(G) \rightarrow \{0,1\}^s$ where $\mu(v) = (\mathbf{1}_{v \in V_1},\dots,\mathbf{1}_{v \in V_s})$; here $\mathbf{1}_{v \in V_i} = 1$ if $v \in V_i$ and $\mathbf{1}_{v \in V_i} = 0$ if $v \notin V_i$.
Therefore, the $k$-labeled and $\{0,1\}^s$-colored graph $(G,\lambda,\mu)$ encodes the information of $A$.
To understand the intuition of the operators $\oplus$ and $\otimes$, consider a $((k_1,s_1),(k_2,s_2))$-MSO formula.
Suppose that the first block of quantifiers choose an assignment $A_1 \in \mathcal{A}_{G,k_1,s_1}$, and we already encode this assignment as above to obtain a $k_1$-labeled and $\{0,1\}^{s_1}$-colored graph $(G,\lambda_1,\mu_1)$.
Followed by this, the second block of quantifiers also choose an assignment $A_2 \in \mathcal{A}_{G,k_2,s_2}$.
We want to change the graph $(G,\lambda_1,\mu_1)$ so that it further encodes $A_2$.
Now we construct the $k_2$-labeling $\lambda_2$ and the $\{0,1\}^{s_2}$-coloring $\mu_2$ on $G$ corresponding to $A_2$.
Then we simply take $(G,\lambda_1 \oplus \lambda_2,\mu_1 \otimes \mu_2)$, which is just the desired graph that encodes the information of both $A_1$ and $A_2$.

In order to use labelings and colorings to replace the assignments, we also need to enhance the ability of MSO formulas a bit so that they can take into account the labels and colors.
Roughly speaking, an \textit{enhanced} MSO formula, which is designed for labeled and colored graphs, has two additional abilities.
First, it can refer to vertices with specific labels in the graph.
In other words, if $\lambda: [p] \rightarrow V(G)$ is the labeling of the graph, then the vertices $\lambda(1),\dots,\lambda(p)$ are viewed as arguments of the formula (and thus the formula can test the equality/adjacency among them and other quantified vertex variables).
Second, it can have atomic formulas which test whether a vertex has a specific color (in the set used for coloring the graph).
In other words, if $\mu: V(G) \rightarrow P$ is the coloring of the graph, then the formula can contain equations of the form $\mu(x) = a$, where $x$ is a vertex variable (possibly a labeled vertex) and $a \in P$.
Here we omit the formal definition of such formulas, as this concept is only introduced for intuitively understanding the definition of signatures (which is not needed in our actual proof).

Now we consider the satisfability of enhanced $S$-MSO formulas on a labeled and colored graph $(G,\lambda,\mu)$, which can be defined as a function $\mathsf{SAT}_{S,(G,\lambda,\mu)}$ that maps each enhanced $S$-MSO formula $\phi$ to $\mathsf{True}$ or $\mathsf{False}$ depending on whether $(G,\lambda,\mu)$ satisfies $\phi$ or not.
This generalized definition allows us to recursively characterize the satisfability of enhanced $S$-MSO formulas using the idea at the beginning of this section.
Suppose $S = ((k,s))+S'$.
Here $+$ denotes the concatenation operator for sequences.
Let $(G,\lambda,\mu)$ and $(H,\gamma,\tau)$ be two $p$-labeled and $P$-colored graphs.
Using the argument before, we can prove that $\mathsf{SAT}_{S,(G,\lambda,\mu)} = \mathsf{SAT}_{S,(H,\gamma,\tau)}$ iff
{\small \begin{equation*}
    \{\mathsf{SAT}_{S',(G,\lambda \oplus \lambda',\mu \otimes \mu')}: \lambda' \in \varLambda_{G,k} \text{ and } \mu' \in U_{G,\{0,1\}^s}\} = \{\mathsf{SAT}_{S',(H,\gamma \oplus \gamma',\tau \otimes \tau')}: \gamma' \in \varLambda_{H,k} \text{ and } \tau' \in U_{H,\{0,1\}^s}\},
\end{equation*}}where $\varLambda_{G,k}$ (resp., $\varLambda_{H,k}$) is the set of all $k$-labelings on $G$ (resp., $H$) and $U_{G,\{0,1\}^s}$ (resp., $U_{H,\{0,1\}^s}$) is the set of all $\{0,1\}^s$-coloring on $G$ (resp., $H$).
This nice characterization gives us a natural definition for $S$-signatures on labeled and colored graphs.
Suppose we have already define $S'$-signatures which characterize the satisfability of enhanced $S'$-MSO formulas on labeled and colored graphs.
Now we simply define the $S$-signature of a labeled and colored graph $(G,\lambda,\mu)$ as
\begin{equation*}
    \mathsf{sgn}_S(G,\lambda,\mu) = \{\mathsf{sgn}_{S'}(G,\lambda \oplus \lambda',\mu \oplus \mu'): \lambda' \in \varLambda_{G,k} \text{ and } \mu' \in U_{G,\{0,1\}^s}\}.
\end{equation*}
The characterization above guarantees that $S$-signatures characterize the satisfability of enhanced $S$-MSO formulas.
To complete the definition, we still need to define $S$-signatures for the base case, i.e., $S = ()$ is the empty sequence.
An enhanced $()$-MSO formula is quantifier-free and thus it can only test the equality/adjacency among the labeled vertices and the colors of the labeled vertices.
Therefore, on a graph $(G,\lambda,\mu)$, the induced subgraph $(G[\text{Im}(\lambda)],\lambda,\mu_{|\text{Im}(\lambda)})$ characterizes the satisfability of enhanced $()$-MSO formulas; here $\text{Im}(\lambda)$ is the image of $\lambda$ and we abuse the symbol $\lambda$ to denote the labeling on $G[\text{Im}(\lambda)]$ while its codomain is $V(G)$.
As such, we simply define $\mathsf{sgn}_{()}(G,\lambda,\mu) = (G[\text{Im}(\lambda)],\lambda,\mu_{|\text{Im}(\lambda)})$.
Finally, for a graph $G$, we can view it a labeled and colored graph by equipping it with the dummy labeling $-: \emptyset \rightarrow V(G)$ and the dummy coloring $-: V(G) \rightarrow \{0\}$ that maps every vertex $v \in V(G)$ to $0$, and define $\mathsf{sgn}_S(G) = \mathsf{sgn}_S(G,-,-)$.
One can verify that $\mathsf{sgn}_S(G)$ characterizes the satisfiability of (normal) $S$-MSO formulas on $G$.

Let $S = ((k_1,s_1),\dots,(k_d,s_d))$.
By construction, the $S$-signature of a $p$-labeled and $P$-colored graph is a $d$-layer nested set with $(p+\sum_{k=1}^d k_i)$-labeled and $(P \times \prod_{i=1}^d \{0,1\}^{s_i})$-colored graphs at the bottommost level.
We write $\lVert \mathsf{sgn}_S(G,\lambda,\mu) \rVert$ as the \textit{recursive size} of $\mathsf{sgn}_S(G,\lambda,\mu)$, which is defined as follows: if $\mathsf{sgn}_S(G,\lambda,\mu)$ is a set, then $\lVert \mathsf{sgn}_S(G,\lambda,\mu) \rVert = \sum_{x \in \mathsf{sgn}_S(G,\lambda,\mu)} \lVert x \rVert$; otherwise, $\lVert \mathsf{sgn}_S(G,\lambda,\mu) \rVert = 1$.
In fact, $\lVert \mathsf{sgn}_S(G,\lambda,\mu) \rVert$ is just the number of graphs at the bottommost level of the nested set $\mathsf{sgn}_S(G,\lambda,\mu)$.
The recursive size of an $S$-signature can be essentially viewed as the description size of the signature (i.e., the number of bits to encode the signature), modulo the description of each labeled and colored graph at the bottommost level which is anyway polynomial in the numbers in $S$.
Later we will sketch a proof that bounds the $S$-signatures of \textit{bounded-treewidth} graphs, which is a crucial part of our result.
Before this, we first briefly discuss how to do \textsc{MSO Testing} using signatures and how to compute the signatures by DP on tree decomposition.

\vspace{-0.2cm}

\paragraph{Testing MSO via signatures.}
In fact, from the construction of $S$-signatures, it is not difficult to see that given $\mathsf{sgn}_S(G,\lambda,\mu)$ and an (enhanced) $S$-MSO formula $\phi$, one can test whether $(G,\lambda,\mu)$ satisfies $\phi$ in $(\lVert \mathsf{sgn}_S(G,\lambda,\mu) \rVert+|\phi|)^{O(1)}$ time.
If $S = ()$, then this trivially holds.
Suppose $S = ((k,s))+S'$ and we already have an algorithm $\textsc{Test}_{S'}(\mathsf{sgn}_{S'}(H,\gamma,\tau),\phi')$ which returns $\mathsf{True}$ or $\mathsf{False}$ depending on whether $(H,\gamma,\tau)$ satisfies $\phi$ or not in $(\lVert \mathsf{sgn}_{S'}(H,\gamma,\tau) \rVert+|\phi'|)^{O(1)}$ time.
Then the algorithm $\textsc{Test}_S(\mathsf{sgn}_S(G,\lambda,\mu),\phi)$ essentially works as follows.
If the first block of quantifiers in $\phi$ are $\exists$-quantifiers, then we simply return $\bigvee_{x \in \mathsf{sgn}_S(G,\lambda,\mu)} \textsc{Test}_{S'}(x,\phi')$, where $\phi'$ is the part of $\phi$ after the first block of quantifiers.
On the other hand, if the first block of quantifiers in $\phi$ are $\forall$-quantifiers, then we simply return $\bigwedge_{x \in \mathsf{sgn}_S(G,\lambda,\mu)} \textsc{Test}_{S'}(x,\phi')$.

\vspace{-0.2cm}

\paragraph{Computing signatures by DP on tree decomposition.}
Let $(T,\beta)$ be a tree decomposition of $G$.
We want to compute $\mathsf{sgn}_S(G)$ by DP on $(T,\beta)$.
For a node $x \in V(T)$, denote by $T_x$ the subtree of $T$ rooted at $x$ and define $\gamma(x) = \bigcup_{y \in V(T_x)} \beta(y)$.
A natural idea is to compute, at each node $x \in V(T)$, the signature $\mathsf{sgn}_S(G[\gamma(x)])$, based on the signatures $\mathsf{sgn}_S(G[\gamma(y)])$ for children $y$ of $x$.
However, this does not directly work.
In fact, only having $\mathsf{sgn}_S(G[\gamma(x)])$ is not sufficient for the DP.
Instead, we have to view $G[\gamma(x)]$ as a \textit{boundaried} graph with boundary $\beta(x)$, and the $S$-signature computed for $G[\gamma(x)]$ should take into account the boundary vertices.
To this end, we again make use of coloring.
We arbitrarily choose an ordering $\sigma$ of $V(G)$.
For each node $x \in V(T)$, define a $[|\beta(x)|]_0$\footnote{Here the notation $[\cdot]_0$ is defined as $[n]_0 = \{0,1,\dots,n\}$.}-coloring $\mu_x: \gamma(x) \rightarrow [|\beta(x)|]_0$ on $G[\gamma(x)]$ that maps all vertices in $\gamma(x) \backslash \beta(x)$ to $0$ and maps the vertices in $\beta(x)$ bijectively to $[|\beta(x)|]$ following the ordering $\sigma$.
During the DP procedure, at each $x \in V(T)$, instead of computing $\mathsf{sgn}_S(G[\gamma(x)])$, we compute $\mathsf{sgn}_S(G[\gamma(x)],-,\mu_x)$.
With the help of the colorings $\mu_x$, the DP works and can be done efficiently, thanks to the following lemma.
Recall that a \textit{separation} of a graph $G$ is a pair $(A,B)$ with $A,B \subseteq V(G)$ such that $A \cup B = V(G)$ and there is no edge between $A \backslash B$ and $B \backslash A$ in $G$.
\begin{lemma}[informal] \label{lem-combine}
Let $(G,\lambda,\mu)$ be a $p$-labeled and $P$-colored graph and $(A,B)$ be a separation of $G$.
Also, let $\mu':V(G) \rightarrow [|A \cap B|]_0$ map all vertices in $V(G) \backslash (A \cap B)$ to $0$ and map $A \cap B$ bijectively to $[|A \cap B|]$.
Suppose $(G[A],\lambda_A,\mu_A)$ and $(G[B],\lambda_B,\mu_B)$ are the restrictions of $(G,\lambda,\mu \otimes \mu')$ to $A$ and $B$, respectively.
Then for any $S$, one can (efficiently) compute $\mathsf{sgn}_S(G,\lambda,\mu)$ by only knowing $\mathsf{sgn}_S(G[A],\lambda_A,\mu_A)$, $\mathsf{sgn}_S(G[B],\lambda_B,\mu_B)$, $\lambda^{-1}(A)$, and $\lambda^{-1}(B)$.
\end{lemma}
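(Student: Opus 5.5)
We prove the lemma by induction on the length of $S$, establishing a slightly stronger statement in which the labelings and colorings are arbitrary. The combination procedure is defined recursively, mirroring the recursive definition of $\mathsf{sgn}_S$.

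\emph{Base case $S=()$.} Here $\mathsf{sgn}_{()}(G,\lambda,\mu)$ is the labeled, colored graph induced on $\mathrm{Im}(\lambda)$. Every labeled vertex lies in $A$ or in $B$, and which side it lies on is recorded by $\lambda^{-1}(A)$ and $\lambda^{-1}(B)$. Its $\mu$-color can be read off from the first coordinate of its color in $\mathsf{sgn}_{()}(G[A],\lambda_A,\mu_A)$ or in $\mathsf{sgn}_{()}(G[B],\lambda_B,\mu_B)$. Two labels denote the same vertex in one of two ways. If both are on the same side, this is visible in that side's signature. If one is on each side, the common vertex must lie in $A\cap B$, and the labels coincide iff their $\mu'$-colors are equal and nonzero; this is exactly where the bijective boundary coloring $\mu'$ is used.

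Adjacency is decided similarly. Two labeled vertices are adjacent in $G$ iff both lie in $A$ and are adjacent in $G[A]$, or both lie in $B$ and are adjacent in $G[B]$. No edge joins $A\setminus B$ to $B\setminus A$, since $(A,B)$ is a separation. If one vertex lies only in $A\setminus B$ and the other only in $B\setminus A$, they are non-adjacent. A labeled vertex of $A$ with nonzero $\mu'$-color $c$ is also visible in $B$ only if $c$ is the color of some labeled vertex of $B$. Hence $\mathsf{sgn}_{()}(G,\lambda,\mu)$ is determined by the two restricted signatures together with $\lambda^{-1}(A)$ and $\lambda^{-1}(B)$.

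\emph{Inductive step $S=((k,s))+S'$.} By definition, $\mathsf{sgn}_S(G,\lambda,\mu)$ is the set of $\mathsf{sgn}_{S'}(G,\lambda\oplus\lambda_1,\mu\otimes\mu_1)$ over all $k$-labelings $\lambda_1$ and all $\{0,1\}^s$-colorings $\mu_1$ of $G$. Fix such a pair $(\lambda_1,\mu_1)$. Its restriction to $A$ is a pair $(\lambda_1|,\mu_1|_A)$, where $\lambda_1|$ labels only those indices $I_A=\lambda_1^{-1}(A)$; the restriction to $B$ is analogous. The colorings $\mu_1|_A$ and $\mu_1|_B$ must agree on $A\cap B$, and every index in $[k]$ must lie in $I_A\cup I_B$. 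The restriction of $(G,\lambda\oplus\lambda_1,\mu\otimes\mu_1\otimes\mu')$ to $A$ equals $(G[A],\lambda_A\oplus\lambda_1|,\mu_A\otimes\mu_1|_A)$, up to a harmless reordering of coordinates.

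Conversely, a pair of side-assignments glues to a global one exactly when the following hold:
\begin{enumerate}[(a)]
\item the $\{0,1\}^s$-colors agree on each boundary vertex, identified by its $\mu'$-color;
\item each index in $I_A\cap I_B$ points to the same boundary vertex on both sides;
\item each index in $I_A\setminus I_B$ points to a vertex not in $B$, and symmetrically for $I_B\setminus I_A$.
\end{enumerate}
All three conditions can be read off from the base-level data, namely the colors and labels of the boundary vertices. To make them checkable, we strengthen the inductive statement. The combination procedure is given partial-labeling information, namely which label indices are present on each side, and we require that the first-level signatures expose the coloring of the boundary $A\cap B$. The latter is achieved by first refining each element of $\mathsf{sgn}_S(G[A],\dots)$ according to the $\mu_1$-coloring of the boundary. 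Concretely, we treat the boundary vertices as implicitly labeled: they carry distinct nonzero $\mu'$-colors, so their new colors appear in every deeper signature. A careful version therefore indexes signatures by the boundary pattern.

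Granting this, the combination at level $S$ runs over all compatible pairs $x\in\mathsf{sgn}_{S}(G[A],\dots)$ and $y\in\mathsf{sgn}_S(G[B],\dots)$, together with index splits $(I_A,I_B)$ consistent with (a)--(c). For each such pair it applies the inductive combination for $S'$ to $(x,y)$ with the enlarged index sets $\lambda^{-1}(A)\cup I_A$ and $\lambda^{-1}(B)\cup I_B$, and outputs the set of results. Correctness is the bijection between glued assignments and compatible pairs, which holds because $\mathsf{sgn}_{S'}$ of the glued graph depends only on the two side signatures by induction. The running time is polynomial in the sizes of the signatures times $2^{O(k)}$ for the index splits.

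\textbf{Main obstacle.} The delicate step is extracting the compatibility conditions (a)--(c) from the nested sets, since the boundary's $\{0,1\}^s$-colors and the label positions are visible only at the bottom level. I would handle this by generalizing the lemma: the labels $\lambda^{-1}(A)$ become partial labelings with $\lambda^{-1}(A)\cup\lambda^{-1}(B)$ covering all indices, and the combination is formulated for a fixed boundary coloring. I would then verify that every element of an $S'$-signature determines the colors of all boundary vertices, which holds because the boundary colors persist into the base case whenever boundary vertices are labeled. If they are not persistently visible, I would add the boundary vertices as extra labels in the side signatures before recursing.
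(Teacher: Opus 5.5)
Your architecture is the paper's. Its proof of item (iii) of Lemma~\ref{lem-sgncomp} goes through the strengthened statement (iii*), and has the same parts as yours: partial label index sets on each side, a loop over pairs $x\in\mathsf{sgn}_S(G[A],\dots)$, $y\in\mathsf{sgn}_S(G[B],\dots)$ and over the $3^k$ index splits, a recursive call with enlarged index sets, and a base case that glues the two cores using the $\mu'$-colours.

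You also isolate the real difficulty correctly: the newly chosen $\{0,1\}^s$-colours must agree on $A\cap B$. The paper does not really handle this. It compares colours only of \emph{labelled} vertices in its base case, and treats an empty result as the only sign of incompatibility. Here is a counterexample: take $S=((0,1),(2,0))$, the path $w\,v\,u$, $A=\{w,v\}$, $B=\{v,u\}$. A pair $x,y$ that colours $v$ by $0$ and by $1$ respectively is not rejected as a whole. The combinations whose labels avoid $v$ survive, and the resulting $Z$ contains no core of an adjacent pair, so $Z\notin\mathsf{sgn}_S(G)$.

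Your own treatment of this point, however, is a genuine gap.
\begin{itemize}
\item The boundary vertices are \emph{not} implicitly labelled. $\mathsf{core}$ keeps only $\mathrm{Im}(\lambda)$, so a nonzero $\mu'$-colour does not make an unlabelled boundary vertex appear anywhere below.
\item Hence the claim that every element of an $S'$-signature determines the boundary colours is false, for example when $S'$ has no vertex quantifiers.
\item Your fallback of adding the boundary vertices as labels changes the hypotheses. $\mathsf{sgn}_S(G[A],\lambda_A\oplus\pi,\mu_A)$ cannot be computed from $\mathsf{sgn}_S(G[A],\lambda_A,\mu_A)$; already for $S=()$ the adjacencies between unlabelled boundary vertices and labelled vertices are lost.
\end{itemize}
As written, condition (a) is simply assumed (``Granting this'').

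The missing idea is a case distinction on the remaining blocks $S'$.
\begin{itemize}
\item If some later block has $k_m\ge 1$: inside every element $x$, the level-$m$ set ranges over all $k_m$-labelings of $G[A]$. So for each boundary vertex, some bottom-level core below $x$ labels it, and it is recognised there by its nonzero $\mu'$-coordinate. Its colour tuple in that core gives $\mu_1$ (indeed every colouring fixed so far) on all of $A\cap B$. Hence (a) can be tested directly on $x$ and $y$.
\item If no later block has a vertex quantifier: an unlabelled boundary vertex never enters any core below $x$. So $x$ does not depend on its colour, and you may reset that colour on one side to agree with the other without changing $x$ or $y$. Labelled boundary vertices lie on both sides by (b)--(c), and their colours are compared in the base case.
\end{itemize}
The elements of the side signatures come from full $k$-labelings, so run this test after forgetting the indices outside $I_A$ (resp.\ $I_B$); this is the role of the sets $J_A,J_B$ in the paper. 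With this addition, your bijection between glued assignments and compatible triples $(x,y,\text{split})$ is justified, and the argument proves the lemma.
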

The above lemma is not only used for the computation of signatures.
Below we shall also use it to bound the signature sizes for bounded-treewidth graphs.

\vspace{-0.2cm}

\paragraph{Bounding the signature size.}
We now sketch our proof for bounding the recursive sizes of the signatures.
Although our final goal is to bound the recursive sizes $\lVert \mathsf{sgn}_S(G,\lambda,\mu) \rVert$ of the signatures, the main step here is to bound the sizes of the signatures \textit{as sets}, i.e., $|\mathsf{sgn}_S(G,\lambda,\mu)|$.
Indeed, due to the definition of the recursive size, once we can bound $|\mathsf{sgn}_S(G,\lambda,\mu)|$ for every $S$ and every $(G,\lambda,\mu)$, we can also bound $\lVert \mathsf{sgn}_S(G,\lambda,\mu) \rVert$ for every $S$ and every $(G,\lambda,\mu)$.

For simplicity, in this overview, we assume $t = O(1)$ and bound the size of the signatures using only the parameters in the sequence $S$.
For a set $\mathcal{G}$ of $p$-labeled and $P$-colored graphs, we define $\Delta_S(\mathcal{G}) = |\{\mathsf{sgn}_S(G,\lambda,\mu): (G,\lambda,\mu) \in \mathcal{G}\}|$, which is the number of different $S$-signatures the graphs in $\mathcal{G}$ have.
Denote by $\mathcal{G}_{p,P}$ the set of all $p$-labeled and $P$-colored graphs of treewidth at most $t$.

Consider a graph $(G,\lambda,\mu) \in \mathcal{G}_{p,P}$.
Let $S = ((k,s))+S'$ be a sequence of pairs of natural numbers.
By construction, the size of $\mathsf{sgn}_S(G,\lambda,\mu)$ is just to equal to the number of different signatures $\mathsf{sgn}_{S'}(G,\lambda \oplus \lambda',\mu \otimes \mu')$ we can obtain by choosing $\lambda' \in \varLambda_{G,k}$ and $\mu' \in U_{G,\{0,1\}^s}$.
Note that this number could be way smaller than the trivial bound $|\varLambda_{G,k}| \cdot |U_{G,\{0,1\}^s}|$, because different choices of $\lambda'$ and $\mu'$ may result in graphs $(G,\lambda \oplus \lambda',\mu \otimes \mu')$ with the same $S'$-signature.
To establish a good bound for this number, our plan is to associate with each choice $(\lambda',\mu') \in \varLambda_{G,k} \times U_{G,\{0,1\}^s}$ an ``object'' $\varGamma(\lambda',\mu')$ that satisfies the following properties:
\begin{enumerate}
    \item[\textbf{(P1)}] if $\varGamma(\lambda_1',\mu_1') = \varGamma(\lambda_2',\mu_2')$, then $\mathsf{sgn}_{S'}(G,\lambda \oplus \lambda_1',\mu \otimes \mu_1') = \mathsf{sgn}_{S'}(G,\lambda \oplus \lambda_2',\mu \otimes \mu_2')$,
    \item[\textbf{(P2)}] one can easily obtain a good upper bound for the number of different $\varGamma(\lambda',\mu')$.
\end{enumerate}
Property \textbf{(P1)} guarantees that $|\mathsf{sgn}_S(G,\lambda,\mu)|$ is at most the number of different $\varGamma(\lambda',\mu')$, and then property \textbf{(P2)} will allow us to find a bound for the latter.
To define the object $\varGamma(\lambda',\mu')$, we need the following decomposition lemma for bounded-treewidth graphs, which is known as propeller decomposition in~\cite{lokshtanov2018beating}.
We omit its proof in this overview.

\begin{lemma}[\cite{fomin2012planar,lokshtanov2018beating}] \label{lem-decompose-overview}
    Let $G$ be a graph with $\mathbf{tw}(G) = O(1)$.
    Then for any $R \subseteq V(G)$ with $|R| \leq r$, there exist $V_0,V_1,\dots,V_{r'} \subseteq V(G)$ where $r' = O(r)$ satisfying the following conditions:
    \begin{enumerate}[(i)]
        \item $V(G) = \bigcup_{i=0}^{r'} V_i$,
        \item $R \subseteq V_0$ and $|V_0| \leq O(r)$,
        \item $N_G(V_i \backslash V_0) \subseteq V_0 \cap V_i$ and $|V_0 \cap V_i| = O(1)$ for all $i \in [r']$.
    \end{enumerate}    
\end{lemma}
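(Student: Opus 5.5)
We take a tree decomposition $(T,\beta)$ of $G$ of width $t=O(1)$ and carve out a bounded-width subtree containing $R$. First, for every $v\in R$ we pick one node $x_v\in V(T)$ with $v\in\beta(x_v)$. We root $T$ and let $M$ be the closure of $\{x_v: v\in R\}$ under taking lowest common ancestors. It is standard that $|M|\le 2|R|-1=O(r)$, since the LCA-closure of $m$ marked nodes in a rooted tree has at most $2m-1$ nodes.

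We then set $V_0=\bigcup_{x\in M}\beta(x)$. This set contains $R$ and has $|V_0|\le (t+1)|M| = O(r)$, which gives (ii).

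Next we look at the connected components $C$ of $T-M$. Each such $C$ is adjacent to at most two nodes of $M$; this is the key property of LCA-closed sets. Indeed, a component has at most one parent in $M$ (the node above its root), and if two distinct nodes of $M$ lay below $C$, their LCA would lie in $C$ and hence in $M$, a contradiction. For each component $C$ we define
\[
W_C=\bigcup_{y\in C}\beta(y).
\]
By the properties of tree decompositions, every vertex of $W_C\setminus V_0$ occurs only in bags of $C$. Consequently any neighbour of such a vertex lies in $W_C$. Moreover, $W_C\cap V_0$ is contained in the bags of the at most two $M$-nodes adjacent to $C$, so $|W_C\cap V_0|\le 2(t+1)=O(1)$. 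Hence $N_G(W_C\setminus V_0)\subseteq W_C\cap V_0$, provided we use the convention that edges inside $W_C\setminus V_0$ are internal. Since the neighbourhood of a set excludes the set itself, this is exactly (iii). Every vertex of $G$ lies in some bag, so the sets $V_0$ and $W_C$ together cover $V(G)$, which gives (i).

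The main obstacle is the count: $T-M$ may have many components, not $O(r)$. We therefore group components by their attachment set, that is, the set of at most two nodes of $M$ they are adjacent to, and let $V_i$ be the union of $W_C$ over one group. Attachment sets are single nodes of $M$ or pairs forming edges of the contracted tree on $M$, which has $|M|-1$ edges, so there are $r'=O(|M|)=O(r)$ groups. A merged $V_i$ still meets $V_0$ only inside the bags of its at most two attachment nodes, so the bound $|V_0\cap V_i|=O(1)$ survives. The neighbourhood property also survives, because distinct components in a group share no vertices outside $V_0$. Components that touch no node of $M$ do not occur when $T$ is connected and $M\neq\emptyset$. If $R=\emptyset$, we simply take $V_0=\emptyset$ and $V_1=V(G)$.
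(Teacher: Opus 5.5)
Your proof is correct and is essentially the paper's argument for Lemma~\ref{lem-decomp}. You mark one bag per vertex of $R$, close the marked set under branching (the paper adds the degree-$\ge 3$ nodes of the Steiner subtree, you take the LCA closure), let $V_0$ be the union of the marked bags, and use that each component of the remaining forest touches at most two marked nodes. The only difference is how the number of pieces is bounded: the paper uses a nice tree decomposition, whose maximum degree $3$ gives at most $3|X^*|\le 6r$ components directly, whereas you allow arbitrary degree and merge components that share an attachment set, which works equally well because the attachment sets are nodes or edges of the contracted tree on $M$.
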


Consider a choice $(\lambda',\mu') \in \varLambda_{G,k} \times U_{G,\{0,1\}^s}$.
We apply the above lemma on the graph $G$ with $R = \text{Im}(\lambda \oplus \lambda')$ to obtain $V_0,V_1,\dots,V_{r'} \subseteq V(G)$ satisfying the three conditions.
Note that $|R| \leq p+k$ and thus $r' = O(p+k)$.
Condition (ii) of the lemma implies $\text{Im}(\lambda \oplus \lambda') \subseteq V_0$ and $|V_0| = O(r) = O(p+k)$.
Condition (iii) guarantees that there is no edge between $V_i \backslash V_0$ and $V_j \backslash V_0$ for any different $i,j \in [r']$.
For convenience, we assume without loss of generality that $|V_0 \cap V_1| = \cdots = |V_0 \cap V_{r'}| = z$; we have $z = O(1)$ by condition (iii).

For $i \in [r']_0$, let $(G[V_i],\lambda_i,\mu_i)$ be the restriction of $(G,\lambda \oplus \lambda',\mu \otimes \mu')$ to $V_i$.
Also, for $i \in [r']$, let $\pi_i: [z] \rightarrow V_0$ be a function that maps the numbers in $[z]$ bijectively to the vertices in $V_0 \cap V_i$, which is a $z$-labeling on $G[V_0]$, and let $\tau_i: V_i \rightarrow [z]_0$ be the function defined as $\mu_i(v) = \pi_i^{-1}(v)$ for $v \in V_0 \cap V_i$ and $\mu_i(v) = 0$ for $v \in V_i \backslash V_0$, which is a $[z]_0$-coloring on $G[V_i]$.
Now the key observation is that $\mathsf{sgn}_{S'}(G,\lambda \oplus \lambda',\mu \otimes \mu')$ is uniquely characterized by the graph $(G[V_0],\lambda_0 \oplus \pi_1 \oplus\cdots \oplus \pi_{r'},\mu_0)$ and the signatures $\mathsf{sgn}_{S'}(G[V_1],\lambda_1,\mu_1 \otimes \tau_1),\dots,\mathsf{sgn}_{S'}(G[V_{r'}],\lambda_{r'},\mu_{r'} \otimes \tau_{r'})$.

\begin{lemma} \label{lem-seq}
    One can compute $\mathsf{sgn}_{S'}(G,\lambda \oplus \lambda',\mu \otimes \mu')$, knowing only $(G[V_0],\lambda_0 \oplus \pi_1 \oplus \cdots \oplus \pi_{r'},\mu_0)$ and $\mathsf{sgn}_{S'}(G[V_1],\lambda_1,\mu_1 \otimes \tau_1),\dots,\mathsf{sgn}_{S'}(G[V_{r'}],\lambda_{r'},\mu_{r'} \otimes \tau_{r'})$.
\end{lemma}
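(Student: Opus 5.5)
We prove Lemma~\ref{lem-seq} by repeatedly applying Lemma~\ref{lem-combine}, peeling off the pieces $V_{r'},V_{r'-1},\dots,V_1$ one at a time. We work with the fixed sequence $S'$ throughout; Lemma~\ref{lem-combine} is stated for arbitrary $S$, so no induction on $S'$ is needed. For $q\in[r']_0$ let $W_q = V_0\cup V_1\cup\cdots\cup V_q$, so $W_{r'}=V(G)$ by condition (i) of Lemma~\ref{lem-decompose-overview}.

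\textbf{Step 1: the separations.} For each $q\in[r']$, the pair $(W_{q-1},V_q)$ is a separation of $G[W_q]$, and $W_{q-1}\cap V_q = V_0\cap V_q$. The reason is that condition (iii) gives $N_G(V_q\setminus V_0)\subseteq V_0\cap V_q$. So any vertex of $V_q\setminus W_{q-1}\subseteq V_q\setminus V_0$ has no neighbour in $W_{q-1}\setminus V_q$. The only subtlety is overlap $V_q\cap V_j$ outside $V_0$ for some $j<q$. Such vertices lie in $W_{q-1}\cap V_q$, which need not equal $V_0\cap V_q$. Here I would first normalise the decomposition (this is harmless because we may take the $V_i\setminus V_0$ pairwise disjoint, as in the propeller construction), or else replace $V_q$ by $(V_q\setminus W_{q-1})\cup(V_0\cap V_q)$. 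I expect this bookkeeping to be the only real obstacle: we must ensure that the coloring $\tau_q$ in the statement is exactly the boundary coloring $\mu'$ that Lemma~\ref{lem-combine} requires. That coloring maps $V_0\cap V_q$ bijectively to $[z]$ and everything else to $0$, which is precisely how $\tau_q$ is defined via $\pi_q$.

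\textbf{Step 2: the induction.} Let $\kappa=\lambda\oplus\lambda'$ and $\nu=\mu\otimes\mu'$. We maintain, for each $q$, the labeled colored graph
\[
\mathcal{H}_q = (G[W_q],\ \kappa_{W_q}\oplus\pi_{q+1}\oplus\cdots\oplus\pi_{r'},\ \nu_{W_q}).
\]
Here the $\pi_j$ are viewed as labelings into $V_0\subseteq W_q$, so $\mathcal{H}_q$ retains, via labels, the attachment points of the pieces not yet glued. We have $\mathcal{H}_0=(G[V_0],\lambda_0\oplus\pi_1\oplus\cdots\oplus\pi_{r'},\mu_0)$, which is given, so we also know $\mathsf{sgn}_{S'}(\mathcal{H}_0)$: signatures are defined directly from the graph. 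We want $\mathsf{sgn}_{S'}(\mathcal{H}_{r'})=\mathsf{sgn}_{S'}(G,\kappa,\nu)$.

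For the step from $q-1$ to $q$, apply Lemma~\ref{lem-combine} to a labeled colored graph on $G[W_q]$ with separation $(W_{q-1},V_q)$. Its labeling is $\kappa_{W_q}\oplus\pi_{q+1}\oplus\cdots\oplus\pi_{r'}$, and its coloring is $\nu$ tensored with the boundary coloring. The restriction to $V_q$ is $(G[V_q],\lambda_q,\mu_q\otimes\tau_q)$; the labels from $\pi_{j}$ with $j>q$ landing in $V_0\cap V_q$ are handled through $\lambda^{-1}(B)$. The restriction to $W_{q-1}$ is $\mathcal{H}_{q-1}$ with coloring refined by the boundary colors and the $\pi_q$-labels dropped. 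That refinement is recoverable: the boundary colors on $V_0\cap V_q$ are read off from the labels $\pi_q$ in $\mathcal{H}_{q-1}$. The clean way to do this is to note that, for any $S'$, the signature of a graph with extra labels determines the signature of the graph obtained by converting those labels into colors, and then forgetting the labels. This is a small auxiliary claim, proved by induction along the definition of $\mathsf{sgn}$. At the base case it is an operation on the bottom-level labeled graphs; inductively it is applied elementwise to the nested sets.

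\textbf{Step 3: the label index sets.} Lemma~\ref{lem-combine} also needs $\kappa^{-1}(W_{q-1})$ and $\kappa^{-1}(V_q)$, which are determined by the given data. Since $\mathrm{Im}(\kappa)\subseteq V_0$, all labels lie in $V_0$. A label lies in $V_q$ exactly when its vertex is in $V_0\cap V_q$, and this is visible in $G[V_0]$ by comparing it with the vertices $\pi_q(1),\dots,\pi_q(z)$. Iterating Step 2 for $q=1,\dots,r'$ yields $\mathsf{sgn}_{S'}(G,\lambda\oplus\lambda',\mu\otimes\mu')$, which proves the lemma.
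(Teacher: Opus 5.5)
Your plan of gluing $V_1,\dots,V_{r'}$ onto $V_0$ one at a time with Lemma~\ref{lem-combine} is the right one. You are also right that the invariant must be richer than the plain restriction $(G[W_q],\kappa,\nu_{W_q})$. However, the enrichment you chose, keeping the pending attachment sets as \emph{labels} $\pi_{q+1},\dots,\pi_{r'}$, breaks at the gluing step.

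Lemma~\ref{lem-combine} (formally Lemma~\ref{lem-sgncomp}(iii)) needs the signature of the restriction to $B=V_q$ of the whole labeled graph, i.e.\ with labeling $\mathsf{sort}\bigl((\kappa\oplus\pi_{q+1}\oplus\cdots\oplus\pi_{r'})^{|V_q}\bigr)$. Suppose some $\pi_j(a)$ with $j>q$ lies in $V_0\cap V_q$. This is the generic case, because different pieces share attachment vertices (e.g.\ two components of $T-X^*$ adjacent to the same node of $X^*$, or via the padding of $V_0\cap V_i$). Then this restriction is not $(G[V_q],\lambda_q,\mu_q\otimes\tau_q)$. The index set $\lambda^{-1}(B)$ cannot stand in for the missing labels inside the signature.

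You also cannot create those labels from the given signature. A colour that singles out a vertex cannot be turned into a label, because unlabeled vertices are forgotten at the bottom level. For example, take $S'=((1,0))$, let $H_1$ be a single edge $uv$ and $H_2$ two isolated vertices $u,v$, both coloured $u\mapsto 1$, $v\mapsto 0$. With empty labeling, both signatures consist of the two one-vertex graphs of colours $0$ and $1$. Once $u$ is labeled, they differ.

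Simply omitting these labels on the $B$ side fails too. At the bottom level, an edge between $\pi_j(a)\in V_0\cap V_q$ and a freshly quantified vertex of $V_q\setminus V_0$ would then be recorded on neither side. Your auxiliary claim (labels can be converted into colours) is correct and settles the $A$ side. The $B$ side, however, needs the converse, which is false.

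The fix, which is what the paper does in the proof of Lemma~\ref{lem-character}, is to record the pending attachment sets as \emph{colours}. Maintain $\mathsf{sgn}_{S'}\bigl(G[W_q],\kappa,(\nu\otimes\hat\pi_{r'}\otimes\cdots\otimes\hat\pi_{q+1})_{|W_q}\bigr)$, where $\hat\pi_j$ sends $v\in V_0\cap V_j$ to $\pi_j^{-1}(v)$ and every other vertex to $0$. The base case $q=0$ is read off from the given graph $(G[V_0],\lambda_0\oplus\pi_1\oplus\cdots\oplus\pi_{r'},\mu_0)$.

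In the step, the $A$-side invariant already carries $\hat\pi_q$, which is exactly the boundary colouring Lemma~\ref{lem-combine} requires and then discards. On $V_q$, one has $\hat\pi_j=g_j\circ\tau_q$ for each $j>q$. Here $g_j(a)=\pi_j^{-1}(\pi_q(a))$ if $a\neq 0$ and $\pi_q(a)\in V_0\cap V_j$, and $g_j(a)=0$ otherwise; this map is computable from the given $G[V_0]$ with its $\pi$-labels. So Lemma~\ref{lem-sgncomp}(ii) recolours the given $\mathsf{sgn}_{S'}(G[V_q],\lambda_q,\mu_q\otimes\tau_q)$ into the required $B$-side signature. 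The only labels are those of $\kappa$, so nothing has to be manufactured: colours can be refined by functions of colours, whereas labels cannot be created.

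As for Step~1, only your first remedy is available. The given data is tied to $G[V_q]$, so you cannot shrink $V_q$. But the propeller construction does make the sets $V_i\setminus V_0$ pairwise disjoint, which gives $W_{q-1}\cap V_q=V_0\cap V_q$ as needed; the paper uses this implicitly.
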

\begin{proof}[Proof sketch.]
Let $(G_i,\lambda_i^*,\mu_i^*)$ be the restriction of $(G,\lambda \oplus \lambda',\mu \otimes \mu')$ to $V_0 \cup (\bigcup_{j=1}^i V_j)$.
Roughly speaking, the idea for computing $\mathsf{sgn}_{S'}(G,\lambda \oplus \lambda',\mu \otimes \mu')$ is to keep applying Lemma~\ref{lem-combine} to iteratively compute $\mathsf{sgn}_{S'}(G_i,\lambda_i^*,\mu_i^*)$ for $i = 0,1,\dots,r'$.
By construction, $(V(G_{i-1}),V_i)$ is a separation of $G_i$.
As such, Lemma~\ref{lem-combine} allows us to compute $\mathsf{sgn}_{S'}(G_i,\lambda_i^*,\mu_i^*)$ from $\mathsf{sgn}_{S'}(G_{i-1},\lambda_{i-1}^*,\mu_{i-1}^*)$ and the given signature $\mathsf{sgn}_{S'}(G[V_i],\lambda_i,\mu_i \otimes \tau_i)$, together with some other information that is encoded in the given graph $(G[V_0],\lambda_0 \oplus \pi_1 \oplus \cdots \oplus \pi_{r'},\mu_0)$.
Note that $(G_{r'},\lambda_{r'}^*,\mu_{r'}^*) = (G,\lambda \oplus \lambda',\mu \otimes \mu')$.
So we end up with the desired signature $\mathsf{sgn}_{S'}(G,\lambda \oplus \lambda',\mu \otimes \mu')$.

We remark that the actual proof of this lemma is more technical than our discussion above.
In fact, applying Lemma~\ref{lem-combine} does not enable us to compute $\mathsf{sgn}_{S'}(G_i,\lambda_i^*,\mu_i^*)$ from $\mathsf{sgn}_{S'}(G_{i-1},\lambda_{i-1}^*,\mu_{i-1}^*)$ and $\mathsf{sgn}_{S'}(G[V_i],\lambda_i,\mu_i \otimes \tau_i)$.
Therefore, in the actual proof, we need to compute the $S'$-signature of a graph more complicated than $(G_i,\lambda_i^*,\mu_i^*)$ in each iteration, in order to make the induction work.
For simplicity, we omit the details here.
\end{proof}

\noindent
Now we simply define $\varGamma(\lambda',\mu')$ to be the sequence
\begin{equation*}
    ((G[V_0],\lambda_0 \oplus \pi_1 \oplus \cdots \oplus \pi_{r'},\mu_0),\mathsf{sgn}_{S'}(G[V_1],\lambda_1,\mu_1 \otimes \tau_1),\dots,\mathsf{sgn}_{S'}(G[V_{r'}],\lambda_{r'},\mu_{r'} \otimes \tau_{r'})).
\end{equation*}
Lemma~\ref{lem-seq} directly implies property \textbf{(P1)} of $\varGamma(\lambda',\mu')$.
Next, we consider property \textbf{(P2)}, i.e., how to bound the number of different $\varGamma(\lambda',\mu')$.
We first observe some basic facts about $\varGamma(\lambda',\mu')$.
The first element in $\varGamma(\lambda',\mu')$ is a $(p+k+r'z)$-labeled and $(P \times \{0,1\}^s)$-colored graph of treewidth at most $t$, which has $O(p+k)$ vertices because $|V_0| = O(p+k)$.
For each $i \in [r']$, the graph $(G[V_i],\lambda_i,\mu_i \otimes \tau_i)$ is a $|\text{Im}(\lambda\oplus \lambda') \cap V_i|$-labeled and $(P \times \{0,1\}^s \times [z]_0)$-colored graph of treewidth at most $t$.
Since $\text{Im}(\lambda\oplus \lambda') = R \subseteq V_0$, we have $|\text{Im}(\lambda\oplus \lambda') \cap V_i| \leq |V_0 \cap V_i| = z$.
Thus, each of the remaining elements in $\varGamma(\lambda',\mu')$ is the $S'$-signature of a graph in $\mathcal{G}_{q,P \times \{0,1\}^s \times [z]_0}$ for some $q \in [z]$.

With these observations, we are ready to bound the number of different sequences $\varGamma(\lambda',\mu')$ for $(\lambda',\mu') \in \varLambda_{G,k} \times U_{G,\{0,1\}^s}$.
Since $r' = O(p+k)$, $t = O(1)$, and $z = O(1)$, it turns out that the number of $(p+k+r'z)$-labeled and $(P \times \{0,1\}^s)$-colored graphs with $O(p+k)$ vertices and treewidth at most $t$ is $(p+k)^{O(p+k)} \cdot (2^s|P|)^{O(p+k)}$ (up to isomorphism), which bounds the number of possible values for the first element in $\varGamma(\lambda',\mu')$.
For each of the remaining elements in $\varGamma(\lambda',\mu')$, the number of possible values is bounded by $\sum_{q=1}^z \Delta_{S'}(\mathcal{G}_{q,P \times \{0,1\}^s \times [z]_0})$, as it is the $S'$-signature of a graph in $\mathcal{G}_{q,P \times \{0,1\}^s \times [z]_0}$ for some number $q \in [z]$.
It is easy to see that $\sum_{j=1}^z \Delta_{S'}(\mathcal{G}_{j,P \times \{0,1\}^s \times [z]_0}) \leq z \cdot \Delta_{S'}(G_{z,P \times \{0,1\}^s \times [z]_0})$.
As such, there can be $(p+k)^{O(p+k)} \cdot (2^s|P|)^{O(p+k)} \cdot (\Delta_{S'}(\mathcal{G}_{z,P \times \{0,1\}^s \times [z]_0}))^{O(p+k)}$ different sequences $\varGamma(\lambda',\mu')$ in total,
which implies the bound
\begin{equation} \label{eq-recur1}
    |\mathsf{sgn}_S(G,\lambda,\mu)| \leq (p+k)^{O(p+k)} \cdot (2^s|P|)^{O(p+k)} \cdot (\Delta_{S'}(\mathcal{G}_{z,P \times \{0,1\}^s \times [z]_0}))^{O(p+k)}.
\end{equation}
To make use of this inequality, we need to further bound $\Delta_{S'}(\mathcal{G}_{z,P \times \{0,1\}^s \times [z]_0})$ on the right-hand side.
Interestingly, as we will see, our argument above also gives a \textit{recursive} bound for $\Delta$-values.

Recall our construction of the sequences $\varGamma(\lambda',\mu')$.
While we constructed $\varGamma(\lambda',\mu')$ with respect to the graph $(G,\lambda,\mu)$ and the choice $(\lambda',\mu') \in \varLambda_{G,k} \times U_{G,\{0,1\}^s}$, the sequence $\varGamma(\lambda',\mu')$ indeed \textit{only} depends on the $(p+k)$-labeled and $(P \times \{0,1\}^s)$-colored graph $(G,\lambda \oplus \lambda',\mu \otimes \mu')$.
In other words, using the same construction, we can associate with every graph in $\mathcal{G}_{p+k,P \times \{0,1\}^s}$ such a $\varGamma$-sequence that uniquely determines its $S'$-signature.
Therefore, the bound in Inequality~\ref{eq-recur1} applies to not only $\mathsf{sgn}_S(G,\lambda,\mu)$ but also its superset $K = \{\mathsf{sgn}_{S'}(H,\gamma,\tau):(H,\gamma,\tau) \in \mathcal{G}_{p+k,P \times \{0,1\}^s}\}$.
Note that $\{\mathsf{sgn}_S(G,\lambda,\mu): (G,\lambda,\mu) \in \mathcal{G}_{p,P}\} \subseteq 2^K$.
So we have the following recursive bound
\begin{equation} \label{eq-recur2}
    \Delta_S(\mathcal{G}_{p,P}) \leq 2^{|K|} \leq  2^{(p+k)^{O(p+k)} \cdot (2^s|P|)^{O(p+k)} \cdot (\Delta_{S'}(\mathcal{G}_{z,P \times \{0,1\}^s \times [z]_0}))^{O(p+k)}}.
\end{equation}

Now we can bound $|\mathsf{sgn}_S(G,\lambda,\mu)|$ by first applying Inequality~\ref{eq-recur1} and then repeatedly applying Inequality~\ref{eq-recur2}.
For the base case, we can show that $\Delta_{(k,s)}(\mathcal{G}_{p,P}) \leq 2^{(p+k)^{O(p+k)} \cdot |P|^{O(p+k)}}$, which is (surprisingly) independent of $s$.
We omit the proof of the base case and the calculation for working out the bound of $|\mathsf{sgn}_S(G,\lambda,\mu)|$.
But we can get some intuition about the bound just by checking Inequalities~\ref{eq-recur1} and~\ref{eq-recur2}.
Suppose $S = ((k_1,s_1),\dots,(k_d,s_d))$.
When applying Inequality~\ref{eq-recur1}, the parameter $k_1$ appears in the single exponential position, i.e., the top of an exponential tower of height $1$.
A nice property of this inequality is that the part $\Delta_{S'}(\mathcal{G}_{z,P \times \{0,1\}^s \times [z]_0})$ on the right-hand side, where $S' = ((k_2,s_2),\dots,(k_d,s_d))$ in this case, is \textit{independent} of $k_1$, and the number $z$ is a constant (only depending on $t$).
Therefore, while $\Delta_{S'}(\mathcal{G}_{z,P \times \{0,1\}^s \times [z]_0})$ might contain higher exponential towers, the parameter $k_1$ remains in the single exponential position.
When we further expand $\Delta_{S'}(\mathcal{G}_{z,P \times \{0,1\}^s \times [z]_0})$ using Inequality~\ref{eq-recur2}, the parameter $k_2$ appears in the double exponential position, i.e., the top of an exponential tower of height $2$.
Again, the recursive part we obtain is independent of $k_2$, and therefore $k_2$ remains in the double exponential position towards the end.
By keep expanding the $\Delta$-part in the bound using Inequality~\ref{eq-recur2}, we can see that each $k_i$ appears in the $i$-th exponential position.
In contrast, the parameters $s_1,\dots,s_{d-1}$ eventually all climb to the top of the exponential tower of height $d$ (due to our lower bounds, this is unavoidable).
This is because every time we expand the $\Delta$-part, the set $P$ brings the $s$-parameters in the previous levels to the next level.
The bound we obtain for $|\mathsf{sgn}_S(G,\lambda,\mu)|$ is independent of $s_d$, because the base case is independent of $s_d$.
When further using it to bound $\lVert \mathsf{sgn}_S(G,\lambda,\mu) \rVert$, however, $s_d$ will appear in the single exponential position, as the size of a $((k_d,s_d))$-signature is single exponential in $s_d$.

This completes the overview for bounding the recursive sizes of the $S$-signatures when $t = O(1)$.
The actual proof is substantially more involved since we need to consider the dependency on $t$ as well.
But the overview already covers most of the key insights in our proof.


\subsection{Lower bounds} \label{sec-overview2}


Proof Theorem~\ref{thm:intro-LW2} is by a simple reduction from {\sc $3$-CNF SAT} (see Section~\ref{subsec:lw} for details). We explain the technical overview of proofs of Theorems~\ref{thm:intro-LW1}, \ref{thm:intro-LW3}, and \ref{thm:intro-LW4}. We give reductions from  \threecoloring\ to prove Theorems~\ref{thm:intro-LW1}, \ref{thm:intro-LW3}, and \ref{thm:intro-LW4}, when $i\geq 2$. In \threecoloring, the objective is to test whether the given graph $G$ has a proper vertex coloring using three colors. Assuming ETH, no algorithm for  \threecoloring\  runs in time  $2^{o(n)} n^{O(1)}$, where $n$ is the number of vertices in the input graph. Our reduction algorithms will take an $n$ vertex graph $G$ as input, and outputs a graph $G'$ and formula $\phi$ such that $G'\models \phi$ if and only of $G$ is $3$-colorable. In addition to the binary edge relation ${\sf adj}$, we also use a finite number of unary label predicates in our formulas. We can eliminate these label predicates, but the use of them eases our explanation. For a label predicate ${\sf L}$, we write $\exists x\in {\sf L} \; \psi$ to denote $\exists x ({\sf L}(x)\wedge \psi)$ and $\forall x \in {\sf L} \; \psi$ to denote $\forall x ({\sf L}(x)\implies \psi)$.  For all the reductions the initial part of the construction of $G'$ is a tree (let us call it base tree). So, first we  explain the construction of the base tree. Along with the base tree, we define a formula which is an FO formula except that it contains a function ${\sf id}$ which returns a non-negative integer. So, let us call such a formula as FO+{\sf id} formula. Then, for different values of $i$ in Theorems~\ref{thm:intro-LW1}, \ref{thm:intro-LW3}, and \ref{thm:intro-LW4}, we explain how to replace ${\sf id}$ with a valid FO/MSO subformula. 

\medskip
\noindent
{\bf Construction of base tree.}
Let $G$ be an instance of \threecoloring, with vertex set $V(G)=\{1,2,\ldots,n\}$ and edge set $E(G)=\{e_1,\ldots,e_m\}$. 
Let $\alpha$ be a constant selected based on the runtime of the {\sc S-FO/MSO Testing} algorithm, under the assumption—made for contradiction—that our lower bound results do not hold.
We partition $V(G)$ into $\alpha$  groups $V_1,\ldots,V_{\alpha}$ such that for each $i\in [\alpha]$, $|V_i|\leq \lceil \frac{n}{\alpha}\rceil$. Let $\ell=3^{\lceil \frac{n}{\alpha}\rceil}$. For each $i\in [\alpha]$, there are at most $\ell$ proper $3$-colorings of $G[V_i]$. Let us call these $3$-colorings  $c_{i,1},\ldots c_{i,{\ell}_i}$.

Now we construct a tree $T_1$ rooted at a node $rt$ as follows. The root $rt$ has $\alpha+m$ children and we name them $U_1,\ldots,U_{\alpha}$ and $f_1,\ldots, f_m$. That is, each node $U_i$ corresponds to the vertex subset $V_i$ of $G$ and each node $f_i$ corresponds to the edge $e_i$ of $G$. Each node $f_i$ has two children corresponding to the endpoints of $e_i$. Let us name these nodes with $f_{i,a}$ and $f_{i,b}$, 
where $a$ and $b$ are the endpoints of $e_i$. 
Now, we explain the children of each $U_i$. Recall that $\{c_{i,1},\ldots,c_{i,\ell_i}\}$ is the set of all proper $3$-colorings of $G[V_i]$. The node $U_i$ has $\ell_i$ children, and they are named $C_{i,1},\ldots,C_{i,\ell_i}$. Each $C_{i,j}$ has $|V_i|$ children, and each of them corresponds to a vertex in $V_i$. 
That is, each $a\in V_i$, $C_{i,j}$ has a child node named $v_{i,j,a}$. Now, each $v_{i,j,a}$ has two children $id_{i,j,a}$ and $c_{i,j,a}$. 
See Figure~\ref{fig:Base-treeIntro} for an illustration.     

\begin{figure}[t]
    \centering
    
\begin{tikzpicture}[scale=0.9, ver/.style = {draw, circle}]

 \tikzset{tri1/.style={isosceles triangle, draw, inner sep=0pt,
     anchor=south, shape border rotate=90, isosceles triangle stretches}}

 \tikzset{tri2/.style={isosceles triangle, draw, inner sep=0pt,
     anchor=south, shape border rotate=270, isosceles triangle stretches}}


\node[ver] (r) at (0,0) {$rt$};

\node[ver] (f1) at (-3,2) {$f_{1}$}; 
\node at (-2,2) {$\cdots$};
\node[ver] (f2) at (-0.5,2) {$f_s$};
\node at (0.5,2) {$\cdots$};
\node[ver] (fm) at (2,2) {$f_m$};

\draw[-] (f1)--(r) -- (f2);
\draw[-] (r) -- (fm);

\node[ver] (es1) at (-2,4) {$f_{s,a}$};
\node[ver] (es2) at (1,4) {$f_{s,b}$};

\draw[-] (es1)--(f2) -- (es2);

\node[ver] (1) at (-4,-2) {$U_1$}; 
\node at (-2.5,-2) {$\cdots$};
\node[ver] (i) at (-1,-2) {$U_i$};
\node at (0.5,-2) {$\cdots$};
\node[ver] (p) at (2.5,-2) {$U_{\alpha}$};
\draw[-] (1)--(r) -- (i);
\draw[-] (r) -- (p);

\node[ver] (d1) at (-3,-4) {$C_{i,1}$};
\node at (-2,-4) {$\cdots$};
\node[ver] (d2) at (-1,-4) {$C_{i,j}$};
\node at (0,-4) {$\cdots$};
\node[ver] (dl) at (1,-4) {$C_{i,\ell_i}$};

\draw[-] (d1)--(i) -- (d2);
\draw[-] (i) -- (dl);

\node[ver] (dija) at (-1,-6) {$v_{i,j,a}$};
\node at (0,-6) {$\cdots$};
\node at (-2,-6) {$\cdots$};

\draw[-] (dija)--(d2);

\node[ver] (dj1) at (-2,-8) {$id_{i,j,a}$};
\node[ver] (dj2) at (0,-8) {$c_{i,j,a}$};

\draw[-] (dj1)--(dija) -- (dj2);




\end{tikzpicture} 

    \caption{Illustration of construction of base tree $T_1$}
    \label{fig:Base-treeIntro}
\end{figure}

%

In the formal proof, we define nine label predicates. However, for the purpose of this overview, we present only the following three.
For each $q\in [3]$, 
$${\sf Q}_q=\{c_{i,j,a} ~:~a \textit{ is colored with $q$ in the proper coloring $c_{i,j}$ of $G[V_i]$}\}.$$
Now, we define the function ${\sf id}$ on nodes corresponding to the vertices in $G$ as follows. 

$$
  {\sf id}(x) =     \left\{ 
  \begin{array}{rcl}
  b & \mbox{if} & x=f_{s,b} \\
  a & \mbox{if} & x=id_{i,j,a}
  \end{array}\right.
$$

Now we define an FO+${\sf id}$ formula. We want to encode the statement that ``there exist nodes $C_{1,j_1}, C_{2,j_2},\ldots, C_{\alpha,j_{\alpha}}$ that correspond to proper $3$-colorings of $G[V_1], \ldots G[V_{\alpha}]$, respectively,  such that for any node $f_s$, the endpoints of the edge corresponding to $f_s$ should get different colors according to the selected proper $3$-colorings of $G[V_1], \ldots G[V_{\alpha}]$''. 
This can be encoded as follows, using variable names that match the node names for clarity. 

\begin{eqnarray}
 \psi &\equiv& \exists C_{1,j_1} \; \exists C_{2,j_2}\; \ldots \; \exists C_{\alpha,j_{\alpha}} \nonumber\\
 && \forall f_s \; \forall f_{s,a}\; \forall f_{s,b} \; \forall v_{i,j_i,a}\; \forall v_{i',j_{i'},b} \; \forall {id_{i,j_i,a}} \forall {id_{i',j_{i'},b}}  \; 
 \forall {c_{i,j_i,a}}, {c_{i',j_{i'},b}}   \in {\sf Q}
 _1\cup {\sf Q}_2 \cup {\sf Q}_3 \nonumber\\
 && \psi_{{\sf evalid}} \wedge( \psi_{{\sf uvalid}} \Rightarrow ( \psi_{{\sf id}}\Rightarrow \psi_{{\sf color}} ))\nonumber\\
 &\equiv& \exists C_{1,j_1} \; \exists C_{2,j_2}\; \ldots \; \exists C_{\alpha,j_{\alpha}} \nonumber\\
 && \forall f_s \; \forall f_{s,a}\; \forall f_{s,b} \; \forall v_{i,j_i,a}\; \forall v_{i',j_{i'},b} \; \forall {id_{i,j_i,a}} \forall {id_{i',j_{i'},b}}  \; 
 \forall {c_{i,j_i,a}}, {c_{i',j_{i'},b}}   \in {\sf Q}
 _1\cup {\sf Q}_2 \cup {\sf Q}_3 \nonumber\\
 && \psi_{{\sf evalid}} \wedge ( \neg \psi_{{\sf uvalid}} \vee \neg\psi_{{\sf id}}\vee \psi_{{\sf color}}) \label{eqn:intro:psi} 
\end{eqnarray}

We explain the meaning of each subformulas in $\psi$ below. 

\begin{itemize}
    \item $\psi_{{\sf evalid}}$ ensures that  $C_{1,j_1}, C_{2,j_2},\ldots, C_{\alpha,j_{\alpha}}$ are nodes in $T_1$ that correspond to proper $3$-colorings of $G[V_1], \ldots G[V_{\alpha}]$, respectively. 
    \item $\psi_{{\sf uvalid}}$ ensures that all the variables with universal quantifiers are selected appropriately. For example, $f_s$ corresponds to an edge in $G$, $f_{s,a}$ and $f_{s,b}$ are children of $f_s$. Also, $v_{i,j_i,a}$ and $v_{i,j_{i'},b}$ are nodes corresponding to vertices in $G$ and each connected to a vertex in $\{C_{1,j_1}, C_{2,j_2},\ldots, C_{\alpha,j_{\alpha}}\}$. Moreover, $id_{i,j_i,a}$ and $c_{i,j_i,a}$ are children of $v_{i,j_i,a}$ and $id_{i',j_{i'},b}$ and $c_{i',j_{i'},b}$ are children of $v_{i',j_{i'},b}$
    
    \item $\psi_{\sf id}$ is true if and only if ${\sf id}(f_{s,a})={\sf id}(id_{i,j_i,a})$ and ${\sf id}(f_{s,b})={\sf id}(id_{i',j_{i'},b})$. 
    \item $\psi_{\sf color}$ is true if and only if $c_{i,j_i,a}$ and $c_{i,j_{i'},b}$ are different colors. 
\end{itemize}

One can prove that $T_1\models \psi$ if and only of $G$ is $3$-colorable. But, $\psi$ is a $(\alpha,9)$-FO+{\sf id} formula. We design various methods to formulate $\psi_{\sf id}$. 

\medskip
\noindent
{\bf Encoding ids.} The formula $\psi$ constructed above is an $(\alpha,9)$-FO+{\sf id} formula that contains a subformula $\neg \psi_{\sf id}$. Here,  

\begin{eqnarray*}
\neg \psi_{\sf id}&\equiv& \neg( {\sf id}(id_{i,j_i,a})={\sf id}(f_{s,a}) \wedge {\sf id}(id_{i',j_{i'},b})={\sf id}(f_{s,b}))\\
&\equiv& {\sf id}(z_1)\neq{\sf id}(y_1) \vee {\sf id}(z_2)\neq {\sf id}(y_2),
\end{eqnarray*}
where we substituted $z_1=id_{i,j_i,a}, y_1=f_{s,a}, z_2=id_{i',j_{i'},b}$, and $y_2=f_{s,b}$, for convenience. We encode (in)equality of ids using FO/MSO formulas leading to different cases of lower bound results

\medskip
\noindent
{\bf $\log n$-length FO identifier test.}
Notice that, since $V(G)=[n]$, ${\sf id}(x)$ for any node $x\in V(T_1)$, belongs to $[n]$ (if ${\sf id}(x)$ is defined on $x$). 
We need an FO/MSO formula to represent ${\sf id}(z)={\sf id}(y)$ for two variables $z$ and $y$. Towards that we create two label predicates ${\bf 1}$ and ${\bf 0}$. 
Let $k$ be the smallest integer such that $k\geq \log n$. 
For any node $x$ in $T_1$ such that ${\sf id}(x)$ is defined, we do the following. 
Let $b_1,\ldots,b_{k}$ be the binary representation of the number ${\sf id}(x)$. Let $\pi_x$ be a path $a_1,\ldots,a_{k},x$ on $k+1$ vertices. 
Now, we replace node $x$ with path $\pi_x$.  For each $i\in [k]$, $a_i\in {\bf 0}$ if $b_i=0$ and $a_i\in {\bf 1}$ otherwise. The tree constructed as explained above is $T_2$. 
Now, for two nodes $z$ and $y$ in $T$, ${\sf id}(z)={\sf id}(y)$ can be encoded as 

\begin{eqnarray}
\exists a_1,a_1'\in {\bf O}\cup {\bf 1} \ldots \exists a_{k},a_{k}' \in {\bf O}\cup {\bf 1}
&&
\left(\bigwedge_{i\in [k] }
(a_i \in {\bf 0} \Leftrightarrow a'_i \in {\bf 0})\right) \nonumber \\
&& \wedge  {\sf path}(a_1,\ldots,a_{k},z)
\wedge
{\sf path}(a'_1,\ldots,a'_{k},y) \label{enq:intro:LB1}
\end{eqnarray}

where, ${\sf path}(w_1,\ldots,w_q) \equiv (\bigwedge_{i\in [q-1]} {\sf adj}(w_i,w_{i+1}))$. Notice that the number of quantifiers in the above formula is $2\log n$. Now, by substituting (\ref{enq:intro:LB1}) in $\psi$ we get an $(\alpha,9+2\log n)$-FO formula. Thus, any algorithm of running time $\exp^{(2)}(o(k_2))$ for testing $T_2\models \psi$, where $\psi$ is an $(O(1),O(\log n))$-FO formula leads to a $2^{o(n)}$ time algorithm for \threecoloring, a contradiction to ETH. This is a proof overview for Theorem~\ref{thm:intro-LW1} for $i=2$ and $s_j=0$ for all $j$. 

Next we explain how to  use set variables to reduce the length of the path $k$ above as follows. 
Let $k$ and $s$ be two positive integers such that $k\cdot s\geq \log n$. Now  let $b_1,\ldots,b_{k}$ be the base $2^{s}$ representation of the number ${\sf id}(x)$. As before, we have a path $\pi_x=a_1,\ldots,a_k,x$ of length $k+1$ that replaces the node $x$. 
Then, define ${\sf id}(a_{j'})=b_{j'}$ for all $j'$. 

Since $2^{k \cdot s} \geq  n$, each number in $[n]$ can be uniquely represented as above. Now,  we use $s$ set variables $W_1,\ldots,W_s$ to encode ids. Let  $b''_1,\ldots,b''_s$ be the binary representation of ${\sf id}(a_r)$ and $b'_1,\ldots,b'_s$ be the binary representation of ${\sf id}(a'_r)$. Suppose we are able to force the set variables in such a way that for all $j'\in [s]$, $a_{r}\in W_{j'}$ if and only if $b''_{j'}=1$ and $a'_{r}\in W_{j'}$ if and only if $b'_j=1$. Under that condition, ${\sf id}(a_r) = {\sf id}(a'_{r})$ can be encoded as 

\begin{equation*}
 \bigwedge_{j'\in [s]} (a_r\in W_{j'}) \Leftrightarrow (a'_r\in W_{j'})
\end{equation*}

To satisfy the additional condition mentioned above, we create a formula $\psi_{\sf set}(W_1,\ldots W_s)$ such that $\psi_{\sf set}(W_1,\ldots W_s)$ is true if and only if the following is true. For any node $a$, let $b_1,\ldots,b_s$ be the binary representation of ${\sf id}(a)$. Then, the formula $\psi_{\sf set}(W_1,\ldots W_s)$ is true if and only if for any vertex $a$ and ${j'}\in [s]$, $a\in W_{j'}$ only when $b_{j'}=1$. This formula is a $O(1)$-MSO formula with the only set variables are the free variables $W_1,\ldots,W_s$, and all the vertex variables are quantified with universal quantifiers. See Theorem~\ref{thm:low-MSO:s>0:i=2} for more details.  
Then, ${\sf id}(z)={\sf id}(y)$ can be encoded as 

\begin{eqnarray}
\exists W_1, \ldots, \exists W_s, \exists a_1,a_1' \ldots \exists a_{k},a_{k}' 
&&
\left(\bigwedge_{r\in [k] }
\bigwedge_{j'\in [s]} (a_r\in W_{j'}) \Leftrightarrow (a'_r\in W_{j'})
\right) \wedge \psi_{\sf set}(W_1,\ldots W_s)
\nonumber \\
&& \wedge  {\sf path}(a_1,\ldots,a_{k},z)
\wedge
{\sf path}(a'_1,\ldots,a'_{k},y) \label{enq:intro:LB2}
\end{eqnarray}

The new id test leads to Theorem~\ref{thm:intro-LW1} for $i=2$.

\medskip
\noindent
{\bf $\log\log n$-length FO identifier test.}
Recall that we need an FO/MSO formula to represent ${\sf id}(z)={\sf id}(y)$ for two variables $z$ and $y$ where ${\sf id}(z),{\sf id}(y)\in [n]$. 
Let $k$ and $k'$ be the smallest integers such that $k\geq \log n$ and $k'\geq \log \log n$. 
For any node $t \in V(T_1)$ such that ${\sf id}(t)$ is defined, let $b_1\ldots b_{k}$ be the binary representation of the number ${\sf id}(t)$. 
Now we replace $t$ with a subtree as shown in Figure~\ref{fig:treeid1:Intro}. 
Here, each $a_{j'}$ represents $b_{j'}$ in the following way. We set ${\sf id}(d_{j'})=j'$, $c_{j'}\in {\bf O}$ if $b_{j'}=0$ and $c_{j'}\in {\bf 1}$ if $b_{j'}=1$. Here, ${\bf O}$ and ${\bf 1}$ are label predicates to represent whether the bits corresponding to the vertices is 0 and 1, respectively.  In other words, ${\sf id}(d_{j'})$ represents the position of $b_{j'}$ in the binary representation $b_1\ldots b_{k}$ and $c_{j'}$ denotes the value of the bit $b_{j'}$. The crucial observation is that for each $j'\in [k]$, ${\sf id}(d_{j'})$ is a positive integer less than or equal to $\lceil \log n \rceil$. 

\begin{figure}
    \centering
\begin{tikzpicture}[
  grow=down,
  level 1/.style={sibling distance=3cm},
  level 2/.style={sibling distance=2cm},
  every node/.style={circle,draw},
  edge from parent/.style={draw,-}
]

\node {$t$}
  child {node {$a_1$}
    child {node {$d_1$}}
    child {node {$c_1$}}
  }
  child {node {$a_2$}
    child {node {$d_2$}}
    child {node {$c_2$}}
  }
  child {node[draw=none] {$\ldots$} edge from parent[draw=none]
    child {node[draw=none] {$\ldots$} edge from parent[draw=none]}
  }
    child {node {$a_k$}
    child {node {$d_k$}}
    child {node {$c_k$}}
  };

\end{tikzpicture}
  \caption{Subtree to replace the node $t$}
\label{fig:treeid1:Intro}
\end{figure}

Informally, for two nodes $z$ and $y$ in $T_1$, ${\sf id}(z)={\sf id}(y)$ is true if and only if for any child $a$ of $z$ and any child $a'$ of $y$ if the {\em id of the left child of} $a$ and the {\em id of the left child of} $a'$ are equal, then the {\em corresponding bits (encoded in the right child of $a$ and $a'$)} are same. This can be encoded as follows.    

\begin{eqnarray}
{\sf id}(z)={\sf id}(y) &\equiv& \forall a,a', \forall d,d' \forall c,c' \;\;
\phi_{{\sf valid}} \Rightarrow  ( \phi_{{\sf id}}\Rightarrow \phi_{{\sf bit}} )\nonumber\\
&\equiv& \forall a,a', \forall d,d' \forall c,c' \;\;
(\neg \phi_{{\sf valid}} \vee  \neg \phi_{{\sf id}}\vee \phi_{{\sf bit}} )\label{eqn:uniloglog:Intro}
\end{eqnarray}

Here, $\phi_{{\sf valid}}$, $\phi_{{\sf id}}$ and $\phi_{{\sf bit}}$ are defined below.

\begin{eqnarray*}
\phi_{\sf id} &\equiv& {\sf id}(d)={\sf id}(d')\\
\phi_{\sf bit} &\equiv& c \in {\bf 0} \Leftrightarrow c'\in {\bf 0} \\
\phi_{\sf valid} &\equiv& {\sf adj}(a,z) \wedge {\sf adj}(a',y) \wedge {\sf adj}(d,a) \wedge {\sf adj}(c,a) \wedge {\sf adj}(d',a') \wedge {\sf adj}(c',a')    
\end{eqnarray*}

The formula $\phi_{\sf valid}$ is true if and only if $a$ and $a'$ are children of $z$ and $y$, respectively, $d$ and $c$ are children of $a$, and $d'$ and $c'$ are children of $a'$. The formula $\phi_{\sf bit}$ is true if and only if they both {\em encode} the same bit. Clearly, the formula in (\ref{eqn:uniloglog:Intro}) is not an FO formula, but an FO+${\sf id}$ formula with the value of ${\sf id}(d)$ and ${\sf id}(d')$ are positive integers less than or equal to $\lceil \log n \rceil$. So, we apply the $\log n'$-FO identifier test where $n'=\log n$ and get an $(O(\log\log n))$-FO formula to represent ${\sf id}(d)={\sf id}(d')$ where all the quantifiers are existential. By substituting $(O(\log\log n))$-FO formula for ${\sf id}(d)={\sf id}(d')$ in (\ref{eqn:uniloglog:Intro}) we get a 
$(O(\log\log n))$-FO formula where all quantifiers are universal, because of the negation symbol before $\phi_{\sf id}$.   
Now, if we use this formula to test equality of two identifiers in $\psi$, we get Theorem~\ref{thm:intro-LW1} for  $i=3$. 

We would like to mention that if we apply the same strategy recursively, we can prove  Theorem~\ref{thm:intro-LW1} for  $i>3$. 
In other words, we design a $(k_1,\ldots,k_{i-1})$-FO+{\sf id} formula, where ${\sf id}(x)\leq \log^{(i-2)}n$ and $k_r=O(1)$ for all $r\in [i-1]$. Finally, we apply the $(\log \log n')$-FO/MSO identifier test to get a required formula as the output of the reduction algorithm, where $n'\leq \log^{(i-2)}n$. For the case of $i=1$, we give a simple reduction from {\sc $3$-CNF SAT} (see Section~\ref{subsec:reductions} for details).

\medskip
\noindent
{\bf Proof overview of Theorems~\ref{thm:intro-LW3} and \ref{thm:intro-LW4}.}
Let us discuss the case when $i=2$. For the case when $i>2$, the approach is similar to the case of Theorem~\ref{thm:intro-LW1} along with the ideas used below for $i=2$ here. First,  we construct the base tree $T_1$ and an $(\alpha,9)$-FO+{\sf id} formula $\psi$ as mentioned in (\ref{eqn:intro:psi}). Recall the formula (\ref{enq:intro:LB2}) created for id test. Here, we have $s$ set variables $W_1,\ldots,W_s$ and for any node $a_r$ its id is denoted using inclusion of it in the sets $W_1,\ldots,W_s$. Here, notice that ${\sf id}(a_r)\in [2^s-1]$ and $2^{s\cdot k}\geq n$. In Theorem~\ref{thm:intro-LW3}, instead  of using set variables, we add $s$ new nodes $\{w_1,\ldots,w_s\}$.  Recall the role of set variables.  For two node $a_r$ and $a'_r$ ${\sf id}(a_r)={\sf id}(a'_r)$ is encoded by 

$$\bigwedge_{i'\in [s]} (a_r\in W_{i'}) \Leftrightarrow (a'_r\in W_{i'}).$$

Now, to get rid of set variables, we add edges between $\{w_1,\ldots,w_s\}$ and nodes in $T$ for which ${\sf id}$ is defined as follows. Let $x$ be a node and $b_1,\ldots,b_s$ be the binary representation of ${\sf id}(x)$. 
Then, $x$ is adjacent to $t_r$ if and only if $b_r=1$.  Clearly, the treewidth of the new graph is $s$. Then, we can replace (\ref{enq:intro:LB2}) with the following formula. 

\begin{eqnarray}
\exists a_1,a_1' \ldots \exists a_{k},a_{k}' 
&&
\left(\bigwedge_{r\in [k] }
\bigwedge_{i'\in [s]} ({\sf adj}(a_r,w_{i'})) \Leftrightarrow ({\sf adj}(a'_r,w_{i'})
\right) 
\nonumber \\
&& \wedge  {\sf path}(a_1,\ldots,a_{k},z)
\wedge
{\sf path}(a'_1,\ldots,a'_{k},y) \label{enq:intro:LB3}
\end{eqnarray}

Now, by substituting (\ref{enq:intro:LB3}) in $\psi$, we get Theorem~\ref{thm:intro-LW3} for $i=2$.  
Next, we explain the idea for Theorems~\ref{thm:intro-LW4}. In this case we add $t=2^s$ vertices to get rid of set variables. Let $w_0,\ldots,w_{t-1}$ be the new vertices added. Then, a node  $x$ is adjacent to $w_r$ if and only if ${\sf id}(x)=r$.  
Clearly, the treewidth of the new graph is at most $t$. 
Then, we can replace (\ref{enq:intro:LB2}) with the following formula. 

\begin{eqnarray}
\exists a_1,a_1' \ldots \exists a_{k},a_{k}' 
&&
\left(\bigwedge_{r\in [k] }
\bigwedge_{i'\in [t]} ({\sf adj}(a_r,w_{i'})) \Leftrightarrow ({\sf adj}(a'_r,w_{i'})
\right) 
\nonumber \\
&& \wedge  {\sf path}(a_1,\ldots,a_{k},z)
\wedge
{\sf path}(a'_1,\ldots,a'_{k},y) \label{enq:intro:LB4}
\end{eqnarray}

Now, by substituting (\ref{enq:intro:LB4}) in $\psi$, we get Theorem~\ref{thm:intro-LW4} for $i=2$. 

\section{Preliminaries} \label{sec-pre}
\paragraph{Basic notations.}
We write $\mathbb{N} = \{1,2,\dots\}$ as the set of natural numbers, and write $\mathbb{N}_0 = \mathbb{N} \cup \{0\}$.
For a number $n \in \mathbb{N}$, we write $[n] = \{1,\dots,n\}$ and $[n]_0 = \{0,1,\dots,n\}$.
Let $f:A \rightarrow B$ be a function.
We denote by $\text{Im}(f) = \{b \in B: b = f(a) \text{ for some } a \in A\}$ the \textit{image} of $f$.
For a subset $A' \subseteq A$, $f_{|A'}: A' \rightarrow B$ is the restriction of $f$ to $A'$.
For a subset $B' \subseteq B$, we define $f^{|B'}: f^{-1}(B') \rightarrow B'$ simply as $f^{|B'}(a) = f(a)$ for all $a \in f^{-1}(B')$.
Let $G$ be a graph.
The notation $V(G)$ denotes the set of vertices of $G$ and the notation $E(G)$ denotes the set of edges of $G$.
For $V \subseteq V(G)$, we use $N_G(V)$ to denote the set of vertices in $V(G) \backslash V$ that are neighboring to $V$, and define $N_G[V] = V \cup N_G(V)$.
A \textit{separation} of a graph $G$ is a pair $(A,B)$ with $A,B \subseteq V(G)$ such that $A \cup B = V(G)$ and there is no edge between $A \backslash B$ and $B \backslash A$ in $G$.

\paragraph{Tree decompostion and treewidth.}
A \textit{tree decomposition} of a graph $G$ is a pair $(T,\beta)$ where $T$ is a tree and $\beta: T \rightarrow 2^{V(G)}$ maps each node $x \in T$ to a set $\beta(x) \subseteq V(G)$ called the \textit{bag} of $x$ such that
\textbf{(i)} $\bigcup_{x \in T} \beta(x) = V(G)$, \textbf{(ii)} for any $v \in V(G)$, the nodes $x \in V(T)$ with $v \in \beta(x)$ induce a subtree in $T$, and \textbf{(iii)} for any edge $(u,v) \in E(G)$, there exists $x \in V(T)$ with $u,v \in \beta(x)$.
The \textit{width} of $(T,\beta)$ is $\max_{x \in T} |\beta(x)| - 1$.
The \textit{treewidth} of a graph $G$, denoted by $\mathbf{tw}(G)$, is the minimum width of a tree decomposition of $G$.
A tree decomposition $(T,\beta)$ is \textit{nice} if $T$ is a rooted binary tree and the following hold: (i) for each node $x \in V(T)$ with two children $y_1$ and $y_2$, $\beta(x) = \beta(y_1) \cup \beta(y_2)$ (in this case $x$ is called a \textit{join} node), (ii) for each node $x \in V(T)$ with one child $y$, either $\beta(x) = \beta(y) \cup \{v\}$ for some $v \in V(G) \backslash \beta(y)$ (in this case $x$ is called an \textit{introduce} node) or $\beta(x) = \beta(y) \backslash \{v\}$ for some $v \in \beta(y)$ (in this case $x$ is called a \textit{forget} node).
\begin{lemma}[\cite{CyganFKLMPPS15}]
    Every graph $G$ admits a nice tree decomposition of width $\mathbf{tw}(G)$.
\end{lemma}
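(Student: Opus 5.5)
The statement to prove is the standard fact that every graph $G$ admits a nice tree decomposition of width $\mathbf{tw}(G)$. The plan is to start from an optimal tree decomposition $(T,\beta)$ of width $w=\mathbf{tw}(G)$ and transform it locally, in three stages, into the required shape. No stage adds a bag larger than $w+1$, and each stage preserves the three axioms of a tree decomposition.

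First, root $T$ at an arbitrary node. Then make the tree binary. For every node $x$ with children $y_1,\dots,y_c$ where $c\ge 3$, replace $x$ by a path (or chain) of $c-1$ copies of $x$, all carrying the same bag $\beta(x)$. Each copy has exactly two children: one child is an original $y_i$ and the other is the next copy (the last copy takes the final two $y_i$'s). Since every copy has the bag $\beta(x)$, the connectivity axiom (ii) still holds and the width does not change.

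Second, enforce the join condition $\beta(x)=\beta(y_1)\cup\beta(y_2)$ for nodes with two children. The intended condition here is that both children carry the same bag as $x$; the literal union condition is implied by this. To achieve it, insert a fresh node with bag $\beta(x)$ between $x$ and each of its two children. After this, every binary node is a join node, and the newly inserted nodes have exactly one child.

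Third, repair every edge between a parent $x$ and its single child $y$ whose bags differ. Replace the edge by a path: starting at $y$, forget the vertices of $\beta(y)\setminus\beta(x)$ one at a time, which gives intermediate bags $\beta(y)\cap\beta(x)$ plus the not-yet-forgotten vertices. Then introduce the vertices of $\beta(x)\setminus\beta(y)$ one at a time. Every intermediate bag is a subset of either $\beta(x)$ or $\beta(y)$, so the width stays at most $w$. Connectivity holds because any vertex lying in both $\beta(x)$ and $\beta(y)$ is kept in every intermediate bag. Equal-bag edges with one child can simply be contracted, except where they are the join buffers created in the second stage. Those buffers satisfy the join condition only at the binary node itself, so they can either be kept, with the definition read as allowing a trivial identical-bag step, or contracted into the adjacent introduce/forget path.

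The step requiring the most care is the interaction between the join normalization and the removal of equal-bag single-child edges. The fix is to perform the contraction only on single-child edges where neither endpoint is a join node, and then to check that the resulting tree has only join, introduce and forget nodes.
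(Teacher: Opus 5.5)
The paper gives no proof here; it just cites the textbook. Your first three stages (binarizing with copies, buffering the children of binary nodes, interpolating forget/introduce chains) are the standard construction and are fine. The gap is in your final clean-up rule.

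The paper's definition allows no ``trivial identical-bag step''. A node $x$ with a single child $y$ must satisfy $\beta(x)=\beta(y)\cup\{v\}$ with $v\notin\beta(y)$, or $\beta(x)=\beta(y)\setminus\{v\}$ with $v\in\beta(y)$. So its bag must differ from its child's in exactly one vertex. Your rule contracts an identical-bag single-child edge only when neither endpoint is a join node, and that leaves invalid nodes behind.

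Your own stages already produce the failure whenever some original node $x$ has $c\ge 3$ children:
\begin{enumerate}
\item Stage one creates copies $x_1,\dots,x_{c-1}$, with $x_{i+1}$ a child of $x_i$.
\item Stage two puts a buffer with bag $\beta(x)$ between $x_i$ and $x_{i+1}$.
\item That buffer's only child, $x_{i+1}$, is a join node with the same bag. Your rule therefore keeps the buffer, and it is neither an introduce nor a forget node.
\end{enumerate}
The same happens whenever a unary node sits above a join node with an identical bag, e.g.\ a root whose unique child has the same bag and two children.

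The fix is to contract every single-child edge whose endpoints carry the same bag, with no exception for join nodes. Your worry about interaction with the buffers is unfounded, for the following reasons:
\begin{enumerate}
\item Contracting such an edge $p$--$c$ yields a node $m$ with bag $\beta(p)=\beta(c)$. It inherits the parent of $p$ and the children of $c$.
\item The condition at the parent of $p$ (join, introduce or forget) is unchanged, since it only sees the bag $\beta(p)$.
\item $m$ is a join node exactly when $c$ was, and its children still carry its bag. Likewise $m$ is introduce or forget exactly when $c$ was.
\item The tree-decomposition axioms and the width are untouched.
\item No new identical-bag single-child edges are created and the number of nodes decreases, so iterating terminates.
\end{enumerate}

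Separately, the paper's dynamic program later assumes that leaves and the root have empty bags, as in the cited textbook version. This is not part of the displayed definition. You can obtain it by appending a chain of introduce nodes below each leaf and a chain of forget nodes above the root.
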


\begin{lemma}[\cite{fomin2012planar,lokshtanov2018beating}] \label{lem-decomp}
    Let $t \in \mathbb{N}$ and $G$ be a graph with $\mathbf{tw}(G) \leq t$.
    Then for any $R \subseteq V(G)$ with $|R| \leq r$, there exist $V_0,V_1,\dots,V_{6r} \subseteq V(G)$ satisfying the following conditions:
    \begin{enumerate}[(i)]
        \item $V(G) = \bigcup_{i=0}^{6r} V_i$,
        \item $R \subseteq V_0$ and $|V_0| \leq O(rt)$,
        \item $N_G(V_i \backslash V_0) \subseteq V_0 \cap V_i$ and $|V_0 \cap V_i| = 4t$ for all $i \in [6r]$.
    \end{enumerate}    
\end{lemma}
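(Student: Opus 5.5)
The plan is to build the sets from a tree decomposition in the standard way. First fix a nice tree decomposition $(T,\beta)$ of $G$ of width $t$. Then enlarge $R$ to a set $M$ of nodes of $T$ that is closed under taking lowest common ancestors.

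\textbf{Step 1: the closed node set.} For each $v\in R$, pick a node $x_v\in V(T)$ with $v\in\beta(x_v)$. Let $M_0=\{x_v: v\in R\}$, so $|M_0|\le r$. Let $M$ be the closure of $M_0\cup\{\mathrm{root}\}$ under pairwise lowest common ancestors in the rooted tree $T$. By the standard LCA-closure lemma, $|M|\le 2|M_0|+1 \le 2r+1$. The key structural property is that every connected component $C$ of $T-M$ is adjacent to at most two nodes of $M$: at most one above it (its parent) and at most one below it. If a component touched two $M$-nodes below it, their LCA would lie inside $C$, contradicting closure.

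\textbf{Step 2: defining the sets.} Put $V_0=\bigcup_{x\in M}\beta(x)$. This gives $R\subseteq V_0$ and $|V_0|\le(2r+1)(t+1)=O(rt)$, which is condition (ii). For each component $C$ of $T-M$, set $V_C=\bigcup_{y\in C}\beta(y)\cup\beta(x)\cup\beta(x')$, where $x,x'$ are the at most two $M$-nodes adjacent to $C$.

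Then
\[
N_G(V_C\setminus V_0)\subseteq (\beta(x)\cup\beta(x'))\cap V_C\subseteq V_0\cap V_C .
\]
This holds because, by the separation property of tree decompositions, any vertex of $V_C\setminus V_0$ appears only in bags of $C$. Its neighbours therefore lie in bags of $C$ or in the boundary bags. So $V_0\cap V_C$ has size at most $2(t+1)\le 4t$, padded to exactly $4t$ by adding arbitrary vertices of $V_0$ (if $|V_0|<4t$ one can instead add dummy isolated vertices to $V_0$, or simply use an inequality). Condition (i) holds since every bag belongs either to $M$ or to some component.

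\textbf{Step 3: counting the components.} The only real obstacle is that $T-M$ may have many components (unboundedly many children hanging off one $M$-node), whereas we are allowed only $6r$ sets. The fix is to merge components by their attachment pattern.

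All components that attach only to a single node $x\in M$ from below can be united into one set $V_x$; their interface is still $\beta(x)$. Components lying between a node $x$ and a descendant $x'\in M$ form a single component per such edge of the "compressed tree" on $M$, which has at most $|M|-1$ edges. Merging components with the same boundary bags preserves condition (iii), since the union of sets each of whose private parts only sees $\beta(x)\cup\beta(x')$ still has that property.

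This yields at most $|M|$ groups of the first kind and $|M|-1$ of the second, i.e.\ at most $2(2r+1)\le 6r$ sets for $r\ge1$. Empty indices are padded with copies of one set, or with $V_0$ itself as a degenerate $V_i$. The delicate bookkeeping is exactly this padding to precisely $6r$ sets with intersections of size exactly $4t$, which I would handle by allowing repeated or dummy sets.
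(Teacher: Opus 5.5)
Your proof is correct and is essentially the paper's argument: your LCA closure $M$ plays the role of the paper's $X^*$ (the chosen nodes plus the degree-$\ge 3$ nodes of their Steiner subtree), $V_0$ is the union of those bags, and each component of the remainder meets at most two such nodes, so it gives a $V_i$ with interface at most $2(t+1)\le 4t$, padded to $4t$ exactly as in the paper. The only difference is the counting step: the paper uses that a nice tree decomposition has maximum degree $3$, so $T-X^*$ has at most $3|X^*|\le 6r$ components and your concern about unboundedly many children does not arise; your merging of components by attachment pattern is unnecessary here, but it does make the argument work for arbitrary tree decompositions.
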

\begin{proof}
Consider a nice tree decomposition $(T,\beta)$ of $G$ with width $t$.
For a subset $X \subseteq V(T)$, we write $\beta(X) = \bigcup_{x \in X} \beta(x)$ for convenience.
For each vertex $v \in R$, we pick an arbitrary node $x_v \in V(T)$ such that $v \in \beta(x_v)$.
Let $X \subseteq V(T)$ be the minimal subset such that $x_v \in X$ for all $v \in R$ and $T[X]$ is connected, or equivalently, $X$ consists of the nodes on the paths in $T$ between vertices in $\{x_v: v \in R\}$.
By construction, $T[X]$ is a tree whose leaves are in $\{x_v: v \in R\}$.
Define $X^* \subseteq X$ as the subset consisting of the nodes in $\{x_v: v \in R\}$ and all nodes with degree at least $3$ in $T[X]$.
Since the number of leaves of $T[X]$ is at most $|R| \leq r$, we have $|X^*| \leq 2r$.
As $(T,\beta)$ is a nice tree decomposition, the maximum degree of $T$ is $3$ and thus $T - X^*$ has at most $3|X^*| \leq 6r$ connected components.
Let $C_1,\dots,C_m$ be the connected components of $T - X^*$.
We observe that each $C_i$ is adjacent to at most two nodes in $X^*$.
Indeed, if $C_i$ is adjacent to three nodes $x_1,x_2,x_3 \in X^*$, then there exists $x \in C_i$ such that $x_1,x_2,x_3$ belong to different connected components of $T-x$, which implies $x \in X$.
It also follows that $x \in X^*$, as the degree of $x$ in $T[X]$ is at least $3$ because of $x_1,x_2,x_3$.
This contradicts the fact that $x \in C_i$.
Thus, there are at most two nodes in $X^*$ adjacent to $C_i$ in $T$.

Define $V_0 = \beta(X^*)$, $V_i = N_G[\beta(C_i) \backslash \beta(X^*)]$ for $i \in [m]$, and $V_i = \emptyset$ for $i \in \{m+1,\dots,6r\}$.
Clearly, $R \subseteq V_0$ and we have $|V_0| \leq t|X^*| = O(rt)$.
The fact $V(G) = \bigcup_{i=0}^{6r} V_i$ follows from that $V(G) = \beta(X^*) \cup (\bigcup_{i=1}^m \beta(C_i))$.
To see the other properties, consider an index $i \in [6r]$.
If $i > m$, then $N_G(V_i \backslash V_0) = \emptyset$ and $V_0 \cap V_i = \emptyset$.
Otherwise, $V_i = N_G[\beta(C_i) \backslash \beta(X^*)]$.
Let $U_i = \beta(C_i) \backslash \beta(X^*)$.
Also, let $X_i^* \subseteq X^*$ consist of the (at most two) nodes in $X^*$ adjacent to $C_i$ in $T$.
We observe that $N_G(U_i) \subseteq \beta(X_i^*) \subseteq V_0$.
Consider a vertex $v \in N_G(U_i)$ and let $u \in U_i$ be a neighbor of $v$.
Then there exists a node $x \in V(T)$ with $u,v \in \beta(x)$.
If $x \in V(T) \backslash C_i$, then $u \in \beta(X_i^*)$, because in this case $u \in \beta(C_i) \cap \beta(V(T) \backslash C_i)$ and the property of a tree decomposition guarantees that the bag of some vertex in $X_i^*$ contains $u$.
This contradicts the fact that $u \in U_i$.
Thus, $x \in C_i$.
Since $v \notin U_i$, there exists a node $y \in V(T) \backslash C_i$ with $v \in \beta(y)$.
Note that the path between $x$ and $y$ in $T$ intersects $X_i^*$.
This implies $v \in \beta(X_i^*)$, as $v \in \beta(x)$ and $v \in \beta(y)$.
As such, $N_G(U_i) \subseteq \beta(X_i^*) \subseteq V_0$.
By construction, $N_G(U_i) \subseteq V_i$ and hence $N_G(U_i) \subseteq V_0 \cap V_i$.
Furthermore, as $V_0 \cap U_i = \emptyset$, we have $V_0 \cap V_i = V_0 \cap N_G[U_i] = V_0 \cap N_G(U_i) = N_G(U_i)$, which implies 
\begin{equation*}
    |V_0 \cap V_i| = |N_G(U_i)| \leq |\beta(X_i^*)| \leq (t+1)\cdot |X_i^*| \leq 2(t+1) \leq 4t.
\end{equation*}
To further have $|V_0 \cap V_i| = 4t$, we can arbitrarily add some vertices in $V_0$ to $V_i$ (note that by doing this, the other properties preserve).
\end{proof}

\paragraph{FO and MSO logic on graphs.}
An \textit{MSO formula} is a quantified logic formula with vertex variables (representing vertices of a graph) and set variables (representing sets of vertices of a graph).
It is built by combining three types of {\em atomic formulas}: (i) equality for vertex variables $u,v$, denoted by $u = v$, which is true iff $u$ and $v$ represent the same vertex, and (ii) adjacency for vertex variables $u,v$, denoted $\mathsf{adj}(u,v)$, which is true if $u$ and $v$ represent two vertices that are adjacent, and (iii) membership of a vertex variable $v$ in a set variable $V$, denoted by $v \in V$, which is true if $v$ is contained in $V$.
Atomic formulas are combined using $\wedge, \vee, \neg$ and quantifiers $\exists,\forall$.
An \textit{FO formula} is just an MSO formula without set variables and type (iii) atomic formulas.

For a sequence $((k_1,s_1),\dots,(k_d,s_d))$ of pairs of natural numbers, we define a class of MSO formulas, called \textit{$((k_1,s_1),\dots,(k_d,s_d))$-MSO}, inductively on $d$ as follows.
An MSO formula is a \textit{$()$-MSO} if it is quantifier-free.
Suppose now $((k_2,s_2),\dots,(k_d,s_d))$-MSO has been defined.
Then $((k_1,s_1),\dots,(k_d,s_d))$-MSO is defined as the MSO formulas which can be generated using the following rules.
\begin{itemize}
    \item If $\psi(u_1,\dots,u_{c+k+s})$ is a $((k_2,s_2),\dots,(k_d,s_d))$-MSO with $c+k+s$ free variables for some $c \in \mathbb{N}$, $k \in [k_1]_0$, and $s \in [s_1]_0$, where $u_{c+1},\dots,u_{c+k}$ are vertex variables and $u_{c+k+1},\dots,u_{c+k+s}$ are set variables, then for $\mathsf{Q} \in \{\exists,\forall\}$, 
    \begin{equation*}
        \phi(v_1,\dots,v_c) = \mathsf{Q} x_1 \dots \mathsf{Q} x_k\  \mathsf{Q} Y_1 \dots \mathsf{Q} Y_s \ \psi(v_1,\dots,v_c,x_1,\dots,x_k,Y_1,\dots,Y_s)
    \end{equation*}
    is a $((k_1,s_1),\dots,(k_d,s_d))$-MSO.
    \item An MSO formula obtained by combining $((k_1,s_1),\dots,(k_d,s_d))$-MSOs using logic operators $\neg,\wedge,\vee$ is a $((k_1,s_1),\dots,(k_d,s_d))$-MSO.
\end{itemize}
For $S = ((k_1,s_1),\dots,(k_d,s_d))$, an \textit{$S$-MSO$^*$} is defined as a $((k_2,s_2),\dots,(k_d,s_d))$-MSO with $k_1$ free vertex variables and $s_1$ free set variables.
For a sequence $S = (k_1,\dots,k_d)$ of natural numbers, an \textit{$S$-FO} is defined as an $S^+$-MSO for $S^+ = ((k_1,0),\dots,(k_d,0))$, which is an FO formula

For a sequence $(k_1,\dots,k_d)$ of natural numbers, we define a class of FO formulas, called \textit{$(k_1,\dots,k_d)$-FOs}, inductively on $d$ as follows.
An FO formula is a \textit{$()$-FO} if it is quantifier-free.
Suppose now $(k_2,\dots,k_d)$-FOs have been defined.
Then $(k_1,\dots,k_d)$-FOs are the FO formulas which can be generated using the following rules.
\begin{itemize}
    \item If $\psi(u_1,\dots,u_{c+k})$ is a $(k_2,\dots,k_r)$-FO with $c+k$ free variables for some $c \in \mathbb{N}$ and $k \in [k_1]_0$, then for $\mathsf{Q} \in \{\exists,\forall\}$, 
    \begin{equation*}
        \phi(v_1,\dots,v_c) = \mathsf{Q} x_1 \dots \mathsf{Q} x_k\ \psi(v_1,\dots,v_c,x_1,\dots,x_k)
    \end{equation*}
    is a $(k_1,\dots,k_r)$-FO.
    \item An FO formula obtained by combining $(k_1,\dots,k_d)$-FOs using logic operators $\neg,\wedge,\vee$ is a $(k_1,\dots,k_d)$-FO.
\end{itemize}

Next, we define a similar notion for MSO formulas.

\paragraph{Labeled and colored graphs.}
Let $G$ be a graph.
For a number $p \in \mathbb{N}_0$, a \textit{$p$-labeling function} (or \textit{$p$-labeling} for short) on $G$ is a function $\lambda: [p] \rightarrow V(G)$.
If $\lambda: [p] \rightarrow V(G)$ is a $p$-labeling on $G$ and $\lambda': [p'] \rightarrow V(G)$ is a $p'$-labeling on $G$, then we define a function $\lambda \oplus \lambda': [p+p']  \rightarrow V(G)$ as
\begin{equation*}
    (\lambda \oplus \lambda')(i) = \left\{
    \begin{array}{ll}
        \lambda(i) & \text{if } i \leq p, \\
        \lambda'(i-p) & \text{if } i>p.
    \end{array}
    \right.
\end{equation*}
Note that $\lambda \oplus \lambda'$ is a $(p+p')$-labeling on $G$.
For a function $f: X \rightarrow V(G)$ where $X \subseteq \mathbb{Z}$, we construct a $|X|$-labeling on $G$, $\mathsf{sort}(f): [|X|] \rightarrow V(G)$, as follows.
Suppose $X = \{a_1,\dots,a_{|X|}\}$ where $a_1 < \cdots < a_{|X|}$.
Then simply define $\mathsf{sort}(f)(i) = f(a_i)$ for all $i \in [|X|]$.

For a set $P$, a \textit{$P$-coloring} on $G$ is a function $\mu: V(G) \rightarrow P$.
If $\mu: V(G) \rightarrow P$ is a $P$-coloring on $G$ and $\mu': V(G) \rightarrow P'$ is a $P'$-coloring on $G$, then we define a function $\mu \otimes \mu': V(G) \rightarrow P \times P'$ as $(\mu \otimes \mu')(v) = (\mu(v),\mu'(v))$, which is a $(P \times P')$-coloring on $G$.

A \textit{$p$-labeled and $P$-colored graph} is a triple $(G,\lambda,\mu)$ where $G$ is a graph, $\lambda$ is a $p$-labeling on $G$, and $\mu$ is a $P$-coloring on $G$.
Every graph $G$ can be viewed as a labeled and colored graph by equipping it with the dummy labeling $-: \emptyset \rightarrow V(G)$ and the dummy coloring $-: V(G) \rightarrow \{0\}$ that maps every vertex $v \in V(G)$ to $0$.
We always use $-$ to denote the dummy labeling/coloring.
Two $p$-labeled and $P$-colored graphs $(G,\lambda,\mu)$ and $(G',\lambda',\mu')$ are \textit{isomorphic} if there exists an isomorphism $\pi: V(G) \rightarrow V(G')$ of $G$ and $G'$ such that $\lambda' = \pi \circ \lambda$ and $\mu = \mu' \circ \pi$.
Throughout this paper, isomorphic labeled and colored graphs are viewed as the \textit{same} (or in other words, we only care about the isomorphic type of such graphs).
In particular, if $\mathcal{G}$ is a set of $p$-labeled and $P$-colored graphs, then different elements in $\mathcal{G}$ are always non-isomorphic.
The \textit{core} of a $p$-labeled and $P$-colored graph $(G,\lambda,\mu)$, denoted by $\mathsf{core} (G,\lambda,\mu)$, is another $p$-labeled and $P$-colored graph obtained by restricting $(G,\lambda,\mu)$ to $\text{Im}(\lambda)$, i.e.,
$\mathsf{core} (G,\lambda,\mu) = (G[\text{Im}(\lambda)],\lambda_0,\mu_{|\text{Im}(\lambda)})$, where $\lambda_0: [p] \rightarrow \text{Im}(\lambda)$ is simply defined as $\lambda_0(v) = \lambda(v)$ for all $v \in \text{Im}(\lambda)$.

\begin{fact} \label{fact-numLC}
    Up to isomorphism, the number of $p$-labeled and $P$-colored graphs $(G,\lambda,\mu)$ with $|V(G)| \leq n$ and $\mathbf{tw}(G) \leq t$ is at most $n^{O(\min\{n,t\} \cdot n + p)} \cdot |P|^{n}$.
\end{fact}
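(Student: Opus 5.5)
We bound the count by fixing a canonical description of each isomorphism class and counting descriptions. Since isomorphic labeled and colored graphs are identified, it suffices to count triples $(G,\lambda,\mu)$ with $V(G)=[m]$ for some $m\le n$. Every isomorphism class with $|V(G)|=m$ contains at least one such triple, so the number of classes is at most the number of such vertex-labeled triples.

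For a fixed $m$ we count the three components separately.
\begin{enumerate}[(i)]
    \item The coloring $\mu:[m]\to P$ has at most $|P|^m\le |P|^n$ choices.
    \item The labeling $\lambda:[p]\to[m]$ has at most $m^p\le n^p$ choices.
    \item The graph $G$ on $[m]$ with $\mathbf{tw}(G)\le t$ is the remaining factor.
\end{enumerate}
If $t\ge n$, we use the trivial bound $2^{\binom{m}{2}}\le n^{O(n^2)} = n^{O(\min\{n,t\}\cdot n)}$.

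If $t<n$, we use that a graph of treewidth at most $t$ is $t$-degenerate: it has a vertex of degree at most $t$, and removing it keeps the treewidth at most $t$. Repeatedly removing such vertices gives an ordering $v_1,\dots,v_m$ in which each $v_i$ has at most $t$ neighbors among the later vertices. The graph is therefore determined by the ordering (at most $m!\le n^n$ choices) together with, for each $v_i$, a set of at most $t$ later neighbors. Each such set has at most $\sum_{j\le t}\binom{m}{j}\le (m+1)^t$ choices, for a total of $n^{O(tn)}$. Both cases together give $n^{O(\min\{n,t\}\cdot n)}$ graphs.

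Multiplying the three factors and summing over $m\le n$ (an extra factor of $n$, absorbed into the exponent) gives the bound $n^{O(\min\{n,t\}\cdot n + p)}\cdot|P|^n$.

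The only nontrivial ingredient is the degeneracy encoding in case (iii). One could alternatively encode a tree decomposition directly, but the degeneracy argument is simpler. One edge case needs care: when $n=1$ the factor $n^{(\cdot)}$ equals $1$. We handle it by reading the $O(\cdot)$ as absorbing constants, or equivalently by replacing $n$ with $\max\{n,2\}$.
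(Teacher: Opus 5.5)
Your proof is correct and takes essentially the paper's route. The paper uses $|E(G)|\le tn$ to bound the number of graphs by $n^{2tn}$, multiplies by the $n^{p}$ labelings, and sums over the number of vertices; your degeneracy encoding is just another way of counting the same sparse edge sets, and you also account for the $|P|^{n}$ coloring factor and the $\min\{n,t\}$ cap, which the paper's short proof leaves implicit.
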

\begin{proof}
If $G$ is an $n$-vertex (unlabeled) graph of $n$ vertices and treewidth at most $t$, then $|E(G)| \leq tn$.
Thus, the number of $n$-vertex (unlabeled) graphs with treewidth at most $t$ is bounded by $n^{2tn}$ up to isomorphism.
Furthermore, there can be $n^c$ different $c$-labelings for an $n$-vertex graph.
So there are at most $n^{2tn} \cdot n^c = n^{2tn+c}$ different $n$-vertex $c$-labeled graphs with treewidth at most $t$.
Considering the graphs with $i$ vertices for $i = 1,\dots,n$, we finally obtain the desired bound $n^{2tn+c+1}$.
\end{proof}

\begin{fact} \label{fact-num}
    Let $\sigma$ be a $p$-labeled and $P$-colored graph, and $q,r,t \in \mathbb{N}_0$.
    Then up to isomorphism, the number of $(p+q)$-labeled and $P$-colored graphs $(G,\lambda,\mu)$ with treewidth at most $t$ satisfying $|V(G) \backslash \textnormal{Im}(\lambda_{|[p]})| \leq r$ and $\mathsf{core}(G,\lambda_{|[p]},\mu) = \sigma$ is at most $(p+r)^{O(\min\{r,t\} \cdot (p+r)+q)} \cdot |P|^{O(r)}$.
\end{fact}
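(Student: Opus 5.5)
Since $\mathsf{core}(G,\lambda_{|[p]},\mu)=\sigma$ is fixed, the graph $G$ consists of $\sigma$'s vertex set $U=\textnormal{Im}(\lambda_{|[p]})$ (with its induced subgraph, labels $1,\dots,p$ and colors all prescribed by $\sigma$) together with a set $W=V(G)\setminus U$ of at most $r$ new vertices. I would count such objects by fixing $|W|=r'\le r$ and bounding the number of ways to choose (a) the edges inside $W$, (b) the edges between $W$ and $U$, (c) the colors of $W$, and (d) the extra $q$ labels $\lambda(p+1),\dots,\lambda(p+q)$. Summing over $r'\in[r]_0$ costs only a factor $r+1$, which is absorbed.

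The counting goes as follows. Colors of $W$ give at most $|P|^{r'}\le |P|^{O(r)}$ choices. The $q$ extra labels each point to one of at most $p+r$ vertices (since $|U|\le p$), giving $(p+r)^q$. For edges, I use that every subgraph of $G$ has treewidth at most $t$, so the number of edges incident to $W$ is at most the number of edges of $G[U\cup W]$ minus those inside $U$. More concretely, any graph of treewidth $\le t$ is $t$-degenerate. Ordering $W$ last in a degeneracy-type argument is not directly valid, so instead I bound edges incident to $W$ by $|E(G)|\le t|V(G)|\le t(p+r)$. Each edge incident to $W$ is a pair with one endpoint in $W$ and the other among $p+r$ vertices, so there are at most $(r(p+r))^{t(p+r)}$ choices. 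This yields the $t$-term. For the $\min$ with $r$, I note that each vertex of $W$ has at most $p+r$ potential neighbours, so the raw neighbourhood choice is at most $2^{r(p+r)}$. Hence the edge count is at most $(p+r)^{O(\min\{r,t\}(p+r))}$.

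The main obstacle is the $t$-branch: the bound $t(p+r)$ on incident edges gives an exponent $t(p+r)$, not something depending on $r$ when $r\ll t$. That case is handled by the $2^{r(p+r)}$ branch, so together they give the $\min$. The only subtlety is that edges inside $U$ are fixed by $\sigma$ and must not be recounted, which is automatic. Multiplying the bounds and using isomorphism only to reduce the count finishes the proof.
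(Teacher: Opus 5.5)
Your proposal is correct and follows essentially the paper's argument. Both fix $\sigma$, add at most $r$ new vertices, and count separately the colors of the new vertices ($|P|^{O(r)}$), the $q$ extra labels ($(p+r)^q$), and the new edges. For the new edges, both combine the treewidth bound $|E(G)|\le t|V(G)|\le t(p+r)$ with the trivial bound of at most $r(p+r)$ available pairs. The only cosmetic difference is that the paper first bounds the number of new edges, by $\min\{pr,(p+r)t\}$ between old and new vertices and $\min\{r^2,rt\}$ among new ones, and then converts this into a count, whereas you take the minimum of two counts directly; the two are equivalent.
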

\begin{proof}
If $\mathsf{core}(\sigma) \neq \sigma$, then there is no graph $(G,\lambda,\mu)$ satisfying the condition $\mathsf{core}(G,\lambda_{|[p]},\mu) = \sigma$.
As such, assume $\mathsf{core}(\sigma) \neq \sigma$, which implies that $\sigma$ contains at most $p$ vertices.
Let $\sigma = (H,\gamma,\tau)$.
A graph $(G,\lambda,\mu)$ under consideration can be obtained from $\sigma$ by adding a set $V'$ of (at most) $r$ vertices, adding a set $E$ of edges between the vertices in $V(H)$ and the vertices in $V'$, adding a set $E'$ of edges among the vertices in $V'$, and specifying two functions $\lambda': [q] \rightarrow V(H) \cup V'$ and $\mu': V' \rightarrow P$.
Then we simply set $V(G) = V(G) \cup V'$, $E(G) = E(H) \cup E \cup E'$, $\lambda = \gamma \oplus \lambda'$, and
\begin{equation*}
    \mu(v) = \left\{
    \begin{array}{ll}
        \tau(v) & \text{if } v \in V(H), \\
        \mu'(v) & \text{if } v \in V'.
    \end{array}
    \right.
\end{equation*}
As we need to guarantee that $\mathbf{tw}(G) \le t$, we must have $|E| \leq \min\{(pr,(p+r)t\} \leq \min\{r,t\} \cdot (p+r)$ and $|E'| \leq \min\{r^2,rt\} \leq \min\{r,t\} \cdot (p+r)$.
Therefore, the number of choices for $E$ and $E'$ is bounded by $(p+r)^{O(\min\{r,t\} \cdot (p+r))}$.
The number of choices for $\lambda'$ is $(p+r)^q$, and the number of choices for $\mu'$ is at most $|P|^r$.
So the bound $(p+r)^{O(\min\{r,t\} \cdot (p+r)+q)} \cdot |P|^{O(r)}$ follows.
\end{proof}


\paragraph{Exponential towers and almost linear functions.}
For a number $x \in \mathbb{R}$, we define $\exp^{(0)}(x) = x$ and $\exp^{(i)}(x) = 2^{\exp^{(i-1)}(x)}$ for all integers $i \geq 1$.
More generally, we can consider exponential towers with any base $b \in \mathbb{R}$.
Formally, we define $\exp_b^{(0)}(x) = x$ and $\exp_b^{(i)}(x) = b^{\exp_b^{(i-1)}(x)}$ for all integers $i \geq 1$.
We observe the following basic properties of exponential towers.

\begin{fact} \label{fact-basechange1}
    Let $m \in \mathbb{N}$, $b \geq 2$, and $a_1,\dots,a_m \in \mathbb{N}$.
    Then
    \begin{equation*}
        b^{\sum_{i=1}^m b^{a_i}} \leq \sum_{i=1}^m \exp_{bm}^{(2)}(a_i).
    \end{equation*}
\end{fact}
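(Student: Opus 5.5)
The plan is to bound the whole left-hand side by the single largest term on the right, so no cancellation or term-by-term matching between the two sums is needed. Set $a = \max_{i \in [m]} a_i$. Each $b^{a_i} \leq b^{a}$ because $b \geq 2$, so $\sum_{i=1}^m b^{a_i} \leq m b^{a}$. Since $x \mapsto b^x$ is increasing, this gives
\begin{equation*}
    b^{\sum_{i=1}^m b^{a_i}} \leq b^{m b^{a}}.
\end{equation*}
On the right-hand side, all terms are positive, so the sum is at least the term with $a_i = a$. That term is $\exp_{bm}^{(2)}(a) = (bm)^{(bm)^{a}}$.

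It therefore suffices to show $b^{m b^a} \leq (bm)^{(bm)^a}$. I would do this in two monotonicity steps. First, shrink the outer base on the right: since $bm \geq b \geq 2$ and the exponent $(bm)^a$ is positive,
\begin{equation*}
    (bm)^{(bm)^a} \geq b^{(bm)^a} = b^{b^a m^a}.
\end{equation*}
Second, compare exponents: because $b > 1$, it is enough to have $m b^a \leq b^a m^a$, which is equivalent to $m \leq m^a$.

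The only point needing care is this last inequality, and it is where the hypothesis $a_i \in \mathbb{N}$ enters. The paper's convention is $\mathbb{N} = \{1,2,\dots\}$, so $a \geq 1$. Together with $m \geq 1$ this gives $m^a \geq m$. The hypothesis is genuinely needed: if some $a_i = 0$ were allowed with $m \geq 2$, the statement could fail. For example, with all $a_i = 0$ the left side is $b^m$ and the right side is $m \cdot bm = bm^2$. Chaining the displayed inequalities gives $b^{\sum_i b^{a_i}} \leq b^{m b^a} \leq (bm)^{(bm)^a} \leq \sum_{i=1}^m \exp_{bm}^{(2)}(a_i)$, which is the claim.
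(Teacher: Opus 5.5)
Your proof is correct and follows the paper's argument: take $a=\max_i a_i$, bound the left side by $b^{m b^{a}}$, and dominate that by the single term $\exp_{bm}^{(2)}(a)$ on the right. You also write out the comparison $b^{mb^a}\le b^{(bm)^a}\le (bm)^{(bm)^a}$ and note where $a\ge 1$ is used; the paper asserts this step without justification and writes $=$ where $\le$ is meant.
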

\begin{proof}
Assume $a_1 \geq \dots \geq a_m$ without loss of generality.
We have $b^{\sum_{i=1}^m b^{a_i}} = b^{m \cdot b^{a_1}} \leq \exp_{bm}^{(2)}(a_1)$.
Thus, the inequality $b^{\sum_{i=1}^m b^{a_i}} \leq \sum_{i=1}^m \exp_{bm}^{(2)}(a_i)$ follows.
\end{proof}

\begin{fact} \label{fact-basechange2}
    Let $b \geq 2$ and $r \in \mathbb{N}$.
    Then there exists a number $c > 0$ only depending on $b$ and $r$ such that $\exp_b^{(r)} (x) \leq c \cdot \exp^{(r)} (x \log (x+1))$ for all $x \in \mathbb{R}_{\geq 0}$.
\end{fact}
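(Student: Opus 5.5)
We need to prove Fact~\ref{fact-basechange2}: for fixed $b \geq 2$ and $r \in \mathbb{N}$ there is $c>0$ with $\exp_b^{(r)}(x) \leq c \cdot \exp^{(r)}(x\log(x+1))$ for all $x \geq 0$. The plan is induction on $r$, using the identity $b^y = 2^{y\log b}$ to rewrite each base-$b$ level as a base-$2$ level with a multiplicative blow-up of $\log b$ in the exponent. We then absorb these constant factors into the extra $\log(x+1)$ factor at the top, once $x$ is large. Small $x$ is handled by compactness and the choice of $c$.

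\textbf{Step 1 (absorbing constants into the top).} We show that for every constant $K \geq 1$ there is a threshold $x_0 = x_0(b,r,K)$ such that for $x \geq x_0$ we have $\exp_b^{(r)}(x) \leq \exp^{(r)}(x\log(x+1))$. Inductively, the natural stronger claim is the following: for $x \geq x_0$,
\[
\log \exp_b^{(r)}(x) = \exp_b^{(r-1)}(x)\log b \leq \exp^{(r-1)}(x\log(x+1)).
\]
For $r=1$ this reads $x\log b \leq x\log(x+1)$, which holds once $x+1 \geq b$. For the inductive step, write $y=\exp_b^{(r-1)}(x)$, so that $\exp_b^{(r)}(x)=2^{y\log b}$. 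It then suffices to prove $y\log b \leq \exp^{(r-1)}(x\log(x+1))$.

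For this we strengthen the hypothesis to carry a multiplicative slack: we prove by induction on $r$ that for each constant $K$ there is $x_0$ with $K\exp_b^{(r)}(x) \leq \exp^{(r)}(x\log(x+1))$ for all $x\ge x_0$.
\begin{itemize}
\item[(a)] The base case $r=1$ is $K b^x \leq 2^{x\log(x+1)}$, i.e.\ $\log K + x\log b \leq x\log(x+1)$. This holds for $x$ large.
\item[(b)] For the step, $K\exp_b^{(r)}(x) = 2^{\log K + y\log b}$. We need $\log K + y\log b \leq \exp^{(r-1)}(x\log(x+1))$. Since $\log K + y\log b \leq K' y$ for $y \geq 1$, with $K' = \log K + \log b$ a constant, this follows from the hypothesis at level $r-1$ with constant $K'$.
\end{itemize}

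\textbf{Step 2 (small $x$).} On $[0,x_0]$ the left side $\exp_b^{(r)}(x)$ is bounded by $\exp_b^{(r)}(x_0)$, since it is monotone. The right side is at least $\exp^{(r)}(0) \ge 0$, and for $r \geq 1$ we have $\exp^{(r)}(0)\geq 1$. So taking $c = \max\{1,\exp_b^{(r)}(x_0)\}$ covers this range, and $c$ depends only on $b$ and $r$. The degenerate case $r=0$ is trivial only for $x \ge 1$, and the statement has $r \in \mathbb{N}$, so we do not need it.

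\textbf{Expected obstacle.} The only delicate point is formulating the induction with the slack constant $K$, so that the $\log b$ factors appearing at every level of the tower can all be absorbed. Without this, the naive induction fails, because the factor $\log b$ reappears at each level. With the $K$-strengthened statement, each level only requires $x$ to be large, so the argument goes through.
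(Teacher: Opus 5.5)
Your proof is correct, but it takes a different route from the paper's.

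\textbf{The paper's route.} It first proves a threshold-free base-change inequality, $\exp_b^{(r)}(x)\le\exp^{(r)}(b(x+1))$ for all $x\ge 0$. The induction step uses only $b(b^x+1)\le 2^{(x+1)\log b+1}$ and $(x+1)\log b+1\le b(x+1)$. So passing from base $b$ to base $2$ costs just a linear factor at the top of the tower. The $\log(x+1)$ factor is then used exactly once, at the top, through the case split $b(x+1)\le\max\{x\log(x+1),8^b\}$, which yields a closed-form constant.

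\textbf{Your route.} You absorb a $\log b$ factor at every level by carrying a multiplicative slack $K$, updated to $K'=\log K+\log b$. This pushes every largeness requirement down to the base case $x\ge x_0$. You then handle $[0,x_0]$ by monotonicity together with $\exp^{(r)}(0)\ge 1$.

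\textbf{What each buys.} The paper's route gives a reusable uniform lemma and an explicit $c$. Yours is a softer asymptotic argument with no final case analysis, and its constant $\exp_b^{(r)}(x_0)$ is still computable. Incidentally, the paper's stated $c=\exp^{(r)}(3b)$ is one level short. In the case $b(x+1)\le 8^b$, its last step only gives $\exp^{(r)}(b(x+1))\le\exp^{(r)}(2^{3b})=\exp^{(r+1)}(3b)$.

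\textbf{Fixes for your write-up.} Drop the abandoned first formulation in Step~1 and state the slack claim directly for $K\ge 1$:
\begin{itemize}
\item The restriction $K\ge 1$ is what makes $\log K+y\log b\le(\log K+\log b)\,y$ valid.
\item It is preserved, since $K'\ge\log b\ge 1$.
\item You also need $y=\exp_b^{(r-1)}(x)\ge 1$, which holds because $r-1\ge 1$ in the inductive step.
\end{itemize}
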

\begin{proof}
We first show that $\exp_b^{(r)} (x) \leq \exp^{(r)} (b(x+1))$ for all $x \in \mathbb{R}_{\geq 0}$.
This can be done by induction on $r$.
When $r = 0$, the inequality trivially holds.
Suppose it holds for $r-1$.
Then 
\begin{equation*}
    \exp_b^{(r)} (x) = \exp_b^{(r-1)} (b^x) \leq \exp^{(r-1)} (b(b^x+1)) \leq \exp^{(r)} (\log b \cdot (x+1) + 1) \leq \exp^{(r)} (b(x+1)).
\end{equation*}
Note that $b(x+1) \leq \max\{x \log (x+1),8^{b}\}$, since if $2b \leq \log (x+1)$ and $x \geq 1$, then $b(x+1) \leq 2bx \leq x \log (x+1)$, otherwise $b(x+1) \leq b\cdot 2^{2b}+ 2b \leq 8^b$.
Thus, by setting $c = \exp^{(r)} (3b)$, we have $\exp^{(r)} (b(x+1)) \leq c \cdot \exp^{(r)} (x \log (x+1))$.
\end{proof}

A function $f: \mathbb{R}_{\geq 0} \rightarrow \mathbb{R}_{\geq 0}$ is \textit{almost linear} if $f(x) = x^{1+o(1)}$, or more precisely, for any constant $\alpha > 1$, there exists a number $x_0 \in \mathbb{R}_{\geq 0}$ such that $f(x) < x^\alpha$ for all $x \geq x_0$.
We say an almost linear function $f$ is \textit{well-behaved} if there exists a constant $c > 0$ such that $f(cx) \leq cf(x)$ for all $x \in \mathbb{R}_{\geq 0}$.
Note that if $f$ and $f'$ are both well-behaved almost linear functions, then $f+f'$ is also a well-behaved almost linear function.
We have the following observation.
\begin{fact} \label{fact-decompose}
    There exists a well-behaved almost linear function $f: \mathbb{R}_{\geq 0} \rightarrow \mathbb{R}_{\geq 0}$ such that for any numbers $r \in \mathbb{N}$, $a_1,\dots,a_r,z \in \mathbb{R}_{\geq 0}$, and $b \geq 2$, we have
    \begin{equation*}
        \left( \sum_{i=1}^r \exp_b^{(2)}(a_i) \right)^z \leq r^r b^{b^{2 \log b}}\left( b^{f(z)} + \sum_{i=1}^r \exp_b^{(2)}(f(a_i)) \right).
    \end{equation*}
\end{fact}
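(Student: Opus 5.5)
Throughout, write $A=\max_i a_i$ and $u=\log z$. The plan is to bound the left-hand side by a single exponential in one explicit quantity and then show that quantity is dominated by either $f(z)$ or $\exp_b^{(1)}(f(A))$. The constant prefactor $r^r b^{b^{2\log b}}$ absorbs the remaining bounded cases.

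\textbf{Step 1 (reduce to one exponent).} Since $\sum_{i=1}^r \exp_b^{(2)}(a_i)\le r\cdot\exp_b^{(2)}(A)$, we get
\begin{equation*}
  \left(\sum_{i=1}^r \exp_b^{(2)}(a_i)\right)^z \le r^z\cdot b^{z b^A}.
\end{equation*}
For the factor $r^z$:
\begin{itemize}
  \item if $z\le r$, then $r^z\le r^r$ and it is absorbed by the prefactor;
  \item if $z>r$, then $r^z\le z^z\le b^{z\log z}$.
\end{itemize}
In both cases the left-hand side is at most $r^r\, b^{E}$ with $E=zb^A+z\log z\le 2\max\{zb^A,\,z\log z\}$. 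The right-hand side contains both $b^{f(z)}$ and the term $\exp_b^{(2)}(f(A))$, the latter among the summands. So it suffices to build $f$ with the following property, allowing finitely many bounded exceptions that the factor $b^{b^{2\log b}}$ pays for:
\begin{equation*}
  E\le f(z)\qquad\text{or}\qquad E\le b^{f(A)}.
\end{equation*}

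\textbf{Step 2 (choice of $f$).} I would take $f(x)=x\,h(x)$ up to a constant factor, with $h(x)=2^{\log x/\log\log x}$ (equivalently $x^{1/\log\log x}$) for large $x$ and $h$ a positive constant for small $x$. Also require $f(x)\ge 2x\log(x+2)$, for instance by adding that term.
\begin{itemize}
  \item Almost linearity holds because $h(x)=x^{o(1)}$.
  \item Well-behavedness $f(cx)\le cf(x)$ for a suitable constant $c$ is a direct check: $h(cx)/h(x)$ tends to $1$, so it is below $c$ once $c$ is moderately large, and small $x$ are handled by the constant regime.
\end{itemize}

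\textbf{Step 3 (the case analysis, the main obstacle).} The term $z\log z$ is covered by $f(z)$. For $zb^A$ I split on whether $b^A\le h(z)$.
\begin{itemize}
  \item If $b^A\le h(z)$, then $zb^A\le f(z)$.
  \item Otherwise $A\log b>\log h(z)=u/\log u$, so $u\lesssim A\log b\cdot\log u$, which gives $u=O(A\log b\,(\log A+\log\log b))$. We need $zb^A\le b^{f(A)}$, that is, $u/\log b+A\le A\,h(A)$. This reduces to showing
  \begin{equation*}
    h(A)\gtrsim \log A+\log\log b,
  \end{equation*}
  which holds because $h(A)=A^{1/\log\log A}$ eventually beats $\log A$.
\end{itemize}
The delicate part is the dependence on $b$. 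Whenever $A$ is small relative to $b$, say $h(A)<\log\log b$ or $A\le 2\log b$, the whole expression $b^{zb^A}$ must instead be charged to $b^{b^{2\log b}}$ times $b^{f(z)}$. Here I would use $zb^A\le z\,b^{2\log b}$ and treat the factor $b^{2\log b}$ multiplicatively.

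Getting this bookkeeping consistent with the exact prefactor $b^{b^{2\log b}}$ is where I expect the effort to lie. The fallback is to enlarge $f$ by constant factors, which well-behavedness permits, and then re-run the comparison. This is routine but must be done carefully.
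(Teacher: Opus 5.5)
Your Steps 1--2 and the main branch of Step 3 follow essentially the paper's route. You reduce to $r^z b^{zb^A}$, pay for $r^z$ with $z^z$, and then show the exponent is dominated either by $f(z)$ or by $b^{f(A)}$, with $f(x)\approx x\,2^{\log x/\log\log x}$. The gap is the residual regime, which is the only place the prefactor $b^{b^{2\log b}}$ is actually needed, and the argument you give there fails.

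That prefactor contributes only an \emph{additive} $b^{2\log b}$ to the exponent. So in the residual regime you must prove $zb^A+z\log z\le b^{2\log b}+f(z)$. Replacing $b^A$ by $b^{2\log b}$ leaves the term $z\,b^{2\log b}$, and no multiplicative treatment can absorb it.

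Concretely, put $L=\log b$ and $A=2L$, and choose $z$ with $h(z)=2^{L^2}$, so that $\log z\approx 2L^2\log L$. Then $b^A=2^{2L^2}>h(z)$ and $A\le 2\log b$, so you are in your residual regime. Yet $zb^A=z\,2^{2L^2}$ exceeds both $f(z)\approx z\,2^{L^2}$ and $b^{2\log b}=2^{2L^2}$ by enormous factors. In this configuration the $b^{f(A)}$ branch would in fact succeed, since $\log_b z+A\approx 2L\log L+2L\le 2L\,h(2L)$ for large $L$. So the threshold $A\le 2\log b$ also routes cases to the wrong branch. Any correct treatment of the residual regime has to use that $z$, not only $A$, is small there.

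The missing idea, which is what the paper does, is to split on $z$ versus $b$ rather than $A$ versus $b$.

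\textbf{Case $\log z\ge(\log b)^2$.} This is the paper's condition $\log\log(2z+4)\ge 2\log\log b$. Here $\log\log b\le\tfrac12\log u$. In your estimate $\log u\le\log A+\log\log b+\log\log u$, the $\log\log b$ term is therefore absorbed by $\log u$ itself, giving $\log u=O(\log A)$. Your main-branch argument then needs only $h(A)\gtrsim\log A$. Bounded $A$ forces $u$, and hence $z$ and $b$, to be bounded, which an additive constant in $f$ handles.

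\textbf{Case $\log z<(\log b)^2$.} Here $z^2<b^{2\log b}$, and there are two subcases.
\begin{itemize}
\item If $b^A\le z$, then $zb^A\le z^2<b^{2\log b}$ is paid for by the prefactor, while $z\log z\le f(z)$.
\item If $b^A>z$, then $zb^A+z\log z\le b^{2A+1}\le b^{f(A)}$ as soon as $f(A)\ge 2A+1$.
\end{itemize}

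A minor slip: in your case $z\le r$ with $z<1$, you have $z\log z<0$, so $E$ should use $\max\{0,z\log z\}$.
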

\begin{proof}
Define $f(x) = 2^{\log x + (\log (2x+4)/ \log\log (2x+4))} + 100$ for all $x \in \mathbb{R}_{\geq 0}$.
Clearly, $f$ is almost linear and well-behaved.
We observe two basic properties of $f$: for all $x \in \mathbb{R}_{\geq 0}$,
\begin{enumerate}[(i)]
    \item $f(x) \geq 2 x \log x \geq 2 x \log_b x$,
    \item $(f(x)-x) / \log(f(x)-x) > 2x+2$.
\end{enumerate}

To see $f$ satisfies the desired property, assume without loss of generality that $a_1 \geq \cdots \geq a_r$.
Then we have the inequality
\begin{equation*}
    \left(\sum_{i=1}^r \exp_b^{(2)}(a_i)\right)^z \leq r^z (\exp_b^{(2)}(a_1))^z = r^z \cdot b^{b^{a_1}z}.
\end{equation*}

We claim that $b^{b^{a_1}z} \leq b^{b^b}(b^{f(z)} + b^{b^{f(a_1)}})/z^z$.
Note that if this is true, then we are done, because $r^z \cdot b^{b^{a_1}z} \leq (\frac{r}{z})^z b^{b^b} \cdot (b^{f(z)} + b^{b^{f(a_1)}}) \leq r^r b^{b^b} \cdot (b^{f(z)} + b^{b^{f(a_1)}}) \leq r^r b^{b^b} \cdot (b^{f(z)} + \sum_{i=1}^r b^{b^{f(a_i)}})$, which implies the desired inequality.
If $b^{f(a_1)} \geq b^{a_1} z + z \log_b z$, then $b^{b^{a_1}z} \leq b^{b^{f(a_1)}}/z^z \leq (b^{f(z)} + b^{b^{f(a_1)}})/z^z$.
So we only need to consider the case $b^{f(a_1)} < b^{a_1} z + z \log_b z$.
We shall show $f(z) + b^{2 \log b} - z \log_b z \geq b^{a_1}z$ and therefore $b^{b^{a_1}z} \leq b^{b^{2 \log b}} (b^{f(z)}/z^z) \leq b^{b^{2 \log b}} (b^{f(z)} + b^{b^{f(a_1)}})/z^z$.
Consider the following two cases.

\begin{itemize}
    \item \textbf{Case 1: $\log_b z \geq b^{a_1}$.}
    In this case, we have $b^{a_1} z \leq z \log_b z$.
    Property (i) of $f$ above then implies $f(z) - z \log_b z \geq z \log_b z \geq b^{a_1}z$.
    \item \textbf{Case 2: $\log_b z < b^{a_1}$.}
    In this case, we have $b^{f(a_1)} < b^{a_1} z + z \log_b z < 2z \cdot b^{a_1}$, which implies $\log_b (2z) > f(a_1)-a_1$ and therefore
    \begin{equation*}
        \frac{\log_b (2z+4)}{\log \log_b (2z+4)} \geq \frac{\log_b (2z)}{\log \log_b (2z)} > \frac{f(a_1)-a_1}{\log(f(a_1)-a_1)} > 2 a_1+2,
    \end{equation*}
    where the last $>$ follows from property (ii) of $f$.
    If $\log\log (2z+4) \geq 2 \log\log b$, then $\frac{\log_b (2z+4)}{\log \log (2z+4)} \geq \frac{\log_b (2z+4)}{2 \log \log_b (2z+4)} > a_1+1$.
    It follows that
    \begin{equation*}
        f(z) - z \log_b z \geq f(z)/2 = (z/2) \cdot b^{\log_b (2z+4)/ \log\log (2z+4)} \geq (z/2) \cdot b^{a_1+1} \geq b^{a_1} z.
    \end{equation*}
    On the other hand, if $\log\log (2z+4) < 2 \log\log b$, then it holds that $b^{2 \log b} \geq z^2$.
    The inequality $b^{f(a_1)} < b^{a_1} z + z \log_b z$ implies $z > b^{a_1}$ and thus $b^{2 \log b} \geq b^{a_1} z$.
    As $f(z) > z \log_b z$ by property (i) of $f$, we have $f(z) + b^{2 \log b} - z \log_b z \geq b^{a_1}z$.
\end{itemize}
In both cases, we have $f(z) + b^{2 \log b} - z \log_b z \geq b^{a_1}z$.
This completes the proof.
\end{proof}

\section{Upper bounds} \label{sec-ub}

We define the notion of \textit{$S$-signatures} on labeled and colored graphs for a sequence $S$ of pairs of natural numbers, inductively on the length of $S$.
The base case is of course $|S| = 0$, i.e., $S = ()$.
The \textit{$()$-signature} of a $p$-labeled and $P$-colored graph $(G,\lambda,\mu)$ is just defined as its core, i.e., $\mathsf{sgn}_{()}(G,\lambda,\mu) = \mathsf{core}(G,\lambda,\mu)$.
Consider a sequence $S$ of $\ell+1$ natural numbers and suppose $S = ((k,s))+S'$ where $k,s \in \mathbb{N}$ and $S'$ is a sequence of $\ell$ pairs of natural numbers.
The \textit{$S$-signature} of a labeled and colored graph $(G,\lambda,\mu)$ is defined as 
\begin{equation*}
\mathsf{sgn}_S(G,\lambda,\mu) = \{\mathsf{sgn}_{S'}(G,\lambda \oplus \lambda',\mu \otimes \mu') :\lambda' \in \varLambda_{G,k} \text{ and } \mu' \in U_{G,\{0,1\}^s}\},    
\end{equation*}
where $\varLambda_{G,k}$ is the family of all $k$-labelings on $G$ and $U_{G,Q}$ is the family of all $Q$-colorings on $G$.
For convenience, we also define the \textit{$S$-signature} of a (unlabeled and uncolored) graph $G$, denoted by $\mathsf{sgn}_S(G)$, simply as the $S$-signature of $(G,-,-)$.
The \textit{recursive size} of a signature $\mathsf{sgn}_S(G,\lambda,\mu)$, denoted by $\lVert \mathsf{sgn}_S(G,\lambda,\mu) \rVert$, is defined as follows: if $\mathsf{sgn}_S(G,\lambda,\mu)$ is a set (the case $|S| \geq 1$), then $\lVert \mathsf{sgn}_S(G,\lambda,\mu) \rVert = \sum_{x \in \mathsf{sgn}_S(G,\lambda,\mu)} \lVert x \rVert$; otherwise $\mathsf{sgn}_S(G,\lambda,\mu)$ is a labeled and colored graph (the case $|S|=0$), and we define $\lVert \mathsf{sgn}_S(G,\lambda,\mu) \rVert = 1$.

\subsection{Algorithms via signatures} \label{sec-algo}

In this section, we propose two algorithms.
The first algorithm efficiently computes the $S$-signature of a given graph $G$ by dynamic programming on a tree decomposition of $G$.
The second algorithm efficiently tests whether a graph $G$ satisfies an $S$-MSO $\phi$, given $\mathsf{sgn}_S(G)$.
We start by proving the following important lemma, which will be used in the first algorithm and also used for bounding the signature size in Section~\ref{sec-size}.

\begin{lemma}\label{lem-sgncomp}
    Let $(G,\lambda,\mu)$ be a $p$-labeled $P$-colored graph, and $S = ((k_1,s_1),\dots,(k_d,s_d))$.
    \begin{enumerate}[(i)]
        \item Let $\lambda': [q] \rightarrow V(G)$ be a $q$-coloring on $G$ and $f: [p] \rightarrow [q]$ be a map such that $\lambda = \lambda' \circ f$.
        Then given only $f$ and $\mathsf{sgn}_S(G,\lambda',\mu)$, one can compute the signature $\mathsf{sgn}_S(G,\lambda,\mu)$ in time
        \begin{equation*}
            \left(\sum_{i=1}^d (k_i+s_i) + \lVert \mathsf{sgn}_S(G,\lambda',\mu) \rVert\right)^{O(1)}.
        \end{equation*}
        \item Let $\mu': V(G) \rightarrow Q$ be a $Q$-coloring on $G$ and $f: Q \rightarrow P$ be a map such that $\mu = f \circ \mu'$.
        Then given only $f$ and $\mathsf{sgn}_S(G,\lambda,\mu')$, one can compute the signature $\mathsf{sgn}_S(G,\lambda,\mu)$ in time
        \begin{equation*}
            \left(\sum_{i=1}^d (k_i+s_i) + \lVert \mathsf{sgn}_S(G,\lambda,\mu') \rVert\right)^{O(1)}.
        \end{equation*}        
        \item Let $(A,B)$ be a separation of $G$ with $|A \cap B| = z$ and $\mu':V(G) \rightarrow [z]_0$ be a function that maps all vertices in $V(G) \backslash (A \cap B)$ to $0$ and maps $A \cap B$ bijectively to $[z]$.
        Denote $\lambda_A=\mathsf{sort}(\lambda^{|A})$, $\lambda_B=\mathsf{sort}(\lambda^{|B})$, $\mu_A=(\mu \otimes \mu')_{|A}$ and $\mu_B=(\mu \otimes \mu')_{|B}$.
        Then given only $S$, $\lambda^{-1}(A)$, $\lambda^{-1}(B)$, $\mathsf{sgn}_S(G[A],\lambda_A,\mu_A)$, and $\mathsf{sgn}_S(G[B],\lambda_B,\mu_B)$, one can compute $\mathsf{sgn}_S(G,\lambda,\mu)$ in time
        \begin{equation*}
            2^{\sum_{i=1}^dk_i}\cdot\left(\sum_{i=1}^d (k_i+s_i) + \lVert \mathsf{sgn}_S(G[A],\lambda_A,\mu_A) \rVert + \lVert \mathsf{sgn}_S(G[B],\lambda_B,\mu_B) \rVert\right)^{O(1)}.
        \end{equation*}
    \end{enumerate}
\end{lemma}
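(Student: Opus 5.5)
All three parts go by induction on $d=|S|$, processing the nested set recursively.

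\textbf{Parts (i) and (ii).} In both cases we apply a fixed transformation to every labeled and colored graph at the bottom level of the nested set, then rebuild the sets and remove duplicates.

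For (i), the base case $d=0$ uses $\mathsf{sgn}_{()}(G,\lambda',\mu)=\mathsf{core}(G,\lambda',\mu)$. We restrict this core to $\text{Im}(\lambda'\circ f)$ and relabel by $i\mapsto \lambda'(f(i))$, which gives $\mathsf{core}(G,\lambda,\mu)$. For $d\ge1$ we use the identity
\[
(\lambda'\circ f)\oplus\lambda''=(\lambda'\oplus\lambda'')\circ(f\oplus\mathrm{id}_{[k_1]}),
\]
where $f\oplus\mathrm{id}$ acts as $f$ on $[p]$ and shifts the remaining $k_1$ indices. Hence
\[
\mathsf{sgn}_S(G,\lambda,\mu)=\{\Phi(x): x\in \mathsf{sgn}_S(G,\lambda',\mu)\},
\]
where $\Phi$ is the induction hypothesis for $S'$ applied with map $f\oplus \mathrm{id}$.

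Part (ii) works the same way, using $(f\circ\mu')\otimes\mu''=(f\times\mathrm{id})\circ(\mu'\otimes\mu'')$. At the base case we simply recolor the core.

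In both parts each bottom graph has $O(p+\sum k_i)$ vertices. So each transformation costs polynomial time, and deduplicating (isomorphism of graphs that carry a labeling of every vertex is trivial to test) is polynomial in the recursive size.

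\textbf{Part (iii).} The plan is to first prove a characterization and then turn it into an algorithm. Fix $S=((k,s))+S'$. A choice $(\lambda'',\mu'')$ on $G$ induces, on each side, the restricted labelings $\mathsf{sort}((\lambda\oplus\lambda'')^{|A})$ and $\mathsf{sort}((\lambda\oplus\lambda'')^{|B})$ and the colorings $\mu''_{|A}$ and $\mu''_{|B}$. These agree on $A\cap B$, and since every vertex of $A\cap B$ has a distinct color under $\mu'$, this agreement is visible through the colors.

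Conversely, take $\lambda''_A$ labeling $k_A$ of the new labels inside $A$, take $\lambda''_B$ labeling the others inside $B$, and take colorings on $A$ and $B$ that agree on $A\cap B$. These glue to a choice on $G$. The only ambiguity is which of the $k$ new labels go to $A$; labels landing in $A\cap B$ can be assigned to either side. So we enumerate a set $I\subseteq[k]$ of labels placed in $A$, which is the source of the $2^{k}$ factor, and $2^{\sum_i k_i}$ over all levels.

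For each $I$, we compute a candidate $S'$-signature for every pair $x_A\in\mathsf{sgn}_S(G[A],\lambda_A,\mu_A)$ and $x_B\in\mathsf{sgn}_S(G[B],\lambda_B,\mu_B)$ whose label counts match $I$ and $[k]\setminus I$. We do this recursively with the same combination procedure at level $S'$, after reindexing labels via $\lambda^{-1}(A)$, $I$, and so on, using part (i). The output set is the union of these candidates over all $I$ and all pairs. At the base $S=()$, we glue the two cores along the boundary vertices, identified by their $[z]$-colors, and add no edges between $A\setminus B$ and $B\setminus A$. This recovers $\mathsf{core}(G,\lambda,\mu)$, with the original coloring obtained by projecting away $\mu'$ via part (ii).

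\textbf{Main obstacle.} The hard step is correctness of pairing in the recursion. Two pairs that are individually realizable must correspond to choices that agree on $A\cap B$, both for the colors in $\{0,1\}^s$ and for which boundary vertices carry which labels. This information is not obviously present at the top level of a signature; it only surfaces at the bottom-level cores, and only for labeled vertices.

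To handle this, I would strengthen the induction hypothesis. Instead of combining arbitrary pairs, we carry along the full coloring of the boundary: the $[z]$-colors persist through all levels, so the bottom cores can be checked for consistency. We would also prove that $\mathsf{sgn}_{S'}$ of the glued graph depends only on the two side signatures together with this boundary data.

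The likely fix is to make the boundary vertices \emph{labeled*}, i.e., to append a $z$-labeling of $A\cap B$ to both sides. Then every core contains the boundary vertices and their new colors, so consistency can be checked at every level. Afterwards we drop these labels using part (i).
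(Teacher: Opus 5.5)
Your parts (i) and (ii) are fine and essentially coincide with the paper's argument: push the index map (resp.\ the color map) into every bottom-level core and rebuild the nested set. Your base case for (i), which also restricts the core to $\text{Im}(\lambda'\circ f)$, is if anything more careful than the paper's.

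The gap is in (iii), exactly at the point you call the main obstacle, and your proposed remedy does not work under the lemma's hypotheses. Taking the union over all pairs $x_A\in\mathsf{sgn}_S(G[A],\lambda_A,\mu_A)$, $x_B\in\mathsf{sgn}_S(G[B],\lambda_B,\mu_B)$ is unsound. The recursive gluing notices a conflict in the new $\{0,1\}^{s}$-colors of a boundary vertex only in branches where that vertex is labeled. It therefore drops those branches and returns the rest as if it were a signature. For example, take $S=((0,1),(2,0))$ and an odd cycle separated at two vertices into two arcs. Glue proper $2$-colorings of the arcs that agree at one boundary vertex and disagree at the other. The result is an element all of whose $2$-labeled cores are properly colored, i.e.\ a certificate of bipartiteness.

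Labeling the boundary would avoid this, but you are given $\mathsf{sgn}_S(G[A],\lambda_A,\mu_A)$ with $A\cap B$ unlabeled. For a $z$-labeling $\pi_A$ of $A\cap B$, $\mathsf{sgn}_S(G[A],\lambda_A\oplus\pi_A,\mu_A)$ cannot be computed from it, because part (i) only deletes or reindexes labels. For instance, for $S=((0,1))$ and $p=0$ the unlabeled signature is the same one-element set for every graph, while the boundary-labeled one records $G[A\cap B]$. So your fix changes the statement rather than proving it.

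Separately, enumerating a single $I\subseteq[k]$ with $[k]\setminus I$ sent to $B$ is insufficient. A new label on a vertex of $A\cap B$ must be kept on both sides, so you should enumerate pairs $(I_A,I_B)$ with $I_A\cup I_B=[k]$, as the paper does. Otherwise that vertex cannot be matched through its $[z]$-color, and its adjacencies to labeled vertices in the interior of the other side are lost.

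What is missing is the observation that the boundary colors can be read off from the given signatures exactly when they matter. Fix a pair $(x_A,x_B)$.
\begin{itemize}
\item Every vertex of $A\cap B$ that carries a label in some bottom-level core of $x_A$ is recognized there by its $[z]$-color and displays its new color, which is the same in all such cores. If $S'$ contains a vertex quantifier, this applies to every vertex of $A\cap B$.
\item Read these colors off for $x_A$ and $x_B$, and combine only pairs that agree. This costs one scan of $x_A$ and $x_B$.
\item A boundary vertex that is never labeled does not influence $\mathsf{sgn}_{S'}$ at all. So its color on one side can be changed to the other side's without changing $x_A$ or $x_B$.
\end{itemize}
Hence every surviving pair comes from a genuine glued choice $(\widehat\lambda,\widehat\mu)$, and both inclusions follow by induction. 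Misplaced labels are visible in every deeper core, so they already invalidate the whole branch.

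For comparison, the paper strengthens (iii) to a statement (iii*). There the sides may carry extra labels (only those indexed by $J_A,J_B$ count) and arbitrary colorings. The procedure either reports that no compatible glued labeling/coloring exists or returns the glued signature, with the compatibility test done in the base case on labeled boundary vertices. Since a failed test there discards only one branch, the pair-level filter above is what that induction also needs. You located the right difficulty, but your proposal leaves it unresolved.
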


\begin{proof}
The proof is by induction on $|S|=d$. To prove item (iii), we will need to prove a more general item, termed (iii*), defined as follows:

Let $(A,B)$ be a separation of $G$ with $|A \cap B| = z$ and $\mu':V(G) \rightarrow [z]_0$ be a function that maps all vertices in $V(G) \backslash (A \cap B)$ to $0$ and maps $A \cap B$ bijectively to $[z]$. Let $I_A,I_B\subseteq [p]$ such that $I_A\cup I_B=[p]$ and $|I_A|+|I_B|-|I_A\cap I_B|=p$.  Additionally, let $J_A,J_B\subseteq [p]$ such that $|I_A|=|J_A|$ and $|I_B|=|J_B|$. Let $\lambda_A: [p_A]\rightarrow A$ and $\lambda_B: [p_B]\rightarrow B$ be a $p_A$-labeling  of $G[A]$ and a $p_B$-labeling of $G[B]$ whose domains contain $J_A$ and $J_B$, respectively.. Additionally, let $\mu_A: A\rightarrow P$ and $\mu_B: B\rightarrow P$ be two $P$-colorings of $G[A]$ and $G[B]$, respectively.

We say that a $p$-labeling $\lambda^*$ of $G$ is {\em $(I_A,I_B,J_A,J_B,\lambda_A,\lambda_B)$-compatible} if ${\lambda^*}^{-1}(A)=I_A$, ${\lambda^*}^{-1}(B)=I_B$, $\mathsf{sort}({\lambda_A}_{|J_A})=\mathsf{sort}(\lambda^*_{|I_A})$, $\mathsf{sort}({\lambda_B}_{|J_B})=\mathsf{sort}(\lambda^*_{|I_B})$. Notice that if there exists a $p$-labeling that is  $(I_A,I_B,\lambda_A,\lambda_B)$-compatible, then it is unique.

Similarly, we say that $P$-coloring $\mu^*$ of $G$ is {\em $(\mu_A,\mu_B)$-compatible} if $\mu_A=\mu^*_{|A}$ and $\mu_B=\mu^*_{|B}$. Notice that if there exists a $P$-coloring that is  $(\mu_A,\mu_B)$-compatible, then it is unique.
    
        Then given only $S$, $I_A,I_B,J_A,J_B,$, $\mathsf{sgn}_S(G[A],\lambda_A,\mu_A)$, and $\mathsf{sgn}_S(G[B],\lambda_B,\mu_B)$, one can either:
        \begin{enumerate}
        \item Determine that there does not exist a $p$-labeling that is  $(I_A,I_B,J_A,J_B,\lambda_A,\lambda_B)$-compatible or there does not exist a $P$-coloring that is $(\mu_A,\mu_B)$-compatible (or both).
        \item For $\lambda^*$ and $\mu^*$ being the  $p$-labeling that is  $(I_A,I_B,J_A,J_B,\lambda_A,\lambda_B)$-compatible and the $P$-coloring that is $(\mu_A,\mu_B)$-compatible,  compute the signature $\mathsf{sgn}_S(G,\lambda^*,\mu^*)$.
        \end{enumerate}
        Furthermore, this is to be done in time
        \begin{equation*}
            2^{\sum_{i=1}^dk_i}\cdot\left(\sum_{i=1}^d (k_i+s_i) + \lVert \mathsf{sgn}_S(G[A],\lambda_A,\mu_A) \rVert + \lVert \mathsf{sgn}_S(G[B],\lambda_B,\mu_B) \rVert\right)^{O(1)}.
            \end{equation*}

\medskip
\noindent{\bf Basis ($S=\emptyset$) for (i):} Let us denote $(H,\alpha',\beta)=\mathsf{sgn}_S(G,\lambda',\mu)$. Define $\alpha = \alpha'\circ f$.  Then, $(H,\alpha,\beta)=\mathsf{sgn}_S(G,\lambda,\mu)$.

\medskip
\noindent{\bf Basis ($S=\emptyset$) for (ii):} Let us denote $(H,\alpha,\beta')=\mathsf{sgn}_S(G,\lambda,\mu')$. Define $\beta= f\circ \beta'$.  Then, $(H,\alpha,\beta)=\mathsf{sgn}_S(G,\lambda,\mu)$.

\medskip
\noindent{\bf Basis ($S=\emptyset$) for (iii*):} Let us denote  $(H_A,\alpha_A,\beta_A)=\mathsf{sgn}_S(G[A],\lambda_A,\mu_A)$, and $(H_B,\alpha_B,\beta_B)=\mathsf{sgn}_S(G[B],\lambda_B,\mu_B)$. 

First, notice that $z=|\{i\in\mathbb{N}: \exists X$ s.t.~$(X,i)\in\mathsf{image}(\beta_A)\}|$, and $p=|I_A|+|I_B|-|I_A\cap I_B|$. Thus, we  suppose to have $z$ and $p$ at hand.  Due to the renaming of vertices, we must observe that two vertices $u$ in $H_A$ and $v$ in $H_B$ should be ``treated'' as the same vertex (i.e., they are different names of the same vertex in $A\cap B$) if and only if: there exist $i\in [z],X$ for which $\beta_A(u)=(X,i)$ and $\beta_B(v)=(X,i)$. Accordingly, we define a function $\mathsf{retrace}: V(H_B)\rightarrow V(H_A)$ as
follows: for every $u\in V(H_B)$, if $\beta_B(u)=(X,i)$ for some (unique) $i\in[z]$, then
$\mathsf{retrace}(u)=\beta_A(X,i)$,
and otherwise  $\mathsf{retrace}(u)=u$. Moreover, let $M_A$ be the set of vertices in the image of $\mathsf{retrace}$ that do not belong to $B$, and let $M_B$ be the set of vertices in $B$ that $\mathsf{retrace}$ does not map to themselves.

With this in mind, we reconstruct a function $\widetilde{\lambda}^*$ on $[p]$ that is equivalent to $\lambda^*$ (if it exists) up to the renaming of the vertices as follows. For every $i\in[p]$:
\begin{itemize}
\item If $i\in I_A\cap I_B$, then verify that $\mathsf{sort}({\alpha_A}_{|J_A})(j)=\mathsf{retrace}(\mathsf{sort}({\alpha_B}|_{|J_B})(\ell))$,
 where $j$ is the number of integers in $I_A$ that are smaller or equal to $i$, and $\ell$ is the number of integers in $I_B$ that are smaller or equal to $i$. If this is not true, then return that there does not exist a $p$-labeling that is  $(I_A,I_B,J_A,J_B,\lambda_A,\lambda_B)$-compatible. Else (when this is true), then $\widetilde{\lambda}^*(i)=\mathsf{sort}({\alpha_A}_{|J_A})(j)$.
\item Else if $i\in I_A\setminus I_B$, then 
$\widetilde{\lambda}^*(i)=\mathsf{sort}({\alpha_A}_{|J_A})(j)$, where $j$ is the number of integers in $I_A$ that are smaller or equal to $i$.  If $\alpha_A(j)\in M_A$, then return that there does not exist a $p$-labeling that is  $(I_A,I_B,J_A,J_B,\lambda_A,\lambda_B)$-compatible.
\item Else (i.e., $i\in I_B\setminus I_A$), $\widetilde{\lambda}^*(i)=\mathsf{retrace}(\mathsf{sort}({\alpha_B}_{|J_B})(j))$, where $j$ is the number of integers in $I_B$ that are smaller or equal to $i$. If $\mathsf{sort}({\alpha_B}_{|J_B})(j)\in M_B$, then return that there does not exist a $p$-labeling that is  $(I_A,I_B,J_A,J_B,\lambda_A,\lambda_B)$-compatible.
\end{itemize}
So far, observe that if the above tests have passed, then there indeed exists a $p$-labeling that is  $(I_A,I_B,J_A,J_B,\lambda_A,\lambda_B)$-compatible, and we denote it by $\lambda^*$.

Now, we compute $(H,\alpha,\beta)$ as follows:
\begin{itemize}
\item $V(H)=\widetilde{\lambda}^*([p])$.
\item $E(H)=\{\{u,v\}\in E(H_A): u,v\in V(H_A)\}\cup\{\{\mathsf{retrace}(u),\mathsf{retrace}(v)\}: \{u,v\}\in E(H_B)\}$. Since $(A,B)$ is a separation, two vertices are adjacent in $G$ if and only if they are adjacent in at least one of $G[A]$ and $G[B]$. Hence, by the definition of $\mathsf{sgn}$, two vertices are adjacent in $G[\lambda^*([p])]$ if and only if the corresponding vertices are adjacent in at least one of $H_A$ and $H_B$. Thus, we derive that $E(H)$ that we defined satisfies $E(H)=E(G[\widetilde{\lambda}^*([p]])$ up to the remaining of the vertices.
\item $\alpha=\widetilde{\lambda}^*$.
\item We construct a function $\beta$ on $V(H)$ as follows. For every $v\in V(H)$:
	\begin{itemize}
	\item If $v\in M_A$: Verify that $\beta_A(v)=\beta_B(\mathsf{retrace}^{-1}(v))$. If this is not true, then return that there does not exist a $P$-coloring that is  $(\mu_A,\mu_B)$-compatible. Else, $\beta(v)=X$ where $X$ is the element such that $\beta_A(v)=(X,i)$ for some $i$.
	\item  Else if $v\in V(H_A)\setminus M_A$: $\beta(v)=X$ where $X$ is the element such that $\beta_A(v)=(X,i)$ for some $i$.
	\item  Else (when  $v\in V(H)\setminus V(H_A)$): $\beta(v)=X$ where $X$ is the element such that $\beta_B(v)=(X,i)$ for some $i$.
	\end{itemize}
	Observe that if the above tests have passed, then there indeed exists a $P$-coloring that is  $(\mu_A,\mu_B)$-compatible, and we denote it by $\mu^*$.
\end{itemize}

Overall, we have $(H,\alpha,\beta)$ is isomorphic to $\mathsf{sgn}_\emptyset(G,\lambda^*,\mu^*)$. Clearly, the time complexity to perform the computation above can be naively bounded by a polynomial in $\lVert \mathsf{sgn}_S(G[A],\lambda_A,\mu_A) \rVert + \lVert \mathsf{sgn}_S(G[B],\lambda_B,\mu_B) \rVert$. This completes the proof of the basis.

\bigskip

Next, we suppose correctness for all sequences of pairs of natural numbers whose length is at most $d-1$. Let $\widehat{S}=((k_1,s_1),\ldots,(k_{d-1},s_{d-1}))$.

\medskip
\noindent{\bf Step ($|S|=d\geq 1$) for (i):}  We initialize $\cal C$ to be empty, and insert elements as follows. For all  $X\in \mathsf{sgn}_S(G,\lambda',\mu)$: Insert into $\cal C$ the element returned by  the application of the algorithm from the inductive hypothesis on   $X$ and $f$. Correctness follows directly from the definition of $\mathsf{sgn}$, and the running time bound is immediate from the definition of recursive size.

\medskip
\noindent{\bf Step ($|S|=d\geq 1$) for (ii):}  We initialize $\cal C$ to be empty, and insert elements as follows. For all  $X\in \mathsf{sgn}_S(G,\lambda,\mu')$: Insert into $\cal C$ the element returned by  the application of the algorithm from the inductive hypothesis on   $X$ and $f$. Correctness follows directly from the definition of $\mathsf{sgn}$, and the running time bound is immediate from the definition of recursive size.

\medskip
\noindent{\bf Step ($|S|=d\geq 1$) for (iii*):}  We initialize ${\cal C}$ to be empty, and insert elements as follows:
\begin{itemize}
\item For all $X\in \mathsf{sgn}_S(G[A],\lambda_A,\mu_A)$ and $Y\in \mathsf{sgn}_S(G[B],\lambda_B,\mu_B)$:
\begin{itemize}
    \item For all $\widehat{I}_A,\widehat{I}_B\subseteq [k_d]$ such that $\widehat{I}_A\cup \widehat{I}_B=[k_d]$ and $|\widehat{I}_A|+ |\widehat{I}_B|-|\widehat{I}_A\cap \widehat{I}_B|=k_d$:
    \begin{itemize}
    \item Denote $I^*_A=I_A\cup \{p+i:i\in \widehat{I}_A\}$, $I^*_B=I_B\cup \{p+i:i\in \widehat{I}_B\}$, $J^*_A=J_A\cup\{|\mathsf{domain}(\lambda_A)|+i:i\in\widehat{I}_A\}$ and $J^*_B=J_B\cup\{|\mathsf{domain}(\lambda_B)|+i:i\in\widehat{I}_B\}$.
        \item Use the inductive hypothesis on $\widehat{S},I^*_A, I^*_B,J^*_A,J^*_B,X,Y$  to compute $Z$. If invalidity is determined, then continue to the next iteration (or stop, if all iterations have already been completed). Else, we have $Z$ at hand, and insert it  into ${\cal C}$.
    \end{itemize}
\end{itemize}
\end{itemize}

If $\cal C$ is empty, then this is the indication that $\lambda^*$ or $\mu^*$ (or both) does not exist; then, we return this information. Else, the proof proceeds as follows, showing that ${\cal C}=\mathsf{sgn}_S(G,\lambda^*,\mu^*)$. 

First, we will argue that ${\cal C}\subseteq \mathsf{sgn}_S(G,\lambda^*,\mu^*)$. 
For this purpose, consider some iteration of the algorithm, corresponding to some $\widehat{I}_A,\widehat{I}_B,X,Y$. Then, by the definition of $\mathsf{sgn}$, $X$ is equal to $\mathsf{sgn}_{\widehat{S}}(G[A],\lambda_A\oplus \lambda'_A,\mu_A\otimes \mu'_A)$ for some (possibly several equivalent under isomorphism, but we pick one arbitrarily) $\lambda'_A\in A_{G[A],k_d}$ and $\mu'_A\in U_{G[A],\{0,1\}^{s_d}}$. (Notice that the knowledge of these $\lambda'_A,\mu'_A$ is not available to the algorithm, but can only be used in the analysis.) Similarly, $Y$ is equal to $\mathsf{sgn}_{\widehat{S}}(G[B],\lambda_B\oplus \lambda'_B,\mu_B\otimes \mu'_B)$ for some $\lambda'_B\in A_{G[B],k_d}$ and $\mu'_B\in U_{G[B],\{0,1\}^{s_d}}$. Now, we define a function $\widehat{\lambda}\in A_{G,k_d}$ as follows.  For all $i\in[k_d]$:
\begin{itemize}
            \item If $i\in \widehat{I}_A$: 
$\widehat{\lambda}(i)=\lambda'_A(j)$.
            \item Otherwise (i.e., $i\in \widehat{I}_B\setminus \widehat{I}_A$): $\widehat{\lambda}(i)=\lambda'_B(j)$.
\end{itemize}
Additionally, we define a function $\widehat{\mu}\in U_{G,\{0,1\}^{s_d}}$ as follows.  For all $v\in V(G)$:
\begin{itemize}
            \item If $v\in A$: $\widehat{\mu}(v)=\mu'_A(v)$.
            \item Otherwise (i.e., $v\in B\setminus A$): $\widehat{\mu}(v)=\mu'_B(v)$.
\end{itemize}

The inductive hypothesis yields that $Z=\mathsf{sgn}_{\widehat{S}}(G,\lambda^*\oplus \widehat{\lambda}, \mu^*\otimes \widehat{\mu})$. Since $\widehat{\lambda}\in A_{G,k_d}$ and $\widehat{\mu}\in U_{G,\{0,1\}^{s_d}}$, we conclude that $Z\in \mathsf{sgn}_S(G,\lambda^*,\mu^*)$. This completes the proof that ${\cal C}\subseteq \mathsf{sgn}_S(G,\lambda^*,\mu^*)$.

Second, we will argue that ${\cal C}\supseteq \mathsf{sgn}_S(G,\lambda_D)$. For this purpose, consider some element in $\mathsf{sgn}_S(G,\lambda^*,\mu^*)$. So, this element is of the form $\mathsf{sgn}_{\widehat{S}}(G,\lambda^*\oplus\widehat{\lambda},\mu^*\otimes\widehat{\mu})$ for some  (possibly several equivalent under isomorphism, but we pick one of each arbitrarily) $\widehat{\lambda}\in A_{G,k_d}$ and $\widehat{\mu}\in U_{G,\{0,1\}^{s_d}}$. 
Define:
\begin{itemize}
\item  $\widehat{I}_A=\{i-p+|\mathsf{domain}(\lambda_A)|:i\in\widehat{\lambda}^{-1}(A)\}$ and
$\widehat{I}_B=\{i-p+|\mathsf{domain}(\lambda_B)|:i\in\widehat{\lambda}^{-1}(B)\}$.
\item $\lambda'_A\in A_{G[A],k_d}$: For all $i\in[k_d]$: If $i\in\widehat{A}_I$, then $\lambda'_A(i)=\widehat{\lambda}(i)$; Otherwise, let $\lambda'_A(i)$ equal some arbitrary vertex in $A$.
\item $\lambda'_B\in A_{G[B],k_d}$: For all $i\in[k_d]$: If $i\in\widehat{B}_I$, then $\lambda'_B(i)=\widehat{\lambda}(i)$; Otherwise, let $\lambda'_B(i)$ equal some arbitrary vertex in $B$.
\item $\mu'_A=\mu^*_{|A}$ and $\mu'_B=\mu^* \otimes_{|B}$. Then, $\mu'_A\in U_{G[A,\{0,1\}^{s_d}}$ and $\mu'_B\in U_{G[B],\{0,1\}^{s_d}}$.
\end{itemize} 

Thus, the definition of $\mathsf{sgn}$ directly yields that $\mathsf{sgn}_{\widehat{S}}(G[A],\lambda_A\oplus\lambda'_A,\mu_A\otimes\mu'_A)\in\mathsf{sgn}_S(G[A],\lambda_A,\mu_A)$ and $\mathsf{sgn}_{\widehat{S}}(G[B],\lambda_B\oplus\lambda'_B,\mu_B\otimes\mu'_B)\in\mathsf{sgn}_S(G[B],\lambda_B,\mu_B)$. Hence, the algorithm will perform an iteration with $X=\mathsf{sgn}_{\widehat{S}}(G[A],\lambda_A\oplus\lambda'_A,\mu_A\otimes\mu'_A)$ and $Y=\mathsf{sgn}_{\widehat{S}}(G[B],\lambda_B\oplus\lambda'_B,\mu_B\otimes\mu'_B)$, and an inner-iteration with $\widehat{I}_A,\widehat{I}_B$. In particular, by the correctness of the inductive hypothesis, in this iteration we reconstruct $Z=\mathsf{sgn}_{\widehat{S}}(G,\lambda^*\oplus \widehat{\lambda}, \mu^*\otimes \widehat{\mu})$. Thus, we conclude that $\mathsf{sgn}_{\widehat{S}}(G,\lambda^*\oplus \widehat{\lambda}, \mu^*\otimes \widehat{\mu})\in{\cal C}$. This completes the proof of ${\cal C}\supseteq \mathsf{sgn}_S(G,\lambda^*,\mu^*)$.

Notice that we perform $2^{O(k_d)}\cdot |\mathsf{sgn}_S(G[A],\lambda_A,\mu_A)|\cdot|\mathsf{sgn}_S(G[B],\lambda_B,\mu_B)|$ iterations (where $|\mathsf{sgn}_S(G[A],\lambda_A,\mu_A)|$ and $|\mathsf{sgn}_S(G[B],\lambda_B,\mu_B)|$  denote the sizes of the corresponding sets, not their recursive sizes). So, the bound on the time complexity of the algorithm follows from the inductive hypothesis and the definition of the recursive size of $\mathsf{sgn}$. This completes the proof.
\end{proof}

To give the first algorithm, we still need an algorithm that computes $S$-signatures by brute-force, which is presented in the following lemma.

\begin{lemma}\label{lem:bruteForce1}
    Let $(G,-,\mu)$ be a $0$-labeled and $[|V(G)|]$-colored graph.
    Let $S=((k_1,s_1),\ldots,(k_d,s_d))$ be a sequence of pairs of natural numbers.
    Then we can compute $\mathsf{sgn}_S(G,-,\mu)$ in time $f(d)\cdot\left(\sum_{i=1}^d2^{O(k_i\cdot\log |V(G)|)}+\sum_{i=1}^{d-1}2^{O(s_i\cdot |V(G)|)}+ \sum_{i=1}^d2^{O(s_d\cdot k_i)}\right)$ for some computable function $f$ of $d$.
\end{lemma}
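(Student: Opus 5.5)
The plan is to compute $\mathsf{sgn}_S(G,-,\mu)$ by straightforward recursion following the definition, but deduplicating isomorphic objects at every level. Write $n=|V(G)|$. For $S=((k,s))+S'$ we will enumerate all $\lambda'\in\varLambda_{G,k}$ and $\mu'\in U_{G,\{0,1\}^s}$, recursively compute $\mathsf{sgn}_{S'}(G,\lambda\oplus\lambda',\mu\otimes\mu')$, and insert the result into a set. Since $\mu$ is a $[n]$-coloring that is injective in the intended use (the DP colors boundary vertices distinctly; in general we may first refine $\mu$ to $\mu\otimes\iota$ for a bijection $\iota:V(G)\to[n]$ and then project back using Lemma~\ref{lem-sgncomp}(ii)), every labeled colored graph at every level is rigid. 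Consequently isomorphism testing and deduplication of nested sets reduce to equality of canonical encodings, which costs polynomial time in the recursive size. At the base $S=()$, the core is computed directly in $\mathrm{poly}(n+\sum k_i)$ time.

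The naive recursion costs $\prod_i n^{k_i}2^{s_in}$, which is a product rather than the required sum. To get the sum, we will use the standard trick of computing level by level and collapsing. Define $L_j$ as the set of distinct values $\mathsf{sgn}_{S_{>j}}(G,\lambda_1\oplus\cdots\oplus\lambda_j,\mu\otimes\mu_1\otimes\cdots\otimes\mu_j)$ over all choices at the first $j$ levels. The key point is that $|L_j|$ is not the product of the choices at the first $j$ levels, because the number of distinct pairs $(\lambda_1\oplus\cdots\oplus\lambda_j,\ \mu\otimes\cdots\otimes\mu_j)$ is at most $n^{\sum_{i\le j}k_i}2^{n\sum_{i\le j}s_i}$. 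Instead we will compute from the bottom up. Let $T_j$ be the set of all $(p_j,\ P\times\prod_{i\le j}\{0,1\}^{s_i})$-decorated versions of the fixed graph $G$, i.e., pairs (labeling, coloring). Processing from level $d$ down to $1$, for each element of $T_{j-1}$ we take the union over its $n^{k_j}2^{s_jn}$ extensions of the already-computed level-$j$ signatures, which we store in a dictionary keyed by the decoration.

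The main obstacle is that $|T_j|$ itself is a product. I would try to keep this under control by observing that the level-$j$ signature depends on the decoration only through its isomorphism type, together with rigidity. This still does not give a sum, so I expect the intended bound to come from bounding the sizes of the signature sets at each level directly. The set $\mathsf{sgn}_{S_{\ge j}}$ has at most $\min(n^{k_j}2^{s_jn},\ \#\text{distinct lower signatures})$ elements. For the last level, a $((k_d,s_d))$-signature is a set of cores on $p+k_d$ labeled vertices with colors in $\{0,1\}^{s_d}$ restricted to the labeled vertices, so it has at most $2^{O(s_dk_d)}\cdot(\ldots)$ elements. This is where the $2^{O(s_d k_i)}$ terms arise: colors of the last block matter only on the $O(\sum k_i)$ labeled vertices. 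I would therefore enumerate, at the last level, only colorings of the labeled vertices rather than all $2^{s_dn}$ colorings.

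Finally, I would bound the total work by charging each level $i$ its enumeration cost $2^{O(k_i\log n)}$, $2^{O(s_in)}$ for $i<d$, and $2^{O(s_dk_i)}$ for the last-level colors. Memoization over the distinct resulting signatures keeps the levels from multiplying, and $f(d)$ absorbs the combinatorial overhead. I expect this accounting, which shows that the memoized per-level costs add rather than multiply, to be the delicate step. It may require the additional observation that for a fixed $[n]$-colored $G$ and fixed $d$ the number of distinct signatures at each level is bounded by the enumeration cost at that level alone.
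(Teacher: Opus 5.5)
Your proposal does not establish the bound. It stops at the step you yourself call the obstacle. You do have the two ingredients the paper uses. The first is brute-force enumeration following the definition of $\mathsf{sgn}_S$. The second is the last-block trick: you enumerate $\{0,1\}^{s_d}$-colorings only of the at most $\sum_{i=1}^d k_i$ labeled vertices, since unlabeled vertices never appear in the cores, and this is the source of the $2^{O(s_dk_i)}$ terms.

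But you then treat ``product versus sum'' as a real difficulty. You propose a level-by-level memoization whose table $T_j$ you concede is still a product. The final accounting is left to a hoped-for observation: that the number of distinct signatures at a level is bounded by that level's enumeration cost alone. That observation is unproved and false in general. Take $S=((k_1,0),(1,0))$ with injective $\mu$. The second block costs only $n$ to enumerate. Yet the $n^{k_1}$ choices of $\lambda_1$ already give pairwise distinct signatures below the first block, because $\lambda_1$ can be read off from the colors in the cores.

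The missing idea is elementary, and no memoization is needed. All the constants live in exponents, so a product of such terms is dominated by its largest factor. Plain brute force costs the following, including the polynomial overhead for building and deduplicating the nested sets:
\[
\Bigl(\prod_{i=1}^d n^{O(k_i)}\cdot\prod_{i=1}^{d-1}2^{O(s_in)}\cdot 2^{O(s_d\sum_{i=1}^d k_i)}\Bigr)^{O(1)}
=2^{O\left(\sum_{i=1}^d k_i\log n+\sum_{i=1}^{d-1}s_in+\sum_{i=1}^d s_dk_i\right)}.
\]
The exponent is a sum of $3d-1$ terms, so it is at most $3d-1$ times its largest term. Hence the whole cost is at most a single summand $2^{O(\cdot)}$ of the target expression, and a fortiori at most the sum. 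Here the constant hidden in the exponent is allowed to depend on $d$, which is also how the paper reads $2^{O((\sum_i k_i)s_d)}\le g(d)\sum_i 2^{O(s_dk_i)}$.

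This one-line calculation is exactly how the paper finishes. Your remark that an injective $\mu$ makes every labeled colored graph rigid, so deduplicating isomorphism classes is cheap, is a useful detail the paper leaves implicit. It does not replace the missing accounting.
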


\begin{proof}
Following the definition of signature, we compute it using brue-force as follows. For $i=1,2,\ldots,d-1$: first, we consider all options to pick $k_i$ vertices (with repetition) from $V(G)$, thereby spending $|V(G)|^{O(k_i)}$ time, and then consider all options to color all vertices using colors from $\{0,1\}^{s_i}$, thereby spending $2^{O(|V(G)|\cdot s_i)}$ time. The only difference is when $i=d$. Then, again we consider all options to pick $k_i$ vertices (with repetition) from $V(G)$, thereby spending $|V(G)|^{O(k_d)}$ time, and then, if $\sum_{i=1}^dk_i<|V(G)|$, then since all vertices that were labeled are already known at this point (and unlabeled vertices do not appear in the signature), we can consider all colorings only of the labeled vertices rather than all vertices, thereby spending  $2^{O((\sum_{i=1}^dk_i)\cdot s_d)}$ time. Notice that by multiplying the entire formula by $f(d)$, we can put the sum $\sum_{i=1}^d$  before corresponding expressions rather than let it be in the exponents, e.g., $2^{O((\sum_{i=1}^dk_i)\cdot s_d)}\leq g(d)\cdot \sum_{i=1}^d2^{O(s_d\cdot k_i)}$ for some computable function $g$ of $d$.

Notice that the order of $\sum_{i=1}^d2^{O(k_i\log |V(G)|)}\cdot\sum_{i=1}^{d-1}2^{O(s_i\cdot |V(G)|)}\cdot\sum_{i=1}^d2^{O(s_d\cdot k_i)}$  is bounded by the dominant sum (since there is an $O$-notation in the exponent), and hence by
 $\sum_{i=1}^d2^{O(k_i\log |V(G)|)}+\sum_{i=1}^{d-1}2^{O(s_i\cdot |V(G)|)}+ \sum_{i=1}^d2^{O(s_d\cdot k_i)}$; thus, the bound follows.
\end{proof}

Now we are ready to give our first algorithm, which is the following theorem.

\begin{theorem}\label{thm:twcomputation}
    Given a sequence $S=((k_1,s_1),\dots,(k_d,s_d))$ of pairs of natural numbers and a graph $G$ with a nice tree decomposition $(T,\beta)$ of width $t$, one can compute $\mathsf{sgn}_S(G)$ in time 
    \begin{equation*}
        f(d)\cdot \left(\sum_{i=1}^d2^{O(k_i\log t)}+\sum_{i=1}^{d-1}2^{O(s_i t)}+ \sum_{i=1}^{d}2^{O(s_d k_i)}\right)\cdot|V(T)| \cdot  \left(\max_{\mu \in U_{G,[t+1]_0}} \lVert \mathsf{sgn}_S(G,-,\mu) \rVert \right)^{O(1)},
    \end{equation*}
    for some computable function $f$ of $d$.
\end{theorem}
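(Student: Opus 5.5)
We compute the signatures bottom-up along the nice tree decomposition $(T,\beta)$. For each node $x$ let $\gamma(x)=\bigcup_{y\in V(T_x)}\beta(y)$, and let $\mu_x:\gamma(x)\rightarrow[|\beta(x)|]_0$ be the coloring that maps $\gamma(x)\backslash\beta(x)$ to $0$ and maps $\beta(x)$ bijectively onto $[|\beta(x)|]$, following a fixed global ordering $\sigma$ of $V(G)$. At each node $x$ we compute $\mathsf{sgn}_S(G[\gamma(x)],-,\mu_x)$. For the root $r$ we can assume $\beta(r)=\emptyset$, by appending forget nodes or by applying Lemma~\ref{lem-sgncomp}(ii) with the constant map. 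Then $\mu_r$ is the dummy coloring and we obtain $\mathsf{sgn}_S(G)$. Every graph $G[\gamma(x)]$ is an induced subgraph of $G$ carrying a $[t+1]_0$-coloring, so (extending the coloring by $0$ where needed) its recursive size is controlled by $\max_{\mu}\lVert\mathsf{sgn}_S(G,-,\mu)\rVert$. This step needs some care, since the signature of an induced subgraph is not literally a signature of $G$. The cleanest route is to bound the per-node cost by the recursive sizes of the signatures actually computed at that node, and to note that the size bounds proved later apply to every graph of treewidth at most $t$. I would therefore read the maximum as ranging over such subgraphs and their colorings.

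The node types are handled as follows.

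\emph{Leaf, and the base of introduce nodes.} Whenever a graph has at most $t+1$ vertices, its signature is computed by brute force with Lemma~\ref{lem:bruteForce1}, at cost $f(d)(\sum_i 2^{O(k_i\log t)}+\sum_{i<d}2^{O(s_i t)}+\sum_i 2^{O(s_dk_i)})$. Lemma~\ref{lem:bruteForce1} requires a $[|V(G)|]$-coloring, and $\mu_x$ restricted to $\beta(x)$ is such a coloring.

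\emph{Introduce node $x$ with child $y$ and new vertex $v$.} Consider the separation $(A,B)=(\gamma(y),\beta(x))$ of $G[\gamma(x)]$. It is a separation because $v$ has no neighbours in $\gamma(y)\backslash\beta(y)$, and $A\cap B=\beta(y)$. First, convert $\mathsf{sgn}_S(G[\gamma(y)],-,\mu_y)$ into the signature with coloring $\mu_x\otimes\mu'$, where $\mu'$ marks $A\cap B$. This is a recoloring by a map on colors: the $\mu_y$-color together with $\sigma$ determines both the $\mu_x$-color and the $\mu'$-color, so Lemma~\ref{lem-sgncomp}(ii) applies. Second, compute the signature of $G[\beta(x)]$ by brute force. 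Finally, combine the two with Lemma~\ref{lem-sgncomp}(iii), using $p=0$ and empty label sets.

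\emph{Forget node.} Only the coloring changes, via a map on colors that sends the forgotten vertex's color to $0$ and renumbers the rest, so Lemma~\ref{lem-sgncomp}(ii) suffices.

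\emph{Join node with children $y_1,y_2$.} Here $(\gamma(y_1),\gamma(y_2))$ is a separation with intersection $\beta(x)$. We recolor both children with Lemma~\ref{lem-sgncomp}(ii) and then combine them with Lemma~\ref{lem-sgncomp}(iii).

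Each node costs the brute-force term plus $2^{\sum k_i}(\sum(k_i+s_i)+\lVert\cdot\rVert)^{O(1)}$. The factor $2^{\sum k_i}$ is at most $f(d)\sum_i 2^{O(k_i)}$, which is absorbed into the $2^{O(s_dk_i)}$ and $2^{O(k_i\log t)}$ terms. Summing over $|V(T)|$ nodes gives the claimed bound.

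The main obstacle is the bookkeeping in the combination step. We must check that the colors supplied to Lemma~\ref{lem-sgncomp}(iii) have exactly the form $(\mu\otimes\mu')_{|A}$ required there, so that the ``$\mu'$'' component is a genuine bijection onto $[z]$ that is consistent between the two sides. We must also justify that every intermediate signature, including those on $G[\beta(x)]$ and those carrying the product colorings $[t+1]_0\times[t+1]_0$, has recursive size polynomially bounded by the maximum in the statement. For the latter I would use Lemma~\ref{lem-sgncomp}(ii) in reverse, since the product coloring is injectively recoverable from a single $[t+1]_0$-coloring given $\sigma$.
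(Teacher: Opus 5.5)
Your proposal matches the paper's proof. Both use the same DP, storing $\mathsf{sgn}_S(G[\gamma(x)],-,\mu_x)$ at each node. Both brute-force $G[\beta(x)]$ via Lemma~\ref{lem:bruteForce1} at introduce nodes, recolor via Lemma~\ref{lem-sgncomp}(ii), and combine via Lemma~\ref{lem-sgncomp}(iii) along the separations $(\gamma(y),\beta(x))$ and $(\gamma(y_1),\gamma(y_2))$.

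Your extra caution about the maximum is well placed. The paper simply asserts $\lVert\mathsf{sgn}_S(G[\gamma(x)],-,\mu^*_x)\rVert\le\max_{\mu}\lVert\mathsf{sgn}_S(G,-,\mu)\rVert$ without justification. Reading the maximum over bounded-treewidth induced subgraphs, with injective recolorings preserving recursive size, is exactly what the later application through Theorem~\ref{thm-sgnsize} requires.
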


\begin{proof}
Let $<$  be an arbitrary order on $V(G)$. 
For every $x\in V(T)$, we will fix $\mu_x:\beta(x)\rightarrow [|\beta(x)|]$ as the following bijective function: for all $v\in\beta(x)$, $\mu_x(v)$ equals the number of vertices smaller or equal to $v$ among those in $\beta(x)$ w.r.t.~$<$. Additionally, let $\mu^*_x:\gamma(x)\rightarrow [|\beta(x)|]_0$ be the extension of $\mu$ that assigns $0$ to every vertex in $\gamma(x)\setminus\beta(x)$. (We remind that $\gamma(x)$ denotes the union of $\beta(x')$ over every $x'$ that is either $x$ or a descendant of $x$.)
Let $\mathsf{emp}=\{\cdots\{(-,-,-)\}\cdots\}$, where each pair $\{,\}$ appears $d=|S|$ times and $(-,-,-)$ is the empty graph with the empty labeling and the empty coloring.

Now, we describe the algorithm.
\begin{enumerate}
\item Initialize a dynamic programming table $M$ with an entry $M[x]$ for every $x\in V(T)$. (The purpose of every entry $M[x]$ will be to store $\mathsf{sgn}_S(G[\gamma(x)],-,\mu^*_x)$.)
\item For every $x\in V(T)$ in preorder, consider the following cases:
\begin{enumerate}
\item {\bf $x$ is a Leaf node:} Assign $M[x]=\mathsf{emp}$. 

\item {\bf $x$ is an Introduce node:} Let $s$ be the child of $x$. Let $\{v\}=\beta(x)\setminus\beta(s)$. Let $i_v$ be the number of vertices in $\beta(x)$ that are smaller or equal to $v$ w.r.t.~$<$. Compute $X=\mathsf{sgn}_S(G[\beta(x)],-,\mu_x)$ using the algorithm from Lemma~\ref{lem:bruteForce1}.


Let $f_X:[|\beta(x)|]_0\rightarrow [|\beta(x)|]_0\times [|\beta(s)|]_0$ be defined as follows. For all $i\in [|\beta(x)|]_0$:
\begin{itemize}
\item  If $i<i_v$, then $f_X(i)=(i,i)$.
\item If $i=i_v$, then $f_X(i)=(i,0)$.
\item If $i>i_v$, then $f_X(i)=(i,i-1)$.
\end{itemize}
Call the algorithm from item (ii) of Lemma~\ref{lem-sgncomp} with $X$ and $f_X$, and let $X'$ denote the result.

Additionally, denote $Y=M[s]$. Let $f_Y:[|\beta(s)|]_0\rightarrow [|\beta(x)|]_0\times [|\beta(s)|]_0$ be defined as follows. For all $i\in [|\beta(s)|]_0$: 
\begin{itemize}
\item  If $i<i_v$, then $f_Y(i)=(i,i)$.
\item If $i\geq i_v$, then $f_Y(i)=(i+1,i)$.
\end{itemize}
Call the algorithm from item (ii) of Lemma~\ref{lem-sgncomp} with $Y$ and $f_Y$, and let $Y'$ denote the result.

Now, call the algorithm from item (iii) of Lemma~\ref{lem-sgncomp} with $I_A=I_B=J_A=J_B=\emptyset$, $X'$ and $Y'$, and let $Z$ denote the result. Assign $M[x]=Z$.

\item {\bf $x$ is a Forget node:} Let $s$ be the child of $x$. Let $\{v\}=\beta(s)\setminus\beta(x)$. Let $i_v$ be the number of vertices in $\beta(x)$ that are smaller or equal to $v$ w.r.t.~$<$. Denote $X=M[s]$.

Let $f_X:[|\beta(s)|]_0\rightarrow [|\beta(x)|]_0$ be defined as follows. For all $i\in [|\beta(s)|]_0$:
\begin{itemize}
\item  If $i<i_v$, then $f_X(i)=i$.
\item If $i=i_v$, then $f_X(i)=0$.
\item If $i>i_v$, then $f_X(i)=i-1$.
\end{itemize}
Call the algorithm from item (ii) of Lemma~\ref{lem-sgncomp} with $X$ and $f_X$, and let $X'$ denote the result. Assign $M[x]=X'$.

\item {\bf $x$ is a Join node:} Let $h,s$ be the children of $t$.  Denote $X=M[h]$ and $Y=M[s]$. Let $f:[|\beta(x)|]_0\rightarrow [|\beta(x)|]_0\times [|\beta(x)|]_0$ be defined as follows. For all $i\in [|\beta(s)|]_0$:  $(i)=(i,i)$.

Call the algorithm from item (ii) of Lemma~\ref{lem-sgncomp} with $X$ and $f$, and let $X'$ denote the result. Additionally, call the algorithm from item (ii) of Lemma~\ref{lem-sgncomp} with $Y$ and $f$, and let $Y'$ denote the result.

Now, call the algorithm from item (iii) of Lemma \ref{lem-sgncomp} with $I_A=I_B=J_A=J_B=\emptyset$, $X'$ and $Y'$, and let $Z$ denote the result. Assign $M[x]=Z$.
\end{enumerate}

\item Denote $X=M[\mathsf{root}(T)]$. For every graph in the ``innermost'' sets of $X$, every vertex is assigned a color that is a tuple that begins with $0$ (since $\mu^*_{\mathsf{root}(T)}$ is the function from $V(G)$ to $\{0\}$). Remove these $0$'s.
Let $X'$ denote the result, and return $X'$.
\end{enumerate}

We use induction on the order of computation of $M$ to show that for every $x\in V(T)$, $M[x]=\mathsf{sgn}_S(G[\gamma(x)],-,\mu^*_x)$. Recall that $\gamma(\mathsf{root}(T))=V(G)$ and $\beta(\mathsf{root}(T))=\emptyset$. Thus, since we remove the $0$'s assigned $\mu^*_x$, proving the inductive claim will prove that the output of the algorithm is indeed $\mathsf{sgn}_S(G)$.

In the basis, where $x$ is a Leaf node and therefore $\gamma(x)=\emptyset$, correctness follows directly from the definition of $\mathsf{sgn}$.

Now, we prove correctness for some non-leaf $x\in V(T)$, while we assume correctness for its child(ren). We have the following cases:
\begin{itemize}
    \item $x$ is an Introduce node: Consider the graph $(G[\gamma(x)],-,\mu^*_x)$ together with the separation $(\beta(x),\gamma(s))$. Note that $\beta(x)\cap\gamma(s)=\{v\}$. From the inductive hypothesis, we have $M[s]=\mathsf{sgn}_S(G[\gamma(s)],\lambda_s)$. Then, Lemma~\ref{lem-sgncomp} directly implies that $M[x]=\mathsf{sgn}_S(G[\gamma(x)],-,\mu^*_x)$. We remark that the application of this lemma is valid due to our definitions of $f_X$ and $f_Y$: particularly, we update the coloring corresponding to $M[s]$ to be a restriction of $\mu^*_x$ rather than $\mu^*_s$, and we ``append'' $\mu_s$, thereby enabling to pick $\mu'=\mu_s$.
    
    \item $x$ is a Forget node: From the inductive hypothesis, we have $M[s]=\mathsf{sgn}_S(G[\gamma(s)],-,\mu^*_s)$. Thus, we only need to update the coloring from $\mu^*_s$ to $\mu_x$. This is done using $f_X$ and the application Lemma~\ref{lem-sgncomp}, which directly implies that $M[x]=\mathsf{sgn}_S(G[\gamma(x)],-,\mu^*_x)$.
    
    \item $x$ is a Join node: Consider the graph $(G[\gamma(x)],\mu^*_x)$ together with the separation $(\gamma(h),\gamma(s))$. From the inductive hypothesis,  we have that $M[h]=\mathsf{sgn}_S(G[\gamma(h)],-,\mu^*_h)$ and $M[s]=\mathsf{sgn}_S(G[\gamma(s)],-,\mu^*_s)$. Then, Lemma~\ref{lem-sgncomp} directly implies that $M[x]=\mathsf{sgn}_S(G[\gamma(x)],-,\mu^*_x)$.  We remark that the application of this lemma is valid due to our definition of $f$: particularly, we ``append'' $\mu_x$, thereby enabling to pick $\mu'=\mu_x$.
\end{itemize}
This completes the proof of the correctness of the algorithm. 

For the time complexity analysis, we observe that for any node $x\in V(T)$,  each application of Lemma~\ref{lem-sgncomp} is done in time 
$2^{\sum_{i=0}^d k_i}\cdot (\sum_{i=1}^d(k_i+s_i))^{O(1)}\cdot(\sum_{s\in\mathsf{children}(x)}||\mathsf{sgn}_S(G[\gamma(s)],-,\mu^*_s)||)^{O(1)}$.
In addition, if $x$ is an Introduce node, we perform the additional computation of Lemma~\ref{lem:bruteForce1}, whose time complexity is bounded by $f(d)\cdot\left(\sum_{i=1}^d2^{O(k_i\cdot\log t)}+\sum_{i=1}^{d-1}2^{O(s_i\cdot t)}+ \sum_{i=1}^d2^{O(s_d\cdot k_i)}\right)$  for some computable function $f$ of $d$ (since it is applied on a graph with at most $t+1$ vertices).
Since $||\mathsf{sgn}_S(G[\gamma(x)],-,\mu^*_x)||\leq \max_{\mu \in U_{G,[t+1]_0}} \lVert \mathsf{sgn}_S(G,-,\mu) \rVert$ for all nodes $x \in V(T)$, the bound claimed in the lemma follows. 
\end{proof}

Next, we present our algorithm for testing MSO formulas using signatures.

\begin{lemma}\label{lem:testSat}
Let $S$ be a sequence of pairs of natural numbers.
Given the $S$-signature $\mathsf{sgn}_S(G)$ of a graph $G$ and an $S$-MSO $\phi$, we can check if $G$ satisfies $\phi$ in $(||\mathsf{sgn}_S(G)||+|\phi|)^{O(1)}$ time.
\end{lemma}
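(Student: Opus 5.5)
I will prove a more general statement by induction on $|S|$, which is needed because subformulas carry free variables. The generalization: given $\mathsf{sgn}_S(G,\lambda,\mu)$ of a $p$-labeled, $P$-colored graph, and an $S$-MSO formula $\phi(v_1,\dots,v_p)$ whose free variables are interpreted according to $\lambda$ and $\mu$, we can decide whether $(G,\lambda,\mu)$ satisfies $\phi$ in time $(\lVert \mathsf{sgn}_S(G,\lambda,\mu)\rVert+|\phi|)^{O(1)}$. Free vertex variables are mapped to label indices. Free set variables are treated as coordinates of the color set $P$; when $P=\prod_j\{0,1\}^{s_j}$, the atom $v\in X$ reads a bit of the color. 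The lemma is the special case $p=0$ with the dummy coloring, applied to $\mathsf{sgn}_S(G)=\mathsf{sgn}_S(G,-,-)$.

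\textbf{Base case $S=()$.} Here $\phi$ is quantifier-free, and its atoms are only $v_i=v_j$, $\mathsf{adj}(v_i,v_j)$, and $v_i\in X$ with $v_i$ free, hence labeled. Each atom is read off the core $(G[\mathrm{Im}(\lambda)],\lambda,\mu_{|\mathrm{Im}(\lambda)})$: equality via $\lambda(i)=\lambda(j)$, adjacency in the induced subgraph, and membership via the appropriate bit of $\mu(\lambda(i))$. We then evaluate the Boolean combination in time polynomial in $|\phi|$ and the core size, which is $O(p^2)$.

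\textbf{Inductive step, $S=((k,s))+S'$.} By the inductive definition of $S$-MSO, $\phi$ is a Boolean combination of formulas $\mathsf{Q}x_1\dots\mathsf{Q}x_{k'}\,\mathsf{Q}Y_1\dots\mathsf{Q}Y_{s'}\,\psi$ with $k'\le k$, $s'\le s$, and $\psi$ an $S'$-MSO formula. I parse $\phi$ into these top-level quantified blocks and combine their truth values with $\neg,\wedge,\vee$. It therefore suffices to handle one block, say existential; a universal block is the negation of an existential one, or equivalently we use $\bigwedge$ in place of $\bigvee$.

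For one existential block, I pad $\psi$ with dummy variables so that it has exactly $k$ new vertex variables and $s$ new set variables. Vertex variable $x_i$ refers to label $p+i$, and set variable $Y_j$ refers to the $j$-th bit of the new $\{0,1\}^s$ color component. Padding does not change truth, since we range over all $k$-labelings and all $\{0,1\}^s$-colorings.

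Next I claim that $(G,\lambda,\mu)\models\exists\bar x\bar Y\psi$ iff some element $Z\in\mathsf{sgn}_S(G,\lambda,\mu)$ satisfies $\textsc{Test}_{S'}(Z,\psi)=\mathsf{True}$. This holds because each $Z$ equals $\mathsf{sgn}_{S'}(G,\lambda\oplus\lambda',\mu\otimes\mu')$ for some assignment $(\lambda',\mu')$, and every assignment yields some such $Z$. By the induction hypothesis, $\textsc{Test}_{S'}$ correctly evaluates $\psi$ from this signature. The answer is independent of which representative assignment produced $Z$, because the hypothesis says the signature alone determines truth.

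\textbf{Running time.} A block makes one recursive call per $Z$, and $\sum_Z\lVert Z\rVert=\lVert\mathsf{sgn}_S\rVert$. The number of blocks is at most $|\phi|$, so the total is $|\phi|\cdot(\lVert\mathsf{sgn}\rVert+|\phi|)^{O(1)}$. The polynomial degree must not grow with the depth. To ensure this, I charge each recursive call to the disjoint portions of the recursive size, giving $O(|\phi|\cdot\lVert\mathsf{sgn}\rVert)$ times the base cost.

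\textbf{Main obstacle.} The most delicate step is bookkeeping rather than anything deep: matching the syntactic free-variable order of $\psi$ to the label order $\lambda\oplus\lambda'$ and to the color coordinates $\mu\otimes\mu'$. Handling $k'<k$ or $s'<s$ by padding is part of this. Everything else follows directly from the definition of signatures.
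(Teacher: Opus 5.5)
Your proposal is correct and follows essentially the same route as the paper. The paper also generalizes to $p$-labeled $\{0,1\}^q$-colored graphs, inducts on $|S|$, and reads the base case off the core. It evaluates an existential (resp.\ universal) top-level block as a disjunction (resp.\ conjunction) of recursive tests over the elements of $\mathsf{sgn}_S(G,\lambda,\mu)$, and handles Boolean connectives by an inner induction after rewriting $\forall$ as $\neg\exists\neg$. Your padding of blocks with $k'<k$ or $s'<s$ plays exactly the role of the paper's index sequences $I,J$, which record which of the $k$ new labels and $s$ new color bits the subformula actually uses. Your charging argument for the running time is more explicit than the paper's one-line appeal to the definition of recursive size.
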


\begin{proof}
To prove this lemma, we need to prove a more general claim: Let $S=((k_1,s_1),\ldots,(k_d,s_d))$ be a sequence of pairs of natural numbers. Suppose that we are given the $S$-signature $\mathsf{sgn}_S(G,\lambda,\mu)$ of a $p$-labeled $\{0,1\}^q$-colored graph $G$.
Now, let $I=((a_1,b_1),\ldots,(a_r,b_r)), J=((\widetilde{a}_1,\widetilde{b}_1),\ldots,(\widetilde{a}_r,\widetilde{b}_r))$ be two other sequences of pairs of nonnegative integers so that $\sum_{i=1}^r\widetilde{a}_i=p$ and $\sum_{i=1}^r\widetilde{b}_i=q$, and for every $i\in[r]$, $a_i\leq\widetilde{a}_i$ and $b_i\leq\widetilde{b}_i$.

 Let  $\phi(v_1,\ldots,v_{p'},U_1,\ldots,U_{q'})$ be an $S$-formula with $p'$ free vertex variables, $v_1,\ldots,v_{p'}$, and $q'$ free vertex set variables, $U_1,\ldots,U_{q'}$, where $\sum_{i=1}^ra_i=p'$ and $\sum_{i=1}^rb_i=q'$.  For all $i\in[q]$, let $A_i$ denote the set of vertices $v$ in $G$ such that the $i$-th coordinate of $\mu(v)$ is $1$.
Now, we say that $(G,\lambda,\mu)$ satisfies  $\phi$ w.r.t.~$I,J$ if $(G,V^*,U^*)$ satisfies $\phi$, where:
\begin{itemize}
\item $V^*=\lambda(1),\ldots,\lambda(a_1),\lambda(\widetilde{a}_1+1),\ldots,\lambda(\widetilde{a}_1+a_2),\cdots,\lambda(\sum_{i=1}^{r-1}\widetilde{a}_i+1),\ldots\lambda(\sum_{i=1}^{r-1}\widetilde{a}_i+a_r)$.
\item $U^*=A_1,\ldots,A_{b_1},A_{\widetilde{b}_1+1},\ldots,A_{\widetilde{b}_1+b_2},\cdots,A_{\sum_{i=1}^{r-1}\widetilde{b}_i+1},\ldots,A_{\sum_{i=1}^{r-1}\widetilde{b}_i+b_r}$.
\end{itemize}
Then, the claim is that we can check whether $(G,\lambda,\mu)$ satisfies $\phi$ w.r.t.~$I,J$ in $(||\mathsf{sgn}_S(G,\lambda,\mu)||+|\phi|)^{O(1)}$ time (given $\mathsf{sgn}_S(G,\lambda,\mu)$).

The proof is by induction on $|S|=d$.

\medskip
\noindent{\bf Basis ($S=\emptyset$):} Denote $(H,\alpha,\beta)=\mathsf{sgn}_\emptyset(G,\lambda,\mu)$. By the definition of signature, $(H,\alpha,\beta)$ is the core of $(G,\lambda,\mu)$, which immediately yields that $(G,\lambda,\mu)$ satisfies $\phi$ w.r.t.~$I,J$ if and only if $(H,\alpha,\beta)$ satisfies $\phi$ w.r.t.~$I,J$. So, we trivially check whether $(H,\alpha,\beta)$ satisfies $\phi$ w.r.t.~$I,J$ in time $(||\mathsf{sgn}_S(G,\lambda,\mu)||+|\phi|)^{O(1)}$, and return the answer.

\medskip
\noindent{\bf Step ($|S|=d\geq 1$):}  Here, we suppose correctness for all sequences of pairs of natural numbers whose length is at most $d-1$. Denote $S= (k_1,s_1)+\widehat{S}$, i.e., $k=k_1$ and $s=s_1$. Without loss of generality, we can suppose that $\phi$ excludes occurrences of $\forall$, since they can be replaced by occurrences of $\exists$ and $\neg$ (i.e., $ A (\forall x_1,\ldots\forall x_t\forall X_1,\ldots\forall X_s B$ is equivalent to $A \neg(\exists x_1\ldots\exists x_t\exists X_1\ldots\exists X_s\neg B)$). 
We will use induction on $\ell$, being the number of occurrences of $\neg,\wedge,$ and $\vee$, in $\phi$.
But first, we consider the following case:
\begin{itemize}
    \item $\phi(v_1,\dots,v_{p'},U_1,\ldots,U_{q'}) = \exists x_1 \cdots \exists x_i\exists Y_1 \cdots \exists Y_j \psi(v_1,\dots,v_{p'},x_1,\dots,x_i,U_1,\ldots,U_{q'},Y_1,\ldots,Y_j)$, where $\psi$ is an $\widehat{S}$-formula with $p'+i$ free vertex variables for some $i\in[k]_0$ and $q'+j$ free vertex set variables for some $j\in[s]_0$ (and $i+j\geq 1$). Our algorithm works as follows. For every element $X\in\mathsf{sgn}_S(G,\lambda,\mu)$, we use the algorithm guaranteed by the inductive hypothesis (w.r.t.~$d$) on $X$, $\psi$, $I'$ being  $I$ to which we append the pair $(i,j)$, and $J'$ being  $J$ to which we append the pair $(k,s)$ and return its answer. The definition of the recursive size of a signature directly yields the required time bound.

    For the forward direction of correctness, suppose that $(G,\lambda,\mu)$ satisfies $\phi$ w.r.t.~$I,J$. So, there exist  functions $\lambda'\in A_{G,k}$
     and $\mu'\in U_{G,\{0,1\}^s}$ such that $(G,\lambda\oplus\lambda',\mu\otimes\mu')$ satisfies $\psi$ w.r.t.~$I'$. 
     Notice that $\mathsf{sgn}_{\widehat{S}}(G,\lambda\oplus\lambda',\mu\otimes\mu')\in\mathsf{sgn}_S(G,\lambda,\mu)$. 
     So, there exists an iteration where the algorithm considers $X=\mathsf{sgn}_{\widehat{S}}(G,\lambda\oplus\lambda',\mu\otimes\mu')$. By the inductive hypothesis, the algorithm then returns yes.

    For the reverse direction of correctness, suppose that the algorithm returned yes. Let $X\in\mathsf{sgn}_S(G,\lambda,\mu)$ correspond to the iteration where the algorithm returned yes. Then, $X=\mathsf{sgn}_{\widehat{S}}(G,\lambda\oplus\lambda',\mu\otimes\mu')$ for some $\lambda'\in A_{G,k}$ and $\mu'\in U_{G,\{0,1\}^s}$. By the inductive hypothesis, $(G,\lambda\oplus\lambda',\mu\otimes\mu')$ satisfies $\psi$ w.r.t.~$I',J'$. Thus, by assigning $x_1=\lambda'(1),\ldots,x_i=\lambda'(i)$ and $Y_{j'}$, for all $j'\in[j]$, to be the set of vertices such that the $j'$-th coordinate assigned to them by $\mu'$ is $1$, we see that $(G,\lambda,\mu)$ satisfies $\phi$ w.r.t.~$I,J$.
\end{itemize}
Now, we continue with the induction on $\ell$.
Since $d\geq 1$, the base case where $\ell=0$ is already covered by the above (since, when $d\geq1$ and $\ell=0$, the formula must begin with $\exists$). 

Next, suppose $\ell\geq 1$, and assume correctness when we have at most $\ell-1$ occurrences of $\neg,\wedge,$ and $\vee$ (but the same number of free variables). 

According to the definition of $S$-formulas and since $\varphi$ excludes occurrences of $\forall$, it remains to consider the following cases:
\begin{itemize}    
    \item $\phi(v_1,\dots,v_{p'},U_1,\ldots,U_{q'}) = \neg \psi(v_1,\dots,v_{p'},U_1,\ldots,U_{q'})$, where $\psi$ is an $S$-formula with $p'$ free vertex variables and $q'$ free vertex set variables. Our algorithm simply calls the algorithm guaranteed by the inductive hypothesis (w.r.t.~$\ell$) on~$\mathsf{sgn}_S(G,\lambda,\mu),\psi,I,J$,  and returns the opposite answer. Correctness and time complexity requirements are immediate.

    \item Either $\phi(v_1,\dots,v_{p'},U_1,\ldots,U_{q'})=\psi(v_1,\dots,v_{p'},U_1,\ldots,U_{q'}) \wedge \varphi(v_1,\dots,v_{p'},U_1,\ldots,U_{q'})$ or $\phi(v_1,\dots,v_{p'},U_1,\ldots,U_{q'})=\psi(v_1,\dots,v_{p'},U_1,\ldots,U_{q'}) \vee \varphi(v_1,\dots,v_{p'},U_1,\ldots,U_{q'})$, where $\psi$ and $\varphi$ are both $S$-formulas with $p'$ free vertex variables and $q'$ free vertex set variables.
    
    Let us consider only the case where $\phi(v_1,\dots,v_{p'},U_1,\ldots,U_{q'})=\psi(v_1,\dots,v_{p'},U_1,\ldots,U_{q'}) \wedge \varphi(v_1,\dots,v_{p'},U_1,\ldots,U_{q'})$, since the proof for the other case is similar. Our algorithm simply calls the algorithm guaranteed by the inductive hypothesis (w.r.t.~$\ell$) twice: one time w.r.t.~$\mathsf{sgn}_S(G,\lambda,\mu),\psi,I$, and another time w.r.t.~$\mathsf{sgn}_S(G,\lambda,\mu),\varphi,I,J$;  the algorithm returns yes if and only if both calls returned yes. Again, correctness and time complexity requirements are immediate.
\end{itemize}
This completes the proof.
\end{proof}

We say two graphs $G$ and $G'$ are \textit{$S$-equivalent} if for every $S$-MSO $\phi$, $G$ satisfies $\phi$ iff $G'$ satisfies $\phi$.
The above lemma implies that if two graphs have the same $S$-signatures, then they are $S$-equivalent.
In fact, the inverse of this statement is also true.
We show that the $S$-equivalence is characterized by the $S$-signatures.

\begin{lemma}\label{lem-equiv}
    Let $G$ and $G'$ be two graphs, and $S$ be a sequence of pairs of natural numbers.
    Then $G$ and $G'$ are $S$-equivalent if and only if $\mathsf{sgn}_S(G) = \mathsf{sgn}_S(G')$.
\end{lemma}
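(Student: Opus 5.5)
The ``if'' direction is immediate from Lemma~\ref{lem:testSat}. Given $\mathsf{sgn}_S(G) = \mathsf{sgn}_S(G')$ and any $S$-MSO $\phi$, the algorithm of that lemma decides whether $G$ satisfies $\phi$ using only the signature and $\phi$. It therefore returns the same answer on $G$ and $G'$, so $G$ and $G'$ are $S$-equivalent.

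For the ``only if'' direction, I will prove a stronger, labeled and colored statement by induction on $|S|=d$. The claim is: for $p$-labeled $\{0,1\}^q$-colored graphs $(G,\lambda,\mu)$ and $(G',\lambda',\mu')$, if $\mathsf{sgn}_S(G,\lambda,\mu)\neq\mathsf{sgn}_S(G',\lambda',\mu')$, then there is an $S$-formula $\phi_{\mathrm{dist}}(v_1,\dots,v_p,U_1,\dots,U_q)$ that is satisfied by exactly one of them. Here we interpret $v_i$ as $\lambda(i)$ and $U_j$ as the $j$-th color class, which is the ``full'' interpretation $I=J$ from the proof of Lemma~\ref{lem:testSat}. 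It suffices to consider $\{0,1\}^q$-colorings, since the dummy coloring is the case $q=0$ and all colorings arising in the recursion are products of $\{0,1\}^{s_i}$. Setting $p=q=0$ then gives the lemma.

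\textbf{Base case $S=()$.} Two distinct cores are told apart by a quantifier-free formula. Such a formula is a conjunction of literals $v_i=v_j$, $\mathsf{adj}(v_i,v_j)$ and $v_i\in U_l$ (or their negations), describing the isomorphism type of the core of one graph. Since a core is determined up to isomorphism by these atomic facts about the labeled vertices, this conjunction holds in one graph and fails in the other.

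\textbf{Inductive step, $S=((k,s))+S'$.} The key observation is that the set of possible $S'$-signatures is finite (up to isomorphism, for fixed $p+k$, $q+s$). This holds at the base level, because cores have at most $p+k$ vertices, and then inductively, because signatures are sets of finitely many possible objects. For each possible $S'$-signature $\sigma$, I define a characteristic $S'$-formula $\chi_\sigma(\bar v,\bar x,\bar U,\bar Y)$ that holds exactly on the graphs whose $S'$-signature is $\sigma$. This is the conjunction, over all other possible signatures $\tau\neq\sigma$, of a distinguishing formula (suitably negated) given by the induction hypothesis. The $S'$-formulas are closed under $\wedge$ and $\neg$ by definition, so $\chi_\sigma$ is again an $S'$-formula.

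Now suppose some $\sigma$ lies in $\mathsf{sgn}_S(G,\lambda,\mu)$ but not in $\mathsf{sgn}_S(G',\lambda',\mu')$. Then the formula $\exists x_1\cdots\exists x_k\,\exists Y_1\cdots\exists Y_s\ \chi_\sigma$ is an $S$-formula by the first generation rule. It holds in $(G,\lambda,\mu)$ and fails in $(G',\lambda',\mu')$, by the very definition of the signature as the set of $S'$-signatures over all choices of $\lambda'\in\varLambda_{G,k}$ and $\mu'\in U_{G,\{0,1\}^s}$.

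\textbf{Main obstacle.} The hard part is making the finiteness and characteristic-formula step rigorous. Concretely, we must make sure the induction hypothesis is applied with the right numbers of free variables, $p+k$ and $q+s$. We must also check that the variable ordering under $\oplus$ and $\otimes$ matches the way the quantified variables $x_i,Y_j$ are appended as free variables of $\psi$ in the definition of $S$-MSO. I will handle finiteness by an explicit induction, bounding the number of possible $S'$-signatures for given $p$ and $q$.
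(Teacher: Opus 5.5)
Your proposal is correct, but your ``only if'' direction takes a different route from the paper's.

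\textbf{How the two routes differ.} The paper argues directly. Assuming $S$-equivalence, for each $X=\mathsf{sgn}_{\widehat S}(G,\lambda\oplus\widehat\lambda,\mu\otimes\widehat\mu)$ it forms $\psi$ as the conjunction of \emph{all} $\widehat S$-formulas true in the extended graph together with the negations of all false ones. It then transfers $\exists\bar x\,\exists\bar Y\,\psi$ to $G'$, deduces that the two extended graphs are $\widehat S$-equivalent, and applies the induction hypothesis. You instead prove the contrapositive, and you replace this complete theory by finitely many characteristic formulas $\chi_\sigma$, one for each realizable $S'$-signature.

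\textbf{What each route buys.} Your route makes explicit a finiteness that the paper uses silently. As written, the paper's $\psi$ is an infinite conjunction. It only becomes a genuine formula once one knows there are finitely many $\widehat S$-formulas up to equivalence, and that is exactly what your finite count of signatures provides. The paper's route, in exchange, needs nothing in the inductive step beyond the induction hypothesis.

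\textbf{The dependency you should state.} Your induction hypothesis only separates a \emph{pair} of graphs. Let $\chi_\sigma$ be the conjunction of separating formulas, each oriented to be true on one representative of $\sigma$. To claim that $\chi_\sigma$ holds on \emph{every} graph with $S'$-signature $\sigma$, and on no graph with signature $\tau\neq\sigma$, you need that the truth of an $S'$-formula under the full interpretation depends only on the $S'$-signature. This is precisely the labeled/colored claim established inside the proof of Lemma~\ref{lem:testSat}, so cite it at that point.

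\textbf{A self-contained alternative.} You can instead make the induction produce characteristic formulas directly. The base case is the atomic description of a core. For the step, set
\[
\chi_\Sigma=\bigwedge_{\sigma\in\Sigma}\exists\bar x\,\exists\bar Y\,\chi_\sigma\wedge\bigwedge_{\sigma\notin\Sigma}\neg\exists\bar x\,\exists\bar Y\,\chi_\sigma ,
\]
where $\sigma$ ranges over the finitely many realizable $S'$-signatures. This makes the ``only if'' direction self-contained, with no appeal to Lemma~\ref{lem:testSat}.
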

\begin{proof}
The ``if'' direction directly follows from Lemma~\ref{lem:testSat}.
Thus, it suffices to show the ``only if'' direction. Towards the proof of this direction, we reuse the definition of a $p$-labeled $\{0,1\}^q$-colored graph $(G,\lambda,\mu)$ satisfying $\phi$ w.r.t.~$I,J$ from the proof of Lemma~\ref{lem:testSat}. Now, we say that two $p$-labeled $\{0,1\}^q$-colored graphs $(G,\lambda,\mu)$ and $(G',\lambda',\mu')$ are $S$-equivalent iff for every $S$-MSO $\phi$ with $p'\leq p$ free vertex variables and $q'\leq q$ set variables, and every compatible $I,J$, ($G,\lambda,\mu)$ satisfies $\phi$ w.r.t.~$I,J$ iff $(G',\lambda',\mu')$ does.
 Here, compatibility naturally means that $I=((a_1,b_1),\ldots,(a_r,b_r)), J=((\widetilde{a}_1,\widetilde{b}_1),\ldots,(\widetilde{a}_r,\widetilde{b}_r))$ are two sequences of pairs of nonnegative integers, for some $r$, so that $\sum_{i=1}^r\widetilde{a}_i=p, \sum_{i=1}^ra_i=p'$, $\sum_{i=1}^r\widetilde{b}_i=q$, $\sum_{i=1}^rb_i=q'$, and for every $i\in[r]$, $a_i\leq\widetilde{a}_i$ and $b_i\leq\widetilde{b}_i$. However, in what follows, we will only be use the supposition when $I=J,p'=p$ and $q'=q$, and then we do not need to mention $I,J$ explicitly.

So, we aim to prove the more general claim: if $(G,\lambda,\mu)$ and $(G',\lambda',\mu')$ are $S$-equivalent, then then have the same $S$-signatures. Next, suppose that $(G,\lambda,\mu)$ and $(G',\lambda',\mu')$ are $S$-equivalent. 

The proof is by induction on $|S|=d$. For all $i\in[q]$, let $A_i$ denote the set of vertices $v$ in $G$ such that the $i$-th coordinate of $\mu(v)$ is $1$.

\medskip
\noindent{\bf Basis ($S=\emptyset$):} Denote $(H,\alpha,\beta)=\mathsf{sgn}_\emptyset(G,\lambda,\mu)$ and $(H',\alpha',\beta')=\mathsf{sgn}_\emptyset(G',\lambda',\mu')$.

 Consider the following $\emptyset$-formula:
\[\begin{array}{ll}
\bigskip
\phi(v_1,\dots,v_{p},U_1,\ldots,U_{q}) &= \left(\bigwedge_{(i,j): (\alpha(i),\alpha(j))\in E(H)} \mathbf{adj}(v_i,v_j)\right) \wedge \left(\bigwedge_{(i,j): (\alpha(i),\alpha(j))\notin E(H)} \neg\mathbf{adj}(v_i,v_j)\right)\\

\bigskip
& \wedge\left(\bigwedge_{(i,j): \alpha(i)\neq\alpha(j)} v_i=v_j\right) \wedge \left(\bigwedge_{(i,j): \alpha(i)\neq\alpha(j)} v_i\neq v_j\right)\\

& \wedge \left(\bigwedge_{i,j: \alpha(i)\in A_j}v_i\in U_j\right)\wedge \left(\bigwedge_{i,j: \alpha(i)\notin A_j}v_i\notin U_j\right).
\end{array}\]

In particular, the formula is defined so that $(H,\alpha,\beta)$ satisfies it, and any $p$-labeled $\{0,1\}^q$-colored graph that satisfies it must be isomorphic to $(H,\alpha)$. Now, by the definition of $\mathsf{sgn}$, specifically the derivation of $(H,\alpha,\beta)$ from $(G,\lambda,\mu)$, it is immediate that $(G,\lambda,\mu)$ satisfies $\phi$ as well. So, since $(G,\lambda,\mu)$ and $(G',\lambda',\mu')$ are $\emptyset$-equivalent, it follows that $(G',\lambda')$ satisfies $\phi$. Then, again by the definition of $\mathsf{sgn}$, it is immediate that $(H',\alpha',\beta')$ satisfies $\phi$ as well. Hence, $(H',\alpha',\beta')$ is isomorphic to $(H,\alpha,\beta)$, that is, $\mathsf{sgn}_\emptyset(G,\lambda,\mu) = \mathsf{sgn}_\emptyset(G',\lambda',\mu')$.

\medskip
\noindent{\bf Step ($|S|=r\geq 1$):}  Here, we suppose correctness for all sequences of natural numbers whose length is at most $r-1$. Denote $S= (k,s)+\widehat{S}$.

We will only show the containment $\mathsf{sgn}_S(G,\lambda,\mu)\subseteq \mathsf{sgn}_S(G',\lambda',\mu')$, since the proof of the other containment $\mathsf{sgn}_S(G',\lambda',\mu')\subseteq \mathsf{sgn}_S(G,\lambda,\mu)$ is symmetric. For this purpose, consider some arbitrary $X\in \mathsf{sgn}_S(G,\lambda,\mu)$. Thus, $X=\mathsf{sgn}_{\widehat{S}}(G,\lambda\oplus\widehat{\lambda},\mu\otimes\widehat{\mu})$ for some
$\widehat{\lambda}\in A_{G,k}$
     and $\widehat{\mu}\in U_{G,\{0,1\}^s}$.
We define an $S$-formula $\phi(v_1,\ldots,v_p,U_1,\ldots,U_q)$ as follows:

\[\phi(v_1,\ldots,v_p,U_1,\ldots,U_q)=\exists x_1\cdots\exists x_k\exists Y_1\cdots\exists Y_s \psi(v_1,\ldots,v_p,x_1,\ldots,x_k,U_1,\ldots,U_q,Y_1,\ldots,Y_s),\]

where $\psi(v_1,\ldots,v_p,x_1,\ldots,x_k,U_1,\ldots,U_q,Y_1,\ldots,Y_s)$ is the $\widehat{S}$-formula defined as follows:
\[\begin{array}{l}
\medskip
\psi(v_1,\ldots,v_p,x_1,\ldots,x_k,U_1,\ldots,U_q,Y_1,\ldots,Y_s)=\\
\medskip\left(\bigwedge_{\varphi\in UNSAT}\neg\varphi(v_1,\ldots,v_p,x_1,\ldots,x_k,U_1,\ldots,U_q,Y_1,\ldots,Y_s)\right)\wedge\\
\left(\bigwedge_{\varphi\in SAT}\varphi(v_1,\ldots,v_p,x_1,\ldots,x_k,U_1,\ldots,U_q,Y_1,\ldots,Y_s)\right),
\end{array}\]

where $UNSAT$ is the collection of all $\widehat{S}$-formulas with $p+k$ free vertex variables and $q+s$ vertex set variables that are not satisfied by $(G,\lambda\oplus\widehat{\lambda},\mu\otimes\widehat{\mu})$, and $SAT$ is the collection of all such formulas that are satisfied by $(G,\lambda\oplus\widehat{\lambda},\mu\otimes\widehat{\mu})$.

Observe that, by the definition of $\psi$, $(G,\lambda\oplus\widehat{\lambda},\mu\otimes\widehat{\mu})$ satisfies it.
So, by assigning $x_1=\widehat{\lambda}(1),x_2=\widehat{\lambda}(2),\ldots,x_k=\widehat{\lambda}(k)$, and $Y_i$, for all $i\in[s]$ be the set of vertices whose $i$-th coordinate in $\widehat{\mu}$ is $1$, we see that $(G,\lambda,\mu)$ satisfies $\phi$. So, since $(G,\lambda,\mu)$ and $(G',\lambda',\mu')$ are $S$-equivalent, we get that $(G',\lambda',\mu')$ satisfies $\phi$ as well. Thus, let $w$ be an assignment of vertices and vertex sets of $G'$ to the variables $x_1,\ldots,x_k,Y_1,\ldots,Y_s$ that witnesses this.
Accordingly, define $\widehat{\lambda}': [k]\rightarrow V(G')$ so that for all $i\in[k]$, $\widehat{\lambda}'(i)=w(x_{i})$. Additionally, define $\widehat{\mu}': V(G')\rightarrow \{0,1\}^s$ so that for all $v\in V(G')$ and $i\in[s]$, the $i$-th coordinate of $\widehat{\mu}'(v)$ is $1$ iff $v\in w(Y_i)$.
By the definition of $\widehat{\lambda}'$ and $\widehat{\mu}'$, it immediately follows that $(G',\lambda'\oplus\widehat{\lambda}',\mu'\otimes\widehat{\mu}')$ satisfies $\psi$.
However, by the definition of $\psi$, this implies that $(G,\lambda\oplus\widehat{\lambda},\mu\otimes\widehat{\mu})$ and $(G',\lambda'\oplus\widehat{\lambda}',\mu'\otimes\widehat{\mu}')$ are $\widehat{S}$-equivalent. In turn, by the inductive hypothesis, which can be applied since $|\widehat{S}|\leq d-1$, we derive that $\mathsf{sgn}_{\widehat{S}}(G,\lambda\oplus\widehat{\lambda},\mu\otimes\widehat{\mu})=\mathsf{sgn}_{\widehat{S}}(G',\lambda'\oplus\widehat{\lambda}',\mu'\otimes\widehat{\mu}')$ (under isomorphism). By the definition of $\mathsf{sgn}_S(G',\lambda',\mu')$, it includes $\mathsf{sgn}_{\widehat{S}}(G',\lambda'\oplus\widehat{\lambda}',\mu'\otimes\widehat{\mu}')$, and therefore it includes $\mathsf{sgn}_{\widehat{S}}(G,\lambda\oplus\widehat{\lambda},\mu\otimes\widehat{\mu})$. Since the choice of $X$ was arbitrary, we conclude that $\mathsf{sgn}_S(G,\lambda,\mu)\subseteq \mathsf{sgn}_S(G',\lambda',\mu')$.
\end{proof}

\subsection{Extension to the MSO scoring problem} \label{sec-ext}

To extend our results in the previous section to \textsc{MSO Scoring}, we need to introduce a new concept called \textit{evaluation maps}.
As before, let $S$ be a sequence of pairs of natural numbers with $|S| \geq 1$.
Suppose $S = ((k,s)+S')$.
Consider a $p$-labeled and $P$-colored graph $(G,\lambda,\mu)$, and a weight function $w: V(G) \rightarrow \mathbb{F}$ where $\mathbb{F}$ is a semi-field.
Recall that the $S$-signature of $(G,\lambda,\mu)$ is defined as
\begin{equation*}
\mathsf{sgn}_S(G,\lambda,\mu) = \{\mathsf{sgn}_{S'}(G,\lambda \oplus \lambda',\mu \otimes \mu') :\lambda' \in \varLambda_{G,k} \text{ and } \mu' \in U_{G,\{0,1\}^s}\}.
\end{equation*}
We now view $\mathsf{sgn}_S(G,\lambda,\mu)$ as a set (in which each element is the $S'$-signature of some graph).
For each element $x \in \mathsf{sgn}_S(G,\lambda,\mu)$, let $\varGamma_x$ be the set consisting of all pairs $(\lambda',\mu') \in \varLambda_{G,k} \times U_{G,\{0,1\}^s}$ satisfying $\mathsf{sgn}_{S'}(G,\lambda \oplus \lambda',\mu \otimes \mu') = x$.
Then the \textit{$(S,w)$-evaluation map} of $(G,\lambda,\mu)$ is a function $f: \mathsf{sgn}_S(G,\lambda,\mu) \rightarrow \mathbb{F}$ defined as
\begin{equation*}
    f(x) = \sum_{(\lambda',\mu') \in \varGamma_x} \left(\prod_{i=1}^k w(\lambda'(i)) \cdot \prod_{i=1}^s \prod_{v \in V(G)} (w(v))^{\mu_i'(v)}\right),
\end{equation*}
where $\mu_i'(v) \in \{0,1\}$ denotes the $i$-th bit of $\mu'(v) \in \{0,1\}^s$.

We first state two modified versions of item (iii) of Lemma~\ref{lem-sgncomp}. Towards this, suppose we are given $p'\in [p]_0$ and $q'$ such that $P$ is of the form $\{0,1\}^{q'}\times P'$ for some $P'$. Then, we define the {\em simple $(p',q',w)$-evaluation map} of $(G,\lambda,\mu)$ as $g: \mathsf{sgn}_S(G,\lambda,\mu)\rightarrow \mathbb{F}$ where $g(x)=\prod_{i=1}^{p'} w(\lambda(i)) \cdot \prod_{i=1}^{q'} \prod_{v \in V(G)} (w(v))^{\mu_i(v)}$.

\begin{lemma}\label{lem-sgncomp-new1}
    Let $(G,\lambda,\mu)$ be a $p$-labeled $P$-colored graph, $w: V(G)\rightarrow\mathbb{F}$, and $S = ((k_1,s_1),\dots,(k_d,s_d))$. Suppose we are given $p'\in [p]_0$ and $q'\in [s]_0$ such that $P$ is of the form $\{0,1\}^{q'}\times P'$ for some $P'$. Let $(A,B)$ be a separation of $G$ with $|A \cap B| = z$ and $\mu':V(G) \rightarrow [z]_0$ be a function that maps all vertices in $V(G) \backslash (A \cap B)$ to $0$ and maps $A \cap B$ bijectively to $[z]$.
        Denote $\lambda_A=\mathsf{sort}(\lambda^{|A})$, $\lambda_B=\mathsf{sort}(\lambda^{|B})$, $\mu_A=(\mu \otimes \mu')_{|A}$ and $\mu_B=(\mu \otimes \mu')_{|B}$.
        Then given only $S$, $\lambda^{-1}(A)$, $\lambda^{-1}(B)$, and the simple $(p',q',w)$-evaluation maps of $(G[A],\lambda_A,\mu_A)$ and $(G[B],\lambda_B,\mu_B)$, one can compute the simple $(p',q',w)$-evaluation map of $(G,\lambda,\mu)$ in time
        \begin{equation*}
            2^{\sum_{i=1}^dk_i}\cdot\left(\sum_{i=1}^d (k_i+s_i) + \lVert \mathsf{sgn}_S(G[A],\lambda_A,\mu_A) \rVert + \lVert \mathsf{sgn}_S(G[B],\lambda_B,\mu_B) \rVert\right)^{O(1)}.
        \end{equation*}
\end{lemma}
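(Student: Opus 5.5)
Note first that the simple $(p',q',w)$-evaluation map of $(G,\lambda,\mu)$ is constant on $\mathsf{sgn}_S(G,\lambda,\mu)$. Its value $c=\prod_{i=1}^{p'} w(\lambda(i))\cdot \prod_{i=1}^{q'}\prod_{v\in V(G)} w(v)^{\mu_i(v)}$ does not depend on $x$. So the task reduces to two things: computing $\mathsf{sgn}_S(G,\lambda,\mu)$ as a domain, and computing the scalar $c$. The map is then $x\mapsto c$.

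For the domain, a simple evaluation map is given as a function together with its domain. Hence we are given $\mathsf{sgn}_S(G[A],\lambda_A,\mu_A)$ and $\mathsf{sgn}_S(G[B],\lambda_B,\mu_B)$, and we invoke item (iii) of Lemma~\ref{lem-sgncomp} directly. This produces $\mathsf{sgn}_S(G,\lambda,\mu)$ within the stated time bound.

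For the scalar, write $c_A$ and $c_B$ for the constant values of the maps on the two sides. Let $p'_A$ be the number of indices in $\lambda^{-1}(A)\cap[p']$, and use the correspondingly sorted labels; the sub-maps on $A$ and $B$ are simple $(p'_A,q',w)$- and $(p'_B,q',w)$-maps on the restricted labelings. Splitting the vertex product over $A\setminus B$, $B\setminus A$ and $A\cap B$ gives
\begin{equation*}
c = c_A\cdot c_B\cdot \Bigl(\prod_{i\in[p'],\,\lambda(i)\in A\cap B} w(\lambda(i))\Bigr)^{-1}\cdot\Bigl(\prod_{i=1}^{q'}\prod_{v\in A\cap B} w(v)^{\mu_i(v)}\Bigr)^{-1}.
\end{equation*}
Here we use that labels in $[p']$ lie in $A$, in $B$, or in both, and that $\lambda^{-1}(A)$ and $\lambda^{-1}(B)$ tell us exactly which labels are double counted. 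Division is available because $\mathbb{F}$ is a semi-field with inverses supplied by the oracle. One caveat is zero weights; I would either assume weights are nonzero or handle zero separately.

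The correction terms involve only labeled vertices in $A\cap B$ and the boundary vertices $A\cap B$ themselves. These appear, with their weights and color bits, in the cores at the bottom level of the signatures. The reason is that the $[z]_0$-coloring component $\mu'$ identifies each boundary vertex, so its $\mu$-color is readable. The main obstacle is that boundary vertices are not labeled, so their weights $w(v)$ are not recorded in a signature. I would resolve this by reading the simple map literally with the intended application in mind. The boundary of size $z$ is explicit at a tree-decomposition node, so its weights are available to the caller. The correction factor is therefore computable in $O(z+p)$ oracle operations, which is dominated by the stated bound.

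Alternatively, one can avoid division altogether. Define the quantity with boundary vertices attributed only to $A$, so that $c=c_A\cdot c'_B$, where $c'_B$ excludes $A\cap B$. This is the route I would take if inverses were problematic. The running time is that of item (iii) plus polynomial bookkeeping.
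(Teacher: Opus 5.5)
Granting the extra input you ask for, your argument is correct, and it takes a different route from the paper's. The paper re-runs the induction of item (iii*) of Lemma~\ref{lem-sgncomp}, carrying the simple map through every level. Only at the base level $S=()$ does it do any arithmetic: it sets $g(H,\alpha,\beta)=g(H_A,\alpha_A,\beta_A)\cdot g(H_B,\alpha_B,\beta_B)$ divided by the weights of the double-counted labels ($i\in J_A\cap J_B$) and of the ``irrelevant'' labels. It then notes that a signature $Z$ produced several times always receives the same value.

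You instead use that constancy from the start, since the defining expression does not depend on $x$. One call to item (iii) gives the domain, and one global correction gives the scalar. Besides being shorter, your formula is the more accurate one. It divides by $\prod_{i=1}^{q'}\prod_{v\in A\cap B}w(v)^{\mu_i(v)}$, which the paper's base case omits, so for $q'>0$ the paper counts $w(v)^{\mu_i(v)}$ twice for every $v\in A\cap B$. The paper's route mainly keeps the proof parallel to (iii*), which it then reuses for Lemma~\ref{lem-sgncomp-new2}.

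Letting the caller supply $w$ and $\mu$ on $A\cap B$ strengthens the hypotheses, but this is unavoidable: the listed inputs do not determine the output. Take $A=\{a,v\}$, $B=\{v,b\}$, $p=0$, $q'=1$, $\mu\equiv 1$. The given values are $c_A=w(a)w(v)$ and $c_B=w(v)w(b)$, and the signatures do not depend on $w$. These inputs do not determine $c=w(a)w(v)w(b)$; compare $w=(1,2,1)$ with $w=(2,1,2)$. The paper's sketch tacitly needs the same kind of access, since it divides by $w(\alpha_A(i))$ for core vertices that a signature records only up to isomorphism.

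Three points to tighten:
\begin{itemize}
\item Cores record no weights at all, even for labeled vertices. They record the $\mu$-colors of boundary vertices only when $\sum_i k_i\ge 1$, so in general the caller must supply those colors as well.
\item Reading the given maps as $(p'_A,q',w)$- and $(p'_B,q',w)$-maps reinterprets the statement. Literally, the $(p',q',w)$-map of $(G[A],\lambda_A,\mu_A)$ multiplies the first $p'$ entries of $\mathsf{sort}(\lambda^{|A})$, which can include labels of index larger than $p'$. The paper divides these out as ``irrelevant''; you should either do the same or state the reinterpretation explicitly.
\item Your zero-weight caveat is genuine. For example, the additive identity $+\infty$ of the $(\min,+)$ semi-field has no multiplicative inverse. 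Your division-free variant, which assigns $A\cap B$ to one side only, is the cleaner fix.
\end{itemize}
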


\begin{proof}[Proof sketch]
To avoid copy-past of text, we will only describe the changes in the proof of Lemma~\ref{lem-sgncomp}. Again, we prove item (iii*) instead of (iii), where, now, the basic $(p',q',w)$-evaluation maps of $(G[A],\lambda_A,\mu_A)$ and $(G[B],\lambda_B,\mu_B)$ are given, and we are to compute the simple $(p',q',w)$-evaluation map of $(G,\lambda^*,\mu^*)$.

In the basis, to obtain $g(H,\alpha,\beta)$, we compute $g(H_A,\alpha_A,\beta_A)\cdot g(H_B,\alpha_B,\beta_B)$ and divide it by the multiplication of all of the following terms:
\begin{itemize}
\item For all $i\in J_A\cap J_B$, $w(\alpha_A(i))$ (since each of them is double-counted).
\item For all $i\in [p]\setminus J_A$, $w(\alpha_A(i))$ (since each of them is irrelevant).
\item For all $i\in[p]\setminus J_B$, $w(\alpha_B(i))$ (since each of them is irrelevant).
\end{itemize}

For the inductive step, where $|S|\geq 1$, the only  difference is that we use the inductive hypothesis of this lemma rather than of Lemma~\ref{lem-sgncomp}, and thereby obtain the  simple evaluation map. Notice that the same signature $Z$ might be computed multiple times, but in all of them, $g(Z)$ is the same, and hence there is no conflict.
\end{proof}

\begin{lemma}\label{lem-sgncomp-new2}
    Let $(G,\lambda,\mu)$ be a $p$-labeled and $P$-colored graph, $w: V(G)\rightarrow\mathbb{F}$, and $S = ((k_1,s_1),\dots,(k_d,s_d))$ with $|S|\geq 1$. Let $(A,B)$ be a separation of $G$ with $|A \cap B| = z$ and $\mu':V(G) \rightarrow [z]_0$ be a function that maps all vertices in $V(G) \backslash (A \cap B)$ to $0$ and maps $A \cap B$ bijectively to $[z]$.
        Denote $\lambda_A=\mathsf{sort}(\lambda^{|A})$, $\lambda_B=\mathsf{sort}(\lambda^{|B})$, $\mu_A=(\mu \otimes \mu')_{|A}$ and $\mu_B=(\mu \otimes \mu')_{|B}$.
        Then given only $S$, $\lambda^{-1}(A)$, $\lambda^{-1}(B)$, and the $(S,w)$-evaluation maps of $(G[A],\lambda_A,\mu_A)$ and $(G[B],\lambda_B,\mu_B)$, one can compute the $(S,w)$-evaluation map of $(G,\lambda,\mu)$ in time
        \begin{equation*}
            2^{\sum_{i=1}^dk_i}\cdot\left(\sum_{i=1}^d (k_i+s_i) + \lVert \mathsf{sgn}_S(G[A],\lambda_A,\mu_A) \rVert + \lVert \mathsf{sgn}_S(G[B],\lambda_B,\mu_B) \rVert\right)^{O(1)}.
        \end{equation*}
\end{lemma}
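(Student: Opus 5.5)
We prove Lemma~\ref{lem-sgncomp-new2} by running the outermost level of the algorithm of item (iii*) of Lemma~\ref{lem-sgncomp}, while carrying field values along with the signatures. As there, we work with the general form (iii*): separation data $I_A,I_B,J_A,J_B$ and the compatible labeling $\lambda^*$ and coloring $\mu^*$. Write $S=((k_1,s_1))+\widehat{S}$. The evaluation maps $f_A$ and $f_B$ are defined on $\mathsf{sgn}_S(G[A],\lambda_A,\mu_A)$ and $\mathsf{sgn}_S(G[B],\lambda_B,\mu_B)$. Since these signatures are their domains, we can first recover $\mathsf{sgn}_S(G,\lambda^*,\mu^*)$ from item (iii*) of Lemma~\ref{lem-sgncomp}. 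Along the way we record, for each iteration $(X,Y,\widehat I_A,\widehat I_B)$, the element $Z\in\mathsf{sgn}_S(G,\lambda^*,\mu^*)$ it produces.

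The target value $f(Z)$ is a sum over pairs $(\widehat\lambda,\widehat\mu)\in\varLambda_{G,k_1}\times U_{G,\{0,1\}^{s_1}}$ with $\mathsf{sgn}_{\widehat S}(G,\lambda^*\oplus\widehat\lambda,\mu^*\otimes\widehat\mu)=Z$. The plan is to split this sum bijectively according to the separation.

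\textbf{Step 1: the decomposition.} A pair $(\widehat\lambda,\widehat\mu)$ determines four things:
\begin{itemize}
\item the sets $\widehat I_A=\widehat\lambda^{-1}(A)$ and $\widehat I_B=\widehat\lambda^{-1}(B)$;
\item the restriction $\widehat\mu_{|A}$ and the restriction $\widehat\mu_{|B}$, which agree on $A\cap B$;
\item partial labelings on $A$ and on $B$.
\end{itemize}
The arbitrary filler labels used in the proof of Lemma~\ref{lem-sgncomp} spoil a bijection. To avoid them, we fix the index sets $\widehat I_A,\widehat I_B$ and sum only over labelings with $\widehat\lambda^{-1}(A)=\widehat I_A$. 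So we really need refined evaluation maps on each side. These are indexed by the set of new labels mapped into $A\cap B$ versus $A\setminus B$, and the colors on the boundary $A\cap B$ are already recorded in $Z$ through $\mu'$.

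This suggests strengthening the claim by induction. The strengthened claim computes, for each element $Z$ and each ``type'' (the pattern of which new labels fall in $A$, $B$, or both), the partial sum. This type information is visible in the $\widehat S$-signature $X$ itself. Indeed, the colors from $\mu'$ tell whether the vertex $(\lambda_A\oplus\lambda'_A)(j)$ lies in $A\cap B$ and which boundary vertex it is. Hence the filler problem can be handled by restricting to elements $X$ whose labels indexed outside $\widehat I_A$ are ``don't care''. I would handle it instead by noting that $X$ determines the core, so we can check that the label positions in $\widehat I_A\setminus\widehat I_B$ avoid the boundary.

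\textbf{Step 2: the product formula.} For each consistent pair $(X,Y)$ with index sets $(\widehat I_A,\widehat I_B)$, its contribution to $f(Z)$ is the product of the $A$-weight and the $B$-weight. We then divide by the weight of the shared part:
\begin{itemize}
\item the weights $w(\widehat\lambda(i))$ for $i\in\widehat I_A\cap\widehat I_B$;
\item the factors $w(v)^{\widehat\mu_i(v)}$ for $v\in A\cap B$.
\end{itemize}
Both are readable from $X$ together with $\mu'$, since boundary vertices carry distinct colors in the core. This is exactly where the semi-field's multiplicative inverses are used, mirroring the division in the basis of Lemma~\ref{lem-sgncomp-new1}. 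The formula reads
\[ f(Z)=\sum_{(X,Y,\widehat I_A,\widehat I_B)\mapsto Z} \frac{f_A^{\widehat I_A}(X)\, f_B^{\widehat I_B}(Y)}{\omega(X)}, \]
where $f_A^{\widehat I_A}$ and $f_B^{\widehat I_B}$ are the refined maps and $\omega(X)$ is the weight of the shared part. The bijection behind this formula holds because $(A,B)$ is a separation. A pair $(\widehat\lambda,\widehat\mu)$ corresponds exactly to a compatible pair of pairs on the two sides, and compatibility is certified by the reconstruction in (iii*).

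\textbf{Step 3: running time.} The number of iterations is $2^{O(k_1)}|\mathsf{sgn}_S(G[A])|\cdot|\mathsf{sgn}_S(G[B])|$. To obtain the refined maps at the top level, we combine with Lemma~\ref{lem-sgncomp-new1} for the multiplicative weight bookkeeping. The total running time therefore matches the bound in the statement.

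The main obstacle is making the decomposition an exact bijection. Each pair $(\widehat\lambda,\widehat\mu)$ on $G$ must be counted once: labels placed on boundary vertices lie in both $A$ and $B$, and colors on $A\cap B$ are shared. Careless summation would overcount or require the arbitrary filler labels used in the existence proof. I would first try to define the side maps with label positions outside $\widehat I_A$ forced onto a canonical boundary-free convention, and prove the bijection lemma for that convention before doing any arithmetic.
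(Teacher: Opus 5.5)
Your skeleton matches the paper's sketch. You run only the top level of item (iii*) of Lemma~\ref{lem-sgncomp}, give each iteration $(X,Y,\widehat I_A,\widehat I_B)$ producing $Z$ a weight (the two sides multiplied, with the double-counted part divided out as in the basis of Lemma~\ref{lem-sgncomp-new1}), and add these weights into $f(Z)$. The paper states the decisive step just as briefly. That step is that this sum equals $\sum_{(\widehat\lambda,\widehat\mu)\in\varGamma_Z}w(\widehat\lambda,\widehat\mu)$, and your proposal leaves it unproved; two of its specific claims also fail.

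\textbf{The refined maps are not available.} The maps $f_A^{\widehat I_A}$ and $f_B^{\widehat I_B}$ in your formula are not inputs to the lemma. The given maps $f_A$ and $f_B$ sum over \emph{full} labelings $[k_1]\to A$ and $[k_1]\to B$. Lemma~\ref{lem-sgncomp-new1} cannot supply refined maps, because it returns a single product of weights, not a sum over assignments. Your check that labels in $\widehat I_A\setminus\widehat I_B$ avoid the boundary only stops a $G$-assignment from being counted under the wrong pair $(\widehat I_A,\widehat I_B)$. It does nothing about the labels outside $\widehat I_A$: they can sit on any vertex of $A$, every placement passes (iii*) (which ignores them), and each carries its own weight. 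Summing $f_A(X)f_B(Y)$ over consistent pairs therefore multiplies the contribution of every $G$-assignment by $\bigl(\sum_{v\in A}w(v)\bigr)^{k_1-|\widehat I_A|}\bigl(\sum_{v\in B}w(v)\bigr)^{k_1-|\widehat I_B|}$. You need an explicit normalization. For example, keep only those $X$ in which every label outside $\widehat I_A$ lies on one fixed vertex recognizable in every core (say the vertex of $\mu'$-color $1$), and divide by the matching power of its weight, treating degenerate cases such as $A\cap B=\emptyset$ separately. Alternatively, strengthen the statement so that the dynamic program carries refined maps.

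\textbf{$\omega(X)$ is not readable from $X$.} Signatures carry no weights, so you need $w$ on $A\cap B$ as an extra input. This is harmless in Theorem~\ref{thm:new-twcomputation}, where $w$ is global, but it must be stated. More seriously, the first-block set colors of a boundary vertex appear in $X$ only if that vertex is labeled in some core. Otherwise the basis of (iii*) never compares its colors on the two sides, and $f_A(X)f_B(Y)$ ranges over independent, possibly disagreeing colorings of it; no monomial divisor fixes this.

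A concrete instance: take $S=((1,1))$, $p=0$, $G$ two isolated vertices $u,v$, and $A=B=V(G)$. Let $X=Y$ be the element with label $1$ on $u$ carrying set color $c$. Then $f_A(X)=f_B(Y)=w(u)^{1+c}(1+w(v))$, which is also the true value of $f(Z)$. Yet $f_A(X)f_B(Y)/\bigl(w(u)^{1+c}w(v)^{e}\bigr)$ is wrong for both $e\in\{0,1\}$; with $w\equiv 1$ it gives $4$ instead of $2$.

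For such an invisible boundary vertex $u'$ the correct divisor is $(1+w(u'))^{s_1}$. When boundary colors are visible (some later block has $k_i\ge 1$), you must check their agreement between $X$ and $Y$ explicitly, since (iii*) only compares colors of vertices labeled in the cores it combines.

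With these repairs, the bijection between $\varGamma_Z$ and compatible pairs of normalized side assignments can be made exact, and your accumulation argument goes through. As written, however, the proposal is a plan rather than a proof.
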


\begin{proof}[Proof sketch]
To avoid copy-past of text, we will only describe the changes in the proof of Lemma~\ref{lem-sgncomp}. Here, we do not use induction. In particular, we directly move to the arguments given in the step, where $|S|\geq 1$. Then, to compute $Z$, we apply the algorithm from Lemma~\ref{lem-sgncomp-new1} in order to obtain $g(Z)$. Besides this, the critical difference is that when we insert $Z$ into $\cal C$, if it already exists, then we add $g(Z)$ to the existing $f(Z)$, and else we just assign $f(Z)=g(Z)$.
\end{proof}

With the help of Lemma~\ref{lem-sgncomp-new2}, we can prove a generalization of Theorem~\ref{thm:twcomputation}.

\begin{theorem}\label{thm:new-twcomputation}
    Given a sequence $S=((k_1,s_1),\dots,(k_d,s_d))$ of pairs of natural numbers with $|S|\geq 1$, a graph $G$ with a nice tree decomposition $(T,\beta)$ of width $t$, and a function $w:V(G) \rightarrow \mathbb{F}$, one can compute the $(S,w)$-evaluation map of $G$ in time 
    \begin{equation*}
        f(d)\cdot \left(\sum_{i=1}^d2^{O(k_i\log t)}+\sum_{i=1}^{d-1}2^{O(s_i t)}+ \sum_{i=1}^{d}2^{O(s_d k_i)}\right)\cdot|V(T)| \cdot  \left(\max_{\mu \in U_{G,[t+1]_0}} \lVert \mathsf{sgn}_S(G,-,\mu) \rVert \right)^{O(1)},
    \end{equation*}
    for some computable function $f$ of $d$.
\end{theorem}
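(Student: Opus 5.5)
The plan is to mirror the proof of Theorem~\ref{thm:twcomputation} verbatim. We replace each DP entry $M[x]=\mathsf{sgn}_S(G[\gamma(x)],-,\mu^*_x)$ by the pair consisting of this signature together with the $(S,w_{|\gamma(x)})$-evaluation map of $(G[\gamma(x)],-,\mu^*_x)$. Here the map is stored as a table indexed by the elements of the signature, so its description size is at most $\lVert \mathsf{sgn}_S(G[\gamma(x)],-,\mu^*_x)\rVert$.

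We process the nice tree decomposition in the same order and handle the node types as follows.
\begin{itemize}
\item At a leaf, the signature is $\mathsf{emp}$, and the evaluation map assigns its unique element the empty-product value $1$, which is the multiplicative identity of $\mathbb{F}$.
\item At a forget node, and for the recolorings via item (ii) of Lemma~\ref{lem-sgncomp}, only the coloring is renamed. We push the evaluation map forward along the induced map on signature elements. Distinct elements may merge into one, and in that case we \emph{add} their values. This is correct because the sets $\varGamma_x$ in the definition of the evaluation map are just regrouped, while the weights of $(\lambda',\mu')$ do not depend on the coloring $\mu$.
\item At an introduce node, we compute the evaluation map of $(G[\beta(x)],-,\mu_x)$ by brute force, exactly as in Lemma~\ref{lem:bruteForce1}. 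We enumerate all $(\lambda',\mu')$ of the first block, group them by the resulting $S'$-signature, and sum their weights. This stays within the time bound of that lemma up to the same factors. We then combine with the child's map via the separation $(\beta(x),\gamma(s))$.
\item At a join node, we combine the two children's maps via the separation $(\gamma(h),\gamma(s))$.
\end{itemize}
In both combination steps we invoke Lemma~\ref{lem-sgncomp-new2} in place of item (iii) of Lemma~\ref{lem-sgncomp}; it has the same running time. At the root we strip the $0$-coordinate of the colors as before, again merging and adding the values of coinciding elements.

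Correctness follows by the same induction on the order of computation. At each node the pair (signature, evaluation map) is exactly the one for $(G[\gamma(x)],-,\mu^*_x)$. For the signature part this is the invariant already proved in Theorem~\ref{thm:twcomputation}. For the evaluation part it is the guarantee of Lemma~\ref{lem-sgncomp-new2} or of the merging argument above.

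The one point needing care, and the main obstacle, is the interaction between the separation-combination step and the \emph{first} quantifier block. The weight of $(\lambda',\mu')$ must be counted exactly once, even for vertices of $A\cap B$ that receive a label or a color on both sides. Lemma~\ref{lem-sgncomp-new2}, through the division step in Lemma~\ref{lem-sgncomp-new1}, is designed precisely for this, and it uses the existence of multiplicative inverses in the semi-field. We therefore only have to check that our bookkeeping matches its hypotheses. The labelings are empty and the colorings are $\mu^*_x$-restrictions, which is the same setting as in Theorem~\ref{thm:twcomputation}.

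For the running time, each node performs a constant number of calls. Each call costs $2^{\sum_i k_i}\cdot(\sum_i(k_i+s_i)+\max_\mu\lVert\mathsf{sgn}_S(G,-,\mu)\rVert)^{O(1)}$, plus the brute-force cost at introduce nodes. Semi-field operations take constant time by the oracle assumption. Summing over the $|V(T)|$ nodes gives the claimed bound, exactly as in Theorem~\ref{thm:twcomputation}.
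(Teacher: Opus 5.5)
Your proposal is essentially the paper's own proof: the same dynamic program over the nice tree decomposition with the $(S,w)$-evaluation map stored at each node, a weighted brute force at introduce nodes, and Lemma~\ref{lem-sgncomp-new2} in place of item (iii) of Lemma~\ref{lem-sgncomp} at introduce and join nodes; your leaf value $1$ (the multiplicative identity, where the paper's sketch writes $0$) and your summing of values of signature elements that merge under recoloring at forget nodes (where the sketch just ``keeps'' the map) are more precise than the paper's sketch. One step that both you and the paper pass over too quickly is that the weighted brute force at introduce nodes stays within the bound: for $d=1$ the bound has no $2^{O(s_1t)}$ term, so you cannot enumerate all first-block colorings of $\beta(x)$, and should instead enumerate only $\lambda'$ and the colors of the labeled vertices, multiplying in $\prod_{v\in\beta(x)\setminus\text{Im}(\lambda')}(1+w(v))^{s_1}$ for the unlabeled ones, just as Lemma~\ref{lem:bruteForce1} colors only labeled vertices in the last block.
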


\begin{proof}[Proof sketch]
To avoid copy-past of text, we will only describe the changes in the proof of Theorem~\ref{thm:twcomputation}.  Now, the purpose of each table entry $M[x]$, $x\in V(T)$, will be to store the $(S,w)$-evaluation map of $(G[\gamma(x)],-,\mu^*_x)$ rather than just its signature.

First, in case $x$ is a leaf node, we assign the weight $0$.

Second, in case $x$ is an Introduce node, the first change is to apply a generalized version of Lemma~\ref{lem:bruteForce1}, to compute the $(S,w)$-evaluation map of $X$ rather than just $X$: the brute-force computation in the proof can be trivially modified to account for weights. Then, instead of applying item (iii) of Lemma~\ref{lem-sgncomp}, we apply Lemma~\ref{lem-sgncomp-new2}, and store the result in $M[x]$.

In case $x$ is a Forget node, the proof remains the same (keeping the same evaluation map).

Lastly, in case $x$ is a Join node, the only difference is that  instead of applying item (iii) of Lemma~\ref{lem-sgncomp}, we apply Lemma~\ref{lem-sgncomp-new2}, and store the result in $M[x]$.
\end{proof}

Next, we prove the counterpart of Lemma~\ref{lem:testSat}.

\begin{lemma} \label{lem:new-testSat}
    Let $S$ be a sequence of pairs of natural numbers with $|S|\geq 1$.
    Given the $S$-signature $\mathsf{sgn}_S(G)$ of a graph $G$, the $(S,w)$-evaluation map of $G$ for a function $w:V(G) \rightarrow \mathbb{F}$, and an $S$-MSO$^*$ $\phi$, we can compute $\mathsf{scr}(G,w,\phi)$ in $(||\mathsf{sgn}_S(G)||+|\phi|)^{O(1)}$ time.
\end{lemma}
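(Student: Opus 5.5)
The plan is to reduce the computation of $\mathsf{scr}(G,w,\phi)$ to a sum, over the elements of the top-level set $\mathsf{sgn}_S(G)$, of the evaluation-map values, each filtered by a satisfiability test carried out with Lemma~\ref{lem:testSat}. Write $S=((k_1,s_1))+S'$. Then $\phi=\phi(x_1,\dots,x_{k_1},X_1,\dots,X_{s_1})$ is an $S'$-MSO with $k_1$ free vertex variables and $s_1$ free set variables.

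The first step is to identify assignments with labelings and colorings. An assignment $\alpha=(v_1,\dots,v_{k_1},V_1,\dots,V_{s_1})$ corresponds bijectively to a pair $(\lambda',\mu')\in\varLambda_{G,k_1}\times U_{G,\{0,1\}^{s_1}}$ via $\lambda'(i)=v_i$ and $\mu'_i(v)=\mathbf{1}_{v\in V_i}$. Under this correspondence, $w(\alpha)=\prod_{i}w(\lambda'(i))\cdot\prod_i\prod_v w(v)^{\mu'_i(v)}$. This is exactly the summand appearing in the definition of the $(S,w)$-evaluation map $f$ of $(G,-,-)$.

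Next, I group the assignments according to the signature $x=\mathsf{sgn}_{S'}(G,\lambda',\mu')$ of the corresponding graph $(G,-\oplus\lambda',-\otimes\mu')$; here $\lambda'$ and $\mu'$ are the labeling and coloring from the previous step (the $0$-coordinate of the dummy coloring is harmless). The key claim is that whether $\alpha$ satisfies $\phi$ depends only on $x$. This follows from the ``if'' direction used in Lemma~\ref{lem:testSat}, in its general form for labeled and colored graphs with free variables taken w.r.t.\ $I=J=((k_1,s_1))$. That general claim in the proof of Lemma~\ref{lem:testSat} decides satisfaction from $\mathsf{sgn}_{S'}(G,\lambda',\mu')$ alone, so two pairs with the same $S'$-signature agree on $\phi$. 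Hence
\begin{equation*}
\mathsf{scr}(G,w,\phi)=\sum_{x\in\mathsf{sgn}_S(G)}[\,x\text{ satisfies }\phi\,]\cdot f(x),
\end{equation*}
where $f(x)$ sums $w(\alpha)$ over exactly the pairs in $\varGamma_x$.

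The algorithm follows directly from this identity. For each $x\in\mathsf{sgn}_S(G)$, I run the procedure from the proof of Lemma~\ref{lem:testSat} on $x$ and $\phi$ with $I=J=((k_1,s_1))$; each call costs $(\lVert x\rVert+|\phi|)^{O(1)}$. I then add $f(x)$ for the elements that satisfy $\phi$, using the semi-field oracle. Since $\sum_x\lVert x\rVert=\lVert\mathsf{sgn}_S(G)\rVert$ and $|\mathsf{sgn}_S(G)|\le\lVert\mathsf{sgn}_S(G)\rVert$, the total time is $(\lVert\mathsf{sgn}_S(G)\rVert+|\phi|)^{O(1)}$.

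The main subtlety is the well-definedness claim: satisfaction of $\phi$ by $\alpha$ must be a function of $x$ alone. Here one must check two things. First, the correspondence between the free variables of $\phi$ and the label and color positions must match the ``w.r.t.\ $I,J$'' convention. Second, the coloring attached to $G$ must be handled correctly: it is $(-\otimes\mu')$ with colors in $\{0\}\times\{0,1\}^{s_1}$, and the dummy coordinate has to be projected away, for instance by Lemma~\ref{lem-sgncomp}(ii). This projection is cheap and does not affect the bound.
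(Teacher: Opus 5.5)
Your proposal is correct and is essentially the paper's proof. Both arguments group the satisfying assignments by the $S'$-signature $x\in\mathsf{sgn}_S(G)$ they induce, use the general (labeled, colored, free-variable) form of Lemma~\ref{lem:testSat} to decide for each $x$ whether its whole class $\varGamma_x$ satisfies $\phi$, and return the sum of $f(x)$ over the positive $x$; your extra care with the dummy $\{0\}$-coordinate (projecting it away via Lemma~\ref{lem-sgncomp}(ii)) handles correctly a detail the paper silently glosses over.
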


\begin{proof}[Proof sketch]
Let $S=(k,s)+S'$. Recall that by definition, for an assignment $\alpha = (v_1,\dots,v_k,V_1,\dots,V_s)$, we write $w(\alpha) = (\prod_{i=1}^k w(v_1)) \cdot (\prod_{i=1}^s \prod_{v \in V_i} w(v))$, and that  $\mathsf{scr}(G,w,\phi) = \sum_{\alpha \in \mathcal{A}_\phi(G)} w(\alpha)$, where $\mathcal{A}_\phi(G)$ denotes the set of all satisfying assignments of $\phi$ on  $G$.

Also recall that for every $x\in\mathsf{sgn}_S(G)$,  $f(x) = \displaystyle{\sum_{(\lambda,\mu) \in \varGamma_x} \left(\prod_{i=1}^k w(\lambda(i)) \cdot \prod_{i=1}^s \prod_{v \in V(G)} (w(v))^{\mu_i(v)}\right)}$,  where $\varGamma_x$ be the set consisting of all pairs $(\lambda,\mu) \in \varLambda_{G,k} \times U_{G,\{0,1\}^s}$ satisfying $\mathsf{sgn}_{S'}(G, \lambda,\mu) = x$.


Now, notice that the proof of Lemma~\ref{lem:testSat} implies that, in time
$(||\mathsf{sgn}_S(G)||+|\phi|)^{O(1)}$, we can obtain all the answers for the following for all $x\in\mathsf{sgn}_S(G)$: Does there exist $(\lambda,\mu) \in \varGamma_x$ so that $(G,\lambda,\mu)$ satisfies $\phi$? Furthermore, the proof shows that if there exists such $(\lambda,\mu) \in \varGamma_x$, then, in fact, all  $(\lambda',\mu') \in \varGamma_x$ are such that $(G,\lambda',\mu')$ satisfies $\phi$. Let $X_{\mathsf{yes}}$ denote the set of $x\in\mathsf{sgn}_S(G)$ such that the above answer is positive.

For all $x\in X_{\mathsf{yes}}$, let $M_x$ denote the set of all assignments $\alpha = (v_1,\dots,v_k,V_1,\dots,V_s)\in \mathcal{A}_\phi(G)$ such that for all $i\in[k]$, $v_i=\lambda(i)$, and for all $i\in[s]$, $V_i$ is the set of all vertices whose $i$-coordinate in $\mu$ is $1$,  for some $(\lambda,\mu) \in \varGamma_x$. Then, by the definitions of the weight of an assignment and of $f$, we get that $f(x)=\sum_{\alpha\in M_x}w(\alpha)$. Notice that for distinct $x,y\in X_{\mathsf{yes}}$, $M_x\cap M_y=\emptyset$, and that $\bigcup_{x\in X_{\mathsf{yes}}}M_x=\mathcal{A}_\phi(G)$.

Thus, algorithmically, we derive that it suffices to compute $f(G)$ (using Theorem~\ref{thm:new-twcomputation}) and then $X_{\mathsf{yes}}$ by iterating through every signature in $\mathsf{sign}_S(G)$ and using the arguments for Lemma~\ref{lem:testSat}, eventually returning $\sum_{x\in X_{\mathsf{yes}}}f(x)$ as $\mathsf{scr}(G,w,\phi)$.
Due to the running times stated in Theorem~\ref{thm:new-twcomputation} and (the proof of) Lemma~\ref{lem:testSat}, the proof is complete.
\end{proof}

\subsection{Signature size of bounded-treewidth graphs} \label{sec-size}
In this section, we prove upper bounds for the recursive size of signatures of bounded-treewidth graphs.
Specifically, we are going to prove the following theorem.
\begin{theorem} \label{thm-sgnsize}
    For a sequence $S = ((k_1,s_1),\dots,(k_d,s_d))$ of pairs of natural numbers and a $p$-labeled and $P$-colored graph $(G,\lambda,\mu)$ with $\mathbf{tw}(G) \leq t$, $\lVert \mathsf{sgn}_S(G,\lambda,\mu) \rVert$ is at most
    {\small \begin{equation*}
        f(d) \cdot \left(\sum_{i=1}^d \sum_{j=0}^{i-1} \exp^{(i)}(\hat{O}(s_j k_i)) + \sum_{i=0}^d \sum_{j=0}^i \exp^{(i)}(\hat{O}(t_j k_i)) + \sum_{i=1}^d \exp^{(i)}(\hat{O}(k_i \log t)) + \sum_{i=1}^d 2^{O(s_d k_i)} \right)
    \end{equation*}}
    for some computable function $f$, where $k_0 = p$, $s_0 = \log |P|$, and $t_j = \min\{k_j,t\}$.
\end{theorem}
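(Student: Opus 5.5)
We follow the strategy sketched in Section~\ref{sec-overview1}, now keeping track of $t$, $p$ and $P$. Throughout, $\mathcal{G}_{p,P}$ denotes the class of $p$-labeled $P$-colored graphs of treewidth at most $t$, and $\Delta_S(\mathcal{G}_{p,P})$ denotes the number of distinct $S$-signatures realized in it. Since $\lVert\mathsf{sgn}_S(G,\lambda,\mu)\rVert=\sum_{x}\lVert x\rVert$, unfolding the definition gives
\begin{equation*}
\lVert\mathsf{sgn}_S(G,\lambda,\mu)\rVert\le\prod_{i=1}^{d}\max\bigl|\mathsf{sgn}_{S_{\ge i}}(G,\lambda_{i-1},\mu_{i-1})\bigr|,
\end{equation*}
where $S_{\ge i}=((k_i,s_i),\dots,(k_d,s_d))$ and the maximum ranges over graphs in $\mathcal{G}_{p+k_1+\dots+k_{i-1},\,P\times\{0,1\}^{s_1+\dots+s_{i-1}}}$. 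It therefore suffices to bound the \emph{set} size $|\mathsf{sgn}_{S_{\ge i}}(\cdot)|$ for each suffix and multiply. The product of $d$ sums of towers is turned back into a sum using $f(d)$ together with Facts~\ref{fact-basechange1}, \ref{fact-basechange2} and~\ref{fact-decompose}; this conversion is where the $\hat{O}$ losses arise.

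\textbf{One-step bound.} Fix $S=((k,s))+S'$ and $(G,\lambda,\mu)\in\mathcal{G}_{p,P}$. For each choice $(\lambda',\mu')$, apply Lemma~\ref{lem-decomp} with $R=\mathrm{Im}(\lambda\oplus\lambda')$ and $r=p+k$. This yields $V_0$ with $|V_0|=O((p+k)t)$ and pieces $V_1,\dots,V_{6r}$ with $|V_0\cap V_i|=4t$. Define $\varGamma(\lambda',\mu')$ to be the restriction to $V_0$, with labels for the attachment sets, followed by the $S'$-signatures of the pieces $G[V_i]$. Each piece carries at most $4t$ labels and coloring set $P\times\{0,1\}^s\times[4t]_0$.

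Property (P1) is proved as in Lemma~\ref{lem-seq}: we iterate item (iii) of Lemma~\ref{lem-sgncomp}, or more precisely its stronger form (iii*). The intermediate object carries as extra labels all attachment vertices of pieces not yet glued, so that each gluing step is a legal separation. Items (i) and (ii) of Lemma~\ref{lem-sgncomp} are used to rename labels and colors.

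For (P2), the choices of the first component are counted via Fact~\ref{fact-num}, using the core of $(G,\lambda)$ and only the $O(kt)$ new vertices. The $p$ old labels and the colors of old vertices are fixed by $(G,\lambda,\mu)$, so $p$ and $|P|$ enter only through $p$ labels plus $O(r)$ attachment vertices. This gives
\begin{equation*}
|\mathsf{sgn}_S(G,\lambda,\mu)|\le (p+k+t)^{O(\min\{k,t\}(p+k)+rt)}\,(2^s|P|)^{O(kt)}\,\bigl(4t\,\Delta_{S'}(\mathcal{G}_{4t,P\times\{0,1\}^s\times[4t]_0})\bigr)^{O(p+k)}.
\end{equation*}
Here we will need care: a naive count gives exponents like $(p+k)t$, while the target $t_j k_i$ terms require exponents of the form $\min\{k,t\}\cdot k$ and $k\log t$. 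To obtain these, we first bound $|\mathsf{sgn}_S|$ by the count over the \emph{new} labels only. This reduces $V_0$ to the $O(\min\{k,t\}\cdot k)$ edges and $O(k)$ vertices genuinely determined by $\lambda'$, with the dependence on $p$ handled separately through the $\Delta$ factor. If this refinement fails, we fall back to the crude exponent and absorb it via the $\hat O$ and $f(d)$ factors.

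\textbf{Recursion and base case.} Applying the one-step bound uniformly over $\mathcal{G}_{p+k,P\times\{0,1\}^s}$ gives $\Delta_S(\mathcal{G}_{p,P})\le 2^{(\text{right-hand side above})}$. The base case $S=((k,s))$ is bounded by counting cores, i.e.\ $(p+k)$-vertex graphs colored by $P\times\{0,1\}^s$. This yields $\Delta\le 2^{(p+k)^{O(\min\{p+k,t\}(p+k))}|P|^{O(p+k)}2^{O(s(p+k))}}$ for the set, while the set size itself is independent of $s$ only for $\Delta$ of the next level; we will track $s_d$ separately, as it enters only through $2^{O(s_dk_i)}$ in the last factor.

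The main obstacle is the bookkeeping of this unfolding. Each level replaces $p$ by $4t$ and $\log|P|$ by $\log|P|+s+O(\log t)$. We must verify by induction on $d$ that the resulting nested expression is bounded by a sum of towers of the form $\exp^{(i)}(\hat O(s_jk_i))$, $\exp^{(i)}(\hat O(t_jk_i))$ and $\exp^{(i)}(\hat O(k_i\log t))$. This requires repeatedly applying Fact~\ref{fact-decompose} to split $(\sum\text{towers})^{z}$ into a sum of towers one level higher. We also need Fact~\ref{fact-basechange2} to normalize bases, at the cost of $x\log x$ inside the top, which is absorbed into $\hat O$. The key invariant to maintain is that $k_i$ first enters at level $i$, and thereafter appears only as the multiplicative exponent $O(p+k)$ outside the $\Delta$ factor. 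This is why $k_i$ stays at height $i$, whereas $s_j$ and $t$ are pushed upward through $P$ and $p=4t$.
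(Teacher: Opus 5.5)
You have the paper's overall framework: characterization sequences, (P1) by iterated gluing, $\Delta_S\le 2^{|K|}$, and the product over levels. But there is a genuine gap at the innermost level $d$, the one place where nothing can be absorbed by pushing it one level up the tower.

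\textbf{The base case depends on $s$.} Your base case bounds $\Delta_{((k,s))}$ by counting subsets of all $(p+k)$-vertex, $(P\times\{0,1\}^s)$-colored labeled graphs, giving $\Delta\le 2^{(\cdots)\,|P|^{O(p+k)}\,2^{O(s(p+k))}}$. Propagated through your recursion, this causes two failures.
\begin{itemize}
\item It puts $s_d$ on top of a tower of height $d$. Already for $d=2$, $p=0$, $|P|=1$, $t=k_1=k_2=s_1=1$, your count yields only $\exp^{(2)}(\Theta(s_2))$, while the target is $2^{O(s_2)}$.
\item The factor $|P'|^{O(p'+k_d)}$, with $\log|P'|\ge s_{d-1}$ and $p'\ge\min\{k_{d-1},t\}$, produces $\exp^{(d)}(\min\{k_{d-1},t\}\cdot s_{d-1})$. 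The target never contains this product, since $s_{d-1}$ is only paired with $k_d$.
\end{itemize}
Your sentence that $s_d$ ``enters only through $2^{O(s_dk_i)}$'' contradicts your own displayed bound and is never justified.

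\textbf{The missing idea} is Lemma~\ref{lem-base}. A $((k,s))$-signature contains each labeled core $(G[\mathrm{Im}(\lambda\oplus\lambda')],\lambda\oplus\lambda',\mu_{|\mathrm{Im}(\lambda\oplus\lambda')})$ together with \emph{all} of its $\{0,1\}^s$-recolorings. Hence it is determined by the core $\sigma$ of $(G,\lambda,\mu)$ and an $s$-free subset of $\mathcal{H}_{\sigma,k}$. By Fact~\ref{fact-num}, $|\mathcal{H}_{\sigma,k}|=(p+k)^{O(\min\{k,t\}(p+k))}|P|^{O(k)}$, because the colors of the $p$ old vertices are fixed by $\sigma$. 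So $\Delta_{((k,s))}\le 2^{|\mathcal{H}_{\sigma,k}|}$ is independent of $s$. Then $s_d$ enters only through the innermost set size $\delta_{((k_d,s_d))}\le|\mathcal{H}_{\sigma,k_d}|\cdot 2^{s_d(p_{d-1}+k_d)}$, where $p_{d-1}=p+k_1+\dots+k_{d-1}$.

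\textbf{Two further refinements are needed to use this.}
\begin{itemize}
\item $\Delta$ must be taken per core class, not globally as you define it. You also need the observation that the first entry of the characterization sequence already determines the core of every piece $(G[V_i],\lambda_i,\mu_i)$. Only then does each remaining entry range over signatures with a fixed core, so that $|P'|$ appears only to the power $O(k_d)$.
\item A piece carries $|\mathrm{Im}(\lambda\oplus\lambda')\cap V_i|\le\min\{p+k,4t\}$ labels, not $4t$. With your substitution $p\mapsto 4t$, the last-level count $|\mathcal{H}|$ is already $2^{\Omega(t)}$ even when $p,k_1,\dots,k_d,s_1,\dots,s_d$ are constants. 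That gives a term $\exp^{(d)}(\Omega(t))$, which exceeds $\exp^{(d)}(\hat O(k_d\log t))$. Your closing claim that $t$ is ``pushed upward through $p=4t$'' is exactly what breaks the bound at level $d$.
\end{itemize}

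\textbf{Your proposed refinement of $V_0$ is unnecessary.} The crude first-element exponent $(p+k)t^2\log((p+k)t)$ can stay. It is absorbed via Fact~\ref{fact-decompose} into the next level's $\exp^{(i+1)}(\hat O(k_{i+1}\log t))$ and $\exp^{(i+1)}(\hat O(s_ik_{i+1}))$ terms. Those terms exist because the recurrence is only applied when $|S|\ge 2$. The absorption does not come from $f(d)$ and $\hat O$ alone, as you suggest.
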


To prove the theorem, our goal is to bound $\lVert \mathsf{sgn}_S(G,\lambda,\mu) \rVert$ for every labeled and colored graph $(G,\lambda,\mu)$ and every sequence $S$ of pairs of natural numbers.
Instead of bounding the overall size $\lVert \cdot \rVert$ of the signatures directly, we first bound the size of the signatures \textit{as sets}, i.e., $|\mathsf{sgn}_S(G,\lambda,\mu)|$.

We shall apply induction on the length of $S$.
To this end, we need a stronger induction statement that bounds not only the size of the signatures but also bounds the number of different signatures a certain class of graphs can have.
We begin with introducing some notations.
For a class $\mathcal{G}$ of labeled and colored graphs, consider the natural equivalence relation $\sim$ on $\mathcal{G}$ defined as $(G,\lambda,\mu) \sim (G',\lambda',\mu')$ if $\mathsf{core}(G,\lambda,\mu)$ and $\mathsf{core}(G',\lambda',\mu')$ are isomorphic.
Let $\varPhi(\mathcal{G})$ be the set of equivalence classes of $\sim$.
Then for a sequence $S$ of pairs of natural numbers, we write $\delta_S(\mathcal{G}) = \sup_{(G,\lambda,\mu) \in \mathcal{G}} |\mathsf{sgn}_S(G,\lambda,\mu)|$ and 
\begin{equation*}
    \Delta_S(\mathcal{G}) = \sup_{\mathcal{G}' \in \varPhi(\mathcal{G})} |\{\mathsf{sgn}_S(G,\lambda,\mu):(G,\lambda,\mu) \in \mathcal{G}'\}|.
\end{equation*}
In other words, $\delta_S(\mathcal{G})$ is the maximum size of the $S$-signature of a graph in $\mathcal{G}$ and $\Delta_S(\mathcal{G})$ is the maximum number of different $S$-signatures the graphs in an equivalence class $\mathcal{G}' \in \varPhi(\mathcal{G})$ can have.
Denote by $\mathcal{G}_{p,P,t}$ the set of all $p$-labeled and $P$-colored graphs with treewidth at most $t$.
We shall bound $\delta_S(\mathcal{G}_{p,P,t})$ and $\Delta_S(\mathcal{G}_{p,P,t})$, by induction on $|S|$.

\subsubsection{Characterization sequences and recurrences}
Since the bounds are rather complicated, before explicitly stating and proving them, we first present the key ingredient of our induction argument, which are certain recursive relations for the numbers $\delta_S(\cdot)$ and $\Delta_S(\cdot)$.
To establish the recurrences, we need to introduce an important notion, called \textit{characterization sequences}, and show its relation to the signatures.

Consider a sequence $S$ of pairs of natural numbers.
For each graph $(G,\lambda,\mu) \in \mathcal{G}_{p,P,t}$, we associate it with a sequence $\varGamma_S(G,\lambda,\mu)$ defined as follows.
Let $R = \text{Im}(\lambda) \subseteq V(G)$.
We have $|R| \leq p$.
We apply Lemma~\ref{lem-decomp} with $R$ to obtain the sets $V_0,V_1,\dots,V_{6p} \subseteq V(G)$ satisfying the three conditions in the lemma.
The image of $\lambda$, which is $R$, is contained in $V_0$, by (ii) of Lemma~\ref{lem-decomp}.
So $\lambda^{|V_0}$ is a $p$-labeling on $G[V_0]$.
For $i \in [6p]$, let $\pi_i: [4t] \rightarrow V_0$ be a function that maps the numbers in $[4t]$ bijectively to the vertices in $V_0 \cap V_i$ and let $\mu_i: V_i \rightarrow P \times [4t]_0$ be the function defined as $\mu_i(v) = (\mu(v),\pi_i^{-1}(v))$ for $v \in V_0 \cap V_i$ and $\mu_i(v) = (\mu(v),0)$ for $v \in V_i \backslash V_0$.
Note that $\pi_i$ and $\mu_i$ are well-defined since $|V_0 \cap V_i| = 4t$ by (iii) of Lemma~\ref{lem-decomp}.
By construction, each $\pi_i$ is a $4t$-labeling on $G[V_0]$ and each $\mu_i$ is a $(P \times [4t]_0)$-coloring on $G[V_i]$.
Next, for $i \in [6p]$, let $X_i = \text{Im}(\lambda) \cap V_i \subseteq V_0 \cap V_i$ and define a function $\gamma_i: \pi_i^{-1}(X_i) \rightarrow V_i$ as $\gamma_i(a) = \pi_i(a)$ for all $a \in \pi_i^{-1}(X_i)$, i.e., $\gamma_i$ is obtained from $\pi_i^{|X_i}$ by change its codomain to $V_i$.
Finally, let $\lambda_i = \mathsf{sort}(\gamma_i)$, which is a $|X_i|$-labeling on $G[V_i]$.
Now we are ready to define the sequence
\begin{equation*}
    \varGamma_S(G,\lambda,\mu) = ((G[V_0],\lambda^{|V_0} \oplus \pi_1 \oplus \cdots \oplus \pi_{6p},\mu_{|V_0}),\mathsf{sgn}_S(G[V_1],\lambda_1,\mu_1),\dots,\mathsf{sgn}_S(G[V_{6p}],\lambda_{6p},\mu_{6p})).
\end{equation*}
We call $\varGamma_S(G,\lambda,\mu)$ the \textit{$S$-characterization sequence}\footnote{Our construction of the $S$-characterization sequence is not necessarily unique, because the decomposition in Lemma~\ref{lem-decomp} and the choices of $\pi_1,\dots,\pi_{6p}$ are not unique. However, we can fix a canonical choice of $\varGamma_S(G,\lambda,\mu)$ for each $(G,\lambda,\mu)$.} of $(G,\lambda,\mu)$.
We observe some simple facts about this sequence.
Clearly, the length of $\varGamma_S(G,\lambda,\mu)$ is $6p+1$.
The first element of $\varGamma_S(G,\lambda,\mu)$, i.e., $(G[V_0],\lambda^{|V_0} \oplus \pi_1 \oplus \cdots \oplus \pi_{6p},\mu_{|V_0})$, is a $(p+24pt)$-labeled and $P$-colored graph (of treewidth at most $t$), i.e., an element in $\mathcal{G}_{p+24pt,P,t}$.
The graph $(G[V_i],\lambda_i,\mu_i)$ corresponding to the $(i+1)$-th element of $\varGamma_S(G,\lambda,\mu)$, a $|X_i|$-labeled and $(P \times [4t]_0)$-colored graph.
Note that 
\begin{equation*}
    |X_i| \leq \min\{p,|V_0 \cap V_i|\} =\min\{p,4t\} = 4 \cdot\min\{p,t\}.
\end{equation*}
We also notice that both $|X_i|$ and $\mathsf{core}(G[V_i],\lambda_i,\mu_i)$ are uniquely determined by the first element $(G[V_0],\lambda^{|V_0} \oplus \pi_1 \oplus \cdots \oplus \pi_{6p},\mu_{|V_0})$ of $\varGamma_S(G,\lambda,\mu)$.
Indeed, from $(G[V_0],\lambda^{|V_0} \oplus \pi_1 \oplus \cdots \oplus \pi_{6p},\mu_{|V_0})$, we can retrieve the maps $\lambda^{|V_0}$ and $\pi_i$, which together determine $X_i$ (and thus $|X_i|$), simply because $X_i = \text{Im}(\lambda^{|V_0}) \cap \text{Im}(\pi_i)$.
Furthermore, $(\mu_i)_{|X_i}$ is determined by $\mu_{|V_0}$, $\pi_i$, and $X_i$.
Since $\mathsf{core}(G[V_i],\lambda_i,\mu_i) = (G[X_i],\lambda_i^{|X_i},(\mu_i)_{|X_i})$ by construction, it is then determined by $(G[V_0],\lambda^{|V_0} \oplus \pi_1 \oplus \cdots \oplus \pi_{6p},\mu_{|V_0})$.
This observation will be used later in the proof of Lemma~\ref{lem-recurrence}.

A key property of the $S$-characterization sequences is that they determines the $S$-signatures, i.e., $\mathsf{sgn}_S(G,\lambda,\mu)$ is uniquely determined by $\varGamma_S(G,\lambda,\mu)$.
\begin{lemma} \label{lem-character} 
    Let $(G,\lambda,\mu) \in \mathcal{G}_{p,P,t}$.
    One can compute $\mathsf{sgn}_S(G,\lambda,\mu)$ knowing only $\varGamma_S(G,\lambda,\mu)$.
\end{lemma}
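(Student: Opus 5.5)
Let $\varGamma_S(G,\lambda,\mu)=(\sigma_0,\sigma_1,\dots,\sigma_{6p})$ with $\sigma_0=(G[V_0],\lambda^{|V_0}\oplus\pi_1\oplus\cdots\oplus\pi_{6p},\mu_{|V_0})$ and $\sigma_i=\mathsf{sgn}_S(G[V_i],\lambda_i,\mu_i)$. The plan is to glue the pieces $V_1,\dots,V_{6p}$ onto $V_0$ one at a time using item (iii*) of Lemma~\ref{lem-sgncomp}, and to use items (i) and (ii) of Lemma~\ref{lem-sgncomp} to adjust labelings and colorings between steps. Set $W_i=V_0\cup V_1\cup\cdots\cup V_i$.

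At each stage we keep the signature of the graph $G[W_i]$ with two additional pieces of data:
\begin{itemize}
\item the labeling $\lambda^{|W_i}\oplus\pi_{i+1}\oplus\cdots\oplus\pi_{6p}$, so the boundaries of the pieces not yet attached stay labeled;
\item the coloring $\mu_{|W_i}$ extended by a second coordinate that is $0$ off $V_0$ and that marks the vertices of $V_0$.
\end{itemize}
This is the strengthened graph hinted at in the overview. For $i=0$ it is $\sigma_0$ itself, because $\mathsf{sgn}_S$ of a graph all of whose vertices are labeled is determined by that graph. Indeed, applying the definition recursively, every signature is built from cores of $(G[V_0],\lambda\oplus\lambda',\mu\otimes\mu')$, and these are determined by the labeled colored graph $G[V_0]$ up to isomorphism. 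So it can be computed directly, by brute force over $V_0$ (Lemma~\ref{lem:bruteForce1}) applied to $\sigma_0$.

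For the step $i-1\to i$, note that $(W_{i-1},V_i)$ is a separation of $G[W_i]$ with $W_{i-1}\cap V_i=V_0\cap V_i=\mathrm{Im}(\pi_i)$. This holds by Lemma~\ref{lem-decomp}(iii): $N_G(V_i\setminus V_0)\subseteq V_0\cap V_i$, and the sets $V_j\setminus V_0$ are pairwise nonadjacent. Item (iii*) needs both sides colored by $\mu\otimes\mu'$, where $\mu'$ numbers the separator $A\cap B$ bijectively. On the $V_i$ side, $\mu_i$ is exactly $\mu$ together with $\pi_i^{-1}$ on $V_0\cap V_i$, so $\sigma_i$ has the right form. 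On the $W_{i-1}$ side, the separator vertices are the labels $\pi_i(1),\dots,\pi_i(4t)$. We need a coloring step that turns labeled vertices into colored vertices with color $\pi_i^{-1}(v)$. I would handle this by carrying the colors $\pi_j^{-1}$ for all $j$ from the start, as a coloring of $V_0$ by $P\times([4t]_0)^{6p}$ determined by $\sigma_0$, and then projecting with item (ii) to the coordinate $j=i$.

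The index sets $I_A,I_B,J_A,J_B$ for (iii*) are read off from $\sigma_0$. It tells us which labels of $\lambda$ lie in $V_i$ (namely $X_i$) and how $\lambda_i=\mathsf{sort}(\gamma_i)$ orders them, and the remaining labels lie in $W_{i-1}$. After gluing, we use item (i) to drop the labels $\pi_i$ and item (ii) to drop the $i$-th coloring coordinate. After step $6p$ we hold $\mathsf{sgn}_S(G,\lambda\oplus\pi_{\dots},\mu\otimes\cdot)$; the remaining $\pi$ labels and extra colors are removed with (i) and (ii), which gives $\mathsf{sgn}_S(G,\lambda,\mu)$.

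The main obstacle is the bookkeeping in the separation step. Lemma~\ref{lem-sgncomp}(iii*) requires the two sides' signatures with respect to labelings restricted and sorted consistently, and colorings that agree on the separator. Each step must also produce exactly the invariant graph needed for the next one. This includes the fact that $\lambda$-labels in $V_0\cap V_i$ occur on both sides, which is why (iii*) allows $I_A\cap I_B\neq\emptyset$. I would first fix the invariant precisely and check the base case from $\sigma_0$. Every later step is then a mechanical application of (iii*), (i) and (ii), all of whose inputs are computable from $\varGamma_S(G,\lambda,\mu)$.
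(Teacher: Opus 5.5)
Your plan of gluing $V_1,\dots,V_{6p}$ onto $V_0$ one at a time with item (iii) of Lemma~\ref{lem-sgncomp}, adjusting with items (i) and (ii), is the paper's plan. The invariant you propose, however, cannot be maintained.

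\textbf{The gap.} You keep the boundaries of the not-yet-attached pieces as \emph{labels} $\pi_{i+1}\oplus\cdots\oplus\pi_{6p}$. The sets $\mathrm{Im}(\pi_j)=V_0\cap V_j$ overlap in general: several components can hang off the same node of $X^*$, and Lemma~\ref{lem-decomp} pads each $V_0\cap V_j$ with arbitrary vertices of $V_0$. So at step $i$, some of these labels lie in the separator $W_{i-1}\cap V_i=V_0\cap V_i$.

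Item (iii*) demands ${\lambda^*}^{-1}(B)=I_B$. That is, every label of the glued graph that lands in $B=V_i$ must also be a label of the $B$-side signature. Your sentence ``the remaining labels lie in $W_{i-1}$'' misses exactly this. But $\sigma_i=\mathsf{sgn}_S(G[V_i],\lambda_i,\mu_i)$ carries labels only on $X_i$. Items (i) and (ii) can only drop or reorder labels ($\lambda=\lambda'\circ f$) and coarsen colors ($\mu=f\circ\mu'$); they cannot promote a uniquely colored vertex to a labeled one.

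This is a genuine loss of information, not a limitation of the lemma. Take $S=((1,0))$. If $v$ merely has a unique color, the signature records only which colors occur. If $v$ is labeled, it records for each color whether it occurs among the neighbours of $v$. A two-vertex graph with versus without its edge separates the two. Hence after gluing you cannot produce the signature with the labels $\pi_j$, $j>i$, in place, and the induction breaks.

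\textbf{The repair.} Drop the $\pi$-labels altogether and track unattached boundaries only by colors. You half-suggest this, but then undercut it by ``projecting to the coordinate $j=i$'', which discards the coordinates $j>i$ you still need. Instead, maintain $\mathsf{sgn}_S(G[W_i],\lambda^{|W_i},(\mu\otimes\hat\pi_{6p}\otimes\cdots\otimes\hat\pi_{i+1})_{|W_i})$, where $\hat\pi_j$ is $\pi_j^{-1}$ on $V_0\cap V_j$ and $0$ elsewhere.

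\begin{itemize}
\item \emph{Base case.} Compute the signature of $G[V_0]$ by brute force. This is justified because $\sigma_0$ is the whole graph $G[V_0]$, not merely a core; it is not because all of its vertices are labeled.
\item \emph{Step $i$.} Apply item (iii) with $\mu\otimes\hat\pi_{6p}\otimes\cdots\otimes\hat\pi_{i+1}$ in the role of its $\mu$ and $\hat\pi_i$ in the role of its $\mu'$.
\item \emph{The $B$-side coloring.} The required coloring is $f\circ\mu_i$, with $f(a,x)=(a,f_{6p}(x),\dots,f_i(x))$. Here $f_j(x)=\pi_j^{-1}(\pi_i(x))$ if $x\neq 0$ and $\pi_i(x)\in V_0\cap V_j$, and $f_j(x)=0$ otherwise. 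All of this is read off from $\sigma_0$, so item (ii) applies.
\item \emph{The $B$-side labeling.} Item (i) turns $\lambda_i$ into $\mathsf{sort}(\lambda^{|V_i})$.
\end{itemize}

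Now the only labels are those of $\lambda$, and its labels inside $V_i$ are exactly those on $X_i$, so item (iii) applies verbatim. At $i=6p$ the coloring is just $\mu$. This is precisely the paper's proof.
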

\begin{proof}
Recall that how we constructed $\varGamma_S(G,\lambda,\mu)$.
Let $V_0,V_1,\dots,V_{6p}$, $\pi_1,\dots,\pi_{6p}$, $\mu_1,\dots,\mu_{6p}$, $X_1,\dots,X_{6p}$, $\gamma_1,\dots,\gamma_{6p}$, and $\lambda_1,\dots,\lambda_{6p}$ be as in the construction of $\varGamma_S(G,\lambda,\mu)$.
For $i \in [6p]$, denote by $\hat{\pi}_i:V(G) \rightarrow [4t]_0$ the function where $\hat{\pi}_i(v) = \pi_i^{-1}(v)$ for all $v \in V_0 \cap V_i$ and $\pi_i'(v) = 0$ otherwise. Also, write $G_i = G[V_0 \cup (\bigcup_{j=1}^i V_j)]$ for $i \in [6p]$. 

We show how to compute $\mathsf{sgn}_S(G,\lambda,\mu)$ if we are \textit{only} given $\varGamma_S(G,\lambda,\mu)$.
Specifically, we are going to prove that, for all $i \in [6p]_0$, $\mathsf{sgn}_S(G_i,\lambda^{|V(G_i)},(\mu \otimes \hat{\pi}_{6p} \otimes \hat{\pi}_{6p-1} \otimes \cdots \otimes \hat{\pi}_{i+1})_{|V(G_i)})$ can be computed using $\varGamma_S(G,\lambda,\mu)$.
We apply induction on $i$.
Consider the base case $i = 0$.
As $G_0 = G[V_0]$, what we want to compute is $\mathsf{sgn}_S(G[V_0],\lambda^{|V_0},(\mu \otimes \hat{\pi}_{6p} \otimes \hat{\pi}_{6p-1} \otimes \cdots \otimes \hat{\pi}_1)_{|V_0})$.
By construction, the first element of $\varGamma_S(G,\lambda,\mu)$ is $(G[V_0],\lambda^{|V_0} \oplus \pi_1 \oplus \cdots \oplus \pi_{6p},\mu_{|V_0})$, from which we can obtain the functions $\mu_{|V_0},\pi_1,\dots,\pi_{6p}$.
Knowing $\mu_{|V_0},\pi_1,\dots,\pi_{6p}$, we can further compute the function $(\mu \otimes \hat{\pi}_{6p} \otimes \hat{\pi}_{6p-1} \otimes \cdots \otimes \hat{\pi}_1)_{|V_0}$.
Therefore, from the first element of $\varGamma_S(G,\lambda,\mu)$, we can obtain the graph $(G[V_0],\lambda^{|V_0},(\mu \otimes \hat{\pi}_{6p} \otimes \hat{\pi}_{6p-1} \otimes \cdots \otimes \hat{\pi}_1)_{|V_0})$ itself and thus its $S$-signature.
Now suppose we have $\mathsf{sgn}_S(G_{i-1},\lambda^{|V(G_{i-1})},(\mu \otimes \hat{\pi}_{6p} \otimes \hat{\pi}_{6p-1} \otimes \cdots \otimes \hat{\pi}_i)_{|V(G_{i-1})})$, and we show how to use it to compute $\mathsf{sgn}_S(G_i,\lambda^{|V(G_i)},(\mu \otimes \hat{\pi}_{6p} \otimes \hat{\pi}_{6p-1} \otimes \cdots \otimes \hat{\pi}_{i+1})_{|V(G_i)})$.

For convenience, write $\tau = \mu \otimes \hat{\pi}_{6p} \otimes \hat{\pi}_{6p-1} \otimes \cdots \otimes \hat{\pi}_{i+1}$.
Then what we want to compute is just $\mathsf{sgn}_S(G_i,\lambda^{|V(G_i)},\tau_{|V(G_i)})$.
We can obtain $V_0 \cap V_1,\dots,V_0 \cap V_{6p}$ from $\varGamma_S(G,\lambda,\mu)$, as they are just $\text{Im}(\pi_1),\dots,\text{Im}(\pi_{6p})$.
For each $j \in [i]$, construct a function $f_j: [4t]_0 \rightarrow [4t]_0$ as
\begin{equation*}
    f_j(a) = \left\{
    \begin{array}{ll}
        \pi_j^{-1}(\pi_i(a)) & \text{if } a \neq 0 \text{ and } \pi_i(a) \in V_0 \cap V_j, \\
        0 & \text{otherwise}.
    \end{array}
    \right.
\end{equation*}
Then define $f: P \times [4t]_0 \rightarrow P \times [4t]_0^{i}$ as $f(a,x) = (a,f_1(x),\dots,f_i(x))$.
Now it is easy to verify that $(\tau \otimes \hat{\pi}_i)_{|V_i} = f \circ \mu_i$.
Thus, by (ii) of Lemma~\ref{lem-sgncomp}, we can compute $\mathsf{sgn}_S(G[V_i],\lambda_i,(\tau \otimes \hat{\pi}_i)_{|V_i})$ using only $f$ and $\mathsf{sgn}_S(G[V_i],\lambda_i,\mu_i)$.
Next, we show how to further compute $\mathsf{sgn}_S(G[V_i],\mathsf{sort}(\lambda^{|V_i}),(\tau \otimes \hat{\pi}_i)_{|V_i})$.
It suffices to construct a function $f$ (from what we know) satisfying that $\mathsf{sort}(\lambda^{|V_i}) = \lambda_i \circ f$, and then applies (i) of Lemma~\ref{lem-sgncomp}.
The domain of $\lambda^{|V_i}$ is $\lambda^{-1}(V_i) = \lambda^{-1}(V_0 \cap V_i) \subseteq [p]$, which can be known from $\varGamma_S(G,\lambda,\mu)$ as it just consists of the numbers $a \in [p]$ such that $\lambda(a) \in V_0 \cap V_i$.
Thus, we can construct the bijection $f': [|\lambda^{-1}(V_i)|] \rightarrow \lambda^{-1}(V_i)$ such that $\mathsf{sort}(\lambda^{|V_i}) = \lambda^{|V_i} \circ f'$.
Recall that the domain of $\gamma_i$ is $\pi_i^{-1}(X_i)$ and $\lambda_i = \mathsf{sort}(\gamma_i)$.
Note that $X_i$ can be known from $\varGamma_S(G,\lambda,\mu)$, since it is just $\text{Im}(\lambda^{|V_0}) \cap (V_0 \cap V_i)$.
Thus, we can construct the bijection $f'': [|X_i|] \rightarrow \pi_i^{-1}(X_i)$ such that $\gamma_i = \mathsf{sort}(\gamma_i) \circ f'' = \lambda_i \circ f''$.
Furthermore, it is easy to verify that $\lambda^{|V_i} = \gamma_i \circ (\hat{\pi}_i \circ \lambda^{|V_0})^{| \pi_i^{-1}(X_i)}$, and $(\hat{\pi}_i \circ \lambda^{|V_0})^{| \pi_i^{-1}(X_i)}$ can also be known from $\varGamma_S(G,\lambda,\mu)$ simply because $\hat{\pi}_i \circ \lambda^{|V_0} =(\hat{\pi}_i)_{|V_0} \circ \lambda^{|V_0}$.
Setting $f = f'' \circ (\hat{\pi}_i \circ \lambda^{|V_0})^{| \pi_i^{-1}(X_i)} \circ f'$, we have $\mathsf{sort}(\lambda^{|V_i}) = \lambda_i \circ f$.
Applying (i) of Lemma~\ref{lem-sgncomp}, we can then compute $\mathsf{sgn}_S(G[V_i],\mathsf{sort}(\lambda^{|V_i}),(\tau \otimes \hat{\pi}_i)_{|V_i})$ using only $f$ and $\mathsf{sgn}_S(G[V_i],\lambda_i,(\tau \otimes \hat{\pi}_i)_{|V_i})$.
Finally, using $\mathsf{sgn}_S(G[V_i],\lambda_i,(\tau \otimes \hat{\pi}_i)_{|V_i})$, we compute $\mathsf{sgn}_S(G_i,\lambda^{|V(G_i)},\tau_{|V(G_i)})$ as follows.
By our induction hypothesis, we already have $\mathsf{sgn}_S(G_{i-1},\lambda^{|V(G_{i-1})},(\tau \otimes \hat{\pi}_i)_{|V(G_{i-1})})$ in hand.
For convenience, let us write $A = V(G_{i-1}) = V_0 \cup (\bigcup_{j=1}^{i-1} V_j)$ and $B = V_i$.
Then $(A,B)$ is a separation of $G_i$ by (iii) of Lemma~\ref{lem-decomp}.
Observe that $(\lambda^{|V(G_i)})^{-1}(A) = [p]$ and $(\lambda^{|V(G_i)})^{-1}(B) = \lambda^{-1}(V_0 \cap V_i)$.
Furthermore, $\mathsf{sort}((\lambda^{|V(G_i)})^{|A}) = \lambda^{|V(G_{i-1})}$ and $\mathsf{sort}((\lambda^{|V(G_i)})^{|B}) = \mathsf{sort}(\lambda^{|V_i})$.
Therefore, by (iii) of Lemma~\ref{lem-sgncomp}, given $[p]$, $\lambda^{-1}(V_0 \cap V_i)$, $\mathsf{sgn}_S(G_{i-1},\lambda^{|V(G_{i-1})},(\tau \otimes \hat{\pi}_i)_{|V(G_{i-1})})$ and $\mathsf{sgn}_S(G[V_i],\mathsf{sort}(\lambda^{|V_i}),(\tau \otimes \hat{\pi}_i)_{|V_i})$, we can compute $\mathsf{sgn}_S(G_i,\lambda^{|V(G_i)},\tau_{|V(G_i)})$.
This completes our induction argument.

Now we see $\mathsf{sgn}_S(G_i,\lambda^{|V(G_i)},(\mu \otimes \hat{\pi}_{6p} \otimes \hat{\pi}_{6p-1} \otimes \cdots \otimes \hat{\pi}_{i+1})_{|V(G_i)})$ can be computed using only $\varGamma_S(G,\lambda,\mu)$, for all $i \in [6p]$.
Note that $(G,\lambda,\mu) = (G_i,\lambda^{|V(G_i)},(\mu \otimes \hat{\pi}_{6p} \otimes \hat{\pi}_{6p-1} \otimes \cdots \otimes \hat{\pi}_{i+1})_{|V(G_i)})$ for $i = 6p$.
Thus, we can compute $\mathsf{sgn}_S(G,\lambda,\mu)$ given only $\varGamma_S(G,\lambda,\mu)$.
\end{proof}

The above lemma implies that if two graphs in $\mathcal{G}_{p,P,t}$ have the same $S$-characterization sequences, then they have the same $S$-signatures.
Based on this, we can establish recursive relations for $\delta_S$-values and $\Delta_S$-values for $|S| \geq 2$.
\begin{lemma} \label{lem-recurrence}
    Let $S$ be a sequence of pairs of natural numbers.
    Suppose $|S| \geq 2$ and $S = ((k,s))+S'$.
    Then we have the bounds
    \begin{enumerate}[(i)]
        \item $\delta_S(\mathcal{G}_{p,P,t}) \leq ((p+k)t)^{O((p+k)t^2)} \cdot (2^s|P|)^{O((p+k)t)} \cdot (\Delta_{S'}(\mathcal{G}_{p',P',t}))^{6(p+k)}$,
        \item $\Delta_S(\mathcal{G}_{p,P,t}) \leq 2^{((p+k)t)^{O((p+k)t^2)} \cdot (2^s|P|)^{O((p+k)t)} \cdot (\Delta_{S'}(\mathcal{G}_{p',P',t}))^{6(p+k)}}$,
    \end{enumerate}
    where $p' = \min\{p+k,4t\}$ and $P' = P \times \{0,1\}^s\times[4t]_0$.
\end{lemma}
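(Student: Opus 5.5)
To prove (i), I fix $(G,\lambda,\mu) \in \mathcal{G}_{p,P,t}$. By definition, $|\mathsf{sgn}_S(G,\lambda,\mu)|$ equals the number of distinct signatures $\mathsf{sgn}_{S'}(G,\lambda\oplus\lambda',\mu\otimes\mu')$ over all choices $(\lambda',\mu')\in\varLambda_{G,k}\times U_{G,\{0,1\}^s}$. Each graph $(G,\lambda\oplus\lambda',\mu\otimes\mu')$ lies in $\mathcal{G}_{p+k,P\times\{0,1\}^s,t}$. By Lemma~\ref{lem-character}, applied with $S'$ in place of $S$, its $S'$-signature is determined by its $S'$-characterization sequence $\varGamma_{S'}(G,\lambda\oplus\lambda',\mu\otimes\mu')$. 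It therefore suffices to count the possible characterization sequences arising from graphs in $\mathcal{G}_{p+k,P\times\{0,1\}^s,t}$ whose $[p]$-part of the labeling is $\lambda$.

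Such a sequence has $6(p+k)+1$ entries, which I count in two parts.

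\textbf{First entry.} This is $(G[V_0],\cdot,\cdot)$, a $(p+k+24(p+k)t)$-labeled, $(P\times\{0,1\}^s)$-colored graph of treewidth at most $t$ on $O((p+k)t)$ vertices. By Fact~\ref{fact-numLC}, the number of possibilities is
\[
((p+k)t)^{O((p+k)t\cdot t+(p+k)t)}\cdot(2^s|P|)^{O((p+k)t)},
\]
which matches the first two factors of the claimed bound. I use $\min\{n,t\}\le t$, and the vertex count $n=O((p+k)t)$ enters the base.

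\textbf{Remaining entries.} The $(i+1)$-th entry is $\mathsf{sgn}_{S'}(G[V_i],\lambda_i,\mu_i)$. Here $(G[V_i],\lambda_i,\mu_i)$ is an $|X_i|$-labeled, $(P\times\{0,1\}^s\times[4t]_0)$-colored graph with $|X_i|\le 4\min\{p+k,t\}$. The key remark recorded before Lemma~\ref{lem-character} is that both $|X_i|$ and $\mathsf{core}(G[V_i],\lambda_i,\mu_i)$ are determined by the first entry. So once the first entry is fixed, the $i$-th signature ranges over the $S'$-signatures of a single $\sim$-equivalence class. That is exactly what $\Delta_{S'}$ counts, giving at most $\Delta_{S'}(\mathcal{G}_{|X_i|,P',t})$ choices.

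The step I expect to need the most care is matching the parameters to the lemma's $p'=\min\{p+k,4t\}$. I first need $\Delta_{S'}(\mathcal{G}_{q,P',t})\le\Delta_{S'}(\mathcal{G}_{p',P',t})$ for $q\le p'$. I plan to get this by padding the labeling: append copies of a fixed labeled vertex, which is injective on equivalence classes by Lemma~\ref{lem-sgncomp}(i), since forgetting the padding recovers the original signature. The bound $|X_i|\le\min\{p+k,4t\}$ holds because $X_i\subseteq \mathrm{Im}(\lambda\oplus\lambda')\cap(V_0\cap V_i)$.

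Multiplying the two counts gives (i). This is a bound on the number of characterization sequences compatible with the fixed $\lambda$, and hence on $|\mathsf{sgn}_S(G,\lambda,\mu)|$.

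For (ii), fix an equivalence class $\mathcal{G}'\in\varPhi(\mathcal{G}_{p,P,t})$. Every $\mathsf{sgn}_S(G,\lambda,\mu)$ with $(G,\lambda,\mu)\in\mathcal{G}'$ is a subset of the set $K$ of $S'$-signatures of graphs $(H,\gamma,\tau)\in\mathcal{G}_{p+k,P\times\{0,1\}^s,t}$ with $\mathsf{core}(H,\gamma_{|[p]},\tau)$ equal to the common core of the class (restricted appropriately to the first coordinate of the coloring). The counting argument from (i) applies verbatim to $K$: the first-entry count only used that this core is fixed, and arguably the count is even over all such first entries. Hence $|K|$ is bounded by the right-hand side of (i), and the number of distinct signatures in the class is at most $2^{|K|}$, which gives (ii).
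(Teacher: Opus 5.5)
Your proposal is correct and is essentially the paper's proof. Signatures are controlled by $S'$-characterization sequences of graphs in $\mathcal{G}_{p+k,P\times\{0,1\}^s,t}$ via Lemma~\ref{lem-character}. The first entry is counted with Fact~\ref{fact-numLC}. Each of the remaining $6(p+k)$ entries has at most $\Delta_{S'}(\mathcal{G}_{p',P',t})$ values, because the first entry fixes $q_i$ and the core. Part (ii) then follows from $\Delta_S\le 2^{|K|}$; the paper takes $K$ to be all $S'$-signatures of $\mathcal{G}_{p+k,P\times\{0,1\}^s,t}$ without your core restriction, which gives the same bound.

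What you add is a padding argument for $\Delta_{S'}(\mathcal{G}_{q,P',t})\le\Delta_{S'}(\mathcal{G}_{p',P',t})$, which the paper asserts without proof. It works for $q\ge 1$ via Lemma~\ref{lem-sgncomp}(i). However, $q=|X_i|=0$ does occur, and then there is no labeled vertex to copy. To handle it, put all padding labels on an arbitrary vertex and group the graphs by that vertex's color. This gives at most $1+|P'|\cdot\Delta_{S'}(\mathcal{G}_{p',P',t})$ choices per entry, so only a harmless extra factor $(1+|P'|)^{6(p+k)}$ that the $O(\cdot)$ terms absorb.
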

\begin{proof}
Define a set $K = \{\mathsf{sgn}_{S'}(G',\lambda',\mu'): (G',\lambda',\mu') \in \mathcal{G}_{p+k,P\times \{0,1\}^s,t}\}$.
By construction, we have $\mathsf{sgn}_S(G,\lambda,\mu) \subseteq K$ for all $(G,\lambda,\mu) \in \mathcal{G}_{p,P,t}$.
Thus, $\{\mathsf{sgn}_S(G,\lambda,\mu): (G,\lambda,\mu) \in \mathcal{G}_{p,P,t}\} \subseteq 2^K$.
It then follows that $\delta_S(\mathcal{G}_{p,P,t}) \leq |K|$ and $\Delta_S(\mathcal{G}_{p,P,t}) \leq |\{\mathsf{sgn}_S(G,\lambda,\mu): (G,\lambda,\mu) \in \mathcal{G}_{p,P,t}\}| \leq 2^{|K|}$.
So it suffices to bound $|K|$.
By Lemma~\ref{lem-character}, $|K|$ is bounded by the number of different $S'$-characterization sequences the graphs in $\mathcal{G}_{p+k,P\times \{0,1\}^s,t}$ have.

In the $S'$-characterization sequence of a graph in $\mathcal{G}_{p+k,P\times \{0,1\}^s,t}$, the first element is a $((p+k)+24(p+k)t)$-labeled and $(P\times \{0,1\}^s)$-colored graph, which has $O((p+k)t)$ vertices and treewidth at most $t$.
By Fact~\ref{fact-numLC}, the number of such graphs is bounded by $((p+k)t)^{O((p+k)t^2)} \cdot (2^s|P|)^{O((p+k)t)}$.
Fixing such a graph $(H,\sigma,\tau) \in \mathcal{G}_{(p+k)+24(p+k)t,P\times \{0,1\}^s,t}$, we count the number of different $S'$-characterization sequences starting with $(H,\sigma,\tau)$ the graphs in $\mathcal{G}_{p+k,P\times \{0,1\}^s,t}$ have.
Consider an $S'$-characterization sequence whose first element is $(H,\eta,\tau)$.
Denote by $e_1,\dots,e_{6(p+k)}$ be the remaining elements in the sequence.
By construction, each of $e_i$ is the $S'$-signature of some $q_i$-labeled and $P'$-colored graph $(H_i,\eta_i,\tau_i)$ where $q_i \leq \min\{p+k,4t\} = p'$.
As observed before, both $q_i$ and $\mathsf{core}(H_i,\eta_i,\tau_i)$ are determined by $(H,\eta,\tau)$; we use $\sigma_i$ to denote this core.
Therefore, the number of possible choices for $e_i$ is bounded by
\begin{equation*}
    |\{\mathsf{sgn}_{S'}(H',\eta',\tau'): (H',\eta',\tau') \in \mathcal{G}_{q_i,P',t} \text{ and } \mathsf{core}(H',\eta',\tau') = \sigma_i\}| \leq \Delta_{S'}(\mathcal{G}_{q_i,P',t}) \leq \Delta_{S'}(\mathcal{G}_{p',P',t}).
\end{equation*}
So the total number of choices for $e_1,\dots,e_{6(p+k)}$ is at most $(\Delta_{S'}(\mathcal{G}_{p',P',t}))^{6(p+k)}$.
It follows that
\begin{equation*}
    |K| \leq ((p+k)t)^{O((p+k)t^2)} \cdot (2^s|P|)^{O((p+k)t)} \cdot (\Delta_{S'}(\mathcal{G}_{p',P',t}))^{6(p+k)}.
\end{equation*}
As such, the bounds for $\delta_S(\mathcal{G}_{p,P,t})$ and $\Delta_S(\mathcal{G}_{p,P,t})$ follows.
\end{proof}

\begin{corollary} \label{cor-recurrence}
    Let $S = ((k,s))+S'$ be a sequence of pairs of natural numbers.
    There exists a well-behaved almost linear function $f: \mathbb{N} \rightarrow \mathbb{N}$ such that
    \begin{enumerate}[(i)]
        \item {\small $\delta_S(\mathcal{G}_{p,P,t}) \leq 2^{f(p)} + 2^{f(k)} + \exp^{(2)}(f(\log t)) + \exp^{(2)}(f(s)) + \exp^{(2)}(f(\log|P|)) + (\Delta_{S'}(\mathcal{G}_{p',P',t}))^{7(p+k)}$},
        \item $\Delta_S(\mathcal{G}_{p,P,t}) \leq 2^{2^{f(p)} + 2^{f(k)} + \exp^{(2)}(f(\log t)) + \exp^{(2)}(f(s)) + \exp^{(2)}(f(\log|P|)) + (\Delta_{S'}(\mathcal{G}_{p',P',t}))^{7(p+k)}}$,
    \end{enumerate}
    where $p' = \min\{p+k,4t\}$ and $P' = P \times \{0,1\}^s\times[4t]_0$.
\end{corollary}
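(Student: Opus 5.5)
The plan is to start from Lemma~\ref{lem-recurrence} and rewrite its product-form bound as a sum of towers, using Facts~\ref{fact-basechange1}, \ref{fact-basechange2} and \ref{fact-decompose}. Item (ii) follows from item (i) at once: the recurrence for $\Delta_S$ in Lemma~\ref{lem-recurrence} is $2$ raised to exactly the quantity that bounds $\delta_S$, and the map $x \mapsto 2^x$ is monotone. So the task is to show
\begin{equation*}
((p+k)t)^{O((p+k)t^2)} \cdot (2^s|P|)^{O((p+k)t)} \cdot (\Delta_{S'})^{6(p+k)} \leq 2^{f(p)} + 2^{f(k)} + \exp^{(2)}(f(\log t)) + \exp^{(2)}(f(s)) + \exp^{(2)}(f(\log|P|)) + (\Delta_{S'})^{7(p+k)},
\end{equation*}
where $\Delta_{S'}$ abbreviates $\Delta_{S'}(\mathcal{G}_{p',P',t})$. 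Lemma~\ref{lem-recurrence} assumes $|S| \geq 2$, while the corollary does not. For $|S| = 1$ I will rerun the same counting argument with $S' = ()$. In that case the elements $e_i$ are cores determined by the first element, so $\Delta_{()} = 1$ and the same inequality holds.

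I will handle the product $A \cdot B$, with $A$ the first two factors and $B = (\Delta_{S'})^{6(p+k)}$, by the dichotomy $A \cdot B \leq \max\{A^2, B^2\} \leq A^2 + B^2$. Note that $B^2 = (\Delta_{S'})^{12(p+k)}$ exceeds the allowed $7(p+k)$ exponent. So I will instead use $A \cdot B \leq A^{7} + B^{7/6}$: either $A \leq B^{1/6}$, in which case $AB \leq B^{7/6} = (\Delta_{S'})^{7(p+k)}$, or $B < A^6$, in which case $AB < A^7$. This accounts for the term $(\Delta_{S'})^{7(p+k)}$, and the same case split should be applied whenever $\Delta_{S'} \geq 2$.

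It remains to bound $A^7 = 2^{O(((p+k)t^2)\log((p+k)t) + (s + \log|P|)(p+k)t)}$. Let $x = p+k$. I will split the exponent as a product of a quantity in $x$ and a quantity in $t$, $s$ or $\log|P|$, and apply the elementary inequality $2^{ab} \leq 2^{a^2} + 2^{b^2}$, or more precisely $2^{ab} \leq 2^{a\log a \cdot c} + 2^{2^{b}}$, choosing whichever form is appropriate. For instance, $2^{x \cdot t^2 \log t}$ is at most $2^{x^{1+o(1)}}$ when $x \geq 2^{t^3}$, and otherwise at most $2^{2^{O(t^3)}}$. The latter is $\exp^{(2)}(O(\log t))$ and is absorbed into $\exp^{(2)}(f(\log t))$ once $f(y) \geq Cy$. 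Similarly, $2^{xs}$ is at most $2^{x\log x}$ if $s \leq \log x$, and otherwise at most $2^{2^s s}$, which is at most $\exp^{(2)}(f(s))$; the same holds for $\log|P|$. After this, $2^{f(p+k)} \leq 2^{f(2\max\{p,k\})} \leq 2^{f(p)\cdot c} + 2^{f(k)\cdot c}$ by well-behavedness, and the constants are absorbed by redefining $f$. Finally, Fact~\ref{fact-decompose} combines the constant-factor and polynomial blow-ups arising from summing and powering these terms into a single almost linear $f$.

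The main obstacle is the bookkeeping of the $O(\cdot)$ constants in the exponent of $(\Delta_{S'})^{6(p+k)}$ against the target exponent $7(p+k)$. The multiplicative slack there is only $7/6$, so the split must send all of the polynomial junk into $A$ and never raise $B$ beyond exponent $7/6$. The $A^{7}$ side then swallows arbitrary constant powers through the almost linear $f$.
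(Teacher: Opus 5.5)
Your outline matches the paper's proof. You derive (ii) from (i) because Lemma~\ref{lem-recurrence}(ii) is $2$ raised to the bound in (i). You turn the product into a sum using the slack between the exponents $6(p+k)$ and $7(p+k)$; your split $AB\le A^7+B^{7/6}$ is a clean justification of the paper's first, unexplained equality. Your remark that the counting also works for $|S|=1$ with $\Delta_{()}=1$ is correct. The gap is in the one step with real content, the treatment of $t$.

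You split $2^{xt^2\log t}$ at $x\ge 2^{t^3}$ and claim the residual $2^{2^{O(t^3)}}$ is $\exp^{(2)}(O(\log t))$. It is not. In fact $2^{2^{\Theta(t^3)}}=\exp^{(2)}(\Theta(t^3))=\exp^{(3)}(\Theta(\log t))$, whereas $\exp^{(2)}(f(\log t))=2^{2^{(\log t)^{1+o(1)}}}$ for almost linear $f$. Your bound would therefore carry a term one exponential level too high in $\log t$. That is exactly what the corollary, and downstream the terms $\exp^{(i)}(\hat O(k_i\log t))$ of Theorem~\ref{thm-MSO}, must avoid. Copying your $s$-split does not help. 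That split works because $s$ appears linearly and the target is $\exp^{(2)}(f(s))$. But $t$ enters polynomially, as $2^{O(\log t)}$, so the comparison must be between $\log x$ and $\log t$. A fixed polynomial threshold $x\ge t^{C}$ does not work either: it yields only $x^{1+3/C}$, not $x^{1+o(1)}$.

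The correct split uses a slowly growing threshold.
\begin{itemize}
\item If $3\log t\le \log x/\log\log x$, then $t^3\le x^{1/\log\log x}$, so $xt^3\log x\le x^{1+o(1)}$.
\item Otherwise $\log x=O(\log t\cdot\log\log t)$, so $xt^3\log x\le 2^{O(\log t\log\log t)}$ and $2^{O(xt^3\log x)}\le \exp^{(2)}(\hat O(\log t))$.
\end{itemize}
This is exactly what the paper gets from Fact~\ref{fact-decompose} with $r=1$, $b=2$, $a_1=3\log t$ and $z=O(x\log x)$:
\[
2^{O(xt^3\log x)}=(\exp^{(2)}(3\log t))^{O(x\log x)}\le 16\,\bigl(2^{f(O(x\log x))}+\exp^{(2)}(f(3\log t))\bigr).
\]
You also need to handle the three-way cross term $(s+\log|P|)\,xt$. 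First reduce it with your $s$-split, taking $a=xt$. This leaves either terms of the form $\exp^{(2)}(O(s+\log|P|))$ or the term $2^{O(xt\log(xt))}$. The $t$-dependence in the latter is then handled by the same quasi-polynomial split.
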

\begin{proof}
We modify the bounds in Lemma~\ref{lem-recurrence} as follows.
Observe that
\begin{align*}
    & ((p+k)t)^{O((p+k)t^2)} \cdot (2^s|P|)^{O((p+k)t)} \cdot (\Delta_{S'}(\mathcal{G}_{p',P',t}))^{6(p+k)} \\
    =\ & ((p+k)t)^{O((p+k)t^2)} + (2^s|P|)^{O((p+k)t)} + (\Delta_{S'}(\mathcal{G}_{p',P',t}))^{7(p+k)} \\
    =\ & 2^{O((p+k)t^3 \log (p+k))} + 2^{O((s+\log |P|) \cdot (p+k)t)} + (\Delta_{S'}(\mathcal{G}_{p',P',t}))^{7(p+k)} \\
    =\ & 2^{O((p+k)t^3 \log (p+k))} + 2^{2^{O(s + \log |P|)}} + 2^{O((p+k)t^2 \log(p+k))} + (\Delta_{S'}(\mathcal{G}_{p',P',t}))^{7(p+k)},
\end{align*}
where the last $=$ follows from the fact $(s+\log |P|) \cdot (p+k)t \leq \max\{2^{s+\log|P|},(p+k)t^2 \log(p+k)\}$.
Fact~\ref{fact-decompose} implies $2^{O((p+k)t^3 \log (p+k))} = 2^{\hat{O}(p+k)} + 2^{2^{\hat{O}(\log t)}} = 2^{\hat{O}(p)} + 2^{\hat{O}(k)} + 2^{2^{\hat{O}(\log t)}}$.
Therefore,
\begin{align*}
    & 2^{O((p+k)t^3 \log (p+k))} + 2^{2^{O(s + \log |P|)}} + 2^{O((p+k)t^2 \log(p+k))} + (\Delta_{S'}(\mathcal{G}_{p',P',t}))^{7(p+k)} \\
    =\ & 2^{O((p+k)t^3 \log (p+k))} + 2^{2^{O(s + \log |P|)}} + (\Delta_{S'}(\mathcal{G}_{p',P',t}))^{7(p+k)} \\
    =\ & 2^{\hat{O}(p+k)} + 2^{2^{\hat{O}(\log t)}} + 2^{2^{\hat{O}(s)}} + 2^{2^{\hat{O}(\log |P|)}} + (\Delta_{S'}(\mathcal{G}_{p',P',t}))^{7(p+k)}.
\end{align*}
As such, the desired function $f$ exists.
\end{proof}

\subsubsection{Bounding $\delta$-values and $\Delta$-values}

Using the recurrences in the last section, we can now establish bounds for $\delta_S(\mathcal{G}_{p,P,t})$ and $\Delta_S(\mathcal{G}_{p,P,t})$ by induction on $|S|$.
Specifically, we shall consider $|S| = 1$ as the base case and apply the recurrences for $|S| \geq 2$.
Suppose $S = ((k,s))$.
For each $\sigma \in \mathcal{G}_{p,P,t}$, let us define $\mathcal{H}_{\sigma,k}$ as the set of all $(p+k)$-labeled and $P$-colored graphs $(H,\gamma,\tau)$ with treewidth at most $t$ satisfying that $|V(H) \backslash \text{Im}(\gamma_{|[p]})| \leq k$ and $\mathsf{core}(H,\gamma_{|[p]},\tau) = \sigma$.
By Fact~\ref{fact-num}, we have $|\mathcal{H}_{\sigma,k}| = (p+k)^{O(\min\{k,t\} \cdot (p+k))} \cdot |P|^{O(k)}$.
Now consider a graph $(G,\lambda,\mu) \in \mathcal{G}_{p,P,t}$.
The following observation relates $\mathsf{sgn}_S(G,\lambda,\mu)$ with the set $\mathcal{H}_\sigma$ for $\sigma = \mathsf{core}(G,\lambda,\mu)$, and also bounds $|\mathsf{sgn}_S(G,\lambda,\mu)|$.

\begin{lemma} \label{lem-base}
    Let $S = ((k,s))$.
    Then for every graph $(G,\lambda,\mu) \in \mathcal{G}_{p,P,t}$, there exists a unique subset $\mathcal{H}' \subseteq \mathcal{H}_{\sigma,k}$, where $\sigma = \mathsf{core}(G,\lambda,\mu)$, such that 
    \begin{equation*}
        \mathsf{sgn}_S(G,\lambda,\mu) = \left\{(H,\gamma,\tau \otimes \tau'): (H,\gamma,\tau) \in \mathcal{H}' \textnormal{ and } \tau' \in U_{H,\{0,1\}^s}\right\}.
    \end{equation*}
\end{lemma}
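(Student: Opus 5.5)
We unfold the definition of the signature for $S=((k,s))$ and make each resulting term explicit. By definition,
\begin{equation*}
\mathsf{sgn}_S(G,\lambda,\mu)=\{\mathsf{core}(G,\lambda\oplus\lambda',\mu\otimes\mu'):\lambda'\in\varLambda_{G,k},\ \mu'\in U_{G,\{0,1\}^s}\}.
\end{equation*}
Fix a pair $(\lambda',\mu')$ and set $W=\text{Im}(\lambda\oplus\lambda')$. The core is $(G[W],\lambda\oplus\lambda',(\mu\otimes\mu')_{|W})$. Since $(\mu\otimes\mu')_{|W}=\mu_{|W}\otimes\mu'_{|W}$, this graph equals $(H,\gamma,\tau\otimes\tau')$ with $(H,\gamma,\tau)=\mathsf{core}(G,\lambda\oplus\lambda',\mu)$ and $\tau'=\mu'_{|W}$.

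First we check that $(H,\gamma,\tau)\in\mathcal{H}_{\sigma,k}$. Its treewidth is at most $t$ because $H$ is an induced subgraph of $G$. The vertices outside $\text{Im}(\gamma_{|[p]})=\text{Im}(\lambda)$ lie in $\text{Im}(\lambda')$, so there are at most $k$ of them. Finally, $\mathsf{core}(H,\gamma_{|[p]},\tau)=(G[\text{Im}(\lambda)],\lambda,\mu_{|\text{Im}(\lambda)})=\sigma$.

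Now define
\begin{equation*}
\mathcal{H}'=\{\mathsf{core}(G,\lambda\oplus\lambda',\mu):\lambda'\in\varLambda_{G,k}\}\subseteq\mathcal{H}_{\sigma,k}.
\end{equation*}
We prove the two inclusions of the claimed equality.
\begin{enumerate}
\item For ``$\subseteq$'': each element of the signature has the form $(H,\gamma,\tau\otimes\tau')$ with $(H,\gamma,\tau)\in\mathcal{H}'$, as computed above.
\item For ``$\supseteq$'': take $(H,\gamma,\tau)\in\mathcal{H}'$, witnessed by some $\lambda'$, and any $\tau'\in U_{H,\{0,1\}^s}$. Under the identification of $V(H)$ with $W$, extend $\tau'$ arbitrarily (say by $0^s$) to a coloring $\mu'$ of all of $V(G)$. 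Then $\mathsf{core}(G,\lambda\oplus\lambda',\mu\otimes\mu')=(H,\gamma,\tau\otimes\tau')$.
\end{enumerate}
The key point in the second inclusion is that the core depends only on $\mu'$ restricted to labeled vertices. Every $\tau'$ on $W$ is therefore realizable, independently of the choice of $\lambda'$.

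The main subtlety is working up to isomorphism. If $\mathcal{H}'$ contains an isomorphism class, then every representative gives the same set $\{(H,\gamma,\tau\otimes\tau'):\tau'\}$. The reason is that an isomorphism of $(H,\gamma,\tau)$ transports colorings $\tau'$ bijectively, so the set does not depend on the representative.

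Uniqueness follows by projection. From the signature we recover $\mathcal{H}'$ by mapping each element $(H,\gamma,\tau\otimes\tau')$ to $(H,\gamma,\tau)$, i.e., by dropping the $\{0,1\}^s$ component of the coloring. Suppose two subsets $\mathcal{H}'_1,\mathcal{H}'_2\subseteq\mathcal{H}_{\sigma,k}$ both produce the same set. Applying this projection to that common set returns exactly $\mathcal{H}'_1$ and also exactly $\mathcal{H}'_2$, so they are equal. This projection is well defined on isomorphism classes, since an isomorphism of labeled and colored graphs preserves each coordinate of the coloring.
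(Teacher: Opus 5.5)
Your proof is correct and follows the paper's argument. The paper takes the same set $\mathcal{H}'=\{(G[\text{Im}(\lambda\oplus\lambda')],\lambda\oplus\lambda',\mu_{|\text{Im}(\lambda\oplus\lambda')}):\lambda'\in\varLambda_{G,k}\}$ and rewrites the signature using $(\mu\otimes\mu')_{|W}=\mu_{|W}\otimes\mu'_{|W}$, exactly as you do; your explicit check of membership in $\mathcal{H}_{\sigma,k}$, your handling of isomorphism classes, and your projection argument for uniqueness are all sound, and they fill in details (uniqueness in particular) that the paper leaves unstated.
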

\begin{proof}
Let $\mathcal{H}'= \{(G[\text{Im}(\lambda\oplus \lambda')],\lambda\oplus \lambda',\mu_{|\text{Im}(\lambda\oplus \lambda')}): \lambda' \in \varLambda_{G,k}\}$.
Clearly, $\mathcal{H}' \subseteq \mathcal{H}_{\sigma,k}$.
We prove that $\mathcal{H}'$ satisfies the desired condition.
By definition, 
\begin{align*}
    \mathsf{sgn}_S(G,\lambda,\mu) &= \left\{G[\text{Im}(\lambda\oplus \lambda')],\lambda\oplus \lambda',(\mu \otimes \mu')_{|\text{Im}(\lambda\oplus \lambda')}): \lambda' \in \varLambda_{G,k} \text{ and } \mu' \in U_{G,\{0,1\}^s}\right\} \\
    &= \{(G[\text{Im}(\lambda\oplus \lambda')],\lambda\oplus \lambda',\mu_{|\text{Im}(\lambda\oplus \lambda')} \otimes \tau'): \lambda' \in \varLambda_{G,k} \text{ and } \tau' \in U_{G[\text{Im}(\lambda\oplus \lambda')],\{0,1\}^s}\} \\
    &= \{(H,\gamma,\tau \otimes \tau'): (H,\gamma,\tau) \in \mathcal{H}' \text{ and } \tau' \in U_{H,\{0,1\}^s}\},
\end{align*}
which completes the proof.
\end{proof}

\begin{corollary} \label{cor-base}
    There exists a well-behaved almost linear function $f: \mathbb{N} \rightarrow \mathbb{N}$ such that
    \begin{enumerate}[(i)]
        \item $\delta_{((k,s))}(\mathcal{G}_{p,P,t}) \leq 8^{sk} + 8^{sp} + 2^{f(k \log |P|)} + 2^{f(\min\{k,t\} \cdot k)} + 2^{f(\min\{k,t\} \cdot p)}$,
        \item $\Delta_{((k,s))}(\mathcal{G}_{p,P,t}) \leq \exp^{(2)}(f(k\log|P|)) + \exp^{(2)}(f(\min\{k,t\} \cdot k)) + \exp^{(2)}(f(\min\{k,t\} \cdot p))$.
    \end{enumerate}
\end{corollary}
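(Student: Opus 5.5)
Both bounds follow from Lemma~\ref{lem-base} combined with Fact~\ref{fact-num}, after which the remaining work is simplifying exponents with Fact~\ref{fact-decompose}.

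For item (i), fix $(G,\lambda,\mu)\in\mathcal{G}_{p,P,t}$ and set $\sigma=\mathsf{core}(G,\lambda,\mu)$. By Lemma~\ref{lem-base}, $\mathsf{sgn}_{((k,s))}(G,\lambda,\mu)$ consists of the triples $(H,\gamma,\tau\otimes\tau')$ with $(H,\gamma,\tau)\in\mathcal{H}'\subseteq\mathcal{H}_{\sigma,k}$ and $\tau'\in U_{H,\{0,1\}^s}$. Hence its cardinality is at most $|\mathcal{H}_{\sigma,k}|\cdot\max_H 2^{s|V(H)|}$. Every $H$ here has at most $p+k$ vertices, so the coloring factor is at most $2^{s(p+k)}$. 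By Fact~\ref{fact-num}, the first factor is at most
\[
(p+k)^{O(\min\{k,t\}(p+k))}\cdot|P|^{O(k)}.
\]
I then convert the product into a sum. A product $xy$ is at most $x^2+y^2$, and I apply this repeatedly: $2^{s(p+k)}\le 8^{sk}+8^{sp}$ after absorbing constants (this explains the base $8$), and $|P|^{O(k)}=2^{O(k\log|P|)}$. The term $(p+k)^{O(\min\{k,t\}(p+k))}$ equals $2^{O(\min\{k,t\}(p+k)\log(p+k))}$. I split $(p+k)\log(p+k)$ according to whether $p\le k$ or $p>k$. This gives at most $2^{f(\min\{k,t\}k)}+2^{f(\min\{k,t\}p)}$ for a well-behaved almost linear $f$; it suffices to take $f(x)=Cx\log(x+2)$, or to use Fact~\ref{fact-decompose}. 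One subtlety in the case $p>k$: the exponent is $\min\{k,t\}\,p\log p$, and the factor $\log p$ must be charged to $f(\min\{k,t\}p)$. This is valid because $\min\{k,t\}\,p\ge p$ whenever $\min\{k,t\}\ge1$, and when $k=0$ that term is trivially bounded. The constants hidden in $O(\cdot)$ can be absorbed by choosing $f$ well-behaved.

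For item (ii), I use the uniqueness part of Lemma~\ref{lem-base}. Within one equivalence class of $\varPhi(\mathcal{G}_{p,P,t})$ the core $\sigma$ is fixed, and the signature is determined by the subset $\mathcal{H}'\subseteq\mathcal{H}_{\sigma,k}$. So the number of distinct signatures is at most $2^{|\mathcal{H}_{\sigma,k}|}$. Crucially, this bound is independent of $s$, since $s$ only enters through the fixed expansion by $\tau'$. Bounding $|\mathcal{H}_{\sigma,k}|$ as above gives
\[
|\mathcal{H}_{\sigma,k}|\le 2^{f(k\log|P|)}+2^{f(\min\{k,t\}k)}+2^{f(\min\{k,t\}p)}.
\]
Exponentiating, Fact~\ref{fact-basechange1} with $m=3$ turns $2^{\sum_i 2^{a_i}}$ into a sum of $\exp^{(2)}_{6}(a_i)$. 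Fact~\ref{fact-basechange2} then changes the base back to $2$, at the cost of replacing $a_i$ by $a_i\log(a_i+1)$ and a multiplicative constant. Both costs are absorbed by enlarging $f$, which stays almost linear and well-behaved.

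The main obstacle is the bookkeeping in item (i): I need each of the five terms to carry only one product of parameters. In particular, the $\log(p+k)$ factor coming from Fact~\ref{fact-num} has to be distributed without producing a cross term such as $\min\{k,t\}\cdot p\log k$ with an unbounded $\log k$. The case split on $p\le k$ versus $p>k$ is what handles this.
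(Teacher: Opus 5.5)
Your proposal is correct and follows the paper's proof: Lemma~\ref{lem-base} with Fact~\ref{fact-num} gives $\delta_{((k,s))}(\mathcal{G}_{p,P,t})\le|\mathcal{H}_{\sigma,k}|\cdot 2^{s(p+k)}$ and $\Delta_{((k,s))}(\mathcal{G}_{p,P,t})\le 2^{|\mathcal{H}_{\sigma,k}|}$, and you carry out explicitly and correctly the product-to-sum bookkeeping that the paper hides inside $\hat{O}(\cdot)$. The only nit concerns the base $8$: it comes from applying $xyz\le x^3+y^3+z^3$ to the three factors $2^{sp}$, $2^{sk}$ and the remaining product, whereas naively iterating $xy\le x^2+y^2$ as you describe can inflate the base (e.g.\ to $16$).
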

\begin{proof}
Lemma~\ref{lem-base} implies that $|\mathsf{sgn}_{((k,s))}(G,\lambda,\mu)| \leq |\mathcal{H}_{\sigma,k}| \cdot |U_{H,\{0,1\}^s}|$ for all $(G,\lambda,\mu) \in \mathcal{G}_{p,P,t}$.
Therefore, we have $\delta_{((k,s))}(\mathcal{G}_{p,P,t}) \leq |\mathcal{H}_{\sigma,k}| \cdot |U_{H,\{0,1\}^s}| = (p+k)^{O(\min\{k,t\} \cdot (p+k))} \cdot |P|^{O(k)} \cdot 2^{s(p+k)}$, because the graphs in $\mathcal{H}_{\sigma,k}$ have at most $p+k$ vertices and thus $|U_{H,\{0,1\}^s}| \leq 2^{s(p+k)}$.
As such, $\delta_{((k,s))}(\mathcal{G}_{p,P,t})$ is bounded by
\begin{equation*}
    8^{sk} + 8^{sp} + 2^{O(k \log |P|)} + 2^{\hat{O}(\min\{k,t\} \cdot k)} + 2^{\hat{O}(\min\{k,t\} \cdot p)}.
\end{equation*}
Lemma~\ref{lem-base} also implies that the $S$-signature of every graph in $\mathcal{G}_{p,P,t}$ with core $\sigma$ is uniquely determined by a subset $\mathcal{H}' \subseteq \mathcal{H}_{\sigma,k}$.
Therefore, the number of different $S$-signatures such graphs can have is bounded by the number of different subsets of $\mathcal{H}_{\sigma,k}$, i.e., $2^{|\mathcal{H}_{\sigma,k}|}$.
Since $|\mathcal{H}_{\sigma,k}| = (p+k)^{O(\min\{k,t\} \cdot (p+k))} \cdot |P|^{O(k)}$, we have
\begin{equation*}
    2^{|\mathcal{H}_{\sigma,k}|} = \exp^{(2)}(O(k\log|P|)) + \exp^{(2)}(\hat{O}(\min\{k,t\} \cdot k)) + \exp^{(2)}(\hat{O}(\min\{k,t\} \cdot p)).
\end{equation*}
As such, the desired function $f$ exists.
\end{proof}

With the recurrences in Corollary~\ref{cor-recurrence} and the base cases in Corollary~\ref{cor-base} in hand, to eventually prove the bound in Theorem~\ref{thm-sgnsize}, the remaining work is just tedious and involved calculation (though it has to be done very carefully).
Let $f:\mathbb{N} \rightarrow \mathbb{N}$ be a well-behaved almost linear function that satisfy the conditions in Corollary~\ref{cor-recurrence} and Corollary~\ref{cor-base}.
Note that such a function exists, as we can simply take the sum of two well-behaved almost linear functions that satisfy the conditions in the two corollaries individually.
Let $c > 0$ be a number satisfying $f(ax) \leq a^c f(x)$ for all $a,x \in \mathbb{N}$ and $f(x) \leq x^c$ for all integers $x \geq 2$.
Define $f_1 = f$ and $f_i = f \circ f$ for all integers $i \geq 2$.

Next, we introduce the numbers we are going to use as the upper bounds for the $\delta$-values and $\Delta$-values.
For $S = ((k_1,s_1),\dots,(k_r,s_r))$, integers $p,t,b \in \mathbb{N}$, and a set $P$, we define
{\small \begin{equation*}
        \psi_{S,p,P,t,b} = \sum_{i=1}^r \sum_{j=0}^{i-1} \exp_{b}^{(i)}(f_i(s_jk_i)) + \sum_{i=1}^r \sum_{j=0}^i \exp_{b}^{(i)}(f_i(t_j k_i)) + \sum_{i=1}^r \exp_{b}^{(i)}(f_i(k_i \log t)) + b^{f_1(t_1p)},
\end{equation*}
\begin{equation*}
        \varPsi_{S,p,P,t,b} = \sum_{i=1}^r \sum_{j=0}^{i-1} \exp_{b}^{(i+1)}(f_i(s_jk_i)) + \sum_{i=1}^r \sum_{j=0}^i \exp_{b}^{(i+1)}(f_i(t_j k_i)) + \sum_{i=1}^r \exp_{b}^{(i+1)}(f_i(k_i \log t)) + \exp_{b}^{(2)}(f_1(t_1p)),
\end{equation*}}

\smallskip

\noindent
where $s_0 = \log |P|$, $t_0 = \min\{p,t\}$, and $t_j = \min\{k_j,t\}$ for $j \in [r]$.
Furthermore, we write $b_1 = 2$ and $b_r = (r+1)^{2(r+1)} \cdot \exp_{b_{r-1}}^{(2)}(2 \log b_{r-1} + 10c^3)$, for all integers $r \geq 2$.
The bounds we want to prove are presented below.

\begin{itemize}
    \item \textbf{[$\delta$-Bound$_r$]:} $\delta_S(\mathcal{G}_{p,P,t}) \leq \psi_{S,p,P,t,b_r}$ for any $S$ with $|S| = r$ and any $p,P,t$.
    \item \textbf{[$\Delta$-Bound$_r$]:} $\Delta_S(\mathcal{G}_{p,P,t}) \leq \psi_{S,p,P,t,b_r}$ for any $S$ with $|S| = r$ and any $p,P,t$.
\end{itemize}

We observe that \textbf{$\Delta$-Bound$_1$} holds because of (ii) of Corollary~\ref{cor-base}.
(However, \textbf{$\delta$-Bound$_1$} does not hold. But this does not matter as \textbf{$\delta$-Bound$_1$} will not be used in our proof.)
The following observation then implies that \textbf{$\delta$-Bound$_r$} and \textbf{$\Delta$-Bound$_r$} hold for all integers $r \geq 2$.

\begin{lemma}
    For $r \geq 2$, if \textnormal{\bf $\Delta$-Bound$_{r-1}$} holds, then \textnormal{\bf $\delta$-Bound$_r$} and \textnormal{\bf $\Delta$-Bound$_r$} also hold.
\end{lemma}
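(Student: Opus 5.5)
We plug the inductive hypothesis into Corollary~\ref{cor-recurrence} and compare the result with $\psi_{S,p,P,t,b_r}$ and $\varPsi_{S,p,P,t,b_r}$ term by term. (We read \textbf{$\Delta$-Bound$_r$} as $\Delta_S \le \varPsi_{S,p,P,t,b_r}$.)

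Write $S=((k_1,s_1))+S'$ with $|S'|=r-1$. By Corollary~\ref{cor-recurrence}, $\delta_S(\mathcal{G}_{p,P,t})$ is at most
\[
2^{f(p)}+2^{f(k_1)}+\exp^{(2)}(f(\log t))+\exp^{(2)}(f(s_1))+\exp^{(2)}(f(\log|P|))+(\Delta_{S'}(\mathcal{G}_{p',P',t}))^{7(p+k_1)},
\]
where $p'=\min\{p+k_1,4t\}$ and $P'=P\times\{0,1\}^{s_1}\times[4t]_0$. The bound for $\Delta_S$ is $2$ raised to the same quantity. So it suffices to show this quantity is at most $\psi_{S,p,P,t,b_r}$; then $2^{\psi}\le\varPsi$ follows, since $b_r\ge 2$ and $2^{\sum_i x_i}$ can be split into a sum of double exponentials via Fact~\ref{fact-basechange1}.

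\textbf{Terms not involving $\Delta_{S'}$.} These are handled directly.
\begin{itemize}
\item $2^{f(k_1)}$ and $\exp^{(2)}(f(\log t))$ are absorbed by the $i=1$ summands $\exp^{(1)}_{b_r}(f_1(t_1k_1))$ and $\exp^{(1)}_{b_r}(f_1(k_1\log t))$. The loose double exponentials in $\log t$ and $s_1$ do not fit the single-exponential slots. Instead they are dominated by level-$2$ terms of $\psi$ once $r\ge 2$: $s_1$ reappears as $s_j$ with $j=1<i=2$, and $\log t$ appears inside $\Delta_{S'}$. If $k_2=0$, the corresponding contribution in the recurrence also vanishes, so a degenerate-case check is needed.
\item $2^{f(p)}$ is controlled by $b_r^{f_1(t_1p)}$ together with the $j=0$ terms, where $t_0=\min\{p,t\}$.
\item $\exp^{(2)}(f(\log|P|))$ is controlled by the $j=0$ terms with $s_0=\log|P|$.
\end{itemize}

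\textbf{The $\Delta_{S'}$ term (main step).} By \textbf{$\Delta$-Bound$_{r-1}$},
\[
\Delta_{S'}(\mathcal{G}_{p',P',t})\le \varPsi_{S',p',P',t,b_{r-1}}.
\]
This is a sum of terms $\exp^{(i+1)}_{b_{r-1}}(f_i(\cdot))$ for $i=1,\dots,r-1$, whose arguments involve $s'_j,k'_i,t'_j$. The reindexing is as follows.
\begin{itemize}
\item Index $i$ of $S'$ is index $i+1$ of $S$, so $\exp^{(i+1)}$ already matches the level of $k_{i+1}$.
\item The new $s'_0=\log|P'|=\log|P|+s_1+O(\log t)$ supplies exactly the $j\in\{0,1\}$ terms.
\item The new $t'_0=\min\{p',t\}\le \min\{p+k_1,4t\}$ supplies the $t_0$ and $t_1$ terms, up to a constant factor $4$. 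This factor is absorbed because $f$ is well-behaved.
\item The extra $\log t$ coming from $[4t]_0$ merges into the $k_i\log t$ terms.
\end{itemize}
To split sums inside exponents, we use the almost linearity of $f$: $f(x+y)\le f(2\max\{x,y\})$. The factor $f_i(\cdot)$ becomes $f_{i+1}(\cdot)$ where needed, which is why $f_i$ is an iterate of $f$.

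We then raise this sum to the power $7(p+k_1)$. Fact~\ref{fact-decompose} with $z=7(p+k_1)$ gives
\[
(\text{sum})^{z}\le m^m\, b^{b^{2\log b}}\Bigl(b^{f(z)}+\sum \exp^{(2)}_b(f(\cdot))\Bigr),
\]
where $m=O(r^2)$ is the number of summands. The pieces are absorbed as follows:
\begin{itemize}
\item $b^{f(z)}$ is absorbed by the level-$1$ terms in $p$ and $k_1$.
\item The exponent shift inside each tower is absorbed by applying one more $f$ at the innermost level.
\item The base change from $b_{r-1}$ to $b_r$ uses $b_r=(r+1)^{2(r+1)}\exp^{(2)}_{b_{r-1}}(2\log b_{r-1}+10c^3)$. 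This is designed to swallow the constant $m^m b^{b^{2\log b}}$ and the polynomial losses $f(ax)\le a^cf(x)$.
\end{itemize}

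The main obstacle is this last step. The tricky part is propagating the constant $b$ correctly through towers of height up to $r$, in particular showing that $(\exp^{(i+1)}_{b_{r-1}}(y))^{z}$ splits into $\exp^{(i+1)}_{b_r}(f(y))$ plus a term depending only on $z$ and single exponential in it. I would first attempt this with Fact~\ref{fact-decompose} applied once at the top level, plus monotonicity $\exp_{b}^{(i)}\le\exp_{b'}^{(i)}$ for $b\le b'$, and then verify the constant via the definition of $b_r$.
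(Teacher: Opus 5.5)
Your plan is the paper's proof. You plug \textbf{$\Delta$-Bound$_{r-1}$} into Corollary~\ref{cor-recurrence}, apply Fact~\ref{fact-decompose} once to $\varPsi_{S',p',P',t,b_{r-1}}^{7(p+k_1)}$, absorb each resulting tower into $\psi_{S,p,P,t,b}$, and finish with Fact~\ref{fact-basechange1}. In the absorption step, the extra $f$ becomes $f_2$ at level $2$ and goes into the base via $f(x)\le x^c$ at higher levels, and the $\{0,1\}^{s_1}$ and $[4t]_0$ factors of $P'$ and the $4t$ in $p'$ are split off by well-behavedness.

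Two small corrections. First, Fact~\ref{fact-basechange1} multiplies the base by the number $m$ of summands, so ``$b_r\ge 2$'' does not justify $2^{\psi}\le\varPsi$ at the same base. As the paper does, prove the inner bound with a base $b'$ satisfying $m b'\le b_r$, or use that $b_r\ge 2^m$. Second, the case $k_2=0$ needs no check, because $k_i,s_i\in\mathbb{N}=\{1,2,\dots\}$ in this section. Your remark that the contribution vanishes there is wrong: the $\exp^{(2)}(f(s_1))$ term of the recurrence survives, and the claimed bound can actually fail in that case.
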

\begin{proof}
Let $S = ((k_1,s_1),\dots,(k_r,s_r))$ and $S' = ((k_2,s_2),\dots,(k_r,s_r))$.
Also, let $p,t \in \mathbb{N}$ and $P$ be a set.
To establish \textbf{$\delta$-Bound$_r$} and \textbf{$\Delta$-Bound$_r$}, we need to bound $\delta_S(\mathcal{G}_{p,P,t})$ and $\Delta_S(\mathcal{G}_{p,P,t})$, both of which are related to $(\Delta_{S'}(\mathcal{G}_{p',P',t}))^{7(p+k_1)}$, where $p' = \min\{p+k_1,4t\}$ and $P' = P \times \{0,1\}^{s_1} \times [4t]_0$, by our recurrences.
So we first need to bound $(\Delta_{S'}(\mathcal{G}_{p',P',t}))^{7(p+k_1)}$.

By assumption, we have \textbf{$\Delta$-Bound$_{r-1}$}, which implies that $\Delta_{S'}(\mathcal{G}_{p',P',t}) \leq \varPsi_{S',p',P',t,b_{r-1}}$.
Thus, $(\Delta_{S'}(\mathcal{G}_{p',P',t}))^{7(p+k_1)} \leq \varPsi_{S',p',P',t,b_{r-1}}^{7(p+k_1)}$.
By definition, $\varPsi_{S',p',P',t,b_{r-1}}$ is the sum of $m = (r-1)(r+2)+1$ towers of exponentials with base $b_{r-1}$ each of which has height at least 2.
That is, $\varPsi_{S',p',P',t,b_{r-1}}$ can be written in the form of $\sum_{i=1}^m \exp_{b_{r-1}}^{(2)} (a_i)$.
This allows us to apply Fact~\ref{fact-decompose} to deduce
\begin{equation*}
    \varPsi_{S',p',P',t,b_{r-1}}^{7(p+k_1)} \leq m^{m} \exp_{b_{r-1}}^{(2)}(2 \log b_{r-1}) \cdot \left(b_{r-1}^{f(7(p+k_1))} + \sum_{i=1}^{r(r-1)} \exp_{b_{r-1}}^{(2)} (f(a_i)) \right).
\end{equation*}
For convenience, write $X = \sum_{i=1}^m \exp_{b_{r-1}}^{(2)} (f(a_i))$.
Let $b = (b_{r-1}^c)^{8^{c^2}}$.
The terms in $X$ one-to-one correspond to the terms in $\varPsi_{S',p',P',t,b_{r-1}}$.
Next, we consider these terms one by one and bound each of them by one or multiple terms in $\psi_{S,p,P,t,b}$.

\begin{itemize}
    \item For $i \in \{3,\dots,d\}$ and $j \in \{2,\dots,i-1\}$, $\varPsi_{S',p',P',t,b_{r-1}}$ contains a term $$\exp_{b_{r-1}}^{(i)}(f_{i-1}(s_j k_i)) = \exp_{b_{r-1}}^{(i)}(f_i(s_j k_i))).$$
    This term corresponds to $\exp_{b_{r-1}}^{(2)}(f(\exp_{b_{r-1}}^{(i-2)} (f_i(s_j k_i))))$ in $X$, which is then bounded by $\exp_{b_{r-1}^c}^{(i)}(f_i(s_j k_i))$ and hence bounded by $\exp_b^{(i)}(f_i(s_j k_i))$.
    \item For $i \in \{3,\dots,d\}$, $\varPsi_{S',p',P',t,b_{r-1}}$ contains a term $$\exp_{b_{r-1}}^{(i)}(f_{i-1}(k_i \log|P'|)) = \exp_{b_{r-1}}^{(i)}(f_i(k_i\log|P|+s_1 k_i+ k_i\log (4t+1))).$$
    This term corresponds to $\exp_{b_{r-1}}^{(2)}(f(\exp_{b_{r-1}}^{(i-2)} (f_i(k_i \log|P|+s_1 k_i+ k_i\log (4t+1)))))$ in $X$,
    which is bounded by $\exp_{b_{r-1}^c}^{(i)}(f_i(k_i\log|P|+s_1 k_i+ k_i\log (4t+1)))$.
    As $f_i = f \circ f$, we have
    \begin{align*}
        & f_i(k_i\log|P|+s_1 k_i+ k_i\log (4t+1)) \\
        \leq\ & f_i(k_i\log|P|+2s_1 k_i+ 3k_i\log (4t+1)) \\
        \leq\ & 6^c \cdot \max\{f_i(k_i\log|P|),f_i(s_1k_i),f_i(k_i \log t)\}.
    \end{align*}
    Therefore, it follows that
    \begin{align*}
        & \exp_{b_{r-1}^c}^{(i)}(f_i(k_i\log|P|+s_1 k_i+ k_i\log (4t+1))) \\
        \leq\ &  \exp_{b_{r-1}^c}^{(i)}(6^c \cdot f_i(k_i\log|P|)) + \exp_{b_{r-1}^c}^{(i)}(6^c \cdot f_i(s_1 k_i)) + \exp_{b_{r-1}^c}^{(i)}(6^c \cdot  f_i(k_i \log t)) \\
        \leq\ & \exp_b^{(i)}(f_i(k_i\log|P|)) + \exp_b^{(i)}(f_i(s_1 k_i)) + \exp_b^{(i)}(f_i(k_i \log t)).
    \end{align*}
    Furthermore, $\varPsi_{S',p',P',t,b_{r-1}}$ also contains the term $\exp_{b_{r-1}}^{(2)}(f_1(k_2 \log|P'|))$.
    It corresponds to $\exp_{b_{r-1}}^{(2)}(f(f_1(k_2 \log|P'|))) = \exp_{b_{r-1}}^{(2)}(f_2(k_2 \log|P'|)))$ in $X$.
    By the argument above, we have
    \begin{align*}
        & \exp_{b_{r-1}}^{(2)}(f_2(k_2 \log|P'|))) \\
        =\ & \exp_{b_{r-1}}^{(2)}(f_2(k_2\log|P|+s_1 k_2+ k_2\log (4t+1))) \\
        \leq\ & \exp_b^{(2)}(f_2(k_2\log|P|)) + \exp_b^{(2)}(f_2(s_1 k_2)) + \exp_b^{(2)}(f_2(k_2 \log t)).
    \end{align*}
    \item For $i,j \in \{3,\dots,d\}$ with $j \leq i$, $\varPsi_{S',p',P',t,b_{r-1}}$ contains a term $$\exp_{b_{r-1}}^{(i)}(f_{i-1}(\min\{k_j,t\} \cdot k_i)) = \exp_{b_{r-1}}^{(i)}(f_i(\min\{k_j,t\} \cdot k_i))).$$
    This term corresponds to $\exp_{b_{r-1}}^{(2)}(f(\exp_{b_{r-1}}^{(i-2)} (f_i(\min\{k_j,t\} \cdot k_i))))$ in $X$, which is then bounded by $\exp_{b_{r-1}^c}^{(i)}(f_i(\min\{k_j,t\} \cdot k_i))$ and hence bounded by $\exp_{b}^{(i)}(f_i(\min\{k_j,t\} \cdot k_i))$.
    Also, $\varPsi_{S',p',P',t,b_{r-1}}$ contains the terms $\exp_{b_{r-1}}^{(2)}(f_1(\min\{k_2,t\} \cdot k_2))$ and $\exp_{b_{r-1}}^{(2)}(f_1(\min\{p',t\} \cdot k_2))$.
    These two terms correspond respectively to $\exp_{b_{r-1}}^{(2)}(f(f_1(\min\{p',t\} \cdot k_2))) = \exp_{b_{r-1}}^{(2)}(f_2(\min\{p',t\} \cdot k_2)))$ and $\exp_{b_{r-1}}^{(2)}(f(f_1(\min\{k_2,t\} \cdot k_2))) = \exp_{b_{r-1}}^{(2)}(f_2(\min\{k_2,t\} \cdot k_2)))$ in $X$.
    We have
    \begin{align*}
        & \exp_{b_{r-1}}^{(2)}(f_2(\min\{p',t\} \cdot k_2)) \\
        =\ & \exp_{b_{r-1}}^{(2)}(f_2(\min\{p+k,t\} \cdot k_2)) \\
        \leq\ & \exp_{b_{r-1}}^{(2)}(f_2(\min\{p,t\} \cdot k_2+\min\{k,t\} \cdot k_2)) \\
        \leq\ & \exp_b^{(2)}(f_2(\min\{p,t\} \cdot k_2))+\exp_b^{(2)}(f_2(\min\{k,t\} \cdot k_2)),
    \end{align*}
    where the last $\leq$ follows from the fact that $f_2(x+y) \leq 2^{c^2} \cdot \max\{f_2(x),f_2(y)\}$.
    The other term $\exp_{b_{r-1}}^{(2)}(f_2(\min\{k_2,t\} \cdot k_2)))$ is clearly bounded by $\exp_b^{(2)}(f_2(\min\{k_2,t\} \cdot k_2)))$.
    
    \item For $i \in \{3,\dots,d\}$, $\varPsi_{S',p',P',t,b_{r-1}}$ contains a term $\exp_{b_{r-1}}^{(i)}(f_{i-1}(k_i \log t)) = \exp_{b_{r-1}}^{(i)}(f_i(k_i \log t)))$.
    This term corresponds to $\exp_{b_{r-1}}^{(2)}(f(\exp_{b_{r-1}}^{(i-2)} (f_i(k_i \log t))))$ in $X$, which is then bounded by $\exp_{b_{r-1}^c}^{(i)}(f_i(k_i \log t))$ and hence bounded by $\exp_{b}^{(i)}(f_i(k_i \log t))$.
    Also, $\varPsi_{S',p',P',t,b_{r-1}}$ contains the term $\exp_{b_{r-1}}^{(2)}(f_1(k_2 \log t))$.
    It corresponds to $\exp_{b_{r-1}}^{(2)}(f(f_1(k_2 \log t))) = \exp_{b_{r-1}}^{(2)}(f_2(k_2 \log t))$ in $X$, which is bounded by $\exp_b^{(2)}(f_2(k_2 \log t))$.

    \item Finally, $\varPsi_{S',p',P',t,b_{r-1}}$ contains the term 
    $$\exp_{b_{r-1}}^{(2)}(f_1(\min\{k_2,t\} \cdot p')) = \exp_{b_{r-1}}^{(2)}(f_1(\min\{k_2,t\} \cdot \min\{p+k_1,4t\})).$$
    It corresponds to $\exp_{b_{r-1}}^{(2)}(f(f_1(\min\{k_2,t\} \cdot \min\{p+k_1,4t\})))$ in $X$.
    We have
    \begin{align*}
        & \exp_{b_{r-1}}^{(2)}(f(f_1(\min\{k_2,t\} \cdot \min\{p+k_1,4t\}))) \\
        =\ & \exp_{b_{r-1}}^{(2)}(f_2(\min\{k_2,t\} \cdot \min\{p+k_1,4t\})) \\
        \leq\ & \exp_{b_{r-1}}^{(2)}(f_2(\min\{p+k_1,4t\} \cdot k_2)) \\
        \leq\ & \exp_{b_{r-1}}^{(2)}(f_2(4 \cdot \min\{p,t\} \cdot k_2 + 4 \cdot \min\{k_1,t\} \cdot k_2))
    \end{align*}
    Using the fact that $f_2(4x+4y) = 8^{c^2} \cdot \max\{f_2(x),f_2(y)\}$, we further have
    \begin{align*}
        & \exp_{b_{r-1}}^{(2)}(f_2(4 \cdot \min\{p,t\} \cdot k_2 + 4 \cdot \min\{k_1,t\} \cdot k_2)) \\
        \leq\ & \exp_b^{(2)}(f_2(\min\{p,t\} \cdot k_2)) + \exp_b^{(2)}(f_2(\min\{k_1,t\} \cdot k_2)).
    \end{align*}
\end{itemize}
Now we see that each term in $X$ is bounded by one or multiple terms in $\varPsi_{S,p,P,t,b}$, and is thus at most $\varPsi_{S,p,P,t,b}$.
It follows that $X \leq m \cdot \psi_{S,p,P,t,b} \leq r^2 \cdot \psi_{S,p,P,t,b}$.
Since 
$$b_{r-1}^{f(7(p+k_1))} \leq b_{r-1}^{7cf(p)+7cf(k_1)} \leq b_{r-1}^{14cf(p)}+b_{r-1}^{14cf(k_1)} \leq b^{f_1(p)}+b^{f_1(k_1)} \leq \psi_{S,p,P,t,b},$$
we have $\varPsi_{S',p',P',t,b_{r-1}}^{7(p+k_1)} \leq (m^m \exp_{b_{r-1}}^{(2)}(2 \log b_{r-1})+1) \cdot \psi_{S,p,P,t,b}$, which implies
\begin{align*}
    & 2^{f(p)} + 2^{f(k)} + \exp^{(2)}(f(\log t)) + \exp^{(2)}(f(s)) + \exp^{(2)}(f(\log|P|)) + (\Delta_{S'}(\mathcal{G}_{p',P',t}))^{7(p+k)} \\
    \leq\ & \psi_{S,p,P,t,2} + (\Delta_{S'}(\mathcal{G}_{p',P',t}))^{7(p+k)} \\
    \leq\ &\psi_{S,p,P,t,2} + (m^m \exp_{b_{r-1}}^{(2)}(2 \log b_{r-1})+1) \cdot \psi_{S,p,P,t,b} \\
    \leq\ & (m^m \exp_{b_{r-1}}^{(2)}(2 \log b_{r-1})+2) \cdot \psi_{S,p,P,t,b} \\
    \leq\ &\psi_{S,p,P,t,b'},
\end{align*}
where $b' = (m^m \exp_{b_{r-1}}^{(2)}(2 \log b_{r-1})+2) b$.
Then Corollary~\ref{cor-recurrence} directly implies $\delta_S(\mathcal{G}_{p,P,t}) \leq \psi_{S,p,P,t,b'}$ and $\Delta_S(\mathcal{G}_{p,P,t}) \leq 2^{\psi_{S,p,P,t,b'}}$.
Recall that $\psi_{S,p,P,t,b'}$ consists of $r(r+2)+1$ terms each of which is a tower of exponentials with base $b'$.
Thus, $2^{\psi_{S,p,P,t,b'}} \leq \varPsi_{S,p,P,t,(r(r+2)+1)b'}$ by Fact~\ref{fact-basechange1}.
Note that
\begin{align*}
    b_r &= (r+1)^{2(r+1)} \cdot \exp_{b_{r-1}}^{(2)}(2 \log b_{r-1} + 10c^3) \\
    &\geq (r+1)^{2(r+1)} \cdot \exp_{b_{r-1}}^{(2)}(2 \log b_{r-1}) \cdot (b_{r-1}^c)^{8^{c^2}} \\
    &\geq (r(r+2)+1)\cdot ((r-1)(r+3)+1)^{(r-1)(r+3)+1} \cdot \exp_{b_{r-1}}^{(2)}(2 \log b_{r-1}) \cdot b \\
    &= (r(r+2)+1)\cdot m^m \exp_{b_{r-1}}^{(2)}(2 \log b_{r-1}) \cdot b \\
    &\geq (r(r+2)+1)\cdot b'.
\end{align*}
Therefore, we directly have $\delta_S(\mathcal{G}_{p,P,t}) \leq \psi_{S,p,P,t,b_r}$ and $\Delta_S(\mathcal{G}_{p,P,t}) \leq \varPsi_{S,p,P,t,b_r}$, which gives us the desired \textbf{$\delta$-Bound$_r$} and \textbf{$\Delta$-Bound$_r$}.
\end{proof}



\subsubsection{Recursive size of the signatures}
By the discussion in the previous section, for all $(G,\lambda,\mu) \in \mathcal{G}_{p,P,t}$ and all sequences $S$ of $d$ pairs of natural numbers, we have $|\mathsf{sgn}_S(G,\lambda,\mu)| \leq \delta_S(\mathcal{G}_{p,P,t}) \leq \psi_{S,p,P,t,b_d}$.
Based on this, we can now bound the recursive size of the signatures.

Let $S = ((k_1,s_1),\dots,(k_d,s_d))$ and $(G,\lambda,\mu) \in \mathcal{G}_{p,P,t}$.
By construction, $\mathsf{sgn}_S(G,\lambda,\mu)$ is a nested set with $d$ levels.
The $z$-th level of $\mathsf{sgn}_S(G,\lambda,\mu)$ consists of $S_z$-signatures of $p_z$-labeled and $P_z$-colored graphs, where $S_z = ((k_{z+1},s_{z+1}),\dots,(k_d,s_d))$, $p_z = p+ \sum_{j=1}^z k_j$, and $P_z = P \times \prod_{j=1}^z \{0,1\}^{s_j}$.
It follows that $\lVert \mathsf{sgn}_S(G,\lambda,\mu) \rVert \leq \prod_{z=0}^{d-1} \delta_{S_z}(\mathcal{G}_{p_z,P_z,t}) \leq (\prod_{z=0}^{d-2} \psi_{S_z,p_z,P_z,t,b_{d-z}}) \cdot \delta_{S_{d-1}}(\mathcal{G}_{p_{d-1},P_{d-1},t})$.

\begin{lemma}
    $\psi_{S_z,p_z,P_z,t,b_{d-z}} \leq \psi_{S,p,P,t,b_d}$ for all $z \in [d-1]_0$.
\end{lemma}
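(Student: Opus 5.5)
The plan is a term-by-term domination argument: every summand of $\psi_{S_z,p_z,P_z,t,b_{d-z}}$ is bounded by a summand (or a sum of a bounded number of summands) of $\psi_{S,p,P,t,b_d}$, combined with monotonicity of the bases, $b_{d-z}\le b_d$. We fix $z\in[d-1]_0$ and write $S_z=((k_{z+1},s_{z+1}),\dots,(k_d,s_d))$, re-indexed as $((k'_1,s'_1),\dots,(k'_{r},s'_{r}))$ with $r=d-z$, $k'_i=k_{z+i}$ and $s'_i=s_{z+i}$. The parameters attached to the level-$0$ slot are then $s'_0=\log|P_z|=\log|P|+\sum_{j=1}^z s_j$ and $t'_0=\min\{p_z,t\}$.

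First we record $b_{r}\le b_d$ for $r\le d$, which is immediate from the recursive definition. We also use that a tower of height $i$ is monotone in its base and in its top argument. Then we compare summands in three groups.

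\textbf{Terms with $j\ge1$.} A term $\exp_{b_r}^{(i)}(f_i(s'_jk'_i))$ equals $\exp_{b_r}^{(i)}(f_i(s_{z+j}k_{z+i}))$. The corresponding term in $\psi_{S,p,P,t,b_d}$ is indexed by $(z+i,z+j)$ and sits at height $z+i\ge i$. Since $\exp^{(i)}_b(x)\le \exp^{(i+z)}_b(x)$ for $x\ge1$, and $f_{z+i}=f_i$ for $i\ge2$ (recall $f_i=f\circ f$ for all $i\ge 2$), it is dominated. The case $i=1$, $z\ge1$ needs $f_1\le f_2$, which holds once we assume $f(x)\ge x$; this is harmless because $f$ can be enlarged. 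The $t'_j$ terms and the $k'_i\log t$ terms are handled in exactly the same way.

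\textbf{Level-$0$ terms.} These are the only ones involving the enlarged $p_z$ and $P_z$. For $\exp^{(i)}_{b_r}(f_i(s'_0k'_i))$ we split
\[
s'_0k'_i=\log|P|\,k_{z+i}+\sum_{j=1}^z s_jk_{z+i},
\]
so the argument is at most $(z+1)$ times the largest summand. Using well-behavedness, $f_i((z+1)x)\le (z+1)^{c^2}f_i(x)$, and this constant is absorbed by passing to a larger base, exactly as done with $b=(b_{r-1}^c)^{8^{c^2}}$ in the previous lemma. After absorption, each piece is dominated by the term of $\psi_{S,p,P,t,b_d}$ indexed by $(z+i,j)$ for the appropriate $j\in[0,z]$. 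Similarly, $t'_0k'_i\le \min\{p,t\}k_{z+i}+\sum_{j\le z}\min\{k_j,t\}k_{z+i}$, because $\min\{a+b,t\}\le\min\{a,t\}+\min\{b,t\}$. The final term $b_r^{f_1(t_1'p_z)}$ has $t'_1p_z\le t_{z+1}(p+\sum_{j\le z}k_j)$. We bound it by terms $b^{f(\cdot)}$ of the shape $b^{f_1(\min\{k_j,t\}k_{z+1})}$, which appear at heights $\ge1$ in $\psi$ with index pair $(z+1,j)$, together with $b^{f_1(t_1 p)}$-type terms, using symmetry $\min\{k_{z+1},t\}k_j\le \min\{k_j,t\}k_{z+1}$ or its reverse, whichever index is the outer one.

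\textbf{Main obstacle.} The hardest step is absorbing the constant factors $(z+1)^{c^2}$ and the count of terms into the base change from $b_r$ to $b_d$. If $b_d$ is not already large enough relative to $b_{d-z}$, the plan is to prove the weaker inequality with $\psi_{S,p,P,t,b_d}$ replaced by $g(d)\cdot\psi_{S,p,P,t,b^*_d}$ for a slightly larger base $b^*_d$ depending only on $d$ and $c$. That weaker form suffices for Theorem~\ref{thm-sgnsize}, since everything is hidden in $f(d)$ and the $\hat O$, via Fact~\ref{fact-basechange2}. I expect the recursive definition $b_r\ge \exp^{(2)}_{b_{r-1}}(10c^3)$ to make $b_d\ge (b_{d-z})^{(z+1)^{c^2}}$ hold directly, so that the exact statement goes through.
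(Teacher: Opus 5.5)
Most of your plan is sound and matches the paper's outline. The summands with $j\ge 1$ and the $k_i\log t$ summands shift to index $(z+i,z+j)$ or $z+i$. The level-$0$ summands split via $\log|P_z|=\log|P|+\sum_{j\le z}s_j$ and $\min\{p_z,t\}\le\min\{p,t\}+\sum_{j\le z}\min\{k_j,t\}$. Raising the base is the right way to absorb a constant sitting inside an exponent, and it is more accurate than the paper's multiplicative step $(z+1)^c\psi_{S,p,P,t,b}\le\psi_{S,p,P,t,(z+1)^cb}$. Your expectation about the bases is also correct: $b_d\ge b_{d-1}^{\,b_{d-1}^{10c^3}}$ and $b_{d-1}\ge\max\{b_{d-z},z+1\}$ give $b_d\ge b_{d-z}^{(z+1)^{c^2}}$.

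The gaps are in the last summand $b_{d-z}^{f_1(\min\{k_{z+1},t\}p_z)}$. First, the ``symmetry'' step fails. $\psi_{S,p,P,t,b_d}$ contains $\min\{k_j,t\}k_i$ only for $j\le i$, so for $j\le z$ the only cross summand is $\min\{k_j,t\}k_{z+1}$. The inequality $\min\{k_{z+1},t\}k_j\le\min\{k_j,t\}k_{z+1}$ is false whenever $k_j>\max\{t,k_{z+1}\}$, and there is no ``reverse'' summand to fall back on. The missing idea is to use the diagonal summands: $\min\{a,t\}x\le\max\{\min\{a,t\}a,\min\{x,t\}x\}$. So in the bad case $\min\{k_{z+1},t\}k_j\le\min\{k_j,t\}k_j$, which is the $(j,j)$ summand; this is the inequality the paper uses.

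Second, and not repairable: the piece $\min\{k_{z+1},t\}\,p$ cannot be charged to $b_d^{f_1(\min\{k_1,t\}p)}$ when $k_{z+1}>k_1$. Every other summand of $\psi_{S,p,P,t,b_d}$ depends on $p$ only through $\min\{p,t\}$, so nothing else can absorb it either. Concretely, take $d=2$, $z=1$, $k_1=1$, $k_2=t$ and let $p\to\infty$. The left side contains the summand $2^{f(t(p+1))}$. The right side is $b_2^{f(p)}$ plus a quantity independent of $p$. Since $f$ grows superlinearly (for the $f$ of Fact~\ref{fact-decompose}, $f(tx)\ge t f(x)-O(t)$ for large $x$), the inequality fails once $p\gg t\gg\log b_2$. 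So no absorption of constants can rescue this step.

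The lemma is therefore true only under an extra restriction such as $p\le t$. In that case $\min\{k_{z+1},t\}p\le\min\{p,t\}k_{z+1}$ is the $(z+1,0)$ summand, and your argument (with the diagonal fix above) goes through. This covers $p=0$, the only case needed for Theorem~\ref{thm-MSO}. The paper's proof has the same hole: it routes this piece through $\min\{p,t\}\cdot p$ and then silently drops that quantity, which is not a summand of $\psi_{S,p,P,t,b}$.
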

\begin{proof}
When $z = 0$, the inequality trivially holds.
So assume $z \geq 1$.
Observe that for most of the exponential towers in $\psi_{S_z,p_z,P_z,t,b_{d-z}}$, there is a corresponding tower in $\psi_{S,p,P,t,b_{d-z}}$ with larger height and base, except the terms 
\begin{enumerate}[(i)]
    \item $b_{d-z}^{f_1((\log|P|+s_1+\cdots+s_z) \cdot k_i)}$ for $i \in \{z+1,\dots,d\}$,
    \item $b_{d-z}^{f_1(\min\{p+k_1+\cdots+k_z,t\} \cdot k_i)}$ for $i \in \{z+1,\dots,d\}$,
    \item $b_{d-z}^{f_1(\min\{k_{z+1},t\} \cdot (p+k_1+\cdots+k_z))}$.
\end{enumerate}
Recall that $f_1 = f$ and we have the number $c$ such that $f(ax) \leq a^c f(x)$ for all $a,x \in \mathbb{N}$.
Therefore, $f_1((\log|P|+s_1+\cdots+s_z) \cdot k_i) \leq (z+1)^c \cdot \max\{f_1(k_i \log|P|),f_1(s_1k_i),\dots,f_1(s_zk_i)\}$, which implies
\begin{align*}
    \sum_{i=z+1}^d b_{d-z}^{f_1((|P|+s_1+\cdots+s_z) \cdot k_i)} &\leq (z+1)^c \cdot \left(\sum_{i=z+1}^d b_{d-j}^{f_1(k_i \log|P|)} + \sum_{i=z+1}^d \sum_{j=1}^z b_{d-j}^{f_1(s_jk_i)} \right) \\
    & \leq (z+1)^c \cdot \psi_{S,p,P,t,b_{d-z}} \leq \psi_{S,p,P,t,(z+1)^c b_{d-z}}.
\end{align*}
Similarly, we can show $\sum_{i=z+1}^d b_{d-z}^{f_1(\min\{p+k_1+\cdots+k_z,t\} \cdot k_i)} \leq \psi_{S,p,P,t,(z+1)^c b_{d-z}}$ as well.
Finally, we consider $b_{d-z}^{f_1(\min\{k_{z+1},t\} \cdot (p+k_1+\cdots+k_z))}$.
Because $\min\{k_{z+1},t\} \cdot x \leq \min\{k_{z+1},t\} \cdot k_{z+1} + \min\{x,t\} \cdot x$,
\begin{align*}
    & f_1(\min\{k_{z+1},t\} \cdot (p+k_1+\cdots+k_z)) \\
    \leq\ & f_1\left(\min\{p,t\} \cdot p+ \sum_{j=1}^z \min\{k_j,t\} \cdot k_j + (z+1) \cdot \min\{k_{z+1},t\} \cdot k_{z+1} \right) \\
    \leq\ & (2z+2)^c \cdot \max\{f_1(\min\{p,t\} \cdot p),f_1(\min\{k_1,t\} \cdot k_1),\dots,f_1(\min\{k_{z+1},t\} \cdot k_{z+1})\},
\end{align*}
which implies $b_{d-z}^{f_1(\min\{k_{z+1},t\} \cdot (p+k_1+\cdots+k_z))} \leq (2z+2)^c \cdot \sum_{i=1}^{z+1} b_{d-z}^{f_1(\min\{k_i,t\} \cdot k_i)} \leq \psi_{S,p,P,t,(2z+2)^c b_{d-z}}$.

To summarize, the sum of the terms in (i), (ii), and (iii) above is bounded by $3 \psi_{S,p,P,t,(2z+2)^c b_{d-z}}$, and the sum of the other terms in $\psi_{S_z,p_z,P_z,t,b_{d-z}}$ is bounded by $\psi_{S,p,P,t,b_{d-z}}$.
Therefore, we have $\psi_{S_z,p_z,P_z,t,b_{d-z}} \leq 4 \psi_{S,p,P,t,(2z+2)^c b_{d-z}}$.
Since $b_d \geq 4(2z+2)^c b_{d-z}$ for all $z \geq 1$ by our construction, we eventually have $\psi_{S_z,p_z,P_z,t,b_{d-z}} \leq \psi_{S,p,P,t,b_d}$.
\end{proof}

The above lemma implies $\lVert \mathsf{sgn}_S(G,\lambda,\mu) \rVert \leq \psi_{S,p,P,t,b_d}^{d-2} \cdot \delta_{S_{d-1}}(\mathcal{G}_{p_{d-1},P_{d-1},t})$.
According to (i) of Corollary~\ref{cor-base}, we have
\begin{align*}
    \delta_{S_{d-1}}(\mathcal{G}_{p_{d-1},P_{d-1},t}) &\leq 8^{f(s_d k_d)} + 8^{f(s_d p_{d-1})} + 2^{f(k_d \log |P_{d-1}|)} + 2^{f(\min\{k_d,t\} \cdot k_d)} + 2^{f(\min\{k_d,t\} \cdot p_{d-1})} \\
    & \leq 8^{f(s_d k_d)} + 8^{f(s_d p_{d-1})} + \psi_{S_{d-1},p_{d-1},P_{d-1},t,2}.
\end{align*}
and thus $\lVert \mathsf{sgn}_S(G,\lambda,\mu) \rVert \leq \psi_{S,p,P,t,b_d}^{O(d)} + (8^d)^{s_d p} + \sum_{i=1}^d (8^d)^{s_d k_i}$.
As $\psi_{S,p,P,t,b_d}$ is the sum of $O(r^2)$ exponential towers with base $b_d$, we have $\psi_{S,p,P,t,b_d}^{O(d)} = \psi_{S,p,P,t,b_d^{O(d)}}$.
Since $b_d$ only depends on $d$, Fact~\ref{fact-basechange2} implies that $\lVert \mathsf{sgn}_S(G,\lambda,\mu) \rVert \leq f(d) \cdot (\psi_{S,p,P,t,2} +  2^{\hat{O}(s_d p)} + \sum_{i=1}^d 2^{\hat{O}(s_d k_i)})$ for some computable function $f$.
Note that $f(d) \cdot (\psi_{S,p,P,t,2} + 2^{\hat{O}(s_d p)} + \sum_{i=1}^d 2^{\hat{O}(s_d k_i)})$ is bounded by the expression of Theorem~\ref{thm-sgnsize}.
To verify this, we see that $2^{\hat{O}(s_d p)} + \sum_{i=1}^d 2^{\hat{O}(s_d k_i)}$ appears in the expression.
The term $2^{f_1(\min\{k_1,t\} \cdot p)}$ in $\psi_{S,p,P,t,2}$ is bounded by $2^{\hat{O}(\min\{p,t\} \cdot p)} + 2^{\hat{O}(\min\{k_1,t\} \cdot k_1)}$, where the latter appears in the expression, and the sum of all the other terms in $\psi_{S,p,P,t,2}$ is clearly bounded by the expression.
Therefore, Theorem~\ref{thm-sgnsize} follows.

\begin{remark}
From our proof of Theorem~\ref{thm-sgnsize}, one can easily verify that $f(d) = \exp^{(d+o(d))}(0)$.  
\end{remark}

In the rest of this section, we consider the case where $S$ is a \textit{fixed} sequence of pairs of natural numbers, $p \in \mathbb{N}_0$ is a \textit{fixed} integer, and $P$ is a \textit{fixed} set.
We shall show that in this case, $\lVert \mathsf{sgn}_S(G,\lambda,\mu) \rVert = \exp^{(d-1)}(t^{O(1)})$ for any $(G,\lambda,\mu) \in \mathcal{G}_{p,P,t}$.

We first assume $S$ and $p$ are fixed, while $P$ is not.
Suppose $d = |S|$.
We apply induction on $d$ to show that $\delta_S(\mathcal{G}_{p,P,t}) = \exp^{(d)}(O(\log |P| + \log t))$ and $\Delta_S(\mathcal{G}_{p,P,t}) = \exp^{(d+1)}(O(\log |P| + \log t))$; here the constant hidden in $O(\cdot)$ depends on $S$ (and thus $d$) and $p$.
The base case $d=1$ follows from (the proof of) Corollary~\ref{cor-base}: while the bounds in the corollary has the almost linear function $f$, the proof actually implies that $f(\cdot)$ can be replaced with $O(\cdot)$ for the terms depending on $|P|$.
Suppose the bounds hold for $d-1$.
Consider a (fixed) sequence $S = ((k,s))+S'$ of length $d$ where the first element.
Lemma~\ref{lem-recurrence} implies that $\delta_S(\mathcal{G}_{p,P,t}) = 2^{(\log |P|+t)^{O(1)}} \cdot (\Delta_{S'}(\mathcal{G}_{p',P',t}))^{O(1)}$, where $p' = p+k$ and $P' = P \times \{0,1\}^s \times [4t]_0$.
By our induction hypothesis,
\begin{equation*}
    \Delta_{S'}(\mathcal{G}_{p',P',t}) = \exp^{(d)}(O(\log |P'| + \log t)) = \exp^{(d)}(O(\log |P| + \log t)).
\end{equation*}
Since $2^{(\log |P|+t)^{O(1)}} \leq 2^{2^{O(\log |P| + \log t)}}$, we have $\delta_S(\mathcal{G}_{p,P,t}) = \exp^{(d)}(O(\log |P| + \log t))$.
In the same way, we can show $\Delta_S(\mathcal{G}_{p,P,t}) = \exp^{(d+1)}(O(\log |P| + \log t))$ as well.
As a result, if $P$ is also fixed, then $\delta_S(\mathcal{G}_{p,P,t}) = \exp^{(d-1)}(t^{O(1)})$ and $\Delta_S(\mathcal{G}_{p,P,t}) = \exp^{(d)}(O(t^{O(1)}))$.

Let $S = ((k_1,s_1),\dots,(k_d,s_d))$ be a fixed sequence.
As before, let $S_z = ((k_{z+1},s_{z+1}),\dots,(k_d,s_d))$, $p_z = p+ \sum_{j=1}^z k_j$, and $P_z = P \times \prod_{j=1}^z \{0,1\}^{s_j}$ for $z \in [d-1]_0$.
For a graph $(G,\lambda,\mu) \in \mathcal{G}_{p,P,t}$, we have $\lVert \mathsf{sgn}_S(G,\lambda,\mu) \rVert = \prod_{z=0}^{d-1} \delta_{S_z}(\mathcal{G}_{p_z,P_z,t})$.
As $p_z = O(1)$ and $|P_z| = O(1)$ for all $z \in [d-1]_0$ by assumption, $\delta_{S_z}(\mathcal{G}_{p_z,P_z,t}) = \exp^{(d-z-1)}(t^{O(1)})$.
It follows that $\lVert \mathsf{sgn}_S(G,\lambda,\mu) \rVert = \exp^{(d-1)}(t^{O(1)})$.

\begin{theorem} \label{thm-sgnsize'}
    Let $S$ a fixed sequence of $d$ pairs of natural numbers, $p \in \mathbb{N}_0$ be a fixed number, and  $P$ be a fixed set.
    Then for any $p$-labeled and $P$-colored graph $(G,\lambda,\mu)$ with $\mathbf{tw}(G) \leq t$, we have $\lVert \mathsf{sgn}_S(G,\lambda,\mu) \rVert = \exp^{(d-1)}(t^{O(1)})$.
\end{theorem}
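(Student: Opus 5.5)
The plan is to bound the set sizes $\delta_{S'}(\mathcal{G}_{p',P',t})$ and the counts $\Delta_{S'}(\mathcal{G}_{p',P',t})$ by induction on the length of $S'$, with $p'$ fixed, $t$ free and $P'$ an arbitrary set. Afterwards I will multiply the set sizes across the nesting levels to get the recursive size. The inductive claim I aim for is
\[
\delta_{S'}(\mathcal{G}_{p',P',t}) = \exp^{(|S'|)}(O(\log|P'|+\log t)), \qquad \Delta_{S'}(\mathcal{G}_{p',P',t}) = \exp^{(|S'|+1)}(O(\log|P'|+\log t)).
\]
The hidden constants may depend on $S'$ and $p'$ only. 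Keeping $P'$ variable inside the induction is essential, because the recurrence enlarges the colour set.

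The base case $|S'|=1$ comes from the proof of Corollary~\ref{cor-base}. There $|\mathcal{H}_{\sigma,k}| = (p'+k)^{O(\min\{k,t\}(p'+k))}\cdot|P'|^{O(k)}$. Since $k$ and $p'$ are constants, this equals $(|P'|\,t)^{O(1)} = 2^{O(\log|P'|+\log t)}$. The $2^{s(p'+k)}$ factor is a constant, so $\delta$ has the form $\exp^{(1)}(O(\cdot))$ and $\Delta\le 2^{|\mathcal{H}_{\sigma,k}|}$ has the form $\exp^{(2)}(O(\cdot))$. I will apply Lemma~\ref{lem-base} directly rather than Corollary~\ref{cor-base}, to avoid its almost-linear $f$.

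For the inductive step with $S'=((k,s))+S''$, I apply Lemma~\ref{lem-recurrence}. Its inputs are $p''=\min\{p'+k,4t\}\le p'+k=O(1)$ and $P''=P'\times\{0,1\}^s\times[4t]_0$, so $\log|P''| = \log|P'|+O(\log t)+O(1)$. The prefactor $((p'+k)t)^{O((p'+k)t^2)}(2^s|P'|)^{O((p'+k)t)}$ is at most $2^{(\log|P'|+t)^{O(1)}}\le \exp^{(2)}(O(\log|P'|+\log t))$. The induction hypothesis gives $\Delta_{S''}(\mathcal{G}_{p'',P'',t})^{6(p'+k)} = \exp^{(|S'|)}(O(\log|P'|+\log t))$, since raising to a constant power changes only the innermost constant once the height is at least $2$. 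Multiplying the two factors yields $\delta$, and exponentiating yields $\Delta$. When $|S'|=2$ the prefactor dominates and is exactly of height $2$, which is consistent with the claim.

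To finish, write $\lVert\mathsf{sgn}_S(G,\lambda,\mu)\rVert\le\prod_{z=0}^{d-1}\delta_{S_z}(\mathcal{G}_{p_z,P_z,t})$. This uses the same level decomposition as in the proof of Theorem~\ref{thm-sgnsize}, with $p_z=p+\sum_{j\le z}k_j$ and $P_z=P\times\prod_{j\le z}\{0,1\}^{s_j}$. Everything here is fixed, so $\log|P_z|=O(1)$ and the $z$-th factor is $\exp^{(d-z)}(O(\log t))=\exp^{(d-z-1)}(t^{O(1)})$. The product is dominated by the $z=0$ factor: $d$ terms, each at most $\exp^{(d-1)}(t^{O(1)})$, multiply to $\exp^{(d-1)}(t^{O(1)})$ for $d\ge2$. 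For $d=1$ the single factor is already $t^{O(1)}$.

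The main obstacle is the bookkeeping that keeps the height from creeping up. Three points need care. First, the $t^{2}$ in the prefactor exponent must land only at height $2$ as $2^{t^{O(1)}}$. Second, the $\log t$ added to $\log|P''|$ at each level must be absorbed into the $O(\log t)$ at the top of the tower rather than stacked. Third, the sublevel constant $p''$ must stay bounded; I use $p''\le p'+k$ here, not the $4t$ branch of the minimum.
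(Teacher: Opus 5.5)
Your proposal is correct and follows the paper's own argument essentially step for step. You keep $P$ free and induct on $|S|$ to prove $\delta=\exp^{(|S|)}(O(\log|P|+\log t))$ and $\Delta=\exp^{(|S|+1)}(O(\log|P|+\log t))$, using the count behind Corollary~\ref{cor-base} for the base case and the recurrence of Lemma~\ref{lem-recurrence} for the step, and then multiply the per-level set sizes; your explicit handling of $p''=\min\{p'+k,4t\}\le p'+k$ (the paper just writes $p'=p+k$) and of the tower-height bookkeeping adds care but does not change the route.
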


\subsection{Putting everything together}
Plugging in the signature bounds proved in the previous section with the algorithms in Sections~\ref{sec-algo} and~\ref{sec-ext}, we are able to prove our main algorithmic results.

\thmMSO*
\begin{proof}
Let $(G,t,S,\phi)$ be an instance of \textsc{MSO Testing} where $S = ((k_1,s_1),\dots,(k_d,s_d))$.
We first compute a nice tree decomposition of $G$ with width $t' = t^{O(1)}$ in $T_\mathsf{td}(G,t)$ time.
Then we apply Theorem~\ref{thm:twcomputation} to compute $\mathsf{sgn}(G)$ and further apply Lemma~\ref{lem:testSat} to test whether $G$ satisfies $\phi$ or not.
By Theorem~\ref{thm-sgnsize}, for any $\mu \in U_{G,[t'+1]_0}$, $\lVert \mathsf{sgn}_S(G,-,\mu) \rVert^{O(1)}$ is bounded by
{\small \begin{equation*}
        f(d) \cdot \left(\sum_{i=1}^d \sum_{j=1}^{i-1} \exp^{(i)}(\hat{O}(s_j k_i)) + \sum_{i=1}^d \sum_{j=1}^i \exp^{(i)}(\hat{O}(t_j k_i)) + \sum_{i=1}^d \exp^{(i)}(\hat{O}(k_i \log t)) + \sum_{i=1}^d 2^{O(s_d k_i)} \right)
\end{equation*}}for some computable function $f$, where $t_j = \min\{k_j,t\}$.
This also bounds the part $\sum_{i=1}^d 2^{O(k_i\log t)}+\sum_{i=1}^{d-1} 2^{O(s_i t)}+ \sum_{i=1}^d 2^{O(s_d k_i)}$ in Theorem~\ref{thm:twcomputation}, simply because $2^{O(s_it)} = 2^{\hat{O}(s_i)} + 2^{2^{\hat{O}(\log t)}}$ by Fact~\ref{fact-decompose}.
Therefore, the total running time of the algorithm follows the bound in the theorem.
\end{proof}

\thmMSOS*
\begin{proof}
The proof is the same as that of Theorem~\ref{thm-MSO}, with Theorem~\ref{thm:twcomputation} and Lemma~\ref{lem:testSat} replaced with Theorem~\ref{thm:new-twcomputation} and Lemma~\ref{lem:new-testSat}, respectively.
\end{proof}

Using the bound for the signature size in Theorem~\ref{thm-sgnsize'}, we can obtain the bounds for \textsc{MSO Testing} and \textsc{MSO Scoring} with a fixed $S$.
\begin{theorem} \label{thm-MSOtw}
    Let $S$ be a fixed sequence of $d$ pairs of natural numbers.
    \begin{itemize}
        \item Given an $n$-vertex graph $G$ with $\mathbf{tw}(G) \leq t$ and an $S$-MSO formula $\phi$, one can test whether $G$ satisfies $\phi$ in $T_\mathsf{td}(G,t) +\exp^{(d-1)}(t^{O(1)}) \cdot (n+|\phi|^{O(1)})$ time.
        \item Given an $n$-vertex graph $G$ with $\mathbf{tw}(G) \leq t$, a weight function $w:V(G) \rightarrow \mathbb{F}$ where $\mathbb{F}$ is a semi-field in which additions and multiplications can be done in $O(1)$ time, and an $S$-MSO* formula $\phi$, one can compute $\mathsf{scr}(G,w,\phi)$ in $T_\mathsf{td}(G,t) +\exp^{(d-1)}(t^{O(1)}) \cdot (n+|\phi|^{O(1)})$ time.
    \end{itemize}
\end{theorem}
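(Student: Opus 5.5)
The plan is to follow the proof of Theorem~\ref{thm-MSO} verbatim, replacing the general signature bound of Theorem~\ref{thm-sgnsize} by the fixed-$S$ bound of Theorem~\ref{thm-sgnsize'}. For the first item, we compute a nice tree decomposition $(T,\beta)$ of $G$ of width $t' = t^{O(1)}$ in $T_\mathsf{td}(G,t)$ time. We may assume $|V(T)| = O(t'n)$, by the standard conversion to a nice decomposition. We then run the algorithm of Theorem~\ref{thm:twcomputation} to obtain $\mathsf{sgn}_S(G)$, and finish with Lemma~\ref{lem:testSat} to decide whether $G$ satisfies $\phi$.

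The running time is bounded term by term. Since $S$ is fixed, $d$, all $k_i$ and all $s_i$ are constants. Hence the factor $f(d)\cdot(\sum_i 2^{O(k_i\log t')}+\sum_{i<d}2^{O(s_i t')}+\sum_i 2^{O(s_dk_i)})$ in Theorem~\ref{thm:twcomputation} is $2^{t^{O(1)}}$. The remaining factor is $\max_{\mu\in U_{G,[t'+1]_0}}\lVert\mathsf{sgn}_S(G,-,\mu)\rVert^{O(1)}$. Here we apply Theorem~\ref{thm-sgnsize'} with $p=0$. The color set $P=[t'+1]_0$, however, is not fixed, so Theorem~\ref{thm-sgnsize'} does not apply verbatim. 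We instead use the intermediate statement established just before it, which treats $S$ and $p$ as fixed but $P$ as arbitrary: $\delta_{S_z}(\mathcal{G}_{p_z,P_z,t'})=\exp^{(d-z)}(O(\log|P_z|+\log t'))$. With $|P_z| = t^{O(1)}$ this gives $\exp^{(d-z-1)}(t^{O(1)})$ for every level $z$. Taking the product over the $d$ levels and raising to the power $O(1)$ still gives $\exp^{(d-1)}(t^{O(1)})$. The same holds after substituting $t'=t^{O(1)}$, because $\exp^{(d-1)}$ of a polynomial in $t^{O(1)}$ is still of the form $\exp^{(d-1)}(t^{O(1)})$.

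Lemma~\ref{lem:testSat} costs $(\lVert\mathsf{sgn}_S(G)\rVert+|\phi|)^{O(1)}\le \exp^{(d-1)}(t^{O(1)})\cdot|\phi|^{O(1)}$, using $(a+b)^{O(1)}\le a^{O(1)}b^{O(1)}$ for $a,b\ge 2$. Summing the three phases gives $T_\mathsf{td}(G,t)+\exp^{(d-1)}(t^{O(1)})\cdot(n+|\phi|^{O(1)})$.

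The main subtlety is the case $d=1$. There $\exp^{(0)}(t^{O(1)})=t^{O(1)}$, but the factor $2^{O(s_1 t')}$ coming from the brute-force step (Lemma~\ref{lem:bruteForce1}) at introduce nodes is exponential in $t$. For $d=1$ the sum $\sum_{i=1}^{d-1}2^{O(s_i t')}$ is empty, and the remaining brute-force terms $2^{O(k_1\log t')}$ and $2^{O(s_1k_1)}$ are polynomial in $t$. For $d\ge 2$, a $2^{t^{O(1)}}$ factor is absorbed into $\exp^{(d-1)}(t^{O(1)})$. So the bound holds in all cases.

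The second item, about $\mathsf{scr}(G,w,\phi)$, is identical. We use Theorem~\ref{thm:new-twcomputation} and Lemma~\ref{lem:new-testSat} in place of Theorem~\ref{thm:twcomputation} and Lemma~\ref{lem:testSat}. Their running times are stated in terms of the same signature sizes, and semi-field operations are assumed to take $O(1)$ time.
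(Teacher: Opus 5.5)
Your proposal is correct and follows the paper's route. The paper reruns the proof of Theorem~\ref{thm-MSO} (Theorem~\ref{thm:twcomputation} with Lemma~\ref{lem:testSat}, resp.\ Theorem~\ref{thm:new-twcomputation} with Lemma~\ref{lem:new-testSat}), substituting the fixed-$S$ bound of Theorem~\ref{thm-sgnsize'} for Theorem~\ref{thm-sgnsize}. Your two extra checks are what make that citation rigorous: the DP color set $[t'+1]_0$ is not fixed, so you correctly invoke the intermediate bound $\delta_{S}(\mathcal{G}_{p,P,t})=\exp^{(|S|)}(O(\log|P|+\log t))$ for arbitrary $P$, and you verify that the $2^{O(s_i t')}$ brute-force factor is absent when $d=1$.
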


\subsection{Generalization to MSO$_2$ logic} \label{sec-MSO2}

All of our results in the previous section can be generalized to MSO$_2$ logic for free.
Indeed, we observe that one can reduce from testing MSO$_2$ properties to testing MSO properties.

\begin{theorem}
    Given an $n$-vertex graph $G$ of treewidth $t$ and an $S$-MSO$_2$ formula $\phi$, one can construct in $O(tn)+|\phi|^{O(1)}$ time a graph $G'$ and an $S$-MSO formula $\phi'$ such that $|V(G')| = O(tn)$, $|\phi'| = |\phi|^{O(1)}$, and $G$ satisfies $\phi$ iff $G'$ satisfies $\phi'$.
\end{theorem}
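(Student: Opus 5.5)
We use the standard incidence-graph encoding. Let $G'$ be obtained from $G$ by subdividing every edge once: $V(G') = V(G) \cup \{x_e : e \in E(G)\}$, and each $x_e$ with $e=uv$ is adjacent to $u$ and $v$. To tell original vertices from edge vertices without adding quantifiers, we attach a pendant gadget to every original vertex, say a single new leaf. Then an original vertex is exactly a vertex of degree $\ge 3$ or one adjacent to a leaf, and this needs quantifiers. A cleaner option is to attach to each original vertex $v$ two pendant vertices, and to give each $x_e$ no pendant. The definitional subformula $\mathsf{orig}(y)$ still costs one or two vertex quantifiers.

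To keep the quantifier structure exactly $S$, we avoid extra quantifiers altogether by using an adjacency to a fixed marker. In the encoding the formula cannot reference constants, so instead we make the distinction derivable from atomic formulas on existing variables. Every quantified edge variable $e$ of $\phi$ becomes a vertex variable ranging over all of $V(G')$, and every edge-set variable becomes a set variable. The relativization (``$e$ is an edge-vertex'') is pushed into the matrix as a guard written only with the adjacency of $e$ to the vertex variables it is compared with. Concretely, we translate the atomic formulas of MSO$_2$ as follows:
\begin{itemize}
\item $\mathsf{inc}(v,e)$ becomes $\mathsf{adj}(v,e)$;
\item $\mathsf{adj}(u,v)$ becomes $\exists x\,(\mathsf{adj}(u,x)\wedge\mathsf{adj}(x,v))$.
\end{itemize}
The second translation is problematic, since it adds a quantifier. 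To avoid it, we keep the original edges as well: $G'$ contains $G$ together with the subdivision vertices, i.e.\ each $x_e$ is adjacent to both endpoints of $e$ and $E(G)\subseteq E(G')$. Then $\mathsf{adj}(u,v)$ stays $\mathsf{adj}(u,v)$. Treewidth grows by at most one, because each bag containing $u,v$ can be given a pendant bag containing $\{u,v,x_e\}$. We also have $|V(G')| = n+|E(G)| = O(tn)$, since $|E(G)|\le tn$.

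The typing predicates still need to be expressed. We add a universal apex-free marker: to each original vertex we attach a private pendant structure that makes it recognizable. However, any recognition by neighbourhood costs quantifiers, so instead we use a set-free trick. Relativizing a vertex quantifier to original vertices is equivalent to saying that $v$ is not a subdivision vertex, and a subdivision vertex has degree exactly $2$ with adjacent neighbours. This is again quantified.

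The main obstacle is therefore precisely the typing of variables without extra quantifiers. The plan to handle it is to allow $\phi'$ to be an $S$-MSO formula over a graph enriched with a distinguished vertex set. The paper's framework already handles colored graphs, since signatures work for $P$-colored graphs at no asymptotic cost. Hence we run the algorithm on the $\{0,1\}$-colored graph $(G',-,\mu)$ with $\mu$ marking edge-vertices, and a type guard becomes the atomic color test. Relativizing $\exists x\,\psi$ to $\exists x\,(\mathsf{type}(x)\wedge\psi)$, and $\forall$ dually, keeps the quantifier blocks unchanged, so the $S$-structure is preserved. For set variables, the set $X$ is intersected with the type class inside atomic membership tests. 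Correctness then follows by a routine induction on $\phi$, and the size bound $|\phi'|=|\phi|^{O(1)}$ is immediate.
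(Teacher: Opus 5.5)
The parts of your proposal that survive agree with the paper: keeping $E(G)$, adding a vertex $x_e$ adjacent to both endpoints of each edge $e$, mapping $\mathsf{inc}(v,e)\mapsto\mathsf{adj}(v,e)$, leaving $\mathsf{adj}(u,v)$ unchanged, and reading set variables through their restriction to the right type class. The gap is the type guard.

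You correctly observe that in a simple graph no quantifier-free formula in one variable can separate original vertices from subdivision vertices. Your resolution, however, is to pass to a $\{0,1\}$-colored graph and use a color atom. Then $\phi'$ is no longer an $S$-MSO sentence over a graph, since the paper's $S$-MSO formulas have only equality, adjacency and membership atoms. So the theorem as stated is not proved: it asks for a graph $G'$ and an $S$-MSO formula $\phi'$ with $G\models\phi$ iff $G'\models\phi'$. At best you obtain a reduction to testing on colored graphs. Using it would also require re-checking Theorem~\ref{thm:twcomputation} and Lemma~\ref{lem:testSat}, which are stated only for uncolored inputs.

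The missing idea is to make the type test an adjacency atom by putting a self-loop on every $x_e$ and none on the original vertices. Then $\mathsf{adj}(x,x)$ holds exactly for edge-vertices. A vertex variable of $\phi$ gets the guard $\neg\mathsf{adj}(x',x')$, and an edge variable gets $\mathsf{adj}(x',x')$. Insert the guard in the matrix directly after its block: as a conjunct for an $\exists$-block, and as the premise of an implication for a $\forall$-block. This is exactly what the paper does.

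No quantifiers are added, so the block structure $S$ is preserved. Loops do not affect treewidth, so $\mathbf{tw}(G')\le t+1$ and $|V(G')|=n+|E(G)|=O(tn)$ still hold. The induction over quantifier blocks then goes through as you sketch. You should delete the abandoned attempts with pendant leaves and degree tests.
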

\begin{proof}
The graph $G'$ is obtained by modifying $G$ as follows.
For each edge $e = (u,v) \in E(G)$, we add a new vertex $v_e'$ together with two edges two edges $(u,v_e')$ and $(v,v_e')$.
Also, we add a self-loop on $v_e'$.
Let $G'$ be the resulting graph.
Clearly, $|V(G')| = |V(G)|+|E(G)| = O(tn)$ and $\mathbf{tw}(G') \leq \mathbf{tw}(G)+1$.
Let $V_1(G') \subseteq V(G')$ consists of the vertices that are originally in $G$ and $V_2(G')  \subseteq V(G')$ consists of the new vertices $v_e'$.
Let $\pi:V(G) \cup E(G) \rightarrow V(G')$ be the bijective function that maps each vertex $v \in V(G)$ to its corresponding vertex in $V_1(G')$ and maps each edge $e \in E(G)$ to $v_e' \in V_2(G')$.

The construction of $\phi'$ is more complicated.
We need to apply induction on $|S|$.
For $d \in \mathbb{N}_0$, we show that for every sequence $S$ of $d$ pairs of natural numbers and every $S$-MSO$_2$ formula $\phi$ (with free variables), one can construct an $S$-MSO formula $\phi'$ satisfying the following conditions.
\begin{enumerate}[(i)]
    \item The free variables of $\phi$ one-to-one correspond to the free variables of $\phi'$.
    If $x$ is a free vertex/edge variable of $\phi$, then it corresponds to a free vertex variable of $\phi'$.
    If $X$ is a free vertex/edge set variable of $\phi$, then it corresponds to a free (vertex) set variable of $\phi'$.
    Let $\mathcal{X}_v'$, $\mathcal{X}_e'$, $\mathcal{X}_V'$, $\mathcal{X}_E'$ denote the sets of free variables of $\phi'$ corresponding to vertex variables, edge variables, vertex set variables, and edge set variables of $\phi$, respectively.
    \item For convenience, for an assignment $A$ of a formula and a free variable $x$ of the formula, we denote by $A(x)$ the value of $x$ in $A$.
    We say an assignment $A'$ of $\phi'$ in $G'$ is \textit{valid} if for every $x' \in \mathcal{X}_v'$ (resp., $x' \in \mathcal{X}_e'$), $A'(x') \in V_1(G')$ (resp., $A'(x') \in V_2(G')$).
    If $A'$ is a valid assignment of $\phi'$ in $G'$, then it naturally corresponds to an assignment $A$ of $\phi$ in $G$:
    \begin{itemize}
        \item if $x$ is a free vertex/edge variable of $\phi$ corresponding to $x' \in \mathcal{X}_v'$, let $A(x) = \pi^{-1}(A'(x'))$;
        \item if $X$ is a free vertex set (resp., edge set) variable of $\phi$ corresponding to $X' \in \mathcal{X}_V'$, let $A(X) = \pi^{-1}(A'(X') \cap V_1(G'))$ (resp., $A(X) = \pi^{-1}(A'(X') \cap V_2(G'))$).
    \end{itemize}
    We require $\phi'$ to satisfy the following condition: for every valid assignment $A'$ of $\phi'$ in $G'$ and its corresponding assignment $A$ of $\phi$ in $G$, $\phi'(A') = \phi(A)$.
\end{enumerate}

When $d = 0$, $\phi$ is quantifier-free and the construction of $\phi'$ is the following.
The free variables of $\phi'$ is one-to-one corresponding to those of $\phi$ as required in condition (i).
We replace each variable in $\phi$ with its corresponding variable of $\phi'$, and change each atomic formula $\mathsf{inc}(\cdot,\cdot)$ in $\phi$ to $\mathsf{adj}(\cdot,\cdot)$.
Let $\phi'$ be the resulting formula.
By checking the atomic formula, it is easy to verify that condition (ii) holds for $\phi'$.
Suppose that for every $S$ of length $d-1$ and every $S$-MSO$_2$ formula $\phi$ (with free variables), we can construct the corresponding $S$-MSO formula $\phi'$ satisfying the two conditions.
Consider a sequence $S = ((k,s))+S'$ with $|S| = d$ and an $S$-MSO$_2$ formula $\phi$.
Without loss of generality, we can assume $\phi(F) = \mathsf{Q} x_1\dots \mathsf{Q} x_k \mathsf{Q} X_1,\dots,\mathsf{Q} X_s\ \phi_0(F,x_1,\dots,x_k,X_1,\dots,X_s)$ where $F$ denotes the free variables of $\phi$, $\mathsf{Q} \in \{\exists,\forall\}$, $x_1,\dots,x_k$ are vertex/edge variables, $X_1,\dots,X_s$ are vertex/edge set variables, and $\phi_0$ is an $S'$-MSO$_2$ formula.
As $|S'| = d-1$, $\phi_0$ has a corresponding $S'$-MSO formula $\phi_0'$ satisfying the two conditions.
Let $I = \{i \in [k]: x_i \text{ is a vertex variable}\}$ and $J = \{j \in [s]: X_j \text{ is a vertex set variable}\}$.
If $\mathsf{Q}=\exists$, then we define 
{\small \begin{equation*}
    \phi'(F') = \exists x_1'\dots \exists x_k' \exists X_1',\dots,\exists X_s'\ \left(\bigwedge_{i \in I} \neg\mathsf{adj}(x_i',x_i')\right)\wedge\left(\bigwedge_{i \in [k] \backslash I} \mathsf{adj}(x_i',x_i')\right)\wedge \phi_0'(F',x_1',\dots,x_k',X_1',\dots,X_s'),
\end{equation*}}where $x_1',\dots,x_k'$ are vertex variables and $X_1',\dots,X_s'$ are (vertex) set variables.
Here the variables in $F'$ one-to-one correspond to the variables in $F$ as in condition (i), and the argument $x_i'$ (resp., $X_j'$) of $\phi_0'$ corresponds to the argument $x_i$ (resp., $X_j$) of $\phi_0$.
Clearly, $\phi'$ is an $S$-MSO formula.
We claim that $\phi'$ satisfies condition (ii).
Let $A'$ be an assignment of $\phi'$ and $A$ be its corresponding assignment of $\phi$.
Assume $\phi'(A') = \mathsf{True}$.
There there should exist $x_1',\dots,x_k' \in V(G')$ and $X_1',\dots,X_s' \subseteq V(G')$ that makes the part of $\phi'(A')$ after the sequence of quantifiers true.
The fact that $(\bigwedge_{i \in I} \neg\mathsf{adj}(x_i',x_i')) \wedge (\bigwedge_{i \in [k] \backslash I} \mathsf{adj}(x_i',x_i')) = \mathsf{True}$ and the construction of $G'$ guarantees that $x_i' \in V_1(G)$ if $i \in I$ and $x_i' \in V_2(G)$ if $i \notin I$.
Set $x_i = \pi^{-1}(x_i')$ for $i \in [k]$, $X_j = \pi^{-1}(X_j' \cap V_1(G'))$ for $j \in J$, and $X_j = \pi^{-1}(X_j' \cap V_2(G'))$ for $j \in [s] \backslash J$.
We have $x_i \in V(G)$ if $i \in I$ and $x_i \in E(G)$ if $i \in I$.
Furthermore, since $\phi_0'(A',x_1',\dots,x_k',X_1',\dots,X_s') = \mathsf{True}$, we have $\phi_0(A,x_1,\dots,x_k,X_1,\dots,X_s) = \mathsf{True}$ by our induction hypothesis.
It follows that $\phi(A) = \mathsf{True}$.
Next, assume $\phi(A) = \mathsf{True}$.
Then there should exist $x_i \in V(G)$ for $i \in I$, $x_i \in E(G)$ for $i \in [k] \backslash I$, $X_j \subseteq V(G)$ for $j \in J$, and $X_j \subseteq E(G)$ for $j \in [s] \backslash J$ such that $\phi_0(A,x_1,\dots,x_k,X_1,\dots,X_s) = \mathsf{True}$.
Set $x_i' = \pi(x_i)$ for $i \in [k]$ and $X_i' = \pi(X_i)$ for $j \in [s]$.
Clearly, $\neg\mathsf{adj}(x_i',x_i')$ for all $i \in I$ and $\mathsf{adj}(x_i',x_i')$ for all $i \in [k] \backslash I$.
Furthermore, the construction of $X_1',\dots,X_s'$ satisfies that $X_j = \pi^{-1}(X_j' \cap V_1(G'))$ for $j \in J$ and $X_j = \pi^{-1}(X_j' \cap V_2(G'))$ for $j \in [s] \backslash J$.
Therefore, $\phi_0'(A',x_1',\dots,x_k',X_1',\dots,X_s') = \mathsf{True}$ by our induction hypothesis, which further implies that $\phi'(A') = \mathsf{True}$.
So $\phi'$ satisfies condition (ii).
On the other hand, if $\mathsf{Q} = \forall$, then we define
{\small \begin{equation*}
    \phi'(F') = \forall x_1'\dots \forall x_k' \forall X_1',\dots,\forall X_s'\ \left(\bigwedge_{i \in I} \neg\mathsf{adj}(x_i',x_i')\right)\wedge\left(\bigwedge_{i \in [k] \backslash I} \mathsf{adj}(x_i',x_i')\right) \rightarrow \phi_0'(F',x_1',\dots,x_k',X_1',\dots,X_s').
\end{equation*}}A similar argument as above shows that in this case, $\phi'$ satisfies condition (ii) as well.
Therefore, our induction works.
This completes the construction of $\phi'$.
One can easily verify that the time for the construction is $|\phi|^{O(1)}$.
\end{proof}

\section{Lower bounds} \label{sec-lb}

In this section we prove our lower bound results. That is, we prove Theorems~\ref{thm:intro-LW1},\ref{thm:intro-LW2},\ref{thm:intro-LW3}, and \ref{thm:intro-LW4}. In Section~\ref{subsec:reductions}, we give reductions that helps us to prove the lower bound results, and finally in Section~\ref{subsec:lw} we prove Theorems~\ref{thm:intro-LW1}, \ref{thm:intro-LW2}, \ref{thm:intro-LW3}, and \ref{thm:intro-LW4}.

\subsection{Reductions}
\label{subsec:reductions}

\paragraph*{Notations} For two positive integers $n$ and $i$, $\log^{(i)}n=\log^{(i-1)} \log  (n)$, where $\log^{(1)}n=\log n$. For two integers $n$ and $d$, $\ilog{n}{d}=\prod_{i=1}^d\log^{(i)} n$. For any $d$, 
$\ilog{n}{d}=(\log n)^{1+o(1)}$. An MSO formula is in {\em prenex normal form} (PNF) if it is of the
form $Q_1x_1 \ldots Q_{\ell} x_{\ell} B$ where $Q_i\in \{\forall,\exists\}$ and $B$ is a quantifier free MSO formula. We can apply the following rules to convert an MSO formula to one in PNF. We use $\phi_1 \equiv \phi_2$ to denote two formulas are logically equivalent.  

\begin{proposition}
\label{prop:PNF:Rules}
Let $\phi$ and $\psi$ be two MSO formulas and $x$ be a variable such that $x$ is a free variable in $\phi$ and $x$ is not a variable in $\psi$. Then, 
\begin{itemize}
    \item $(Q x \;\phi) \wedge \psi \equiv Q x (\phi \wedge \psi)$, where $Q\in \{\forall,\exists\}$. 
    \item $(Q x\; \phi) \vee \psi \equiv Q x (\phi \vee \psi)$, where $Q\in \{\forall,\exists\}$.  
    \item $\neg \forall x \; \phi\equiv \exists x\; \neg \phi$ and  $\neg \exists x \; \phi\equiv \forall x\; \neg \phi$
    \item $(\forall x\;\phi )\rightarrow \psi \equiv \exists x(\phi \rightarrow \psi )$ 
    \item $(\exists x\;\phi )\rightarrow \psi \equiv \forall x(\phi \rightarrow \psi )$ 
\item $(\psi \rightarrow \forall x\;\phi ) \equiv \forall x(\psi \rightarrow \phi )$ 
    \item $(\psi\rightarrow \exists x\;\phi )\equiv \exists x(\psi \rightarrow \phi)$ 
\end{itemize}
\end{proposition}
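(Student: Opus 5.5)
Each equivalence is proved semantically: fix an arbitrary graph $G$ and an arbitrary assignment to the free variables of both sides, and show that the two sides take the same truth value. The hypothesis that $x$ does not occur in $\psi$ is used throughout in one form: for every $v$ in the domain of $x$ (a vertex, or a vertex set when $x$ is a set variable), the truth value of $\psi$ under the assignment extended by $x\mapsto v$ equals its truth value under the original assignment. This is a routine induction on the structure of $\psi$, since the atomic formulas of $\psi$ never mention $x$.

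For the first two items, take $Q=\exists$. The left side $(\exists x\,\phi)\wedge\psi$ holds iff some $v$ satisfies $\phi[x\mapsto v]$ and $\psi$ holds. By the independence fact, this is the same as some $v$ satisfying $\phi[x\mapsto v]\wedge\psi[x\mapsto v]$, which is exactly $\exists x(\phi\wedge\psi)$. The case $Q=\forall$ is the same argument with ``every $v$'' in place of ``some $v$''. The $\vee$ versions follow identically. One point needs the nonemptiness of the domain of $x$: for $(\forall x\,\phi)\vee\psi$, if $\psi$ is true then both sides are true, and for the $\exists$/$\wedge$ case with $\psi$ false both sides are false. The boundary case where the domain is empty can only arise for vertex variables on the empty graph, so I would assume graphs are nonempty, as is standard. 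Set variables always have the nonempty domain $2^{V(G)}$.

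The negation item is immediate from the semantics. $\neg\forall x\,\phi$ holds iff not every $v$ satisfies $\phi$, iff some $v$ satisfies $\neg\phi$; the dual statement is handled symmetrically. For the implication items, rewrite $\alpha\rightarrow\beta$ as $\neg\alpha\vee\beta$ and reduce to the previous items. For example, $(\forall x\,\phi)\rightarrow\psi \equiv \neg\forall x\,\phi\vee\psi \equiv (\exists x\,\neg\phi)\vee\psi \equiv \exists x(\neg\phi\vee\psi) \equiv \exists x(\phi\rightarrow\psi)$. The remaining three implication items follow the same way, using the $\vee$ rule with the disjuncts commuted for the two items where the quantifier sits on the right of the implication.

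The only delicate step is the independence fact together with the nonempty-domain caveat, and I expect it to be the main obstacle only in the sense of stating it carefully; everything else is bookkeeping.
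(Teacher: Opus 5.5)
Your argument is correct. The paper states this proposition without proof and treats the items as the standard prenex-normal-form rules. Your semantic verification is the standard argument one would supply: the fact that $\psi$'s truth value does not depend on the value of $x$, plus reducing the implication items to the $\neg$ and $\vee$ items via $\alpha\rightarrow\beta\equiv\neg\alpha\vee\beta$. You are also right that the hypothesis ``$x$ is free in $\phi$'' is never needed.

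There is one slip in your nonempty-domain caveat. The cases you single out, $(\forall x\,\phi)\vee\psi$ with $\psi$ true and $(\exists x\,\phi)\wedge\psi$ with $\psi$ false, are exactly the ones where both sides agree even over an empty domain. The cases that actually need nonemptiness are the other two:
\begin{itemize}
\item $(\forall x\,\phi)\wedge\psi$ with $\psi$ false, where $\forall x\,(\phi\wedge\psi)$ would be vacuously true;
\item $(\exists x\,\phi)\vee\psi$ with $\psi$ true, where $\exists x\,(\phi\vee\psi)$ would be false.
\end{itemize}
Consequently, the two implication items you derive from the $\exists$/$\vee$ rule also need it, namely $(\forall x\,\phi)\rightarrow\psi$ and $\psi\rightarrow\exists x\,\phi$. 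Since you assume $V(G)\neq\emptyset$ globally, this creates no gap.

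The assumption is genuinely necessary and is left implicit in the paper's statement. Take the empty graph, $\phi$ the formula $x=x$, and $\psi$ the sentence $\forall y\,(y=y)$. Then $(\exists x\,\phi)\vee\psi$ is true but $\exists x\,(\phi\vee\psi)$ is false. This is harmless for the paper, which applies the rules only to the nonempty trees and graphs built in its reductions.
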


A graph $G$ is a relational structure with universe $V(G)$ over a finite signature consisting of binary edge relation ${\sf adj}$ and finite number of unary label predicates. In Section~\ref{subsec:labelpredicates}, we explain how to remove label predicates of the formula we construct in the reduction. 
For a label predicate $P$, we write $\exists x\in P \; \psi$ to denote $\exists x (P(x)\wedge \psi)$ and $\forall x \in P \; \psi$ to denote $\forall x (P(x)\implies \psi)$.  

All our reductions that are used to prove Theorems~\ref{thm:intro-LW1}, 
\ref{thm:intro-LW3}, and \ref{thm:intro-LW4} for $i\geq 2$ 
are from \threecoloring. That is, given a graph $G$ (an instance of \threecoloring), we need to construct a tree or treewidth bounded graph with finite number of label predicates and a logic formula such that the formula is true if and only if $G$ is $3$-colorable. These reductions are stated in the three theorems below.

\begin{theorem}
\label{thm:low-MSO}
    Let $d,i,j \in \mathbb{N}$ such that $j<i \leq d$.
    Given an $n$-vertex graph $G$ and integers $k \geq 2$, $s \geq 0$, $\alpha \geq 1$ such that $n \leq \exp^{(i-1)}(\frac{k-9}{2}\cdot \max\{s,1\})$, one can compute in 
    $O(3^{n/\alpha})$ 
    time a tree $T$ and an $S$-MSO formula $\phi$ for $S = ((k_1,s_1),\dots,(k_d,s_d))$ satisfying the following.
    \begin{enumerate}[(i)]
        \item $G$ is a $3$-colorable if and only if $T$ satisfies $\phi$.
        \item $|V(T)| = O(3^{n/\alpha})$ and 
        $|\phi| = O(2^{s}+k+\alpha^2)$. 
        \item $k_i = k$, $k_1 = O(\alpha)$, $k_{i'} = O(1)$ for all $i' \in \{2,\dots,i-1\}$, and $k_{i'} = 0$ for all $i' \in [d] \backslash [i]$.
        \item $s_j = s$ and $s_{j'} = 0$ for all $j' \in [d] \backslash \{j\}$.
        \item The number of label predicates used in $\phi$ is $O(1)$
    \end{enumerate}
\end{theorem}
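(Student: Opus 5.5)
We follow the outline of Section~\ref{sec-overview2} and proceed by induction on $i\ge 2$ (the case $i=1$ is not covered by this statement). The first step builds the base tree $T_1$ from $G$. We partition $V(G)$ into $\alpha$ groups $V_1,\dots,V_\alpha$ and enumerate the at most $3^{\lceil n/\alpha\rceil}$ proper $3$-colorings of each $G[V_i]$. We create the nodes $rt,U_i,C_{i,j},v_{i,j,a},id_{i,j,a},c_{i,j,a},f_s,f_{s,a},f_{s,b}$ and label predicates distinguishing the node types and the colors $\mathsf{Q}_1,\mathsf{Q}_2,\mathsf{Q}_3$. This takes $O(3^{n/\alpha}\cdot\mathrm{poly}(n))$ time and size; we absorb the polynomial factor by slightly adjusting $\alpha$, or by grouping so that the count is $O(3^{n/\alpha})$. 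We then write the FO+$\mathsf{id}$ formula $\psi$ of~(\ref{eqn:intro:psi}). It has an existential block of $O(\alpha)$ vertex quantifiers choosing $C_{1,j_1},\dots,C_{\alpha,j_\alpha}$ ($\psi_{\sf evalid}$ has size $O(\alpha^2)$, since it forces distinct $U$-parents), followed by a universal block of $9$ vertex quantifiers and the matrix $\psi_{\sf evalid}\wedge(\neg\psi_{\sf uvalid}\vee\neg\psi_{\sf id}\vee\psi_{\sf color})$. Correctness, i.e.\ $T_1\models\psi$ iff $G$ is $3$-colorable, is a direct check: the chosen $C$'s glue into a global coloring, and every edge is bichromatic.

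The core of the proof is an identifier-test lemma, proved by induction on $h\ge 1$. For every $N\le \exp^{(h)}(\frac{k-9}{2}\max\{s,1\})$ there is a gadget: a rooted tree $\tau(x)$ attached below each node $x$ carrying an id in $[N]$, and a formula $\mathrm{NEQ}_h(z,y)$ expressing $\mathsf{id}(z)\neq\mathsf{id}(y)$. The formula is a prenex formula whose outermost block is universal, whose quantifier structure has $h$ blocks, whose innermost block has roughly $k$ vertex quantifiers and (if $s>0$) whose block at a prescribed depth contains the $s$ set quantifiers, with $O(1)$ vertex quantifiers in all the intermediate blocks.

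For $h=1$ we use the base-$2^s$ path encoding of~(\ref{enq:intro:LB2}). The id is written in $\approx (k-9)/2$ digits on a path. When $s>0$, the set variables $W_1,\dots,W_s$ must encode each digit, and $\psi_{\sf set}$ must force these sets using only $O(1)$ universally quantified vertices; for this we represent each digit by a small labelled gadget whose binary expansion is readable by $O(1)$ quantifiers, of size $O(2^s)$, which accounts for the $2^s$ term in $|\phi|$. Negating this test gives a universal block.

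For the inductive step $h\to h+1$ we use the bit-position trick of Figure~\ref{fig:treeid1:Intro}. A node with id in $[N]$, $N\le 2^{N'}$, receives children $a_\ell$, each with a position child $d_\ell$ of id $\ell\in[N']$ and a bit child $c_\ell$. Then $\mathsf{id}(z)=\mathsf{id}(y)$ iff $\forall a,a',d,d',c,c'\,(\neg\phi_{\sf valid}\vee\neg[\mathsf{id}(d)=\mathsf{id}(d')]\vee\phi_{\sf bit})$, and we substitute the level-$h$ gadget for the inner equality. This adds one block of $O(1)$ quantifiers in front and flips the polarity. We choose the location of the set block, by placing $W_1,\dots,W_s$ in the appropriate level of recursion, so that it lands at block $j<i$ in the final formula.

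Finally we substitute $\mathrm{NEQ}_{i-1}$ for $\neg\psi_{\sf id}$ in $\psi$, using two copies for the two endpoint comparisons that share one quantifier block by merging. We then convert to prenex normal form via Proposition~\ref{prop:PNF:Rules}, padding empty blocks so that exactly $d$ blocks appear. The universal $9$-block of $\psi$ merges with the outermost universal block of the gadget. This yields $k_1=O(\alpha)$, $k_{i'}=O(1)$ for $1<i'<i$, $k_i=k$, and $k_{i'}=0$ beyond $i$, which is where the hypothesis on $k-9$ is used.

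The main obstacle is the bookkeeping of the PNF conversion. We must verify that the polarities alternate correctly, so that each substitution adds exactly one block rather than two. We must also verify that the set quantifiers can be placed precisely at block $j$ for every $j<i$, while $\psi_{\sf set}$ contributes no extra alternation. Our first attempt at this will be to quantify $W$ existentially at the top of the level-$(i-j)$ gadget, and to check that its $O(1)$ forcing quantifiers can be absorbed into the adjacent universal block.
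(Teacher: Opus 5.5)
Your architecture matches the paper's. It uses the base tree and FO+$\mathsf{id}$ formula of Lemma~\ref{lem:base-red}, the bit-position recursion of Lemmas~\ref{lem:L1-uni-IDtestStep1} and~\ref{lem:ID-level-d}, and a base-$2^s$ digit path read through set variables forced by a universal formula $\psi_{\sf set}$ with $O(1)$ quantifiers. The block counting you list as the main obstacle is harmless. The first recursion step is free: the new $\forall^6$ merges with the universal block of $\mathrm{NEQ}_1$, which is why the $\log\log n$ test of Lemma~\ref{lem:L1-uni-IDtest} is a single universal block. So for $h\ge 2$ the outermost universal block of $\mathrm{NEQ}_h$ is empty, and with that convention your count of $h$ blocks for $N\le\exp^{(h)}(\cdot)$ is correct.

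The genuine gap is the step you leave open, namely putting the $s$ set quantifiers into block $j$, and the plan you sketch does not work.
\begin{itemize}
\item In the final formula, block $j$ is existential for odd $j$ and universal for even $j$. So for even $j$ no existential $W$ can sit there.
\item An existential $W$ at the top of a $\mathrm{NEQ}$ gadget clashes with the surrounding block, because that top is universal. This happens whether the gadget occurs positively or negated.
\item By your own indexing ($\mathrm{NEQ}_{i-1}$ occupies blocks $2,\dots,i$), the top of the level-$(i-j)$ gadget is block $j+1$, not $j$.
\item For $j=1$, the only option when $i=2$, the set block is the outer existential block, outside every id gadget. A lemma that places $W$ at some depth inside $\mathrm{NEQ}_h$ cannot produce this.
\item If $W$ and $\psi_{\sf set}$ live inside your base test, its negation is $\forall W\,\forall\vec a\,\exists^{O(1)}(\cdots)$. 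That is two blocks, not one universal block.
\end{itemize}

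The missing idea is that $\psi_{\sf set}$ is satisfiable and determines $W_1,\dots,W_s$ uniquely on every node the tests inspect. So $W$ is a fixed global labelling $W^*$ of the tree, not something attached to a recursion level. The construction then goes as follows.
\begin{itemize}
\item Build the whole prenex formula $\Phi(W)$ with $W$ free; it is correct for $W=W^*$.
\item Insert $\mathsf{Q}_jW$ into block $j$, where $\mathsf{Q}_j$ is that block's quantifier.
\item Wrap the part below block $j$ as $\psi_{\sf set}(W)\wedge(\cdot)$ if $\mathsf{Q}_j=\exists$, and as $\neg\psi_{\sf set}(W)\vee(\cdot)$ if $\mathsf{Q}_j=\forall$.
\end{itemize}
For every choice of the earlier variables, both $\exists W(\psi_{\sf set}(W)\wedge\theta)$ and $\forall W(\neg\psi_{\sf set}(W)\vee\theta)$ are equivalent to $\theta[W^*]$. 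The $O(1)$ quantifiers of $\psi_{\sf set}$ (universal) or $\neg\psi_{\sf set}$ (existential) fall into block $j+1\le i$, whose polarity matches. Hence no block is created and every $j<i$ is reachable. A single $W$ also serves both endpoint comparisons, whereas a copy of $W$ inside each gadget would put $2s$ set quantifiers into block $j$ after prenexing.

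This is what the paper does. Theorem~\ref{thm:low-MSO:s>0:i=2} and Case~2 of Theorem~\ref{thm:low-MSO:s>0:i>2} put $\exists W$ in block~1, with $\psi_{\sf set}$ merged into the universal block. For $j\ge 2$, Theorem~\ref{thm:low-MSO:s>0:i>2} runs $j-2$ FO levels and then plugs in Lemma~\ref{lem:ID-level-d-FirstMSO}, which comes in both an $\exists W$ and a $\forall W$ version.

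A smaller point concerns your digit gadget. With $O(1)$ quantified vertices and $O(1)$ labels, $\psi_{\sf set}$ cannot single out the $r$-th position of a binary gadget for $r$ up to $s$. The forcing therefore has to propagate through the $W$'s themselves along a path. The paper uses a unary path whose length equals the digit, on which the $W$'s must count up from $0$. This is checked by a successor formula of size $O(2^s)$, which is where the $2^s$ in $|\phi|$ comes from.
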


\begin{theorem}
\label{thm:low-MSO-TW1}
    Let $d,i,j \in \mathbb{N}$ such that $j \leq i \leq d$ and $i \geq 2$.
    Given an $n$-vertex graph $G$ and integers $k \geq 2$, $k' \geq 2$, $t \geq 1$, $\alpha \geq 1$ such that $n \leq \exp^{(i-1)}(\min\{k,t\}\cdot \frac{k'-9}{2})$, one can compute in $O(3^{n/\alpha})$ time a graph $G'$ and an $S$-FO formula $\phi$ for $S = (k_1,\dots,k_d)$ satisfying the following.
    \begin{enumerate}[(i)]
        \item $G$ is a $3$-colorable if and only if $G'$ satisfies $\phi$.
        \item $|V(G')| = O(3^{n/\alpha})$, $\mathbf{tw}(G') \leq t$, and 
        $|\phi| = O(2^{\min\{k,t\}} +k'+\alpha^2)$.
        \item $k_{i'} = O(1)$ for all $i' \in [i] \backslash \{1,j,i\}$, $k_{i'} = 0$ for all $i' \in [d] \backslash [i]$, and
        \begin{itemize}
            \item if $1 = j < i$, then $k_1 = k_j = O(\alpha+k)$ and $k_i = k'$;
            \item if $1 < j < i$, then $k_1 = O(\alpha)$, $k_j = k$, and $k_i = k'$;
            \item if $1 < j = i$, then $k_1 = O(\alpha)$, $k_j = k_i = k+k'$.
        \end{itemize}
        \item The number of label predicates used in $\phi$ is $O(1)$.
    \end{enumerate}
\end{theorem}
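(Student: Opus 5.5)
The plan is to follow the template of Theorem~\ref{thm:low-MSO}, replacing set quantifiers by a bounded-treewidth ``bit gadget'' as in the overview of Theorems~\ref{thm:intro-LW3} and~\ref{thm:intro-LW4}. Set $t' = \min\{k,t\}$.

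\textbf{Step 1: base tree.} First build the base tree $T_1$ from $G$: split $V(G)$ into $\alpha$ groups, enumerate the at most $3^{\lceil n/\alpha\rceil}$ proper colorings of each group, and add the edge gadgets $f_s$. Take the $(\alpha,9)$-FO+${\sf id}$ formula $\psi$ of~(\ref{eqn:intro:psi}). The first block has $O(\alpha)$ existential vertex quantifiers. The second block consists of the nine universal quantifiers, with $\neg\psi_{\sf id}$ sitting under them.

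\textbf{Step 2: base case $i=2$.} Encode every identifier in $[n]$ in base $2^{t'}$ by a path $a_1,\dots,a_{m},x$ with $m=\lceil (k'-9)/2\rceil$, so that $2^{t'm}\ge n$. Add $t'$ new ``bit'' vertices $w_1,\dots,w_{t'}$, and join a path vertex $a_r$ to $w_q$ exactly when the $q$-th bit of its digit is $1$.

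Removing $w_1,\dots,w_{t'}$ leaves a tree, so $\mathbf{tw}(G')\le t'+1$. To get $\le t$ exactly, use $t'-1$ bits, or absorb the additive constant into the bound on $n$.

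Each $w_q$ must be pinned by a vertex variable. Where these variables go determines the case split in (iii):
\begin{itemize}
\item If $j=1$, quantify $w_1,\dots,w_{t'}$ existentially in the first block, together with the $C_{\cdot,\cdot}$ variables. This gives $k_1=O(\alpha+k)$. A unary label predicate ${\sf W}$ plus an ordering gadget (each $w_q$ carries a pendant path of length $q$, or similar) lets the formula identify them.
\item If $1<j<i$, place them in the $j$-th block.
\item If $j=i$, merge them into the id-test block, giving $k_i=k+k'$.
\end{itemize}

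The id test is formula~(\ref{enq:intro:LB3}). It is existential with $2m\le k'-9$ path variables, and it sits under a negation inside the universal second block. It therefore becomes a universal block of $k'$ variables, which is consistent with $\neg\psi_{\sf id}$, so $k_i=k'$. The formula length is dominated by $2^{t'}$ terms in the conjunction over bits, plus $k'$ and the $O(\alpha^2)$ validity checks.

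\textbf{Step 3: general $i$ by induction.} For general $i$, I will iterate the $\log\log$ identifier trick of Figure~\ref{fig:treeid1:Intro} $i-2$ times. Each round replaces an id in $[N]$ by a subtree whose leaves carry ids in $[\lceil\log N\rceil]$ and bit labels ${\bf 0}/{\bf 1}$. Each round costs one extra alternation and $O(1)$ variables (the $a,a',d,d',c,c'$ of~(\ref{eqn:uniloglog:Intro})). The innermost test is the gadget from Step 2, applied to ids of size $n'\le\log^{(i-2)}n$. The hypothesis $n\le\exp^{(i-1)}(t'(k'-9)/2)$ gives exactly $2^{t'm}\ge n'$.

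All identifier-bearing vertices at every level attach to the same $t'$ bit vertices, so the treewidth stays at most $t$.

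Alternation bookkeeping: each negated id test flips the quantifier type. Pad with dummy blocks, or place the universal/existential versions appropriately using Proposition~\ref{prop:PNF:Rules}, so that the blocks land at indices $2,\dots,i$ with $O(1)$ variables in the intermediate ones and $k'$ in block $i$. Blocks beyond $i$ are empty.

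\textbf{Main obstacle.} I expect the main difficulty to be this quantifier-block bookkeeping. Two things must hold at once: the $w$-variables quantified at block $j$ stay in scope and of the correct type for the innermost test at depth $i$, and the PNF conversion does not push the $w$'s into a different block or add an alternation. I would first try existential $w$'s in block $j$, forced to be the genuine bit vertices by labels. The negation then only affects the later blocks.

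Correctness, i.e.\ (i), follows as in the overview: the base formula holds iff consistent group colorings exist, and each id test is exact. Sizes follow from $|V(T_1)|=O(3^{n/\alpha}\cdot n)$, after absorbing $\mathrm{poly}(n)$ into $\alpha$.
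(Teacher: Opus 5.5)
Your plan is correct and is essentially the paper's construction. The paper takes the tree/MSO reduction of Theorem~\ref{thm:low-MSO}, with $s=\min\{k,t\}$ set variables quantified at block $j$ (using the quantifier type of that block). It then replaces each $W_r$ by a vertex variable ranging over $s$ new apex bit vertices, adjacent to the digit vertices according to their bits; this is exactly your direct construction (base tree, $i-2$ rounds of the $\log\log$ identifier trick, base-$2^{t'}$ digit paths, and the $w_q$'s at block $j$).

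One adjustment: do not pin the $w_q$'s with pendant paths of length $q$, since checking those costs $\Theta(t'^2)$ extra quantified variables and would break $k_j=k$. It suffices to require $w_1,\dots,w_{t'}$ to be pairwise distinct vertices of ${\sf W}$: since $|{\sf W}|=t'$ and the bitwise comparison is invariant under permuting the bit vertices, no ordering is needed.
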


\begin{theorem}
\label{thm:low-MSO-TW22}
    Let $d,i \in \mathbb{N}$ such that $2 \leq i \leq d$.
    Given an $n$-vertex graph $G$ and integers $k \geq 2$, $t \geq 1$, $\alpha \geq 1$ such that $n \leq \exp^{(i-1)}(\frac{k-6}{3} \log t)$, one can compute in $O(3^{n/\alpha})$ time a graph $G'$ and an $S$-FO formula $\phi$ for $S = (k_1,\dots,k_d)$ satisfying the following.
    \begin{enumerate}[(i)]
        \item $G$ is a $3$-colorable if and only if $G'$ satisfies $\phi$.
        \item $|V(G')| = O(3^{n/\alpha})$, $\mathbf{tw}(G') \leq t$, and $|\phi| = O(\alpha^{2}+k)$.
        \item $k_i = k$, $k_1 = O(\alpha)$, $k_{i'} = O(1)$ for all $i' \in \{2,\dots,i-1\}$, and $k_{i'} = 0$ for all $i' \in [d] \backslash [i]$.
        \item The number of label predicates used in $\phi$ is $O(1)$. 
    \end{enumerate}
\end{theorem}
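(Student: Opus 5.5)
We follow the blueprint of the technical overview for Theorems~\ref{thm:intro-LW3} and~\ref{thm:intro-LW4}, specialised to the $\log t$-variant. We build the base tree $T_1$ from the partition $V_1,\dots,V_\alpha$ of $V(G)$, with one node $C_{i,j}$ per proper $3$-coloring of $G[V_i]$. This takes $O(3^{n/\alpha})$ time and yields $O(3^{n/\alpha})$ nodes. Together with it we take the $(\alpha,9)$-FO$+{\sf id}$ formula $\psi$ of (\ref{eqn:intro:psi}), where ${\sf id}$ takes values in $[n]$. All of the work goes into replacing the atom ${\sf id}(z)={\sf id}(y)$, which occurs only negatively (inside the $\forall$-block), by an FO formula. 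That formula must use the right block structure and must run on a graph of treewidth at most $t$.

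For $i=2$, set $m=\lfloor (k-6)/3\rfloor$ and $q=t-1$, so that $q^{m}\ge n$ follows from $n\le 2^{\frac{k-6}{3}\log t}$, up to adjusting constants and ignoring the boundary case $t=1$. We attach to each id-carrying node $x$ a path $a_1,\dots,a_m,x$, where $a_r$ encodes the $r$-th base-$q$ digit of ${\sf id}(x)$. The digit encoding uses $q$ global hub vertices $w_0,\dots,w_{q-1}$, with $a_r$ adjacent to exactly $w_{b_r}$. To keep treewidth at most $t$, we check that deleting the $q$ hubs leaves a forest, which has treewidth $1$. Hence $\mathbf{tw}\le q+1\le t$. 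We mark hubs and path vertices with $O(1)$ label predicates.

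The test ${\sf id}(z)={\sf id}(y)$ then becomes
\[
\exists a_1,\dots,a_m,a'_1,\dots,a'_m,\,u_1,\dots,u_m\ \ {\sf path}(a_1,\dots,a_m,z)\wedge{\sf path}(a'_1,\dots,a'_m,y)\wedge\bigwedge_{r}\bigl(u_r\in{\sf W}\wedge{\sf adj}(a_r,u_r)\wedge{\sf adj}(a'_r,u_r)\bigr),
\]
which has $3m\le k-6$ existential quantifiers. Negated, these become universal and merge into the $\forall$ block of $\psi$, giving an $(O(\alpha),9+3m)$-FO formula with $9+3m\le k$. Padding with dummy quantifiers makes $k_2=k$ exactly, and $|\phi|=O(\alpha^2+k)$. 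Correctness is that of the base-tree reduction: paths are unique because every $a_r$ has a single parent chain, and each $a_r$ has a unique hub neighbour.

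For $i\ge3$ we iterate the $\log\log$ identifier trick of the overview $i-2$ times. Each id-carrying node is replaced by the subtree of Figure~\ref{fig:treeid1:Intro}: children $a$ with a position-child $d$ and a bit-child $c\in{\bf 0}\cup{\bf 1}$. The equality ${\sf id}(z)={\sf id}(y)$ is then expressed by (\ref{eqn:uniloglog:Intro}) with $O(1)$ universal quantifiers, plus the subformula $\neg({\sf id}(d)={\sf id}(d'))$. Here ${\sf id}(d)\le\lceil\log n\rceil$, and that subformula lives one alternation deeper. After $i-2$ levels the remaining ids are bounded by $n'\approx\log^{(i-2)}n\le 2^{\frac{k-6}{3}\log t}$, using the hypothesis $n\le\exp^{(i-1)}(\frac{k-6}{3}\log t)$. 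On these we apply the hub-based test above, which lands in block $i$ with $k$ quantifiers, while blocks $2,\dots,i-1$ use $O(1)$ quantifiers each.

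The quantifier polarity alternates at each nesting level, so we have to check that the innermost test comes out existential or universal as needed. If it does not, we use the dual test: ``for all $a_r,a'_r$ on the paths and all hubs $u$, ${\sf adj}(a_r,u)\leftrightarrow{\sf adj}(a'_r,u)$''. This also uses $O(m)$ quantifiers. The hubs are shared globally, the gadget trees are forests, and each level multiplies the tree size only by a polylogarithmic factor. So we still get $|V(G')|=O(3^{n/\alpha})$ and $\mathbf{tw}\le t$.

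We expect the main obstacle to be the bookkeeping at the innermost level. The exact count $9+3m\le k$ (hence the $\frac{k-6}{3}$) has to survive both polarities and merge correctly into block $i$ rather than spilling into block $i+1$. The treewidth bound must also hold after all the nested gadgets are attached to the same hub set. The PNF conversion is routine by Proposition~\ref{prop:PNF:Rules}.
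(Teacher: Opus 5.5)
Your proposal is essentially the paper's proof: the base tree and FO$+{\sf id}$ formula of Lemma~\ref{lem:base-red}, $i-2$ rounds of the position/bit gadget (Lemma~\ref{lem:ID-level-d}), and finally base-$t$ digits on pendant paths read off through a shared set of hub vertices, which adds $3k'$ existential quantifiers to block $i$. Your polarity fallback is never triggered, since after negating the formula of Lemma~\ref{lem:ID-level-d} the ${\sf id}$ atom is positive exactly when its block is existential.

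There are two slips, both also present in the paper and both easily repaired:
\begin{itemize}
\item For $i=2$, the two ${\sf id}$ tests sitting in the $\forall$ block need $9+6m$ quantifiers rather than $9+3m$ once put in prenex form. Even $9+3m$ can exceed $k$, so $k_2=k$ holds only up to constants.
\item ${\sf path}(a_1,\dots,a_m,z)$ only enforces consecutive adjacency, so the walk may backtrack inside the pendant path. For example, digit strings $(0,1,0)$ and $(1,1,0)$ can both be read as $(0,1,0)$, a false positive. Adding $a_r\neq a_{r+2}$ and $a'_r\neq a'_{r+2}$ fixes this without new quantifiers.
\end{itemize}
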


 For all the reductions the initial part of the construction of the tree or treewidth bounded graph is a tree (let as call it as base tree). So, first we  explain the construction of the base tree and for each of the above theorems, we modify the base tree. Along with the base tree, we also define finite number of label predicates and define a formula which is an FO formula except that it contains a function ${\sf id}$ which returns a non-negative integer. So, let us call such a formula as FO+{\sf id} formula. Then, for different theorems above, we explain how to replace ${\sf id}$ with a valid subformula in the logic. 

\begin{lemma}[Construction of base tree]
\label{lem:base-red}
There is an algorithm that given an $n$-vertex graph $G$ and an integer $\alpha>1$, runs in time $O(3^{n/\alpha}+m)$ and outputs a tree $T$ and an {\em $(2\alpha+1,9)$-FO+${\sf id}$} formula $\psi$ satisfying the following, where $m=|E(G)|$.  

    \begin{enumerate}[(i)]
        \item $G$ is a $3$-colorable if and only if $T$ satisfies $\psi$.
        \item $|V(T)| = O(3^{n/\alpha}+m)$ and $|\psi| =  O(\alpha^2)$. 
        \item The number of label predicates specified in the formula is $9$. 
    \end{enumerate}
\end{lemma}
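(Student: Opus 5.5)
We realize the base tree $T_1$ of the technical overview (Section~\ref{sec-overview2}) together with the formula $\psi$ of Equation~(\ref{eqn:intro:psi}), and then check the three claims. First partition $V(G)=[n]$ into groups $V_1,\dots,V_\alpha$ of size at most $\lceil n/\alpha\rceil$. For each $i$, enumerate all $3^{|V_i|}$ maps $V_i\to[3]$ and keep the proper colorings $c_{i,1},\dots,c_{i,\ell_i}$. The tree is: root $rt$; children $U_1,\dots,U_\alpha$ and $f_1,\dots,f_m$; each $f_s$ with two children $f_{s,a},f_{s,b}$ for its endpoints; each $U_i$ with children $C_{i,1},\dots,C_{i,\ell_i}$; each $C_{i,j}$ with one child $v_{i,j,a}$ per $a\in V_i$; each $v_{i,j,a}$ with two leaves $id_{i,j,a},c_{i,j,a}$. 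The vertex count is $O(\alpha\cdot 3^{n/\alpha}\cdot n/\alpha+m)$. To get $O(3^{n/\alpha}+m)$ we need a slightly finer group size (e.g.\ absorb the polynomial factor by using $\lceil n/\alpha\rceil$ with $\alpha$ replaced by a constant multiple, or simply note $3^{n/\alpha}\cdot n=O(3^{n/\alpha'})$ for $\alpha'<\alpha$). I expect the intended reading tolerates this, so I would state the bound with the poly factor absorbed. Construction time is linear in the output plus the enumeration.

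We fix nine unary label predicates. These are \textsf{Root}, \textsf{U}, \textsf{C}, \textsf{F} (edge nodes), \textsf{FE} (endpoint nodes $f_{s,\cdot}$), \textsf{V} (nodes $v_{i,j,a}$), \textsf{Id} (leaves $id_{i,j,a}$), and $\mathsf{Q}_1,\mathsf{Q}_2,\mathsf{Q}_3$ (leaves $c_{i,j,a}$, colored by $c_{i,j}(a)$). The function ${\sf id}$ is defined on $\textsf{FE}\cup\textsf{Id}$ by the vertex of $G$ it names. The formula $\psi$ has two blocks. The existential block quantifies $u_1,\dots,u_\alpha$ and $x_1,\dots,x_\alpha$, so $2\alpha+1$ variables suffice, e.g.\ with one extra for the root. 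Here $\psi_{\sf evalid}$ states that the $u_i$ are pairwise distinct \textsf{U}-nodes, that each $x_i\in\textsf{C}$ is adjacent to $u_i$, and (via root adjacency) that every \textsf{U}-node is among the $u_i$. This costs $O(\alpha^2)$ size from the pairwise distinctness/coverage conjuncts. The universal block quantifies the nine nodes $f_s,f_{s,a},f_{s,b},v,v',z_1,z_2,c,c'$. Then $\psi_{\sf uvalid}$ checks the labels and adjacencies: $f_{s,a},f_{s,b}$ are distinct children of $f_s$, $v,v'$ are \textsf{V}-children of some $x_i$ (a disjunction over $i$, size $O(\alpha)$), and $z_1,c$ (resp.\ $z_2,c'$) are the \textsf{Id}- and \textsf{Q}-children of $v$ (resp.\ $v'$). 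Next, $\psi_{\sf id}$ is ${\sf id}(z_1)={\sf id}(f_{s,a})\wedge{\sf id}(z_2)={\sf id}(f_{s,b})$. Finally, $\psi_{\sf color}$ is $\bigwedge_{q}\neg(\mathsf{Q}_q(c)\wedge\mathsf{Q}_q(c'))$.

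For correctness (i), suppose first that $G$ has a proper coloring $c$. Choose $x_i=C_{i,j_i}$ with $c_{i,j_i}=c|_{V_i}$. For any universal tuple passing $\psi_{\sf uvalid}$ and $\psi_{\sf id}$, the nodes $c,c'$ encode $c(a),c(b)$ for an edge $ab$, so they differ. Conversely, given witnesses $x_i$, the colorings $c_{i,j_i}$ combine into a coloring of $G$. It is proper within groups by construction. For an edge $e_s=ab$ across groups, or within a group, take the universal tuple pointing to $f_s$ and the $v$-nodes for $a,b$ under the chosen $C$'s; $\psi_{\sf color}$ then forces distinct colors. The one subtlety, and the main point to be careful about, is that $\psi_{\sf evalid}$ must force exactly one chosen $C$ per group. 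Otherwise the universal part could pick $v$ under the wrong $C$ or miss a vertex. This is why the $u_i$ with coverage of all \textsf{U}-nodes are quantified, and why ${\sf id}$ is needed only to match endpoint names (distinct groups are disjoint, so a vertex $a$ appears under exactly one chosen $C$). Swapping the roles of $f_{s,a},f_{s,b}$ is harmless by symmetry.
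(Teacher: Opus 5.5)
Your proposal is correct and is essentially the paper's construction: the same tree, an existential block of a root variable plus $\alpha$ pairwise-distinct group nodes plus $\alpha$ coloring nodes $x_i$ adjacent to them, a universal block of nine nodes, and the matrix $\psi_{\sf evalid}\wedge(\neg\psi_{\sf uvalid}\vee\neg\psi_{\sf id}\vee\psi_{\sf color})$, up to two slips: your predicate list has ten entries, so drop \textsf{Root} (or, as the paper does, drop \textsf{U} and pin down $u_i$, the paper's $p_i$, as the neighbour of the root adjacent to $x_i$), and the ``coverage'' conjunct should be dropped as redundant by pigeonhole, since writing it out would cost an extra universal variable. Two of your additions actually improve on the written proof. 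Requiring $f_{s,a}\neq f_{s,b}$ is necessary, because the paper's $\psi_{\sf uvalid}$ omits it, and then the universal tuple with $z_1=z_2$, $y_1=y_2$, $v_1=v_2$, $c_1=c_2$ (paper's notation) falsifies $\psi_{\sf color}$ and makes $\psi$ false whenever $G$ has an edge. Also, $|V(T)|$ can indeed be $\Theta(n\cdot 3^{n/\alpha})$ (e.g.\ for edgeless $G$), a polynomial factor that is harmless in the ETH applications.
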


\begin{proof}
Let $G$ be the given $n$-vertex graph. 
Let $V(G)=\{1,2,\ldots,n\}$ and $E(G)=\{e_1,\ldots,e_m\}$. Now we partition $V(G)$ into $\alpha$ groups $V_1,\ldots,V_{\alpha}$ such that for each $i\in [\alpha]$, $|V_i|\leq \lceil \frac{n}{\alpha}\rceil$. Let $\ell=3^{\lceil \frac{n}{\alpha}\rceil}$. For each $i\in [\alpha]$, there are at most $\ell$ proper $3$-colorings of $G[V_i]$. Let us call these $3$-colorings  $c_{i,1},\ldots c_{i,{\ell}_i}$. 

\smallskip 
\medskip
\noindent
{\bf Construction.}
Now we construct a tree $T$ rooted at a node $rt$ as follows. The root $rt$ has $\alpha+m$ children and we name them $U_1,\ldots,U_{\alpha}$ and $f_1,\ldots, f_m$. That is, each node $U_i$ corresponds to the vertex subset $V_i$ of $G$ and each node $f_i$ corresponds to the edge $e_i$ of $G$. Each node $f_i$ has two children corresponding to the endpoints of $e_i$. Let us name these nodes with $f_{i,a}$ and $f_{i,b}$, 
where $a$ and $b$ are the endpoints of $e_i$. 
Now, we explain the children of each $U_i$. Recall that $\{c_{i,1},\ldots,c_{i,\ell_i}\}$ is the set of all proper $3$-colorings of $G[V_i]$. The node $U_i$ has $\ell_i$ children, and they are named $C_{i,1},\ldots,C_{i,\ell_i}$. Each $C_{i,j}$ has $|V_i|$ children, and each of them corresponds to a vertex in $V_i$. 
That is, each $a\in V_i$, $C_{i,j}$ has a child node named $v_{i,j,a}$. Now, each $v_{i,j,a}$ has two children $id_{i,j,a}$ and $c_{i,j,a}$. 
See Figure~\ref{fig:Base-tree} for an illustration.

\begin{figure}[t]
    \centering
    
\begin{tikzpicture}[scale=0.9, ver/.style = {draw, circle}]

 \tikzset{tri1/.style={isosceles triangle, draw, inner sep=0pt,
     anchor=south, shape border rotate=90, isosceles triangle stretches}}

 \tikzset{tri2/.style={isosceles triangle, draw, inner sep=0pt,
     anchor=south, shape border rotate=270, isosceles triangle stretches}}


\node[ver] (r) at (0,0) {$rt$};

\node[ver] (f1) at (-3,2) {$f_{1}$}; 
\node at (-2,2) {$\cdots$};
\node[ver] (f2) at (-0.5,2) {$f_s$};
\node at (0.5,2) {$\cdots$};
\node[ver] (fm) at (2,2) {$f_m$};

\draw[-] (f1)--(r) -- (f2);
\draw[-] (r) -- (fm);

\node[ver] (es1) at (-2,4) {$f_{s,a}$};
\node[ver] (es2) at (1,4) {$f_{s,b}$};

\draw[-] (es1)--(f2) -- (es2);

\node[ver] (1) at (-4,-2) {$U_1$}; 
\node at (-2.5,-2) {$\cdots$};
\node[ver] (i) at (-1,-2) {$U_i$};
\node at (0.5,-2) {$\cdots$};
\node[ver] (p) at (2.5,-2) {$U_{\alpha}$};
\draw[-] (1)--(r) -- (i);
\draw[-] (r) -- (p);

\node[ver] (d1) at (-3,-4) {$C_{i,1}$};
\node at (-2,-4) {$\cdots$};
\node[ver] (d2) at (-1,-4) {$C_{i,j}$};
\node at (0,-4) {$\cdots$};
\node[ver] (dl) at (1,-4) {$C_{i,\ell_i}$};

\draw[-] (d1)--(i) -- (d2);
\draw[-] (i) -- (dl);

\node[ver] (dija) at (-1,-6) {$v_{i,j,a}$};
\node at (0,-6) {$\cdots$};
\node at (-2,-6) {$\cdots$};

\draw[-] (dija)--(d2);

\node[ver] (dj1) at (-2,-8) {$id_{i,j,a}$};
\node[ver] (dj2) at (0,-8) {$c_{i,j,a}$};

\draw[-] (dj1)--(dija) -- (dj2);




\end{tikzpicture} 

    \caption{Illustration of construction of tree $T$ in Lemma~\ref{lem:base-red}}
    \label{fig:Base-tree}
\end{figure}

Next, we define $9$ unary label predicates on $T$.

\begin{itemize}
    \item ${\sf R}=\{rt\}$. 
    \item ${\sf EN}=\{f_1,\ldots,f_m\}$
    \item ${\sf EP}$ is the set of children of nodes in ${\sf EN}$. 
    \item  ${\sf P}$ is the union of the set of children of $U_i$ over all $i$. That is, ${\sf P}=\{C_{i,j}~:~i\in [\alpha], j\in [\ell_i]\}$.  
    \item ${\sf N}=\{v_{i,j,a}~:~i\in [\alpha],j\in [\ell_i],a\in V_i\}$
    \item ${\sf I}=\{id_{i,j,a}~:~i\in [\alpha], j\in [\ell_i],a\in V_i\}$
    \item For each $q\in [3]$, ${\sf Q}_q=\{c_{i,j,a} ~:~a \textit{ is colored with $q$ in the proper coloring $c_{i,j}$ of $G[V_i]$}\}$. 
\end{itemize}

Next we define the function ${\sf id}$ on ${\sf EP\cup I}$. 

$$
  {\sf id}(x) =     \left\{ 
  \begin{array}{rcl}
  b & \mbox{if} & x=f_{s,b}\in {\sf EP} \\
  a & \mbox{if} & x=id_{i,j,a}\in {\sf I}
  \end{array}\right.
$$

Now we define an FO+${\sf id}$ formula. We want to encode the statement that ``there exist nodes $x_1, x_2,\ldots, x_{\alpha}$ that correspond to proper $3$-colorings of $G[V_1],\ldots G[V_{\alpha}]$, respectively, such that for any node $z \in {\sf EN}$, the endpoints of the edge corresponding to $z$ should get different colors according to the selected colorings of $G[V_1],\ldots G[V_{\alpha}]$''. This can be encoded as follows.

\begin{eqnarray}
 \psi \equiv&& \exists r \in {\sf R} \exists x_1\in {\sf P} \; \exists x_2 \in {\sf P}\; \ldots \; \exists x_{\alpha}\in {\sf P} \exists p_1 \ldots \exists p_{\alpha}  \nonumber\\ 
 && \forall z \in {\sf EN}\; \forall z_1, z_2 \in {\sf EP}\; \forall v_1, v_2 \in {\sf N} \; \forall y_1, y_2 \in {\sf I} \; \forall c_1,c_2 \in {\sf Q}
 _1\cup {\sf Q}_2 \cup {\sf Q}_3\nonumber\\
 && \psi_{{\sf evalid}} \wedge( \psi_{{\sf uvalid}} \Rightarrow ( \psi_{{\sf id}}\Rightarrow \psi_{{\sf color}} ))\nonumber\\
  &\equiv& \exists r \in {\sf R} \exists x_1\in {\sf P} \; \exists x_2 \in {\sf P}\; \ldots \; \exists x_{\alpha}\in {\sf P} \exists p_1 \ldots \exists p_{\alpha}   \nonumber\\ 
 && \forall z \in {\sf EN}\; \forall z_1, z_2 \in {\sf EP}\; \forall v_1, v_2 \in {\sf N} \; \forall y_1, y_2 \in {\sf I} \; \forall c_1,c_2 \in {\sf Q}
 _1\cup {\sf Q}_2 \cup {\sf Q}_3 \nonumber\\
 && \psi_{{\sf evalid}} \wedge ( \neg \psi_{{\sf uvalid}} \vee \neg\psi_{{\sf id}}\vee \psi_{{\sf color}}) \label{eqn:psi1}
\end{eqnarray}

Here, $\psi_{\sf evalid}$ and $\psi_{\sf uvalid}$ ensure that the selected vertices are {\em valid}. Before explaining it, let us define  
$\psi_{{\sf id}}$ and $\psi_{{\sf color}}$. 
The formula
$\psi_{{\sf id}}$ is the formula $ {\sf id}(z_1)={\sf id}(y_1)\wedge {\sf id}(z_2)={\sf id}(y_2)$. 
Next, we explain the formula $\psi_{{\sf color}}$.

$$\psi_{{\sf color}}\equiv \neg (\bigvee_{i\in [3]} (c_1\in {\sf Q}_i \wedge c_2\in {\sf Q}_i)  )$$

That is, $\psi_{{\sf color}}$ is true if and only if $c_1$ and $c_2$ are two different colors. 
To define $\psi_{\sf uvalid}$, let us first define ${\sf con}(X,Y)$ for two sets of variables $X$ and $Y$. ${\sf con}(X,Y)$ is true if and only if every vertex $x\in X$ is adjacent to at least one vertex in~$Y$. Formally, $${\sf con}(X,Y)=\bigwedge_{x\in X}\left(\bigvee_{y\in Y} {\sf adj(x,y)}\right).$$

Now, $\psi_{{\sf uvalid}}$ is defined below. 

$$\psi_{{\sf uvalid}}\equiv {\sf con}(\{z_1,z_2\},\{z\})\wedge {\sf con}(\{v_1,v_2\},\{x_1,\ldots,x_{\alpha}\}) \wedge {\sf con}(\{y_1,c_1\},\{v_1\})\wedge {\sf con}(\{y_2,c_2\},\{v_2\}).$$

We also need $x_1, x_2,\ldots, x_{\alpha}$ to correspond to appropriate $3$-colorings, one for each $G[V_i]$, where $i\in [\alpha]$. We encode it using a formula $\psi_{\sf evalid}$ as follows. 

$$\psi_{\sf evalid} \equiv \left(\bigwedge_{1\leq i < j \leq [\alpha]} p_i\neq p_j \right) \wedge \left( \bigwedge_{i\in \alpha} {\sf adj}(x_i,p_i)\wedge {\sf adj}(p_i,r)\right)$$

It is easy to see that $\psi$ is an $(2\alpha+1,9)$-FO+{\sf id} formula and 
$|\psi|=O(\alpha^2)$. Notice that the size of $T$ is $O(3^{n/\alpha}+m)$ and it can be constructed in time $O(3^{n/\alpha}+m)$. 
Now prove the correctness of the reduction.  
\begin{claim}
$G$ has a proper $3$-coloring if and only if $T\models \psi$.     
\end{claim}

\begin{proof}
 Suppose $G$ has a proper $3$-coloring. 
 Let $c_{1,j_1},\ldots,c_{\alpha,j_{\alpha}}$ be proper $3$-colorings of $G[V_1],\ldots,G[V_{\alpha}]$, respectively, such that their union is a proper $3$-coloring of $G$. Now, substitute $x_i=C_{i,j_i}$ and $p_i=U_i$ for all $i\in [\alpha]$. Notice that $\psi_{\sf evalid}$ is true for the above assignment. Now, fix a choice of values to $z,z_1,z_2,v_1,v_2,y_1,y_2,c_1,c_2$ such that the unary label predicates mentioned in the quantifier prefix hold. 
 If $\psi_{{\sf uvalid}}$
 is false, then $\psi$ is true. Now, assume that $\psi_{\sf valid}$ is true. 
 This implies that 
 \begin{itemize}
     \item[$(i)$] $z_1$ and $z_2$ are adjacent to $z$ and $z$ corresponds to an edge $e\in E(G)$, 
     \item[$(ii)$] $v_i$ is adjacent to a vertex in $\{C_{1,j_1},\ldots,C_{p,j_p}\}$ for all $i\in \{1,2\}$, and 
    \item[$(iii)$] $y_i$ and $c_i$ are adjacent to $v_i$ for all $i\in \{1,2\}$. 
 \end{itemize}
 Now, we need to prove that 
 $\psi_{{\sf id}}\Rightarrow \psi_{{\sf color}}$
 is true. Notice that $z_1$ and $z_2$ correspond to endpoints of an edge $e$ in $G$ (see item $(i)$). Suppose $\psi_{{\sf id}}$ is true. 
 That is,  ${\sf id}(z_1)={\sf id}(y_1)$ and ${\sf id}(z_2)={\sf id}(y_2)$, then clearly the vertices corresponding to $y_i$ and $z_i$ in $G$ are the same. Moreover, we know that ${\sf id}(z_1)$ and ${\sf id}(z_2)$ are the endpoints of the edge $e$ in $G$. This implies that colors of ${\sf id}(z_1)$ and ${\sf id}(z_2)$ by the $3$-coloring should be different. The colors of these endpoints are encoded in $c_1$ and $c_2$ because ${\sf id}(z_1)={\sf id}(y_1)$ and ${\sf id}(z_2)={\sf id}(y_2)$. This implies that 
 $\psi_{\sf color}$ is true. 
 Hence, $\psi_{{\sf id}}\Rightarrow \psi_{{\sf color}}$
 is true.

 Now, we prove the reverse direction. Suppose $G$ is not $3$-colorable. Then we need to prove that $\neg \psi$ is true. Notice that 

\begin{eqnarray*}
\neg \psi \equiv&& \forall r\in {\sf R} \; \forall x_1\in {\sf P} \; \ldots \;\forall x_{\alpha}\in {\sf P}\; \forall p_1 \;  \ldots \; \forall p_{\alpha} \\ 
&& \exists z \in {\sf EN}\; \exists z_1, z_2 \in {\sf EP}\; \exists v_1, v_2 \in {\sf N}\; \exists y_1, y_2 \in {\sf I}\; \exists c_1,c_2 \in {\sf Q}_1\cup {\sf Q}_2 \cup {\sf Q}_3\\
 && \neg \psi_{{\sf evalid}} \vee ( \psi_{{\sf evalid}} \wedge \psi_{{\sf id}}\wedge \neg \psi_{{\sf color}} )
\end{eqnarray*}
Fix $r\in {\sf R}$ and $x_1,\ldots, x_{\alpha}\in {\sf P}$. If $\psi_{\sf evalid}$ is false for the above assignments, then $\neg \psi$ is true and we are done. So now assume that $\psi_{\sf evalid}$ is true. This implies that $x_1, x_2,\ldots, x_{\alpha}$ correspond to $3$-colorings, one for each $G[V_i]$, where $i\in [\alpha]$. Without loss of generality assume that each $x_i$ corresponds to a proper $3$-coloring of $G[V_i]$. Since $G$ is not $3$-colorable, there is an edge $e=\{a,b\}$ such that the endpoints of $e$ are colored using the same color by the union $\pi$ of above colorings. Let $f_s$ be the node in $T$ corresponding to $e$. Let $z=f_s, z_1=f_{s,a}$, and $z_2=f_{s,b}$. Let $i_1,i_2\in [\alpha]$ such that $a\in V_{i_1}$ and $b\in V_{i_2}$. Then, there exist $r_1$ and $r_2$ such that $v_{i_1,r_1,a}$ is a child of $x_{i_1}$ and $v_{i_2,r_2,b}$ is a child of $x_{i_2}$. Now we set $v_1=v_{i_1,r_1,a}$
and $v_2=v_{i_2,r_2,b}$. Let $y_1=id_{i_1,r_1,a}, \; y_2=id_{i_2,r_2,b}$, 
$c_1=c_{i_1,r_1,a}$, and $c_2=c_{i_2,r_2,b}$. Now it is easy to verify that $\psi_{\sf uvalid}$ and $\psi_{\sf id}$ are true. Since both $a$ and $b$ got the same color by $\pi$. This implies that $\psi_{\sf color}$ is false and hence $\neg \psi_{\sf color}$ is true. 
This implies that $\neg \psi$ is true. This completes the proof of the claim and the lemma. 
\end{proof}
\end{proof}

Notice that in the formula $\psi$ created in Lemma~\ref{lem:base-red}, the sub formulas $\psi_{\sf evalid}, \psi_{\sf uvalid}$ and $\psi_{\sf color}$ are quantifier free FO-formulas. On the other hand $\psi_{{\sf id}} \equiv {\sf id}(z_1)={\sf id}(y_1)\wedge {\sf id}(z_2)={\sf id}(y_2)$. So we need to encode ${\sf id}(z_1)={\sf id}(y_1)$ for any two nodes $z_1$ and $y_1$ in ${\sf EP \cup I}$, 
using FO or MSO formulas. All lower bound theorems are based on how we encode testing whether two {\sf id}s are same. Next, we explain various methods for this task. 

{\bf For the rest of this section $T$ is the graph constructed in Lemma~\ref{lem:base-red}}.   

\begin{lemma}[$\log n$-length FO identifier test]
\label{lem:L1-ex-IDtest}
Let $n$ be a positive integer and for any leaf node $x$, ${\sf id}(x) \in [n]$ ( if ${\sf id}$ is defined on $x$). 
One can construct a tree $T'$ (which is super graph of $T$) and $(2\log n)$-FO formula $\phi_{\exists}$ on two free variables 
with the following specifications. 
\begin{enumerate}[(a)]
    \item For any two nodes $z$ and $y$ in $T$, $T'\models \phi_{\exists}(z,y)$ if and only if ${\sf id}(z)={\sf id}(y)$
    \item All the quantifiers in $\phi_{\exists}(z,y)$ are existential quantifiers and $|\phi_{\exists}(z,y)|=O(\log n)$
    \item The number of label predicates used in $\phi_{\exists}(z,y)$ is two. 
    \item $V(T')=O(|V(T)|\cdot \log n)$
\end{enumerate}
\end{lemma}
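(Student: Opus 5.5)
The construction follows the overview's ``$\log n$-length FO identifier test'': attach a binary-encoded path to every node carrying an ${\sf id}$, then compare the two paths bit by bit with an existential formula. Let $k=\lceil\log n\rceil$. Introduce two fresh label predicates ${\bf 0}$ and ${\bf 1}$. For every node $x$ of $T$ on which ${\sf id}$ is defined (these are the leaves in ${\sf EP}\cup{\sf I}$), let $b_1\cdots b_k$ be the binary representation of ${\sf id}(x)$. Attach to $x$ a new pendant path $a_{x,1},\dots,a_{x,k}$, where $a_{x,k}$ is adjacent to $x$ and $a_{x,r}$ is adjacent to $a_{x,r+1}$. Put $a_{x,r}\in{\bf 1}$ if $b_r=1$ and $a_{x,r}\in{\bf 0}$ otherwise. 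The resulting graph $T'$ is a tree, is a supergraph of $T$, and has $|V(T')|\le |V(T)|(k+1)=O(|V(T)|\log n)$, which gives (d). No other vertex is labelled ${\bf 0}$ or ${\bf 1}$.

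Define
\[
\phi_\exists(z,y)=\exists a_1\cdots\exists a_k\,\exists a'_1\cdots\exists a'_k\ \bigwedge_{r\in[k]}\bigl(({\bf 0}(a_r)\vee{\bf 1}(a_r))\wedge({\bf 0}(a'_r)\vee{\bf 1}(a'_r))\wedge({\bf 0}(a_r)\leftrightarrow{\bf 0}(a'_r))\bigr)\wedge{\sf path}(a_1,\dots,a_k,z)\wedge{\sf path}(a'_1,\dots,a'_k,y).
\]
This formula has $2k$ existential quantifiers and size $O(\log n)$, and it uses only the two predicates ${\bf 0}$ and ${\bf 1}$, giving (b) and (c).

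For (a), suppose ${\sf id}(z)={\sf id}(y)$. Choosing $a_r=a_{z,r}$ and $a'_r=a_{y,r}$ satisfies $\phi_\exists(z,y)$.

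The converse is the main point: I need that any witness $a_1,\dots,a_k$ with ${\sf path}(a_1,\dots,a_k,z)$ and all $a_r$ labelled must equal $a_{z,1},\dots,a_{z,k}$. Labelled vertices occur only on the attached paths. The only labelled neighbour of $z$ is $a_{z,k}$, since the paths attached to different nodes are disjoint and adjacent to their own node only. So $a_k=a_{z,k}$. Induct downward: given $a_{r+1}=a_{z,r+1}$, its labelled neighbours are $a_{z,r}$ and $a_{z,r+2}$, so a priori $a_r\in\{a_{z,r},a_{z,r+2}\}$, and ${\sf path}$ does not enforce distinctness. To fix this, I will add conjuncts $a_r\neq a_{r+2}$ together with $a_{k-1}\neq z$, and the primed analogues with $y$; this keeps the size $O(\log n)$. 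With these conjuncts, a walk of length $k$ starting at $z$ that never backtracks and stays on labelled vertices must be exactly the attached path. Hence the bits read along the witnesses are the binary digits of ${\sf id}(z)$ and ${\sf id}(y)$, and the biconditionals force ${\sf id}(z)={\sf id}(y)$. Alternatively, one can give the positions alternating parity labels, but the non-backtracking conjuncts suffice without extra predicates.

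The main obstacle is this uniqueness of the path. Everything else is routine bookkeeping.
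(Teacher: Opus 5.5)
Your construction (a pendant path of $\lceil\log n\rceil$ vertices with fresh ${\bf 0}/{\bf 1}$ labels at each ${\sf id}$-node) and your formula are the paper's, with one addition: the conjuncts $a_r\neq a_{r+2}$ and $a_{k-1}\neq z$, plus their primed analogues. That addition matters.

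The paper takes ${\sf path}(a_1,\dots,a_k,z)$ to be a bare conjunction of adjacencies and never argues the converse direction. As written, its formula is wrong for $k\ge 3$. Suppose ${\sf id}(z)$ has bits $001$ and ${\sf id}(y)$ has bits $101$. The backtracking walk $(a_1,a_2,a_3)=(a_{z,3},a_{z,2},a_{z,3})$ reads $1,0,1$, and so does $y$'s genuine path. Hence the paper's $\phi_\exists(z,y)$ holds although ${\sf id}(z)\neq{\sf id}(y)$.

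Your non-backtracking conjuncts force any witness walk to be the attached path. The only labelled neighbour of $z$ is $a_{z,k}$, and each later step has exactly one labelled option that is not a backtrack. So your argument closes a real gap in the paper. It still uses $2k$ existential quantifiers, has size $O(\log n)$, and needs only two predicates.

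Two small corrections:
\begin{itemize}
\item Your aside about alternating parity labels does not work. The vertices $a_{z,r}$ and $a_{z,r+2}$ have the same parity, so the backtrack from $a_{z,r+1}$ to $a_{z,r+2}$ is still allowed. You would need residues modulo $3$, which also exceeds the two-predicate budget in (c).
\item With $k=\lceil\log n\rceil$ bits, the value ${\sf id}(x)=n$ cannot be represented when $n$ is a power of two. Encode ${\sf id}(x)-1$ instead; the paper has the same slip.
\end{itemize}
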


\begin{proof}
We need an FO formula to represent ${\sf id}(z)={\sf id}(y)$ for two variables $z$ and $y$. First, we create two label predicates ${\bf 1}$ and ${\bf 0}$. 
Let $k$ be the smallest integer such that $k\geq \log n$. 
For any node $t$ in $T$ such that ${\sf id}(t)$ is defined, we do the following. 
We know that ${\sf id}(t)\in [n]$. 
Thus, ${\sf id}(t)$ can be represented in binary using $k$ bits. Let $\pi_t$ be a path $a_1,\ldots,a_{k},t$ on $k+1$ vertices. Now, we replace node $t$ with path $\pi_t$. Let $b_1,\ldots,b_{k}$ be the binary representation of the number ${\sf id}(t)$. For each $i\in [k]$, $a_i\in {\bf 0}$ if $b_i=0$ and $a_i\in {\bf 1}$ otherwise. The tree constructed as explained above is $T'$. Clearly,  $|V(T')|=O(|V(T)|\cdot \log n)$. Now, for two leaf nodes $z$ and $y$ in $T$, ${\sf id}(z)={\sf id}(y)$ can be encoded as 

\begin{eqnarray*}
\exists a_1,a_1'\in {\bf O}\cup {\bf 1} \ldots \exists a_{k},a_{k}' \in {\bf O}\cup {\bf 1}
&&
\left(\bigwedge_{i\in [k] }
(a_i \in {\bf 0} \Leftrightarrow a'_i \in {\bf 0})\right) \\
&& \wedge  {\sf path}(a_1,\ldots,a_{k},z)
\wedge
{\sf path}(a'_1,\ldots,a'_{k},y) 
\end{eqnarray*}

where, ${\sf path}(w_1,\ldots,w_q) \equiv (\bigwedge_{i\in [q-1]} {\sf adj}(w_i,w_{i+1}))$. Notice that the number of quantifiers in the above formula is $2\log n$.     
\end{proof}

\begin{lemma}[$\log\log n$-length FO identifier test]
\label{lem:L1-uni-IDtest}
Let $n$ be a positive integer and 
for any leaf node $x$, ${\sf id}(x) \in [n]$ ( if ${\sf id}$ is defined on $x$). 
One can construct a tree $T'$ (which is super graph of $T$) and $(6+2\log\log n)$-FO formula $\phi_{\forall}$ on two free variables 
with the following specifications. 
\begin{enumerate}[(a)]
    \item For any two nodes $z$ and $y$ in $T$, $T'\models \phi_{\forall}(z,y)$ if and only if ${\sf id}(z)={\sf id}(y)$
    \item All the quantifiers in $\phi_{\forall}(z,y)$ are universal quantifiers and $|\phi_{\forall}(z,y)|=O(\log\log n)$
    \item The number of label predicates used in $\phi_{\forall}(z,y)$ is $6$. 
    \item $V(T')=O(|V(T)|\cdot \log n \cdot \log \log n)=O(|V(T)|\cdot \ilog{n}{2})$
\end{enumerate}
\end{lemma}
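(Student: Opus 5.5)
The plan follows the construction sketched in the technical overview (Figure~\ref{fig:treeid1:Intro}) and bootstraps on Lemma~\ref{lem:L1-ex-IDtest}. Let $k=\lceil\log n\rceil$. For every node $t$ of $T$ on which ${\sf id}$ is defined, with binary representation $b_1\cdots b_k$ of ${\sf id}(t)$, attach $k$ new children $a_1,\dots,a_k$ to $t$. Each $a_r$ gets two children $d_r$ and $c_r$. We put $c_r$ in a label predicate ${\bf 0}$ or ${\bf 1}$ according to $b_r$, and we set an auxiliary identifier ${\sf id}'(d_r)=r\in[k]$. We also introduce label predicates ${\sf A}$, ${\sf D}$, ${\sf C}$ for the three new layers. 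These predicates make children of $t$ of type $a$ distinguishable from the original neighbours of $t$ in $T$; this is the point where I expect some care is needed, since $t$ may be an inner node of the base tree.

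Next I apply Lemma~\ref{lem:L1-ex-IDtest} to the tree just built, with $n$ replaced by $n'=k$ and ${\sf id}'$ defined only on the $d$-nodes. This replaces each $d_r$ by a path of length $\lceil\log k\rceil$ labelled with ${\bf 0}'/{\bf 1}'$. The result is a tree $T'$ together with a purely existential formula $\phi'_\exists(d,d')$ with $2\log\log n$ quantifiers that expresses ${\sf id}'(d)={\sf id}'(d')$. Together with ${\sf A},{\sf D},{\sf C},{\bf 0},{\bf 1}$ this gives the six label predicates in (c). The size bound in (d) is immediate: each identified node gets $O(\log n)$ new nodes, and each $d$-node gets $O(\log\log n)$ more.

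The formula is then
\[
\phi_\forall(z,y)\equiv\forall a\,\forall a'\,\forall d\,\forall d'\,\forall c\,\forall c'\;\bigl(\neg\phi_{\sf valid}\vee\neg\phi'_\exists(d,d')\vee(c\in{\bf 0}\Leftrightarrow c'\in{\bf 0})\bigr),
\]
where $\phi_{\sf valid}$ asserts the adjacencies $a\sim z$, $a'\sim y$, $d,c\sim a$ and $d',c'\sim a'$, together with the appropriate label predicates. Pushing the negation through $\phi'_\exists$ turns its $2\log\log n$ existential quantifiers into universal ones, and these can be pulled into the prefix by Proposition~\ref{prop:PNF:Rules}. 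The total is $6+2\log\log n$ universal quantifiers, and the formula has length $O(\log\log n)$.

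For correctness (a), suppose ${\sf id}(z)={\sf id}(y)$. Any valid choice with equal positions ${\sf id}'(d)={\sf id}'(d')$ compares the same bit, so the formula holds. Conversely, if the ids differ at some position $r$, choose the $r$-th gadgets under $z$ and under $y$. Then $\phi'_\exists(d,d')$ holds while the bits differ, which falsifies the formula. Two points need checking. First, $\phi'_\exists$ must be correct on the extended tree; this holds because Lemma~\ref{lem:L1-ex-IDtest} is applied to a tree whose identifiers lie in $[k]$. Second, the labels must force $a$ to be a gadget child rather than some other neighbour of $z$, so that vacuous or spurious matches cannot occur.
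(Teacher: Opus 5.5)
Your construction and formula are essentially the paper's proof: the same $a_r,d_r,c_r$ gadget under each identified node, Lemma~\ref{lem:L1-ex-IDtest} applied to the $d$-nodes with $n$ replaced by $\lceil\log n\rceil$, and the negated existential block pushed into the universal prefix to give $6+2\log\log n$ universal quantifiers. The only slip is the count in (c): ${\sf A},{\sf D},{\sf C},{\bf 0},{\bf 1}$ together with the two predicates from Lemma~\ref{lem:L1-ex-IDtest} make seven, not six. Drop ${\sf C}$, which is redundant because ${\bf 0}\cup{\bf 1}$ already singles out the $c$-nodes. Then quantify $c,c'\in{\bf 0}\cup{\bf 1}$ as the paper does, with ${\sf L}_1,{\sf L}_2$ playing the roles of your ${\sf A},{\sf D}$, which gives exactly six.
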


\begin{proof}

Like in Lemma~\ref{lem:L1-ex-IDtest}, we need an FO formula to represent ${\sf id}(z)={\sf id}(y)$ for two variables $z$ and $y$. First, we create four label predicates ${\bf 1}$, ${\bf 0}$, ${\sf L}_1$, and ${\sf L}_2$. 
Let $k$ and $k'$ be the smallest integers such that $k\geq \log n$ and $k'\geq \log \log n$. 
For any node $t \in V(T)$ such that ${\sf id}(t)$ is defined, 
we do the following. We know that ${\sf id}(t)\in [n]$. 
Thus, ${\sf id}(t)$ can be represented in binary using $k$ bits. 
Let $b_1\ldots b_{k}$ be the binary representation of the number ${\sf id}(t)$. 
Now we replace $t$ with a subtree as shown in Figure~\ref{fig:treeid1}. That is, $t$ has $k$ children $a_1,\ldots,a_k$ and each $a_i$ has two children $d_i$ and $c_i$. Here, each $a_i$ represents $b_i$ in the following way. We set ${\sf id}(d_i)=i$, $c_i\in {\bf O}$ if $b_i=0$ and $c_i\in {\bf 1}$ if $b_i=1$. In other words, ${\sf id}(d_i)$ represents the position of $b_i$ in the binary representation $b_1\ldots b_{k}$ and $c_i$ denotes the value of the bit $b_i$. The crucial observation is that for each $i\in [k]$, ${\sf id}(d_i)$ is a positive integer less than or equal to $\lceil \log n \rceil$. Now we add all $a_i$ to ${\sf L}_1$ and all $d_i$ to ${\sf L}_2$. Let us name the tree constructed as $T_1$. It is easy to see that $|V(T_1)|=O(|V(T)| \log n)$.


\begin{figure}
    \centering
\begin{tikzpicture}[
  grow=down,
  level 1/.style={sibling distance=3cm},
  level 2/.style={sibling distance=2cm},
  every node/.style={circle,draw},
  edge from parent/.style={draw,-}
]

\node {$t$}
  child {node {$a_1$}
    child {node {$d_1$}}
    child {node {$c_1$}}
  }
  child {node {$a_2$}
    child {node {$d_2$}}
    child {node {$c_2$}}
  }
  child {node[draw=none] {$\ldots$} edge from parent[draw=none]
    child {node[draw=none] {$\ldots$} edge from parent[draw=none]}
  }
    child {node {$a_k$}
    child {node {$d_k$}}
    child {node {$c_k$}}
  };

\end{tikzpicture}
  \caption{Subtree to replace the node $t$}
\label{fig:treeid1}
\end{figure}

Informally, for two leaf nodes $z$ and $y$ in $T$, ${\sf id}(z)={\sf id}(y)$ is true if and only if for any child $a$ of $z$ and any child $a'$ of $y$ if the {\em id of the left child of} $a$ and the {\em id of the left child of} $a'$ are equal, then the {\em corresponding bits (encoded in the right child of $a$ and $a'$)} are same. This can be encoded as follows.    

\begin{eqnarray}
{\sf id}(z)={\sf id}(y) &\equiv& \forall a,a' \in {\sf L}_1, \forall d,d' \in {\sf L}_2 \forall c,c' \in {\bf 0}\cup {\bf 1} \;\;
\phi_{{\sf valid}} \Rightarrow  ( \phi_{{\sf id}}\Rightarrow \phi_{{\sf bit}} )\nonumber\\
&\equiv& \forall a,a' \in {\sf L}_1, \forall d,d' \in {\sf L}_2 \forall c,c' \in {\bf 0}\cup {\bf 1} \;\;
(\neg \phi_{{\sf valid}} \vee  \neg \phi_{{\sf id}}\vee \phi_{{\sf bit}} )\label{eqn:uniloglog}
\end{eqnarray}

Here, $\phi_{{\sf valid}}$, $\phi_{{\sf id}}$ and $\phi_{{\sf bit}}$ are defined below.

\begin{eqnarray*}
\phi_{\sf id} &\equiv& {\sf id}(d)={\sf id}(d')\\
\phi_{\sf bit} &\equiv& c \in {\bf 0} \Leftrightarrow c'\in {\bf 0} \\
\phi_{\sf valid} &\equiv& ({\sf con}(\{d,c\},\{a\})\wedge ({\sf con}(\{d',c'\},\{a'\}) \wedge {\sf adj}(a,z) \wedge {\sf adj}(a',y)     
\end{eqnarray*}

The formula $\phi_{\sf valid}$ is true if and only if $a$ and $a'$ are children of $z$ and $y$, respectively, $d$ and $c$ are children of $a$, and $d'$ and $c'$ are children of $a'$. The formula $\phi_{\sf bit}$ is true if and only if they both {\em encode} the same bit. Clearly, the formula in (\ref{eqn:uniloglog}) is not an FO formula, but an FO+${\sf id}$ formula with the value of ${\sf id}(d)$ and ${\sf id}(d')$ are positive integers less than or equal to $\lceil \log n \rceil$. So, we apply Lemma~\ref{lem:L1-ex-IDtest} on $T_1$ where $n$ is replaced with $\log n$. Thus, we get a tree $T'$ (which is super graph of $T_1$) and a $(\log\log n)$-FO formula $\phi_{\exists}$ on two free variables 
with the following specifications. 
\begin{enumerate}[(a)]
    \item For any two leaf nodes $d$ and $d'$ in $T_1$, $T'\models \phi_{\exists}(d,d')$ if and only if ${\sf id}(d)={\sf id}(d')$. 
    \item All the quantifiers in $\phi_{\exists}(d,d')$ are existential quantifiers and $|\phi_{\exists}(d,d')|=O(\log \log n)$
    \item The number of label predicates used in $\phi_{\exists}(z,y)$ is two. 
    \item $V(T')=O(|V(T_1)|\cdot \log\log n)$
\end{enumerate}

Thus, we can replace $\phi_{\sf id}$ in the formula in (\ref{eqn:uniloglog}) with $\phi_{\exists}(d,d')$. Notice that there is negation before $\phi_{\sf id}$ in the formula in (\ref{eqn:uniloglog}). The formula $\neg \phi_{\exists}(d,d')$ is of the form $\neg \exists a_1 \ldots \exists a_{k} \phi'$ where $\phi'$ is quantifier free and $k=2\log \log n$. This implies that 
$$
\neg \phi_{\exists}(d,d') \equiv 
\forall a_1 \ldots \forall a_{k} \neg \phi'.
$$

Thus, by replacing $\phi_{\sf id}$ in the formula in (\ref{eqn:uniloglog}) with $\phi_{\exists}(d,d')$, we get $(6+2\log \log n)$-FO formula $\phi_{\forall}$ where all the quantifiers are universal quantifiers. By Lemma~\ref{lem:L1-ex-IDtest} and (\ref{eqn:uniloglog}), we get that the length of $\phi_{\forall}$ is $O(\log \log n)$. Property (d) above and the fact that  $|V(T_1)|=O(|V(T)|\cdot \log n)$ implies that $|V(T')|=O(|V(T)|\cdot \log n \cdot \log \log n)$. 
Initially, we created four label predicates and Lemma~\ref{lem:L1-ex-IDtest} introduced two label predicates. Thus total number of label predicates created is $6$. This completes the proof of the lemma.  
\end{proof}

Notice that the proof of Lemma~\ref{lem:L1-uni-IDtest} has two steps. In the first step, we created an $O(1)$ length FO+${\sf id}$ formula (see (\ref{eqn:uniloglog})) and in the second step we applied Lemma~\ref{lem:L1-ex-IDtest} on the formula obtained in step 1, but now the {\sf id} values are at most $\log n$. In fact we can apply step 1 multiple times to get an $(c_1,\ldots,c_{d})$-FO+${\sf id}$ formula with {\sf id} values at most $\log^{(d)}n$. Towards getting that, let us formalize the outcome of step 1 of the proof of Lemma~\ref{lem:L1-uni-IDtest} in the following lemma.     

\begin{lemma}[Lower the identifier value]
\label{lem:L1-uni-IDtestStep1}
Let $n$ be a positive integer and 
for any leaf node $x$, ${\sf id}(x) \in [n]$ (if ${\sf id}$ is defined on $x$). 
One can construct a tree $T'$ (which is super graph of $T$) and $(6)$-FO+{\sf id} formula $\phi_{\forall}$ on two free variables 
with the following specifications. 
\begin{enumerate}[(a)]
    \item $\phi_{\forall}(z,y)$ is of the form $\forall x_1, \ldots, \forall x_6 \;\; (\phi_1(z,y,x_1,\ldots,x_6) \vee \neg ({\sf id}(x_1)={\sf id}(x_2)))$, where $\phi_1$ is a quantifier free formula of length $O(1)$ and ${\sf id}(x_1), {\sf id}(x_2) \in [\lceil \log n \rceil]$. 
    \item For any two nodes $z$ and $y$ in $T$, $T'\models \phi_{\forall}(z,y)$ if and only if ${\sf id}(z)={\sf id}(y)$. 
    \item The number of label predicates used in $\phi_{\forall}(z,y)$ is $4$. 
    \item $V(T')=O(|V(T)|\cdot \log n)$
\end{enumerate}
\end{lemma}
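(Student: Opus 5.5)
The plan is to isolate step 1 of the proof of Lemma~\ref{lem:L1-uni-IDtest}, stopping before the inner identifier test is applied. I take $k$ minimal with $k\ge \log n$, introduce four label predicates ${\bf 0},{\bf 1},{\sf L}_1,{\sf L}_2$, and modify $T$ locally. Each node $t$ on which ${\sf id}$ is defined gets $k$ new children $a_1,\dots,a_k$, all in ${\sf L}_1$. Each $a_i$ in turn gets two children: $d_i\in{\sf L}_2$ with ${\sf id}(d_i)=i\in[\lceil\log n\rceil]$, and $c_i$, placed in ${\bf 1}$ if the $i$-th bit of ${\sf id}(t)$ is $1$ and in ${\bf 0}$ otherwise. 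This replaces the old ${\sf id}$ (defined on the nodes of $T$) by a new ${\sf id}$ that is defined only on the $d_i$'s and takes values in $[\lceil \log n\rceil]$. Each such $t$ gains $3k$ nodes, so $|V(T')|=O(|V(T)|\log n)$, which gives (d). The result is still a tree containing $T$ as a subgraph.

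The formula is the one in (\ref{eqn:uniloglog}), put into the normal form required by (a). Rename the six universally quantified variables so that $x_1=d$, $x_2=d'$ and $x_3,\dots,x_6$ are $a,a',c,c'$. The label restrictions become implications inside the matrix, and they join $\neg\phi_{\sf valid}\vee\phi_{\sf bit}$ to form $\phi_1(z,y,x_1,\dots,x_6)$. This gives
\[
\phi_{\forall}(z,y)=\forall x_1\cdots\forall x_6\ \bigl(\phi_1\vee\neg({\sf id}(x_1)={\sf id}(x_2))\bigr).
\]
The formula $\phi_1$ is quantifier-free of constant length and uses only the four predicates, which gives (c). Whenever the ${\sf id}$-atom can matter, $\phi_1$ forces $x_1,x_2\in{\sf L}_2$, so the ${\sf id}$ values involved lie in $[\lceil\log n\rceil]$.

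For correctness (b), fix nodes $z,y$ of $T$ on which ${\sf id}$ is defined. Suppose first that ${\sf id}(z)={\sf id}(y)$. Take any assignment satisfying $\phi_{\sf valid}$ and ${\sf id}(x_1)={\sf id}(x_2)$. Then $a$ is some child $a_i$ of $z$ and $x_1=d_i$, and likewise $a'=a'_j$ is a child of $y$ with $x_2=d'_j$. Equality of ids gives $i=j$, and then $c,c'$ encode the same bit, so $\phi_{\sf bit}$ holds. Conversely, suppose ${\sf id}(z)\ne{\sf id}(y)$, and let $i$ be a bit position where they differ. The assignment $(d_i,d'_i,a_i,a'_i,c_i,c'_i)$ makes $\phi_{\sf valid}$ and ${\sf id}(x_1)={\sf id}(x_2)$ true and $\phi_{\sf bit}$ false, so $\phi_\forall$ fails.

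The main obstacle is the validity check: $\phi_{\sf valid}$ must pin the variables to the intended tree positions using only adjacency and labels. Two points need care. First, adjacency between ${\sf L}_1$-nodes and $z$ means "child of $z$"; this holds because the only ${\sf L}_1$-neighbours of an original node are its new children. Second, a ${\sf L}_2$-node adjacent to $a$ must be the $d$-child of that same $a$. Both follow from the labeling scheme, because no other nodes carry ${\sf L}_1,{\sf L}_2,{\bf 0},{\bf 1}$.
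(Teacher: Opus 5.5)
Your proposal is correct and matches the paper's proof, which likewise stops the construction of Lemma~\ref{lem:L1-uni-IDtest} at formula~(\ref{eqn:uniloglog}) and takes $\phi_1=\neg\phi_{\sf valid}\vee\phi_{\sf bit}$ (with the label restrictions absorbed), exactly as you do; you additionally write out the verification of (b) and the adjacency/label argument behind $\phi_{\sf valid}$, which the paper leaves implicit. One cosmetic point that you share with the paper: when $n$ is a power of two, $k=\lceil\log n\rceil$ bits cannot represent ${\sf id}(t)=n$, so encode ${\sf id}(t)-1\in\{0,\dots,n-1\}$ instead, which keeps ${\sf id}(d_i)\in[\lceil\log n\rceil]$ as (a) requires.
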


\begin{proof}[Proof sketch]
Follow the proof of Lemma~\ref{lem:L1-uni-IDtest} until the construction of formula in (\ref{eqn:uniloglog}). The formula in the right hand side of the equation in (\ref{eqn:uniloglog}) is the required formula where $\phi_1(z,y)$ is $\neg \phi_{\sf valid} \vee \phi_{\sf bit}$.     
\end{proof}

Now, by applying Lemma~\ref{lem:L1-uni-IDtestStep1} $d$ times, we get the following. 

\begin{lemma}[Lower the identifier value iteratively]
\label{lem:ID-level-d}
Let $n$ and $d$ be two positive integers and 
for any leaf node $x$, ${\sf id}(x) \in [n]$ (if ${\sf id}$ is defined on $x$). 
One can construct a tree $T'$ (which is super graph of $T$) and $(c_1,\ldots,c_d)$-FO+{\sf id} formula $\phi$ on two free variables  
with the following specifications. 
\begin{enumerate}[(a)]
    \item $c_i=6$ for all $i\in [d]$. 
    \item If $d$ is odd, then $\phi(z,y)$ is of the form
    \begin{eqnarray*}
     \forall x_{1,1}, \ldots, \forall x_{1,6} && ( \phi_1(z,y, x_{1,1},\ldots,x_{1,6}) \vee \\
     (\exists x_{2,1}\ldots,\exists x_{2,6} && ( \phi_2(x_{1,1},x_{1,2},x_{2,1},\ldots,x_{2,6}) \wedge \\
     (\forall x_{3,1}, \ldots, \forall x_{3,6} &&  (\phi_3(x_{2,1},x_{2,2},x_{3,1},\ldots,x_{3,6})\vee \\ 
     \vdots && \\
     (\forall x_{d,1}, \ldots, \forall x_{d,6} && 
     (\phi_d(x_{d-1,1},x_{d-1,2},x_{d,1},\ldots,x_{d,6})\vee 
     \neg ({\sf id}(x_{d,1})={\sf id}(x_{d,2})))))))))
    \end{eqnarray*}
    where ${\sf id}(x_{d,1}), {\sf id}(x_{d,2}) \in [\lceil \log^{(d)} n \rceil],$ and $\phi_1,\ldots,\phi_d$ are quantifier free formulas. 
    \item If $d$ is even, then $\phi(z,y)$ is of the form
    \begin{eqnarray*}
     \forall x_{1,1}, \ldots, \forall x_{1,6} && ( \phi_1(z,y, x_{1,1},\ldots,x_{1,6}) \vee \\
     (\exists x_{2,1}\ldots,\exists x_{2,6} && ( \phi_2(x_{1,1},x_{1,2},x_{2,1},\ldots,x_{2,6}) \wedge \\
     (\forall x_{3,1}, \ldots, \forall x_{3,6} &&  (\phi_3(x_{2,1},x_{2,2},x_{3,1},\ldots,x_{3,6})\vee \\ 
     \vdots && \\
     (\exists x_{d,1}, \ldots, \exists x_{d,6} && 
     (\phi_d(x_{d-1,1},x_{d-1,2},x_{d,1},\ldots,x_{d,6})\wedge 
      ({\sf id}(x_{d,1})={\sf id}(x_{d,2})))))))))
    \end{eqnarray*}
where ${\sf id}(x_{d,1}), {\sf id}(x_{d,2}) \in [\lceil \log^{(d)} n \rceil]$ , and $\phi_1,\ldots,\phi_d$ are quantifier free formulas. 
    \item For any two nodes $z$ and $y$ in $T$, $T'\models \phi(z,y)$ if and only if ${\sf id}(z)={\sf id}(y)$
    \item The number of label predicates used in $\phi_{\forall}(z,y)$ is $4d$. 
    \item $V(T')=O(|V(T)|\cdot \ilog{n}{d})$
\end{enumerate}
\end{lemma}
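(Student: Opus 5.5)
We prove the statement by induction on $d$, applying Lemma~\ref{lem:L1-uni-IDtestStep1} once per level. For $d=1$ the statement is exactly Lemma~\ref{lem:L1-uni-IDtestStep1}: we obtain $T'$ of size $O(|V(T)|\log n)$, four label predicates, and a formula $\forall x_{1,1}\dots\forall x_{1,6}\,(\phi_1 \vee \neg({\sf id}(x_{1,1})={\sf id}(x_{1,2})))$ in which the new ids lie in $[\lceil\log n\rceil]$. This is the odd form with $d=1$.

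For the inductive step, suppose we already have $T'_{d-1}$ and $\phi^{(d-1)}(z,y)$ of the stated shape, with innermost atom ${\sf id}(x_{d-1,1})={\sf id}(x_{d-1,2})$ (negated if $d-1$ is odd) and ids in $[\lceil\log^{(d-1)}n\rceil]$. The ids that are still in use are those on the $d$-nodes created in the previous round; these are leaves of $T'_{d-1}$. We apply Lemma~\ref{lem:L1-uni-IDtestStep1} to $T'_{d-1}$ with $n$ replaced by $n'=\lceil\log^{(d-1)}n\rceil$. The lemma is stated for $T$, but its construction is purely local: it replaces each id-carrying leaf by a gadget. So it applies verbatim to any tree whose id-carrying nodes are leaves.

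This gives $T'_d$ and a formula $\chi(u,w)=\forall x_1\dots\forall x_6\,(\chi_1 \vee \neg({\sf id}(x_1)={\sf id}(x_2)))$ that is equivalent to ${\sf id}(u)={\sf id}(w)$, where the new ids lie in $[\lceil\log n'\rceil]=[\lceil\log^{(d)}n\rceil]$ up to ceilings. We substitute $\chi(x_{d-1,1},x_{d-1,2})$ for the innermost equality atom of $\phi^{(d-1)}$.

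If $d-1$ is even, the atom occurs positively, so it becomes $\forall x_{d,1..6}(\phi_d \vee \neg{\sf id}=\,{\sf id})$, which is the odd form for $d$. If $d-1$ is odd, the atom occurs under $\neg$, and we push the negation inward: $\neg\chi \equiv \exists x_{d,1..6}(\neg\chi_1 \wedge {\sf id}(x_{d,1})={\sf id}(x_{d,2}))$, which is the even form for $d$ with $\phi_d=\neg\chi_1$. By the PNF rules of Proposition~\ref{prop:PNF:Rules}, the new block of six quantifiers becomes a fresh alternation block, so $c_d=6$.

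Correctness (d) is the composition of the two equivalences. At each level we take fresh label predicate names, which adds $4$ per level for a total of $4d$. Each level multiplies the tree size by $O(\log^{(i)}n)$, which gives $|V(T')|=O(|V(T)|\cdot\ilog{n}{d})$.

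The main obstacle is an interference check. The gadget-replacement step must not change the truth of the earlier quantifier-free parts $\phi_1,\dots,\phi_{d-1}$, nor the behaviour of the previously relativized quantifiers.

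First, the new nodes are adjacent only to former id-leaves. Second, every quantifier is relativized to a label predicate, and the new nodes carry only fresh labels. Third, a former leaf $d_i$ keeps its old labels and its parent adjacency.

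One point needs explicit care. Some conditions in earlier levels, such as ${\sf adj}$ checks, might implicitly rely on the id-nodes being leaves. I would verify this by inspecting $\phi_{\sf valid}$: it only uses ${\sf con}$ and ${\sf adj}$ toward parents, so added children do not alter it.

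The other point to watch is the ceiling arithmetic in $\lceil\log\lceil\log^{(d-1)}n\rceil\rceil$ versus $\lceil\log^{(d)}n\rceil$. This can be fixed by adjusting $k$ upward by one, which affects only constants.
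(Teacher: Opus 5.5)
Your proposal is correct and is the same argument as the paper's, which justifies the lemma in one line by applying Lemma~\ref{lem:L1-uni-IDtestStep1} $d$ times. Your induction makes that line explicit: fresh labels at each level, pushing the negation through $\chi$ to alternate between the odd and even forms, and the interference check that label-relativized quantifiers cannot see the newly added nodes. The ceiling adjustment you mention is not needed, since $\lceil\log\lceil x\rceil\rceil=\lceil\log x\rceil$ for all $x\ge 1$: if $m=\lceil\log x\rceil$ then $\lceil x\rceil\le 2^m$.
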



\begin{lemma}[$\log^{(d+1)} n$-length FO identifier test]
\label{lem:ID-level-d-odd}
Let $d\geq 1$ be an odd number. 
Let $n$  be a positive integer and 
for any leaf node $x$, ${\sf id}(x) \in [n]$ (if ${\sf id}$ is defined on $x$). 
One can construct a tree $T'$ (which is super graph of $T$) and $(c_1,\ldots,c_d)$-FO formula $\phi$ on two free variables  
with the following specifications. 
\begin{enumerate}[(a)]
    \item $c_i=6$ for all $i\in [d-1]$ and $c_d= 6+2\log^{(d+1)}n$. 
    \item $\phi(z,y)$ is of the form $$\forall x_{1,1}, \ldots, \forall x_{1,6} \exists x_{2,1}\ldots,\exists x_{2,6}, \ldots,  \forall x_{d,1}\ldots, \forall x_{d,c_d} \;\; \phi_1(z,y)$$ where $\phi_1$ is a quantifier free formula.
    \item $|\phi|= O(d+\log^{(d+1)} n)$
    \item For any two nodes $z$ and $y$ in $T$, $T'\models \phi(z,y)$ if and only if ${\sf id}(z)={\sf id}(y)$. 
    \item The number of label predicates used in $\phi(z,y)$ is $4d+2$. 
    \item $V(T')=O(|V(T)|\cdot 
    \ilog{n}{d+1})$. 
\end{enumerate}
\end{lemma}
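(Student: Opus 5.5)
The plan is to combine Lemma~\ref{lem:ID-level-d} (applied with the given odd $d$) and Lemma~\ref{lem:L1-ex-IDtest} (applied at the innermost level). First I apply Lemma~\ref{lem:ID-level-d} to $T$ with parameters $n$ and $d$. This produces a tree $T_1 \supseteq T$ with $|V(T_1)| = O(|V(T)|\cdot\ilog{n}{d})$, $4d$ label predicates, and a $(6,\dots,6)$-FO+${\sf id}$ formula. Since $d$ is odd, this formula has the shape in item (b): the blocks alternate $\forall,\exists,\dots$ and end with a $\forall$ block, whose matrix is $\phi_d \vee \neg({\sf id}(x_{d,1})={\sf id}(x_{d,2}))$, where ${\sf id}(x_{d,1}),{\sf id}(x_{d,2}) \in [\lceil\log^{(d)} n\rceil]$.

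Next I apply Lemma~\ref{lem:L1-ex-IDtest} to $T_1$ with $n' = \lceil \log^{(d)} n\rceil$. The nodes carrying ${\sf id}$ are now the $d$-type leaves created at the last level. This gives a tree $T' \supseteq T_1$ with $|V(T')| = O(|V(T_1)|\log n') = O(|V(T)|\cdot \ilog{n}{d+1})$ and two more label predicates, for a total of $4d+2$. It also gives a purely existential formula $\phi_\exists(x_{d,1},x_{d,2})$ with $2\log n' \approx 2\log^{(d+1)} n$ quantifiers. I substitute it for ${\sf id}(x_{d,1})={\sf id}(x_{d,2})$. Because this occurrence is negated, the new part becomes $\forall a_1\cdots\forall a_{2\log^{(d+1)}n}\,\neg\phi'$.

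This new universal part sits inside a disjunction with the quantifier-free $\phi_d$, and neither shares variables with the other. Using the rules of Proposition~\ref{prop:PNF:Rules}, I pull it out and merge it into the last $\forall$ block, which then has $6 + 2\log^{(d+1)} n$ variables. Pulling all quantifiers of the nested formula to the front is legitimate for the same reason: each $\phi_i$ is quantifier-free, and the bound variables are distinct. The formula is now in prenex form with blocks $(6,\dots,6,\,6+2\log^{(d+1)}n)$ and alternations starting with $\forall$ and ending with $\forall$, which gives items (a) and (b). Its size is $O(d)$ from the $O(1)$-size $\phi_i$ plus $O(\log^{(d+1)}n)$ from $\phi_\exists$, which gives (c).

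For correctness (d), Lemma~\ref{lem:L1-ex-IDtest}(a) says $\phi_\exists$ agrees with ${\sf id}$-equality on the relevant nodes of $T_1$. The other nodes added in $T'$ carry no ${\sf id}$, so the guards in $\phi_d$ (label predicates and adjacency) must rule out bad assignments. The main obstacle is exactly this: I have to check that in $T'$ the universal variables $x_{d,1},x_{d,2}$ can only be bound to nodes in the ${\sf L}_2$-type sets where ${\sf id}$ is defined. I expect this to follow from the label predicates and from the fact that Lemma~\ref{lem:L1-ex-IDtest} only attaches paths at leaves, so earlier adjacencies are preserved. For the ceiling, $\log\lceil\log^{(d)} n\rceil \le \log^{(d+1)} n + 1$, and the resulting additive constant can be absorbed into $c_d$.
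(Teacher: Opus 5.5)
Your proposal is correct and is essentially the paper's proof. It applies Lemma~\ref{lem:ID-level-d} with the odd $d$ and substitutes the existential test of Lemma~\ref{lem:L1-ex-IDtest} (with ${\sf id}$-values bounded by $\lceil\log^{(d)} n\rceil$) for the innermost ${\sf id}(x_{d,1})={\sf id}(x_{d,2})$. It then turns the negated test into a universal block and merges that block into the final $\forall$ block via Proposition~\ref{prop:PNF:Rules}, and the counts for (a), (c), (e), (f) follow as you describe.

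The differences are cosmetic. The paper handles $d=1$ by citing Lemma~\ref{lem:L1-uni-IDtest}, which is the same construction. It also leaves implicit the check you flag about where $x_{d,1},x_{d,2}$ can be bound. That check is settled by the last level's ${\sf L}_2$-guards on $x_{d,1},x_{d,2}$, since the newly attached path vertices carry only the two fresh labels.
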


\begin{proof}
When $d=1$, Lemma~\ref{lem:L1-uni-IDtest} implies the lemma. Now cosnider the case when $d>1$.  
Apply Lemma~\ref{lem:ID-level-d} and obtain a tree $T_1$ and $(c_1,\ldots,c_d)$-FO+{\sf id} formula $\phi$ with specifications mentioned in the lemma. Since $d$ is an odd number, condition (b) in Lemma~\ref{lem:ID-level-d} implies that 
for any two leaf nodes $z,y$ in $T$,  $\phi(z,y)$ is of the form 


    \begin{eqnarray}
     \forall x_{1,1}, \ldots, \forall x_{1,6} && ( \phi_1(z,y, x_{1,1},\ldots,x_{1,6}) \vee \nonumber\\
     (\exists x_{2,1}\ldots,\exists x_{2,6} && ( \phi_2(x_{1,1},x_{1,2},x_{2,1},\ldots,x_{2,6}) \wedge \nonumber\\
     (\forall x_{3,1}, \ldots, \forall x_{3,6} &&  (\phi_3(x_{2,1},x_{2,2},x_{3,1},\ldots,x_{3,6})\vee \nonumber\\ 
     \vdots && \nonumber\\
     (\forall x_{d,1}, \ldots, \forall x_{d,6} && 
     (\phi_d(x_{d-1,1},x_{d-1,2},x_{d,1},\ldots,x_{d,6})\vee 
     \neg ({\sf id}(x_{d,1})={\sf id}(x_{d,2}))))))))) \label{eqn:oddD}
    \end{eqnarray}
    where ${\sf id}(x_{d,1}), {\sf id}(x_{d,2}) \in [\lceil \log^{(d)} n \rceil]$, and $\phi_1,\ldots,\phi_d$ are quantifier free formulas.

Now we want to replace 
$({\sf id}(x_{d,1})={\sf id}(x_{d,2}))$ with an FO formula. So, we apply Lemma~\ref{lem:L1-ex-IDtest}, 
and obtain a tree $T'$ and $(2\log^{(d+1)} n)$-FO formula $\phi_{\exists}$ on two free variables 
with the following specifications. 
\begin{enumerate}[(i)]
    \item For any two  nodes $x_1$ and $x_2$ in $T_1$, $T'\models \phi_{\exists}(x_1,x_2)$ if and only if ${\sf id}(x_1)={\sf id}(x_2)$. 
    \item All the quantifiers in $\phi_{\exists}(x_1,x_2)$ are existential quantifiers and $|\phi_{\exists}(x_1,x_2)|=O(\log^{(d+1)} n)$. 
    \item The number of label predicates used in $\phi_{\exists}(x_1,x_2)$ is two. 
    \item $V(T')=O(|V(T_1)|\cdot \log^{(d+1)}n )$
\end{enumerate}

Thus, we can replace $\neg ({\sf id}(x_{d,1})={\sf id}(x_{d,2})))$ in the formula in (\ref{eqn:oddD}) with $\neg \phi_{\exists}(x_{d,1},x_{d,2})$. 
 The formula $\neg \phi_{\exists}(x_{d,1},x_{d,2})$ is of the form $\neg \exists a_1 \ldots \exists a_{k} \phi_2$ where $\phi_2$ is quantifier free and $k=2\log^{(d+1)} n$. This implies that 
$$
\neg \phi_{\exists}(x_{d,1},x_{d,2}) \equiv 
\forall a_1 \ldots \forall a_{k} \neg \phi_2.
$$

Thus, the new formula $\phi$ obtained from (\ref{eqn:oddD}), by replacing 
$\neg ({\sf id}(x_{d,1})={\sf id}(x_{d,2})))$ with $\neg \phi_{\exists}(x_{d,1},x_{d,2})$, has the form specified in property (b) of the Lemma, after converting to prenex normal form using Proposition~\ref{prop:PNF:Rules}
The property (b) of Lemma~\ref{lem:ID-level-d} and item (ii) above implies that $|\phi|=O(d+\log^{(d+1)}n)$. Property 
(e) of Lemma~\ref{lem:ID-level-d}
and item (iii) above implies that the number of label predicates in $\phi$ is $4d+2$. Property 
(f) of Lemma~\ref{lem:ID-level-d}
and item (iv) above implies the condition (f) of the lemma. 
This completes the proof. 
\end{proof}

\begin{lemma}[$\log^{(d+1)} n$-length FO identifier test]
\label{lem:ID-level-d-even}
Let $d\geq 2$ be an even number. 
Let $n$  be a positive integer and 
for any leaf node $x$, ${\sf id}(x) \in [n]$ (if ${\sf id}$ is defined on $x$). 
One can construct a tree $T'$ (which is super graph of $T$) and $(c_1,\ldots,c_d)$-FO formula $\phi$ on two free variables  
with the following specifications. 
\begin{enumerate}[(a)]
    \item $c_i=6$ for all $i\in [d-1]$ and $c_d= 6+2\log^{(d+1)}n$. 
    \item $\phi(z,y)$ is of the form $$\forall x_{1,1}, \ldots, \forall x_{1,6} \exists x_{2,1}\ldots,\exists x_{2,6} \forall x_{3,1}\ldots, \forall x_{3,6} \ldots \exists x_{d,1}\ldots, \exists x_{d,c_d} \;\; \phi_1(z,y)$$ where $\phi_1$ is a quantifier free formula.
    \item $|\phi|= O(d+\log^{(d+1)} n)$.
    \item For any two leaf nodes $z$ and $y$ in $T$, $T'\models \phi(z,y)$ if and only if ${\sf id}(z)={\sf id}(y)$. 
    \item The number of label predicates used in $\phi(z,y)$ is $4d+2$. 
    \item $V(T')=O(|V(T)|\cdot 
    \ilog{n}{d+1} )$
\end{enumerate}
\end{lemma}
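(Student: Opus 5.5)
The plan mirrors the proof of Lemma~\ref{lem:ID-level-d-odd}, with the parity of $d$ flipped. First I apply Lemma~\ref{lem:ID-level-d} with the given even $d$. This produces a tree $T_1 \supseteq T$ and a $(6,\dots,6)$-FO+${\sf id}$ formula $\phi(z,y)$ of the form in item (c) of that lemma. Its innermost block is existential: $\exists x_{d,1}\ldots\exists x_{d,6}\,(\phi_d \wedge {\sf id}(x_{d,1})={\sf id}(x_{d,2}))$, with ${\sf id}(x_{d,1}),{\sf id}(x_{d,2}) \in [\lceil \log^{(d)} n\rceil]$. For all nodes $z,y$ of $T$, $T_1\models\phi(z,y)$ iff ${\sf id}(z)={\sf id}(y)$. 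The tree satisfies $|V(T_1)|=O(|V(T)|\cdot\ilog{n}{d})$, and $4d$ label predicates are used.

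Next I eliminate the remaining ${\sf id}$ atom. I apply Lemma~\ref{lem:L1-ex-IDtest} to $T_1$, with $n$ replaced by $\lceil\log^{(d)}n\rceil$, so that only the ${\sf id}$ values living on the level-$d$ leaves matter. This gives a tree $T'\supseteq T_1$ and a purely existential formula $\phi_\exists(x_1,x_2)$ with $2\log^{(d+1)}n$ quantifiers and length $O(\log^{(d+1)}n)$. It uses two new label predicates, and $|V(T')|=O(|V(T_1)|\log^{(d+1)}n)=O(|V(T)|\cdot\ilog{n}{d+1})$. I substitute $\phi_\exists(x_{d,1},x_{d,2})$ for ${\sf id}(x_{d,1})={\sf id}(x_{d,2})$.

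This is the step where even parity is convenient: unlike the odd case, no negation sits in front of the atom. The existential quantifiers of $\phi_\exists$ therefore merge directly into the innermost existential block. I pull them out using the rules of Proposition~\ref{prop:PNF:Rules}; this is legal because their bound variables are fresh and do not occur in $\phi_d$. The last block then has $6+2\log^{(d+1)}n$ existential quantifiers and the other blocks are unchanged. Pulling all blocks out to prenex form likewise gives the alternating $\forall\exists\cdots\exists$ prefix of item (b).

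The remaining items are bookkeeping. The length bound (c) follows from $O(d)$ for the outer part plus $O(\log^{(d+1)}n)$ for $\phi_\exists$. The correctness item (d) comes from composing the two equivalences, using that the nodes $x_{d,1},x_{d,2}$ ranged over are nodes of $T_1$ on which ${\sf id}$ is defined. The predicate count in (e) is $4d+2$, and the size bound in (f) was computed above.

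The only point needing care is correctness under substitution. Lemma~\ref{lem:L1-ex-IDtest} adds new path vertices, and quantified variables in the outer blocks could in principle bind to them. I expect this to be handled by the label predicates guarding every quantifier, as in the odd case. Failing that, I would argue that the relevant predicates in $\phi_1,\dots,\phi_d$ exclude the new vertices, so the truth value on $T'$ agrees with that on $T_1$.
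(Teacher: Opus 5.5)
Your proposal is correct and follows the paper's proof. You apply Lemma~\ref{lem:ID-level-d} in the even case, replace the surviving atom ${\sf id}(x_{d,1})={\sf id}(x_{d,2})$ by the purely existential formula from Lemma~\ref{lem:L1-ex-IDtest} (with $n$ replaced by $\lceil\log^{(d)}n\rceil$), and absorb its quantifiers into the innermost existential block when converting to prenex normal form; items (c)--(f) then follow by the same bookkeeping as in the odd case, and your point that the guarded outer quantifiers cannot bind to the newly added path vertices is one the paper leaves implicit.
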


\begin{proof}
To prove this lemma we apply Lemma~\ref{lem:ID-level-d} followed by Lemma~\ref{lem:L1-uni-IDtest}. Recall that $d\geq 2$ is an even number. Apply Lemma~\ref{lem:ID-level-d} and obtain a tree $T_1$ and $(c_1,\ldots,c_d)$-FO+{\sf id} formula $\phi$ with specifications mentioned in the lemma. Since $d$ is an even number, condition (c) in Lemma~\ref{lem:ID-level-d} implies that 
for any two leaf nodes $z,y$ in $T$,  $\phi(z,y)$ is of the form 


    \begin{eqnarray}
     \forall x_{1,1}, \ldots, \forall x_{1,6} && ( \phi_1(z,y, x_{1,1},\ldots,x_{1,6}) \vee \nonumber \\
     (\exists x_{2,1}\ldots,\exists x_{2,6} && ( \phi_2(x_{1,1},x_{1,2},x_{2,1},\ldots,x_{2,6}) \wedge \nonumber\\
     (\forall x_{3,1}, \ldots, \forall x_{3,6} &&  (\phi_3(x_{2,1},x_{2,2},x_{3,1},\ldots,x_{3,6})\vee \nonumber\\ 
     \vdots && \nonumber\\
     (\exists x_{d,1}, \ldots, \exists x_{d,6} && 
     (\phi_d(x_{d-1,1},x_{d-1,2},x_{d,1},\ldots,x_{d,6})\wedge 
      ({\sf id}(x_{d,1})={\sf id}(x_{d,2}))))))))) \label{eqn:evenD}
    \end{eqnarray}
where ${\sf id}(x_{d,1}), {\sf id}(x_{d,2}) \in [\lceil \log^{(d)} n \rceil]$ , and $\phi_1,\ldots,\phi_d$ are quantifier free formulas. 

Now we want to replace 
$({\sf id}(x_{d,1})={\sf id}(x_{d,2}))$ with an FO formula.  So, we apply Lemma~\ref{lem:L1-ex-IDtest}, 
and obtain a tree $T'$ and $(2\log^{(d+1)} n)$-FO formula $\phi_{\exists}$ on two free variables 
with the following specifications. 
\begin{enumerate}[(i)]
    \item For any two leaf nodes $x_1$ and $x_2$ in $T_1$, $T'\models \phi_{\exists}(x_1,x_2)$ if and only if ${\sf id}(x_1)={\sf id}(x_2)$. 
    \item All the quantifiers in $\phi_{\exists}(x_1,x_2)$ are existential quantifiers and $|\phi_{\exists}(x_1,x_2)|=O(\log^{(d+1)} n)$. 
    \item The number of label predicates used in $\phi_{\exists}(x_1,x_2)$ is two. 
    \item $V(T')=O(|V(T_1)|\cdot \log^{(d+1)}n )$
\end{enumerate}

Thus, we can replace ${\sf id}(x_{d,1})={\sf id}(x_{d,2})$ in the formula in (\ref{eqn:evenD}) with $\phi_{\exists}(x_{d,1},x_{d,2})$. 
This new formula will be of the form specified in item (b) of the lemma, after converting to the prenex normal form. 
The arguments for the proof of other properties in the lemma are similar to the arguments in the proof of Lemma~\ref{lem:ID-level-d-odd}. This completes the proof.
\end{proof}
%

Now, we prove Theorem~\ref{thm:low-MSO}, when $s=0$. That is, the simplified version of Theorem~\ref{thm:low-MSO} in this case is the following.

\begin{theorem}
\label{thm:low-FO-Tree}
    Let $d,i,j \in \mathbb{N}$ such that $j<i \leq d$.
    Given an $n$-vertex graph $G$ and integers $k \geq 2$, $s \geq 0$, $\alpha \geq 1$ such that $n \leq \exp^{(i-1)}(\frac{k-9}{2})$, one can compute in $O(3^{n/\alpha})$ time a tree $T$ and an $S$-FO formula $\phi$ for $S = (k_1,\dots,k_d)$ satisfying the following.
    \begin{enumerate}[(i)]
        \item $G$ is a $3$-colorable if and only if $T$ satisfies $\phi$.
        \item $|V(T)| = O((3^{n/\alpha}+m)\ilog{n}{i-1})$ and $|\phi| = O(\alpha^2 +i+ \log^{(i-1)}n) $.
        \item $k_i = k$, $k_1 = O(\alpha)$, $k_{i'} = O(1)$ for all $i' \in \{2,\dots,i-1\}$, and $k_{i'} = 0$ for all $i' \in [d] \backslash [i]$.
        \item The number of label predicates used in $\phi$ is $
        4i+3$. 
    \end{enumerate}
\end{theorem}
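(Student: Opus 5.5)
We build the instance in two stages: first the base tree of Lemma~\ref{lem:base-red}, then an identifier test that replaces every occurrence of ${\sf id}(\cdot)={\sf id}(\cdot)$ by an FO subformula whose quantifier blocks land exactly where the statement requires.

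\textbf{Base stage.} Apply Lemma~\ref{lem:base-red} to $G$ and $\alpha$. This gives a tree $T$ with $O(3^{n/\alpha}+m)$ nodes and a $(2\alpha+1,9)$-FO+${\sf id}$ formula $\psi$ with nine label predicates, such that $G$ is $3$-colorable iff $T\models\psi$. The only non-FO part is $\neg\psi_{\sf id}$, which is the disjunction of $\neg({\sf id}(z_1)={\sf id}(y_1))$ and $\neg({\sf id}(z_2)={\sf id}(y_2))$. It sits in the scope of the first ($\exists$) block, whose size is $k_1=2\alpha+1=O(\alpha)$, and of the second ($\forall$) block of $9$ variables.

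\textbf{Identifier stage, split by $i$.}

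\emph{Case $i=2$.} Use Lemma~\ref{lem:L1-ex-IDtest} with $n$. The negated existential test becomes a block of $2\lceil\log n\rceil$ universal quantifiers, which merges with the $9$ universals of $\psi$. So $k_2=9+2\lceil\log n\rceil\le k$, using the hypothesis $n\le 2^{(k-9)/2}$.

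\emph{Case $i\ge 3$.} Take $d'=i-2$ and use Lemma~\ref{lem:ID-level-d-odd} or Lemma~\ref{lem:ID-level-d-even}, depending on the parity of $d'$. The resulting test has blocks $\forall^6\exists^6\cdots$ with a last block of size $6+2\log^{(i-1)}n$. Under $\neg$, all quantifiers flip, so the test starts with $\exists$.

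The main obstacle is the alignment in this case. Pulling the test into prenex form inside the second ($\forall$) block of $\psi$ via Proposition~\ref{prop:PNF:Rules} must give alternating blocks $\exists^{O(\alpha)},\forall^{9},\exists^{6},\forall^{6},\dots$. The final block then has size $6+2\log^{(i-1)}n$ and lies at position $i$. To get this I would check three things.
\begin{itemize}
\item Both copies (for $(z_1,y_1)$ and $(z_2,y_2)$) can be prenexed in parallel after renaming, doubling each block to $12$. This doubling is absorbed into $O(1)$ for middle levels, and at level $i$ gives $2(6+2\log^{(i-1)}n)\le k$ using the hypothesis $n\le\exp^{(i-1)}(\frac{k-9}{2})$ (constants adjusted by padding with dummy quantifiers).
\item The block count matches $i$.
\item Blocks $i'>i$ are empty, so $k_{i'}=0$.
\end{itemize}
If the parity does not match, I would insert one dummy block or shift by one level, as the odd/even split in the two lemmas is designed to permit.

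\textbf{Size and label bookkeeping.}
\begin{itemize}
\item \emph{Tree size.} The identifier lemmas replace each ${\sf id}$-bearing node, i.e.\ the nodes in ${\sf EP}\cup{\sf I}$, by a subtree of size $O(\ilog{n}{i-1})$. This gives $|V(T')|=O((3^{n/\alpha}+m)\ilog{n}{i-1})$. For $i=2$ the factor is $O(\log n)$, matching $\ilog{n}{1}$.
\item \emph{Formula size.} $|\phi|=O(\alpha^2)+O(i+\log^{(i-1)}n)$.
\item \emph{Label predicates.} The $9$ from the base tree plus $4(i-2)+2$ from the identifier construction give $4i+3$.
\item \emph{Running time.} The construction runs in time linear in the output. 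Since $m\le n^2$ and $\ilog{n}{i-1}=n^{o(1)}$, this is within the stated bound. If needed, replace $\alpha$ by a slightly smaller constant multiple to absorb these factors.
\end{itemize}

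\textbf{Correctness.} This is immediate. Each identifier lemma guarantees that $T'\models\phi(z,y)$ iff ${\sf id}(z)={\sf id}(y)$ on the original nodes, and the relabeled subtrees do not affect $\psi_{\sf evalid}$, $\psi_{\sf uvalid}$, $\psi_{\sf color}$. This holds because the adjacency and label tests in those formulas only concern the original labeled node classes. So the claim from Lemma~\ref{lem:base-red} transfers.
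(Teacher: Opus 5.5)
Your plan is the paper's proof. You take the base tree and the $(2\alpha+1,9)$-FO+${\sf id}$ formula from Lemma~\ref{lem:base-red}. For $i=2$ you use Lemma~\ref{lem:L1-ex-IDtest}, and for $i\ge 3$ you use Lemma~\ref{lem:ID-level-d-odd} or Lemma~\ref{lem:ID-level-d-even} with $d=i-2$. The tree-size, formula-size and label counts are the same ($9+4(i-2)+2=4i+3$). Your parity hedge is unnecessary. Both lemmas produce a test whose outermost block is $\forall$, so the negated test always opens with $\exists$ directly after the $\forall^9$ block, and the long block always lands at level $i$.

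One step is wrong as written. You correctly note that $\neg\psi_{\sf id}$ contains two independent copies of the identifier test, whose blocks add up when prenexed. You then claim $2(6+2\log^{(i-1)}n)\le k$, but this does not follow from $n\le\exp^{(i-1)}(\frac{k-9}{2})$. The hypothesis only gives $\log^{(i-1)}n\le\frac{k-9}{2}$, hence $2(6+2\log^{(i-1)}n)\le 2k-6$. For $n$ near the threshold, which is how the theorem is applied in Theorem~\ref{thm:intro-LW1}, the last block has about $2k$ quantifiers, not $k$. Padding with dummy quantifiers only enlarges blocks, so it cannot repair an overshoot.

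Your $i=2$ case has the same issue in the other direction: there you forget the doubling. The second block is really $9+4\lceil\log n\rceil$, because $\forall\bar a\,A\vee\forall\bar b\,B$ cannot in general share its quantified variables.

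To get $k_i=k$ exactly, strengthen the hypothesis by a constant factor, e.g.\ to $n\le\exp^{(i-1)}(\frac{k-16}{4})$. This gives $12+4\lceil\log^{(i-1)}n\rceil\le k$ and $9+4\lceil\log n\rceil\le k$, ceilings included, and you can then pad up to $k$.

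The paper's own proof has the same omission: it reports the merged blocks as $9+2\log n$ and $6+2\log^{(i-1)}n$, counting only one copy. So this is a constant-factor bookkeeping slip shared with the source, not a flaw in your strategy, and it only changes constants in the proof of Theorem~\ref{thm:intro-LW1}.
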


\begin{proof}
Apply Lemma~\ref{lem:base-red}
and obtain a tree $T$ and an $(2\alpha+1,9)$-FO+${\sf id}$ formula $\psi$. Notice that Figure~\ref{fig:Base-tree} is an illustration of $T$. The FO+${
\sf id 
}$ formula $\psi$ is given below. 

\begin{eqnarray}
\psi  &\equiv&\exists r\in {\sf R} \exists x_1\in {\sf P} \; \exists x_2 \in {\sf P}\; \ldots \; \exists x_{\alpha}\in {\sf P} \; \exists p_1 \; \ldots \; \exists p_{\alpha}  \nonumber\\ 
 && \forall z \in {\sf EN}\; \forall z_1, z_2 \in {\sf EP}\; \forall v_1, v_2 \in {\sf N} \; \forall y_1, y_2 \in {\sf I} \; \forall c_1,c_2 \in {\sf Q}
 _1\cup {\sf Q}_2 \cup {\sf Q}_3 \nonumber\\
 && \psi_{{\sf evalid}}\wedge (\neg \psi_{{\sf uvalid}} \vee \neg\psi_{{\sf id}}\vee \psi_{{\sf color}}) \label{eqn:FO1}
\end{eqnarray}
All the label predicates in the formula is explained in the proof of Lemma~\ref{lem:base-red}. In the formula $\psi$, the subformulas $\psi_{{\sf evalid}}, \psi_{{\sf uvalid}}$, and $\psi_{{\sf color}}$ are quantifier free FO formulas. But $\psi_{{\sf id}}$ contains the function sysmbol ${\sf id
}$. Recall that 

$$\psi_{{\sf id}}\equiv {\sf id}(z_1)={\sf id}(y_1)\wedge {\sf id}(z_2)={\sf id}(y_2).$$

Now we explain how to get rid of the function ${\sf id}$. We have two cases, one is when $i=2$ and the other is when $i>2$. 

\smallskip
\noindent
{\bf Case 1: $i=2$.} In order to replace $\psi_{\sf id}$, we apply Lemma~\ref{lem:L1-ex-IDtest} 
and obtain a tree $T'$ (which is super graph of $T$) and $(2\log n)$-FO formula $\phi_{\exists}$ on two free variables 
with the following specifications. 
\begin{enumerate}[(a)]
    \item For any two leaf nodes $z$ and $y$ in $T$, $T'\models \phi_{\exists}(z,y)$ if and only if ${\sf id}(z)={\sf id}(y)$. 
    \item All the quantifiers in $\phi_{\exists}(z,y)$ are existential quantifiers and $|\phi_{\exists}(z,y)|=O(\log n)$. 
    \item The number of label predicates used in $\phi_{\exists}(z,y)$ is two. 
    \item $V(T')=O(|V(T)|\cdot \log n)$
\end{enumerate}

Now we substitute ${\sf id}(z_1)={\sf id}(y_1)\wedge {\sf id}(z_2)={\sf id}(y_2)$ with $\phi_{\exists}(z_1,y_1) \wedge \phi_{\exists}(z_2,y_2)$ and let $\psi'$ be the resulting formula. Equation (\ref{eqn:FO1}) and the definition of $\phi_{\exists}$ 
implies that $\psi'$ is a $(2\alpha+1,9+2\log n)$-FO formula. Property (i) of Lemma~\ref{lem:base-red} and item (a) above implies that 
\begin{itemize}
    \item[(i)] $G$ is a $3$-colorable if and only if $T'$ satisfies $\psi'$.
\end{itemize}
Property (ii) of Lemma~\ref{lem:base-red} and condition (d) above implies that 
\begin{itemize}
    \item[(ii)] $|V(T')|=O((3^{n/\alpha}+m)\log n)$ and $|\psi'|=O(\alpha^2+\log n)$. 
\end{itemize}

Property (iii) of Lemma~\ref{lem:base-red} and condition (c) above implies that 

\begin{itemize}
    \item[(iii)] Number of label predicates used is $11$.  
\end{itemize}
So, $T'$ and $\psi'$ are the required output of the theorem in this case.

\medskip
\noindent
{\bf Case 2: $i>2$. }
In this case, to replace $\psi_{\sf id}$, we apply Lemma~\ref{lem:ID-level-d-even} when $i$ is even and apply Lemma~\ref{lem:ID-level-d-odd} when $i$ is odd with $d=i-2$, and obtain a tree $T'$ (which is super graph of $T$) and a $(c_1,\ldots,c_{i-2})$-FO formula $\phi$ on two free variables  
with the following specifications. 
\begin{enumerate}[(a)]
    \item $c_q=6$ for all $q\in [i-3]$ and $c_{i-2}= 6+2\log^{(i-1)}n$. 
    \item $\phi(z,y)$ is of the form $$\forall x_{1,1}, \ldots, \forall x_{1,6} \exists x_{2,1}\ldots,\exists x_{2,6}, \ldots, Q x_{i-2,1}\ldots, Q x_{i-2,c_{i-2}} \;\; \phi_1(z,y)$$ where $\phi_1$ is a quantifier free formula, and $Q=\exists$ if $i$ is odd and $Q=\forall$ otherwise.
    \item $|\phi|= O(i+\log^{(i-1)} n)$. 
    \item For any two leaf nodes $z$ and $y$ in $T$, $T'\models \phi_{\forall}(z,y)$ if and only if ${\sf id}(z)={\sf id}(y)$. 
    \item The number of label predicates used in $\phi(z,y)$ is $4i-6$. 
    \item $V(T')=O(|V(T)|\cdot 
    \ilog{n}{i})$. 
\end{enumerate}

Now we substitute ${\sf id}(z_1)={\sf id}(y_1)\wedge {\sf id}(z_2)={\sf id}(y_2)$ with $\phi(z_1,y_1) \wedge \phi(z_2,y_2)$ and let $\psi'$ be the resulting formula. Equation (\ref{eqn:FO1}) and the definition of $\phi$ 
implies that $\psi'$ is a $(2\alpha+1,9, k_3,\ldots,k_{i-1},6+2\log^{(i-1)} n)$-FO formula. Property (i) of Lemma~\ref{lem:base-red} and item (d) above implies that 
\begin{itemize}
    \item[(i)] $G$ is a $3$-colorable if and only if $T'$ satisfies $\psi'$.
\end{itemize}
Property (ii) of Lemma~\ref{lem:base-red} and items (c) and (f) above implies that 
\begin{itemize}
    \item[(ii)] $|V(T')|=O((3^{n/\alpha}+m)\ilog{n}{i})$ and $|\psi'|=O(\alpha^2+i+\log^{(i-1)} n)$. 
\end{itemize}

Property (iii) of Lemma~\ref{lem:base-red} and item (e) above implies that 

\begin{itemize}
    \item[(iii)] Number of label predicates used in $\psi'$ is $4i+3$.  
\end{itemize}
So, $T'$ and $\psi'$ are the required output of the theorem in this case. This completes the proof of the theorem.     
\end{proof}

Now, we prove Theorem~\ref{thm:low-MSO}, when $s>0$ and $i=2$.

\begin{theorem}
\label{thm:low-MSO:s>0:i=2}
    Let $d\in \mathbb{N}$.
    Given an $n$-vertex graph $G$ and integers $k \geq 2$, $s \geq 1$, $\alpha \geq 1$ such that $n \leq 2^{(\frac{k-14}{4}\cdot s)}$, one can compute in $O(3^{n/\alpha})$ time a tree $T$ and an $S$-MSO formula $\psi$ for $S = ((k_1,s_1),\dots,(k_d,s_d))$ satisfying the following.

    \begin{enumerate}[(i)]
        \item $G$ is a $3$-colorable if and only if $T$ satisfies $\psi$.
        \item $|V(T)| = O((3^{n/\alpha}+m)2^s\log k)$
        and $|\psi| = O(\alpha^2 +k+2^s) $.
        \item $k_2 = k$, $k_1 = O(\alpha)$, and $k_{i'} = 0$ for all $i' \in \{3,\ldots,d\}$.
        \item $s_j = s$ and $s_{j'} = 0$ for all $j' \in [d] \backslash \{j\}$.
        \item The number of label predicates used in $\phi$ is $
        12$.   
    \end{enumerate}
\end{theorem}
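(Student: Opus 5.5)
Starting from Lemma~\ref{lem:base-red}, I obtain the tree $T$ and the $(2\alpha+1,9)$-FO+${\sf id}$ formula $\psi$ of \eqref{eqn:FO1}. The only non-FO part is $\psi_{\sf id}\equiv {\sf id}(z_1)={\sf id}(y_1)\wedge {\sf id}(z_2)={\sf id}(y_2)$, where ${\sf id}$ values lie in $[n]$. I will replace it using the base-$2^s$ encoding sketched in the overview. Let $k'=\lceil \log n / s\rceil$; the hypothesis $n\le 2^{\frac{k-14}{4}s}$ gives $k'\le (k-14)/4+1$. Each node $x\in{\sf EP}\cup{\sf I}$ is replaced by a path $a_1,\dots,a_{k'},x$, where $a_r$ carries the $r$-th base-$2^s$ digit $b_r\in[2^s-1]_0$ of ${\sf id}(x)$. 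To fix the digit of $a_r$ in a first-order checkable way, I attach a gadget below $a_r$. It has $s$ children $g_{r,1},\dots,g_{r,s}$, labeled by a predicate ${\sf B}$, and child $g_{r,q}$ carries a leaf in predicate ${\bf 1}$ or ${\bf 0}$ according to the $q$-th bit of $b_r$. This costs $O(s)$ per digit. It introduces an $\exp$-type dependence only through the $2^s$ term allowed in $|V(T)|$, or through $|\psi|$ if I instead use $2^s$ predicates. The plan is to keep the tree at size $O((3^{n/\alpha}+m)2^s\log k)$, so I will use one gadget per digit value shared along a spine, if needed.

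The set variables $W_1,\dots,W_s$ are quantified existentially in the first block ($j=1$, with the existential $x$'s). Given the intended semantics, ``$a\in W_q$ iff the $q$-th bit of the digit stored at $a$ is $1$'', a formula $\psi_{\sf set}(W_1,\dots,W_s)$ must enforce this. It does so with a block of universally quantified vertex variables of constant count, which sits in block $2$ together with the other universals. The enforcement cannot index the bit position $q$ by a variable, so $\psi_{\sf set}$ is a conjunction over $q\in[s]$. I encode bit position $q$ combinatorially: $W_q$ must also contain exactly the gadget children of ``type $q$''. The type is pinned by a second reference structure, namely a single global chain $h_1,\dots,h_s$ whose $q$-th vertex is fixed via an existential path. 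Each gadget child then carries $q$ by its distance pattern, realized with $O(2^s)$ formula length. This is where the $O(2^s)$ in $|\psi|$ comes from.

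With $\psi_{\sf set}$ in place, ${\sf id}(z)={\sf id}(y)$ becomes
\[
\exists a_1,a'_1,\dots,a_{k'},a'_{k'}\ \Big(\bigwedge_{r\in[k']}\bigwedge_{q\in[s]} (a_r\in W_q\Leftrightarrow a'_r\in W_q)\Big)\wedge{\sf path}(a_1,\dots,a_{k'},z)\wedge{\sf path}(a'_1,\dots,a'_{k'},y).
\]
In $\psi$ this occurs negated under the universal block, so it turns into $2k'$ universal vertex quantifiers per equality, $4k'$ in total. Added to the $9$ from $\psi$ and the $O(1)$ from $\psi_{\sf set}$, the block-2 count is at most $4k'+14\le k$, which matches the constant $\frac{k-14}{4}$. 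The PNF conversion uses Proposition~\ref{prop:PNF:Rules}. $W_q$ does not occur in the remaining subformulas, so the set quantifiers can move to block $1$. The label predicates are ${\sf R},{\sf EN},{\sf EP},{\sf P},{\sf N},{\sf I},{\sf Q}_1,{\sf Q}_2,{\sf Q}_3$ plus ${\bf 0},{\bf 1},{\sf B}$, giving $12$. Correctness follows from the claim in Lemma~\ref{lem:base-red} once I show that under $\psi_{\sf set}$, the displayed formula holds iff the ids are equal.

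The main obstacle is $\psi_{\sf set}$. It must be expressible with $O(1)$ universal vertex variables, free $W_q$'s, and length $O(2^s+k)$ rather than $O(s\cdot\text{something large})$, and it must determine the $W_q$ uniquely on the digit vertices. My first attempt is the simplest version. For each $q$, require $\forall a\,\forall g\,\forall \ell$: if $g$ is the $q$-th gadget child of $a$ and $\ell\in{\bf 1}$ is its leaf, then $a\in W_q$, and symmetrically for ${\bf 0}$. I would identify ``$q$-th child'' by having $g$ itself lie in $W'_q$-like marks, or more simply by giving the gadget for position $q$ a path of length $q$, checked with $O(q)$ existential vertices inside the universal scope. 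If the latter breaks the quantifier budget, I fall back to $2^s$ distinct predicates for the digit values, which also fits $|\psi|=O(2^s)$.
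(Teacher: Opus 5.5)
There is a genuine gap, exactly where you place the main obstacle: you never produce a working $\psi_{\mathsf{set}}$, and the gadget you build cannot support one.

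Below each digit vertex $a_r$ you hang $s$ sibling children $g_{r,1},\dots,g_{r,s}$, each with a single $\mathbf{0}$- or $\mathbf{1}$-leaf. Two siblings whose leaves carry the same bit are swapped by an automorphism of the labelled tree that fixes everything outside the gadget. So the tree records only the \emph{number} of $1$-bits of $b_r$. Replacing $b_r$ by another digit of the same Hamming weight gives a tree isomorphic to the original via a map fixing $z$ and $y$. Hence no formula at all can decide $\mathsf{id}(z)=\mathsf{id}(y)$ correctly.

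Each of your repairs breaks a requirement of the statement.
\begin{itemize}
\item A global chain $h_1,\dots,h_s$ carries no information in a tree: every $g_{r,q}$ is at distance $1+\mathrm{dist}(a_r,h_{q'})$ from each $h_{q'}$. Joining gadget children to the chain, or sharing one gadget among several $a_r$ along a spine, destroys the tree property.
\item Pendant paths of length $q$ do break the symmetry, but checking them with $O(q)$ quantified vertices costs $\Theta(s)$ extra variables. In the antecedent under the universal block they become universal and push $k_2$ to $k+\Theta(s)$ instead of $k$. This is fatal for Theorem~\ref{thm:intro-LW1}: there $s=\Theta(\log n)$ with $k=O(1)$ is allowed, so $s\cdot k_2$ becomes $\Theta(\log^2 n)$ rather than $\Theta(\log n)$. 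As genuine existentials after the universals they instead create a third block, contradicting $k_3=0$.
\item Using $2^s$ digit predicates contradicts (v) and makes the set variables pointless. It is also not harmless: the label-removal step of Section~\ref{subsec:labelpredicates} replaces each labelled quantifier by $\Theta(\ell)$ quantifiers, so $\ell=2^s$ labels inflate every block by a factor $2^s$.
\end{itemize}

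The missing idea, which is the paper's, is to encode the digit $c'=\mathsf{id}(a)\in\{0,\dots,2^s-1\}$ in \emph{unary} rather than bit by bit. Each $a\in\mathsf{P}_1$ is replaced by a path $q_0,q_1,\dots,q_{c'},a$ with $q_0\in\mathsf{SP}_1$ and $q_1,\dots,q_{c'}\in\mathsf{SP}_2$. Then $\psi_{\mathsf{set}}$ forces the $W$-membership vectors along this path to behave like a binary counter:
\begin{itemize}
\item $q_0$ lies in no $W_j$;
\item each step along the path increments the encoded value;
\item $a$ copies the value of its neighbour $q_{c'}$.
\end{itemize}
Because the formula may enumerate all counter values, each condition is a conjunction over $c\in[2^s-1]$ of implications $\phi_{\mathsf{eq}}(u,c)\Rightarrow\phi_{\mathsf{eq}}(v,c+1)$. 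This uses only five universally quantified vertex variables in total, giving $4k'+9+5=k$ in block~$2$, and only the three new labels $\mathsf{P}_1,\mathsf{SP}_1,\mathsf{SP}_2$. This enumeration, together with paths of length up to $2^s$, is exactly where the $2^s$ terms in $|\psi|$ and $|V(T)|$ go, and no bit position ever has to be identified structurally. Some counter-style propagation constraint of this kind on the $W_j$ is what your proposal lacks. With it even a positional encoding could be made to work; without it none of your variants does.
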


\begin{proof}
First we apply  Lemma~\ref{lem:base-red} and obtain a tree $T_1$ and a $(2\alpha
+1,9)$-FO+${\sf id}$ formula $\psi$. Notice that Figure~\ref{fig:Base-tree} is an illustration of $T_1$. The FO+${
\sf id 
}$ formula $\psi$ is given below.

\begin{eqnarray}
\psi  &\equiv&\exists r\in {\sf R} \exists x_1\in {\sf P} \; \exists x_2 \in {\sf P}\; \ldots \; \exists x_{\alpha}\in {\sf P} \; \exists p_1 \; \ldots \; \exists p_{\alpha}  \nonumber\\ 
 && \forall z \in {\sf EN}\; \forall z_1, z_2 \in {\sf EP}\; \forall v_1, v_2 \in {\sf N} \; \forall y_1, y_2 \in {\sf I} \; \forall c_1,c_2 \in {\sf Q}
 _1\cup {\sf Q}_2 \cup {\sf Q}_3 \nonumber\\
 && \psi_{{\sf evalid}}\wedge (\neg \psi_{{\sf uvalid}} \vee \neg\psi_{{\sf id}}\vee \psi_{{\sf color}}) \label{eqn:FO1:s>0:i=2}
\end{eqnarray}
All the label predicates in the formula is explained in the proof of Lemma~\ref{lem:base-red}. In the formula $\psi$, the subformulas $\psi_{{\sf evalid}}, \psi_{{\sf uvalid}}$, and $\psi_{{\sf color}}$ are quantifier free FO formulas. But $\psi_{{\sf id}}$ contains the function sysmbol ${\sf id
}$. Recall that

$$\psi_{{\sf id}}\equiv {\sf id}(z_1)={\sf id}(y_1)\wedge {\sf id}(z_2)={\sf id}(y_2).$$

Let $k'=\frac{k-14}{4}$.  Now we replace each leaf node $t\in {\sf EP}\cup {\sf I}$ with a path as explained below. We create a path $\pi_t= a_1,\ldots,a_{k'},t$ on $k' +1$ vertices. 
Now, we replace node $t$ with path $\pi_t$. 
Let $T_2$ be the tree obtained after this process. 
Let $b_1,\ldots,b_{k'}$ be the base $2^{s}$ representation of the number ${\sf id}(t)$.
Now, let us define ${\sf id}(a_i)=b_i$ for all $i$. 
Moreover, we create a unary label predicate ${\sf P}_1$ and add all the vertices $a_1,\ldots,a_{k'}$ to ${\sf P}_1$. Since $2^{k' \cdot s} \geq n$, each number in $[n]$ can be uniquely represented as above. 
Now, ${\sf id}(z)={\sf id}(y)$ can be written as 

\begin{eqnarray}
\phi_{\sf yes}(z,y)\equiv \exists a_1,a_1'\in {\sf P}_1 \ldots \exists a_{k},a_{k'}' \in {\sf P}_1
&&
\left( \bigwedge_{i\in [k] }
 {\sf id}(a_i) = {\sf id}(a'_i) \right) \nonumber \\
&& \wedge  {\sf path}(a_1,\ldots,a_{k'},z)
\wedge
{\sf path}(a'_1,\ldots,a'_{k'},y)  \label{eqn:exuni1}
\end{eqnarray}

where, ${\sf path}(w_1,\ldots,w_q) \equiv (\bigwedge_{i\in [q-1]} {\sf adj}(w_i,w_{i+1}))$. Now, $\neg \psi_{{\sf id}}$ can be expressed as

\begin{eqnarray}
\neg \psi_{\sf id}\equiv \forall a_1,a_1'\in {\sf P}_1 \ldots \forall a_{k'},a_{k'}' \in {\sf P}_1
&&
\left( \bigvee_{i\in [k'] }
 {\sf id}(a_i) \neq {\sf id}(a'_i) \right) \nonumber \\
&& \vee  \neg {\sf path}(a_1,\ldots,a_{k'},z_1)
\vee 
\neg {\sf path}(a'_1,\ldots,a'_{k'},y_1) \vee \nonumber \\
\forall d_1,d_1'\in {\sf P}_1 \ldots \forall d_{k'},d_{k'}' \in {\sf P}_1
&&
\left( \bigvee_{i\in [k'] }
 {\sf id}(d_i) \neq {\sf id}(d'_i) \right) \nonumber \\
&& \vee  \neg {\sf path}(d_1,\ldots,d_{k'},z_2)
\vee 
\neg {\sf path}(d'_1,\ldots,d'_{k'},y_2) \nonumber 
\end{eqnarray}
The above formula can be rewritten as 
\begin{eqnarray}
\neg \psi_{\sf id} &\equiv&
 \forall a_1,a_1'\in {\sf P}_1 \ldots \forall a_{k'},a_{k'}' \in {\sf P}_1   \forall d_1,d_1'\in {\sf P}_1 \ldots \forall d_{k'},d_{k'}' \in {\sf P}_1 \nonumber \\
&& \psi_{\sf npath} \vee \bigvee_{i\in [k'] }
 {\sf id}(a_i) \neq {\sf id}(a'_i) \vee 
\bigvee_{i\in [k'] }
 {\sf id}(d_i) \neq {\sf id}(d'_i), \nonumber
\end{eqnarray}

where $\psi_{\sf npath}$ is the formula $$\neg {\sf path}(a_1,\ldots,a_{k'},z_1)
\vee 
\neg {\sf path}(a'_1,\ldots,a'_{k'},y_1)
\vee  \neg {\sf path}(d_1,\ldots,d_{k'},z_2)
\vee 
\neg {\sf path}(d'_1,\ldots,d'_{k'},y_2)
$$
Now, (\ref{eqn:FO1:s>0:i=2}) can be written as 

\begin{eqnarray}
\psi  &\equiv& \exists r\in {\sf R}\; \exists x_1\in {\sf P} \; \exists x_2 \in {\sf P}\; \ldots \; \exists x_{\alpha}\in {\sf P}\; \exists p_1 \ldots \exists p_{\alpha}  \nonumber\\ 
 && \forall z \in {\sf EN}\; \forall z_1, z_2 \in {\sf EP}\; \forall v_1, v_2 \in {\sf N} \; \forall y_1, y_2 \in {\sf I} \; \forall c_1,c_2 \in {\sf Q}
 _1\cup {\sf Q}_2 \cup {\sf Q}_3 \nonumber\\
 &&  \forall a_1,a_1'\in {\sf P}_1 \ldots \forall a_{k'},a_{k'}' \in {\sf P}_1   \forall d_1,d_1'\in {\sf P}_1 \ldots \forall d_{k'},d_{k'}' \in {\sf P}_1 \nonumber\\
 && \psi_{{\sf evalid}} \wedge \left( (\neg \psi_{{\sf uvalid}} \vee \psi_{{\sf npath}}\vee \psi_{{\sf color}}) \vee \bigvee_{i\in [k'] }
 {\sf id}(a_i) \neq {\sf id}(a'_i) \vee 
\bigvee_{i\in [k'] }
 {\sf id}(d_i) \neq {\sf id}(d'_i)\right)
 \label{eqn:FO2:s>0:i=2}
\end{eqnarray}

The length of formula in (\ref{eqn:FO2:s>0:i=2}) is $O(\alpha^2+k)$. 
In the above expression each ${\sf id}(a_i)$ is a number in $\{0,\ldots,2^s-1\}$. 
Now we explain how to encode ${\sf id}(a_i) = {\sf id}(a'_i)$. Notice that we can use $s$ set variables $W_1,\ldots,W_s$ in the fomula. Also, we know the value of ${\sf id}(a_i)$ and ${\sf id}(a_{i'})$. Let  $b_1,\ldots,b_s$ be the binary representation of ${\sf id}(a_i)$ and 
$b'_1,\ldots,b'_s$ be the binary representation of ${\sf id}(a'_i)$. 
Suppose we are able to force the set variables in such a way that for all $j\in [s]$, $a_i\in W_j$ if and only if $b_j=1$ and $a_i'\in W_j$ if and only if $b'_j=1$. Under that condition, ${\sf id}(a_i) \neq {\sf id}(a'_i)$ can be encoded as 

\begin{equation}
\label{eqn:setQIDcore}    
\neg \bigwedge_{j\in [s]} (a_i\in W_j) \Leftrightarrow (a'_i\in W_j)
\end{equation}

Now to satisfy the condition mentioned above, we create a formula $\psi_{\sf set}(W_1,\ldots W_s)$ such that $\psi_{\sf set}(W_1,\ldots W_s)$ is true if and only if the following is true. For any vertex $a\in {\sf P}_1$, let $b_1,\ldots,b_s$ be the binary representation of ${\sf id}(a)$. Then, the formula $\psi_{\sf set}(W_1,\ldots W_s)$ is true if and only if for all $a\in P_1$ and $j\in [s]$, $a\in W_j$ only when $b_j=1$. Once we have such a formula $\psi_{\sf set}(W_1,\ldots W_s)$, we can rewrite (\ref{eqn:FO2:s>0:i=2}) as follows.

\begin{eqnarray}
\psi  &\equiv& \exists W_1\ldots \exists W_s \exists r\in {\sf R} \exists x_1\in {\sf P}\; \exists x_2 \in {\sf P}\; \ldots \; \exists x_{\alpha}\in {\sf P} \; \exists p_1 \ldots \exists p_{\alpha}  \nonumber\\ 
 && \forall z \in {\sf EN}\; \forall z_1, z_2 \in {\sf EP}\; \forall v_1, v_2 \in {\sf N} \; \forall y_1, y_2 \in {\sf I} \; \forall c_1,c_2 \in {\sf Q}
 _1\cup {\sf Q}_2 \cup {\sf Q}_3 \nonumber\\
 &&  \forall a_1,a_1'\in {\sf P}_1 \ldots \forall a_{k'},a_{k'}' \in {\sf P}_1   \forall d_1,d_1'\in {\sf P}_1 \ldots \forall d_{k'},d_{k'}' \in {\sf P}_1 \nonumber\\
 && \psi_{\sf set}(W_1,\ldots, W_s) \wedge \psi_{{\sf evalid}} \wedge \Big( (\neg \psi_{{\sf uvalid}} \vee \neg\psi_{{\sf npath}}\vee \psi_{{\sf color}}) \vee \nonumber\\ 
 && \bigvee_{i\in [k'] }
 \left(\neg \bigwedge_{j\in [s]} (a_i\in W_j) \Leftrightarrow (a'_i\in W_j)\right)
 \vee 
\bigvee_{i\in [k'] }
\left(
\neg \bigwedge_{j\in [s]} (d_i\in W_j) \Leftrightarrow (d'_i\in W_j)
\right)\Big)
 \label{eqn:FO3:s>0:i=2}
\end{eqnarray}

Now, to define the formula $\psi_{\sf set}(W_1,\ldots,W_k)$, we create two label predicates ${\sf SP}_1$ and ${\sf SP}_2$. For each $a\in P_1$, we know that ${\sf id}(a)\in \{0,\ldots 2^s-1\}$. Let $c'={\sf id}(a)$. We replace $a$ with a path $\sigma_a=q_0,q_1,\ldots,q_{c'},a$. Let $T_3$ be the tree obtained after this process.  
We add $q_0$ to ${\sf SP}_1$ and for all $j\in [c']$, $q_j$ to ${\sf SP}_2$.  
Now, let us define ${\sf id}$ for the newly introduced vertices $q_0,q_1,\ldots,q_{c'}$. Let ${\sf id}(q_j)=j$ for all $j\in [c']$. Notice that ${\sf id}$ values of $q_0,\ldots,q_{c'},a$ are $0,1,2,\ldots,c'-1,c',c'$. To encode this, we will encode that 

\begin{itemize}
    \item[(i)] for all $u\in {\sf SP}_1$, ${\sf id}(u)=0$, 
    \item[(ii)] for all $u\in {\sf SP}_1\cup {\sf SP}_2$ and for all $v\in {\sf SP}_2$, ${\sf adj}(u,v)$ implies that for all $c\in [2^{s}-2] ({\sf id}(u)=c \Rightarrow {\sf id}(v)=c+1) \vee ({\sf id}(v)=c \Rightarrow {\sf id}(u)=c+1)$ \\
    \item[(iii)]  for all $u\in {\sf SP}_1\cup {\sf SP}_2$ and for all $v\in {\sf P}_1$, ${\sf adj}(u,v)$ implies that for all $c\in [2^{s}-1] ({\sf id}(u)=c \Rightarrow {\sf id}(v)=c)$
\end{itemize}

We encode (i) as follows. 

\begin{eqnarray*}
    \phi_{(i)}&\equiv& \forall u_1\in {\sf SP}_1, \bigwedge_{j\in [s]} \neg (u_1\in W_j)  
\end{eqnarray*}

To encode condition (ii), we need to encode that for a number $c\in [2^{s}-1]$, ${\sf id}(u)=c \Rightarrow {\sf id}(v)=c+1$. Fix $u$, and $c$. Let $d_1,\ldots,d_s$ be the binary representation of $c$.  
Let $D\subseteq [s]$
such that $i\in D$ if and only if $d_i=1$. Let


\begin{eqnarray*}
\phi_{\sf eq}(u,c)\equiv {\sf id}(u)=c &\equiv &  \bigwedge_{i\in D} u\in W_i \wedge \bigwedge_{j\in [s]\setminus D} \neg (u\in W_j) 
\end{eqnarray*}
Now, we can encode condition (ii) as

\begin{eqnarray*}
\phi_{(ii)}\equiv && \forall u_2\in {\sf SP}_1\cup {\sf SP}_2, \forall v_2\in  {\sf SP}_2 \\
&& {\sf adj}(u_2,v_2) \Rightarrow   \bigwedge_{c\in [2^s-1]} ((\phi_{\sf eq}(u_2,c)\Rightarrow \phi_{\sf eq}(v_2,c+1)) \vee (\phi_1(v_2,c)\Rightarrow \phi_1(u_2,c+1))) 
\end{eqnarray*}

Similarly, we can encode (iii) as

\begin{eqnarray*}
\phi_{(iii)}\equiv  \forall u_3\in {\sf SP}_1\cup {\sf SP}_2, \forall v_3\in  {\sf P}_1 &&  {\sf adj}(u_3,v_3) \Rightarrow   \bigwedge_{c\in [2^s-1]} (\phi_{\sf eq}(u_3,c)\Rightarrow \phi_{\sf eq}(v_3,c))  
\end{eqnarray*}

Thus, finally, we have $\psi_{\sf set}(W_1,\ldots,W_k)= \phi_{(i)} \wedge \phi_{(ii)} \wedge \phi_{(iii)}$, and its length is $O(2^s)$. Let $\phi_{\sf set}$ be the formula obtained from $\phi_{(i)} \wedge \phi_{(ii)} \wedge \phi_{(iii)}$ by removing the quantifiers. Then, 
\begin{eqnarray}
\label{eqn:psiset}
\psi_{\sf set}(W_1,\ldots,W_s) 
\equiv  \forall u_1\in {\sf SP}_1 \forall u_2,u_3\in {\sf SP}_1\cup {\sf SP}_2 \forall v_2 \in {\sf SP}_2  \forall v_3\in {\sf P}_1  \;\; \phi_{\sf set}
\end{eqnarray}

By substituting the above formula for $\psi_{\sf set}(W_1,\ldots,W_s)$ in 
(\ref{eqn:FO3:s>0:i=2}) we get an $((2\alpha+1,s),(4k'+14,0))$-MSO formula of length $O(\alpha^2+k+2^s)$.  
By the construction of the tree and formula $\psi$, property (i) of the theorem follows. Lemma~\ref{lem:base-red} introduces $9$ label predicates. We have also introduced $3$ additional label predicates ${\sf P}_1$, ${\sf SP}_1$ and ${\sf SP}_2$. This implies that the number of label predicates in $\psi$ is $12$. 
The number of vertices in $T_3$ is $O(V(T_1)2^s \log k)= O((3^{n/\alpha}+m)2^s\log k)$.
\end{proof}

Before proving Theorem~\ref{thm:low-MSO} for $s>0$ and $i>2$, let us prove an auxiliary lemma about encoding identifiers using an MSO formula.

\begin{lemma}
\label{lem:ID-level-d-FirstMSO}
Let $d\geq 1$ be an integer. 
Let $n$  be a positive integer and 
for any leaf node $x$, ${\sf id}(x) \in [n]$ (if ${\sf id}$ is defined on $x$). Let $k,s\geq 1$ be two integers such that $n\leq \exp^{(d+1)}(\frac{k-6}{2}\cdot s)$. 
One can construct a tree $T'$ (which is super graph of $T$) and 
two $((c_0,s_0),\ldots,(c_d,s_d))$-MSO formulas $\psi$ and $\psi'$ on two free variables  
with the following specifications. 
\begin{enumerate}[(a)]
    \item $c_i= 6$ for all $i\in \{0,1,\ldots,d-1\}$, $c_d=k$, $s_0=s$ and $s_i=0$ for all $i\in [d]$
    \item All the set variables are quantified with existential quantifier 
    in $\psi$ and all the set variables are quantified with universal quantifier in $\psi'$. 
    \item For any two nodes $z$ and $y$ in $T$, $T'\models \psi(z,y)$ if and only if ${\sf id}(z)={\sf id}(y)$. 
       \item For any two nodes $z$ and $y$ in $T$, $T'\models \psi'(z,y)$ if and only if ${\sf id}(z)={\sf id}(y)$.
        \item $|\psi|, |\psi'|\in O(d+k+2^s)$ 
    \item The number of label predicates used in $\psi$ and $\psi'$ are $4d+2$. 
    \item $V(T')=O(|V(T)|\cdot 2^s k \cdot 
    \ilog{n}{d+1})$. 
\end{enumerate}
\end{lemma}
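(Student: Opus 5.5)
The plan is to combine the ``lower the identifier value iteratively'' machinery (Lemma~\ref{lem:ID-level-d}) with the set-variable identifier test from the proof of Theorem~\ref{thm:low-MSO:s>0:i=2}. The set variables must sit in the \emph{outermost} block (block $0$), while the $k$ vertex variables of the final path test sit in the innermost block $d$.

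First apply Lemma~\ref{lem:ID-level-d} with parameter $d$ to $T$. This yields a tree $T_1$ and a $(6,\dots,6)$-FO+${\sf id}$ formula with $d$ alternating blocks of six quantifiers. Its innermost atom is ${\sf id}(x_{d,1})={\sf id}(x_{d,2})$, negated when $d$ is odd, with identifier values at most $n_d:=\lceil\log^{(d)}n\rceil\le \exp^{(1)}(\frac{k-6}{2}s)$. To shift the alternation pattern by one level, prefix a dummy block $0$ containing six vertex quantifiers, which can be vacuous or used as in Lemma~\ref{lem:L1-uni-IDtestStep1}. This gives $c_0=\dots=c_{d-1}=6$.

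Next, replace the remaining atom by the base-$2^s$ path encoding. Replace every node $t$ carrying an id by a path $a_1,\dots,a_{k'},t$ with $k'=\frac{k-6}{2}$, where ${\sf id}(a_r)$ is the $r$-th base-$2^s$ digit. This encoding is valid because $2^{k's}\ge n_d$. Then hang below each $a_r$ the counting path $q_0,\dots,q_{c'}$ labelled with ${\sf SP}_1,{\sf SP}_2$. Introduce $s$ set variables $W_1,\dots,W_s$ constrained by the universal formula $\psi_{\sf set}$ from (\ref{eqn:psiset}), which forces $W$-membership to be exactly the binary digits.

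The equality ${\sf id}(x)={\sf id}(y)$ then becomes ``there exist paths $a_1..a_{k'}$, $a'_1..a'_{k'}$ ending at $x$, $y$ such that, for all $r$ and $j$, $a_r\in W_j\Leftrightarrow a'_r\in W_j$.'' This uses $2k'\le k$ vertex quantifiers, all of the same type. If the atom occurs negated, as for odd $d$, the negation turns them universal, which matches the innermost block. In either case the innermost block gets $c_d=k$ after merging with the six block-$d$ variables. The bookkeeping needs care here, and it is acceptable to absorb constants into $k$ as the hypothesis's $-6$ suggests.

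Now pull $W_1,\dots,W_s$ to the front. For $\psi$ use $\exists W\,(\psi_{\sf set}(W)\wedge\cdots)$. For $\psi'$ use $\forall W\,(\neg\psi_{\sf set}(W)\vee\cdots)$. Both are correct because $\psi_{\sf set}$ has exactly one satisfying tuple. The main obstacle is that $\psi_{\sf set}$ carries its own universal vertex quantifiers, which are $O(1)$ in number. In $\psi$ they must be placed in a universal block, and in $\psi'$, after negation, they become existential. I would put these $O(1)$ variables into block $0$ or block $1$ of matching polarity. Since the dummy block $0$ has free polarity, it can be chosen for each formula so that it absorbs them. Property (b) of the statement concerns only the set quantifiers, so this is allowed.

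Finally, convert to prenex normal form via Proposition~\ref{prop:PNF:Rules}. Then count the resources. The size bound $O(d+k+2^s)$ comes from $\psi_{\sf set}$ having size $O(2^s)$. There are $4d$ label predicates from Lemma~\ref{lem:ID-level-d} plus two more for the path and counting labels, which requires reusing a predicate to reach $4d+2$. The vertex count gains a factor $\ilog{n}{d}$ from Lemma~\ref{lem:ID-level-d}, a factor $k'$ from the paths, and a factor $2^s$ from the counting paths. This is within $O(|V(T)|\,2^s k\,\ilog{n}{d+1})$.
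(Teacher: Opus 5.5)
Your construction is the paper's own: Lemma~\ref{lem:ID-level-d} with $d$ levels, paths of base-$2^s$ digits of length $k'=\frac{k-6}{2}$ so that the innermost level gets $6+2k'=k$ variables, sets $W_1,\dots,W_s$ pinned down by $\psi_{\sf set}$, and then $\psi=\exists W(\psi_{\sf set}\wedge\psi_1)$ and $\psi'=\forall W(\psi_{\sf set}\Rightarrow\psi_1)$.

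Your ``main obstacle'' paragraph needs correcting, because block~$0$ does not have free polarity. In the paper's definition, the vertex and set quantifiers of one level share a single type $\mathsf{Q}$, and (b) forces this to be the type of $W$. The five vertex variables of $\psi_{\sf set}$ must lie inside the scope of $W$ and have the opposite type: universal in $\psi$, and existential in $\neg\psi_{\sf set}$ inside $\psi'$. So block~$0$ can never absorb them. Worse, in prenex form $\psi'$ with $d=1$ has no existential level after $W$ at all, so the PNF conversion you announce at the end would break the prescribed block structure.

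Fortunately nothing needs fixing. The class of $S$-MSO formulas is closed under $\neg,\wedge,\vee$ at every level. Hence $\psi_{\sf set}$ and $\neg\psi_{\sf set}$, having five quantifiers (at most $c_1$), are already $((c_1,0),\dots,(c_d,0))$-MSO formulas. They can be combined with $\psi_1$ directly under the $W$-quantifiers, which is exactly what the paper does. Simply drop the dummy block and the PNF step: $c_0=6$ is only an upper bound, and the paper places no vertex variables at level~$0$.

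There are two small inaccuracies. First, $\psi_{\sf set}$ does not have a unique satisfying tuple, since membership of vertices outside ${\sf P}_1\cup{\sf SP}_1\cup{\sf SP}_2$ is unconstrained. What you need, and what holds, is that all satisfying tuples agree on ${\sf P}_1$, which is all $\psi_1$ reads. Second, the new predicates are ${\sf P}_1,{\sf SP}_1,{\sf SP}_2$, giving $4d+3$ rather than $4d+2$. The paper's proof does not address this discrepancy either, and it is harmless since only $O(1)$ is used later.
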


\begin{proof}
We apply Lemma~\ref{lem:ID-level-d} and we get a tree $T_1$ which is a super graph of $T$ and a $(c_1,\ldots,c_d)$-FO+{\sf id} formula $\phi$ with the following properties.

\begin{enumerate}[(a)]
    \item $c_i=6$ for all $i\in [d]$. 
    \item If $d$ is odd, then $\phi(z,y)$ is of the form
    \begin{eqnarray*}
     \forall x_{1,1}, \ldots, \forall x_{1,6} && ( \phi_1(z,y, x_{1,1},\ldots,x_{1,6}) \vee \\
     (\exists x_{2,1}\ldots,\exists x_{2,6} && ( \phi_2(x_{1,1},x_{1,2},x_{2,1},\ldots,x_{2,6}) \wedge \\
     (\forall x_{3,1}, \ldots, \forall x_{3,6} &&  (\phi_3(x_{2,1},x_{2,2},x_{3,1},\ldots,x_{3,6})\vee \\ 
     \vdots && \\
     (\forall x_{d,1}, \ldots, \forall x_{d,6} && 
     (\phi_d(x_{d-1,1},x_{d-1,2},x_{d,1},\ldots,x_{d,6})\vee 
     \neg ({\sf id}(x_{d,1})={\sf id}(x_{d,2})))))))))
    \end{eqnarray*}
    where ${\sf id}(x_{d,1}), {\sf id}(x_{d,2}) \in [\lceil \log^{(d)} n \rceil],$ and $\phi_1,\ldots,\phi_{d}$ are quantifier free formulas. 
    \item If $d$ is even, then $\phi(z,y)$ is of the form
    \begin{eqnarray*}
     \forall x_{1,1}, \ldots, \forall x_{1,6} && ( \phi_1(z,y, x_{1,1},\ldots,x_{1,6}) \vee \\
     (\exists x_{2,1}\ldots,\exists x_{2,6} && ( \phi_2(x_{1,1},x_{1,2},x_{2,1},\ldots,x_{2,6}) \wedge \\
     (\forall x_{3,1}, \ldots, \forall x_{3,6} &&  (\phi_3(x_{2,1},x_{2,2},x_{3,1},\ldots,x_{3,6})\vee \\ 
     \vdots && \\
     (\exists x_{d,1}, \ldots, \exists x_{d,6} && 
     (\phi_d(x_{d-1,1},x_{d-1,2},x_{d,1},\ldots,x_{d,6})\wedge 
      ({\sf id}(x_{d,1})={\sf id}(x_{d,2})))))))))
    \end{eqnarray*}
where ${\sf id}(x_{d,1}), {\sf id}(x_{d,2}) \in [\lceil \log^{(d)} n \rceil]$ , and $\phi_1,\ldots,\phi_{d}$ are quantifier free formulas. 
    \item For any two leaf nodes $z$ and $y$ in $T$, $T_1\models \phi(z,y)$ if and only if ${\sf id}(z)={\sf id}(y)$
    \item The number of label predicates used in $\phi(z,y)$ is $4d$. 
    \item $V(T_1)=O(|V(T)|\cdot \ilog{n}{d})$
\end{enumerate}

Now, to encode ${\sf id}(x_{d,1})={\sf id}(x_{d,2})$ we use set variables like in Theorem~\ref{thm:low-MSO:s>0:i=2}. Let $k'=\frac{k-6}{2}$.  Now we replace each leaf node $t$ in $T_1$ (where ${\sf id}(t)$ is defined) with a path as explained below. We create a path $\pi_t= a_1,\ldots,a_{k'},t$ on $k' +1$ vertices. 
Now, we replace node $t$ with path $\pi_t$. 
Let $T_2$ be the tree obtained after this process. 
Let $b_1,\ldots,b_{k'}$ be the base $2^{s}$ representation of the number ${\sf id}(t)$.
Now, let us define ${\sf id}(a_i)=b_i$ for all $i$. 
Moreover, we create a unary label predicate ${\sf P}_1$ and add all the vertices $a_1,\ldots,a_{k'}$ to ${\sf P}_1$. Since $2^{k' \cdot s} \geq \log^{(d)} n$, each number in $[\log^{(d)} n]$ can be uniquely represented as above. 
Now, ${\sf id}(x_{d,1})={\sf id}(x_{d,2})$ can be written as 

\begin{eqnarray}
 \exists a_1,a_1'\in {\sf P}_1 \ldots \exists a_{k},a_{k'}' \in {\sf P}_1
\left( \bigwedge_{i\in [k] }
 {\sf id}(a_i) = {\sf id}(a'_i) \right) 
 \wedge  {\sf path}(a_1,\ldots,a_{k'},x_{d,1})
\wedge
{\sf path}(a'_1,\ldots,a'_{k'},x_{d,2}) \nonumber 
\end{eqnarray}

where, ${\sf path}(w_1,\ldots,w_q) \equiv (\bigwedge_{i\in [q-1]} {\sf adj}(w_i,w_{i+1}))$. As before, we want to encode the identifiers using these set variables. Now,  we introduce $s$ set variables $W_1,\ldots,W_s$. 
 Let  $b_1,\ldots,b_s$ be the binary representation of ${\sf id}(a_i)$ and 
$b'_1,\ldots,b'_s$ be the binary representation of ${\sf id}(a_i')$. 
Suppose we are able to force the set variables in such a way that for all $j\in [s]$, $a_i\in W_j$ if and only if $b_j=1$ and $x_{d,2}\in W_j$ if and only if $b'_j=1$. Under that condition, ${\sf id}(x_{a_i}) = {\sf id}({a_i'})$ can be encoded as 

\begin{equation*}
 \bigwedge_{j\in [s]} (a_i\in W_j) \Leftrightarrow (a'_i\in W_j)
\end{equation*}

In other words, we define a formula $\phi_{\sf yes}(x_{d,1},x_{d,2},W_1,\ldots,W_s)$ on two free vertex variables and $s$ free set variables as follows.  

\begin{eqnarray}
\phi_{\sf yes}(x_{d,1},x_{d,2},W_1,\ldots,W_s)&\equiv&    
 \exists a_1,a_1'\in {\sf P}_1 \ldots \exists a_{k},a_{k'}' \in {\sf P}_1 \nonumber \\ 
 && 
   {\sf path}(a_1,\ldots,a_{k'},x_{d,1})
\wedge
{\sf path}(a'_1,\ldots,a'_{k'},x_{d,2}) \wedge \nonumber\\
&&
\left( \bigwedge_{i\in [k] }
  \bigwedge_{j\in [s]} (a_i\in W_j) \Leftrightarrow (a'_i\in W_j) \right) 
\label{eqn:exuni1leveld2}
\end{eqnarray}

Assuming the set variables are properly selected, $\phi_{\sf yes}(x_{d,1},x_{d,2},W_1,\ldots,W_s)$ is true if and only if ${\sf id}(x_{d,1})={\sf id}(x_{d,2})$. Now substitute (\ref{eqn:exuni1leveld2}) in the formula $\phi(z,y)$  (see items (b) and (c)) and let the resulting formula be $\psi_1(z,y,W_1,\ldots,W_s)$. Notice that $\psi_1(z,y,W_1,\ldots,W_s)$ is a $(c_1,\ldots,c_{d-1},c_d+2k')$-MSO formula with $s$ free set variables and $2$ free vertex variables, where $c_i=6$ for all $i\in [d]$. Then, like in the case of Theorem~\ref{thm:low-MSO:s>0:i=2}, we add long paths to vertices for which {\sf id} is defined and define $\psi_{set}$ on free variables $W_1,\ldots,W_s$ and it looks like

\begin{eqnarray}
\label{eqn:psiset:aux}
\psi_{\sf set}(W_1,\ldots,W_s) 
\equiv  \forall u_1\in {\sf SP}_1 \forall u_2,u_3\in {\sf SP}_1\cup {\sf SP}_2 \forall v_2 \in {\sf SP}_2  \forall v_3\in {\sf P}_1  \;\; \phi_{\sf set}
\end{eqnarray}

Let $T'$ be the tree obtained after this process. 
Here, $\phi_{\sf set}$ is quantifier free formula and the only free variables in $\phi_{\sf set}$ are $W_1,\ldots,W_s,u_1,u_2,u_3$. 
Moreover, $\psi_{\sf set}(W_1,\ldots,W_s)$ is true if and only if for each vertex, its inclusion in these sets is according to the id of the vertex. Notice that $\psi_{\sf set}(W_1,\ldots,W_s)$ is a $(5)$-MSO formula. 

Now, the required formula $\psi(z,y)$ is defined as 
$$ \exists W_1,\ldots,\exists W_s \; 
\psi_{\sf set}(W_1,\ldots,W_s) \wedge  \psi_1(z,y,W_1,\ldots,W_s)
$$

Similarly, the required formula $\psi'$ is 

$$ \forall W_1,\ldots, \forall W_s \;\; 
\psi_{\sf set}(W_1,\ldots,W_s) \Rightarrow  \psi_1(z,y,W_1,\ldots,W_s)
$$
The definitions of $\psi_1, \psi_{\sf set}$ and $\phi$ implies that 
$\psi$ and $\psi'$ are $((c_0,s_0),\ldots,(c_d,s_d))$-MSO formulas. The size of the formula $\phi$ is $O(d)$. This and  (\ref{eqn:exuni1leveld2}) implies that $|\psi_1|=O(d+k)$. The length of $\psi_{\sf set}$ is $O(2^s)$. Therefore $|\psi|$ and $|\psi'|$ are upper bounded by $O(2^s+k+d)$. The fact that $V(T_1)=O(|V(T)|\cdot \ilog{n}{d})$, and the construction of the tree implies that $V(T')=O(|V(T)|\cdot 2^s k \cdot \ilog{n}{d})$
This completes the proof of the lemma. 
\end{proof}

Now, to prove Theorem~\ref{thm:low-MSO}, when $s>0$ and $i>2$, we use Lemma~\ref{lem:ID-level-d}.

\begin{theorem}
\label{thm:low-MSO:s>0:i>2}
    Let $d,i,j \in \mathbb{N}$ such that $j<i \leq d$ and $i>2$. Given an $n$-vertex graph $G$ and integers $k \geq 2$, $s \geq 1$, $\alpha \geq 1$ such that $n \leq \exp^{(i-1)}(\frac{k-6}{2}\cdot s)$, one can compute in $O(3^{n/\alpha})$ time a tree $T$ and an $S$-MSO formula $\psi$ for $S = ((k_1,s_1),\dots,(k_d,s_d))$ satisfying the following.

    \begin{enumerate}[(i)]
        \item $G$ is a $3$-colorable if and only if $T$ satisfies $\psi$.
        \item $V(T)=O((3^{n/\alpha}+m)\cdot 2^sk \cdot \ilog{n}{i-1})$ and $|\psi| = O(\alpha^2+k+2^s) $.
        \item $k_i = k$, $k_1 = O(\alpha)$, $k_{i'} = O(1)$ for all $i' \in \{2,\dots,i-1\}$, and $k_{i'} = 0$ for all $i' \in [d] \backslash [i]$.
        \item $s_j = s$ and $s_{j'} = 0$ for all $j' \in [d] \backslash \{j\}$.
        \item The number of label predicates used in $\psi$ is at most $
        4i+3$. 
    \end{enumerate}
\end{theorem}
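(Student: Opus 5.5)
The plan follows the template of Theorem~\ref{thm:low-FO-Tree} (Case 2) and Theorem~\ref{thm:low-MSO:s>0:i=2}, with Lemma~\ref{lem:ID-level-d-FirstMSO} handling the identifier test.

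\textbf{Base instance.} First apply Lemma~\ref{lem:base-red} to $G$ and $\alpha$. This gives the tree $T_1$ and the $(2\alpha+1,9)$-FO+${\sf id}$ formula $\psi$ of~(\ref{eqn:FO1}). In $\psi$, the only non-FO part is $\neg\psi_{\sf id}$, which lies in the scope of the universal block. There the identifiers take values in $[n]$.

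\textbf{Replacing the identifier test.} Apply Lemma~\ref{lem:ID-level-d-FirstMSO} with $d=i-2\ge 1$. Its hypothesis $n\le \exp^{(d+1)}(\frac{k-6}{2}s)=\exp^{(i-1)}(\frac{k-6}{2}s)$ is exactly our assumption. This yields a tree $T'\supseteq T_1$ and formulas $\psi_{\rm id},\psi'_{\rm id}$ that test ${\sf id}(z)={\sf id}(y)$ with prefix $((6,s),(6,0),\dots,(6,0),(k,0))$. In $\psi_{\rm id}$ the set block is existential; in $\psi'_{\rm id}$ it is universal.

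Since $\psi_{\sf id}$ occurs negated under the second (universal) block of $\psi$, substitute $\neg\psi_{\rm id}$, or rather $\neg\psi'_{\rm id}$, whichever makes the set block universal after negation. Its leading block then merges with the second block of $\psi$. Its subsequent blocks become blocks $3,\dots,i$ of the combined formula, with alternation matching because negation flips each of them. Then convert to prenex form via Proposition~\ref{prop:PNF:Rules}. This gives $k_1=2\alpha+1=O(\alpha)$, $k_2=9+O(1)$, middle blocks $O(1)$, and $k_i=k$ up to a constant additive slack that the hypotheses absorb. The two conjuncts $\psi_{\sf id}(z_1,y_1)\wedge\psi_{\sf id}(z_2,y_2)$ double each count, which is again a constant factor.

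\textbf{Placing the set block at position $j$.} This is the step I expect to be the main obstacle. The set variables $W_1,\dots,W_s$ are constrained only by $\psi_{\sf set}$, which is a universal FO condition whose correct assignment is unique. So $\exists W(\psi_{\sf set}\wedge\chi)$ and $\forall W(\psi_{\sf set}\Rightarrow\chi)$ are equivalent, and the $W$-quantifier can be pulled outward past any of the vertex blocks.

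I would therefore quantify $W$ globally: for $j=1$, existentially at the front of $\psi$ (as in Theorem~\ref{thm:low-MSO:s>0:i=2}); for general $j<i$, with the $(\exists/\forall)$-type of block $j$, choosing $\psi_{\sf set}\wedge\cdot$ or $\psi_{\sf set}\Rightarrow\cdot$ accordingly. Correctness rests on the uniqueness of the valid $W$, and I would check that the universal vertex quantifiers inside $\psi_{\sf set}$ can be absorbed into an already-universal block with only $O(1)$ extra variables. The two conjuncts for $(z_1,y_1)$ and $(z_2,y_2)$ share the same $W$, so $s_j=s$ and all other $s_{j'}=0$.

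\textbf{Bookkeeping.} The remaining items are direct tallies:
\begin{itemize}
\item sizes follow from Lemma~\ref{lem:base-red}(ii) and Lemma~\ref{lem:ID-level-d-FirstMSO}(e),(g), giving $|V(T)|=O((3^{n/\alpha}+m)2^sk\,{\sf ilog}(n,i-1))$ and $|\psi|=O(\alpha^2+k+2^s)$;
\item the label predicates number $9+4(i-2)+2\le 4i+3$.
\end{itemize}
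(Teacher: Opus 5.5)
Your proof is correct. For $j\in\{1,2\}$ it coincides with the paper's proof, but for $j\ge 3$ it takes a different route.

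\textbf{The paper's route.} The paper never relocates the set quantifiers. For $j>1$ it first applies Lemma~\ref{lem:ID-level-d} with $d=j-2$. This produces FO blocks $3,\dots,j$ and leaves an identifier test on values at most $\lceil\log^{(j-2)}n\rceil\le\exp^{(i-j+1)}(\frac{k-6}{2}s)$. Only then does it apply Lemma~\ref{lem:ID-level-d-FirstMSO} with $d=i-j$. The outermost (set) block of that formula merges structurally with block $j$, and its remaining blocks become $j+1,\dots,i$. For $j=1$ it quantifies $W$ existentially in front, as in Theorem~\ref{thm:low-MSO:s>0:i=2}.

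\textbf{Your route.} You apply Lemma~\ref{lem:ID-level-d-FirstMSO} once with $d=i-2$, whose hypothesis is exactly the theorem's. This puts the set block at position $2$. You then move it to position $j$, using the fact that $\psi_{\sf set}$ is a sentence in $W$ alone that forces $W$ on the queried vertices.

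\textbf{Small corrections.}
\begin{itemize}
\item For $j\ge 3$ you are moving $W$ \emph{inward} past blocks $3,\dots,j-1$, not outward. Your argument still covers this, since $\exists W(\psi_{\sf set}\wedge\chi)\equiv\chi[W^*]\equiv\forall W(\psi_{\sf set}\Rightarrow\chi)$ holds at any nesting depth.
\item $W$ is only determined on $\mathsf{P}_1\cup\mathsf{SP}_1\cup\mathsf{SP}_2$, not globally. This suffices, because membership is only tested for $\mathsf{P}_1$-vertices.
\item The vertex quantifiers of $\psi_{\sf set}$ do not always go into a universal block. They land in block $j+1$, whose type is opposite to block $j$; in the $\Rightarrow$ case they become existential. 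Block $j+1$ exists because $j<i$.
\end{itemize}

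\textbf{Comparison.} The paper's route gets the position of the set block for free from the block structure. It needs no commutation argument beyond the one already inside Lemma~\ref{lem:ID-level-d-FirstMSO}. Your route is uniform in $j$ and uses a single lemma application. It also makes explicit the forcing property that the paper uses only implicitly, in that lemma and in its $j=1$ case. This shows directly that the set block can be placed at any position $1,\dots,i-1$.
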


\begin{proof}
Like before, we start applying  Lemma~\ref{lem:base-red}
and obtain a tree $T_1$ and an $(2\alpha+1,9)$-FO+${\sf id}$ formula $\psi$. 
The FO+${\sf id }$ formula $\psi$ is given below.

\begin{eqnarray}
\psi  &\equiv&\exists r\in {\sf R} \exists x_1\in {\sf P} \; \exists x_2 \in {\sf P}\; \ldots \; \exists x_{\alpha}\in {\sf P} \; \exists p_1 \; \ldots \; \exists p_{\alpha}  \nonumber\\ 
 && \forall z \in {\sf EN}\; \forall z_1, z_2 \in {\sf EP}\; \forall v_1, v_2 \in {\sf N} \; \forall y_1, y_2 \in {\sf I} \; \forall c_1,c_2 \in {\sf Q}
 _1\cup {\sf Q}_2 \cup {\sf Q}_3 \nonumber\\
 && \psi_{{\sf evalid}}\wedge (\neg \psi_{{\sf uvalid}} \vee \neg\psi_{{\sf id}}\vee \psi_{{\sf color}}) \label{eqn:FO1:s>0:i>2}
\end{eqnarray}
All the label predicates in the formula is explained in the proof of Lemma~\ref{lem:base-red}. In the formula $\psi$, the subformulas $\psi_{{\sf evalid}}, \psi_{{\sf uvalid}}$, and $\psi_{{\sf color}}$ are quantifier free FO formulas. But $\psi_{{\sf id}}$ contains the function sysmbol ${\sf id
}$. Recall that

$$\psi_{{\sf id}}\equiv {\sf id}(z_1)={\sf id}(y_1)\wedge {\sf id}(z_2)={\sf id}(y_2).$$

We know that  $i>2$ and ${\sf id}(z_1),{\sf id}(y_1), {\sf id}(z_2), {\sf id}(y_2) \in [n]$. 

\medskip
\noindent 
{\bf Case 1: $j>1$. }
We apply Lemma~\ref{lem:ID-level-d} where $d=j-2$ and we get a tree $T_2$ which is a super graph of $T_1$, and $(c_3,\ldots,c_j)$-FO+{\sf id} formula $\phi$ with the following properties. Since we want to substitute for $\neg\psi_{\sf id}$ in (\ref{eqn:FO1:s>0:i>2}), we negated the formulas in (b) and (c).

\begin{enumerate}[(a)]
    \item $c_q=6$ for all $q\in \{3,\ldots,j\}$. 
    \item If $j-2$ is odd, then $\neg \phi(z,y)$ is of the form
    \begin{eqnarray*}
     \exists x_{1,1}, \ldots, \exists x_{1,6} && ( \phi_1(z,y, x_{1,1},\ldots,x_{1,6}) \wedge \\
     (\forall x_{2,1}\ldots,\forall x_{2,6} && ( \phi_2(x_{1,1},x_{1,2},x_{2,1},\ldots,x_{2,6}) \vee \\
     (\exists x_{3,1}, \ldots, \exists x_{3,6} &&  (\phi_3(x_{2,1},x_{2,2},x_{3,1},\ldots,x_{3,6})\wedge \\ 
     \vdots && \\
     (\exists x_{j-2,1}, \ldots, \exists x_{j-2,6} && 
     (\phi_{j-2}(x_{j-3,1},x_{j-3,2},x_{j-2,1},\ldots,x_{j-2,6})\wedge 
     ({\sf id}(x_{j-2,1})={\sf id}(x_{j-2,2})))))))))
    \end{eqnarray*}
    where ${\sf id}(x_{j-2,1}), {\sf id}(x_{j-2,2}) \in [\lceil \log^{(j-2)} n \rceil],$ and $\phi_1,\ldots,\phi_{j-2}$ are quantifier free formulas. 
    \item If $j-2$ is even, then $\neg \phi(z,y)$ is of the form
    \begin{eqnarray*}
     \exists x_{1,1}, \ldots, \exists x_{1,6} && ( \phi_1(z,y, x_{1,1},\ldots,x_{1,6}) \wedge \\
     (\forall x_{2,1}\ldots,\forall x_{2,6} && ( \phi_2(x_{1,1},x_{1,2},x_{2,1},\ldots,x_{2,6}) \vee \\
     (\exists x_{3,1}, \ldots, \exists x_{3,6} &&  (\phi_3(x_{2,1},x_{2,2},x_{3,1},\ldots,x_{3,6})\wedge \\ 
     \vdots && \\
     (\forall x_{j-2,1}, \ldots, \forall x_{j-2,6} && 
     (\phi_{j-2}(x_{j-3,1},x_{j-3,2},x_{j-2,1},\ldots,x_{j-2,6})\vee 
      ({\sf id}(x_{j-2,1})\neq {\sf id}(x_{j-2,2})))))))))
    \end{eqnarray*}
where ${\sf id}(x_{j-2,1}), {\sf id}(x_{j-2,2}) \in [\lceil \log^{(j-2)} n \rceil]$, and $\phi_1,\ldots,\phi_{j-2}$ are quantifier free formulas. 
    \item For any two f nodes $z$ and $y$ in $T_1$, $T_2\models \neg \phi(z,y)$ if and only if ${\sf id}(z)\neq {\sf id}(y)$. 
    \item The number of label predicates used in $\phi(z,y)$ is $4j-8$. 
    \item $V(T_2)=O(|V(T_1)|\cdot \ilog{n}{j-2})$
\end{enumerate}

The formula $\neg \phi (z,y)$ contains exactly one subformula that uses the function sysmbol ${\sf id}$ and it is of the form  ${\sf id}(x_{j-2,1})\neq {\sf id}(x_{j-2,2})$ when $j$ is even 
and ${\sf id}(x_{j-2,1})= {\sf id}(x_{j-2,2})$ when $j$ is odd. Notice that in the base case when $j=2$, 
$\neg \phi (z,y)\equiv {\sf id}(z)\neq {\sf id}(y)$. Each id is a non-negative integer which is at most $\lceil \log^{(j-2)} n \rceil$. 
Since, 
$n \leq \exp^{(i-1)}(\frac{k-6}{2}\cdot s)$, 
we have 

$$\log^{j-2} n \leq \exp^{((i-1)-(j-2))}(\frac{k-6}{2}\cdot s)
=\exp^{(i-j+1)}(\frac{k-6}{2}\cdot s)
$$

Now to replace 
${\sf id}(x_{j-2,1})= {\sf id}(x_{j-2,2})$, we apply Lemma~\ref{lem:ID-level-d-FirstMSO} with 
$d=i-j$ and the id values are at most $\lceil \log^{(j-2)} n \rceil$. Thus, by Lemma~\ref{lem:ID-level-d-FirstMSO}, we get a tree $T_3$ (which is a super graph of $T_2$) and  $((c'_j,s'_j),\ldots,(c'_{i},s'_i))$-MSO formulas $\psi_{\sf sec}$ on two free variables  
with the following specifications. 
\begin{enumerate}[(i)]
    \item $c_q= 6$ for all $q\in \{j,\ldots,i-1\}$, $c_{i}=k$, $s'_j=s$ and $s'_q=0$ for all $q\in \{j+1,\ldots,i\}$. 
    \item All the set variables are quantified with existential quantifier 
    in $\psi_{\sf sec}$.
    \item For any two nodes $a$ and $b$ in $T_2$, $T_3\models \psi_{\sf sec}(a,b)$ if and only if ${\sf id}(a)={\sf id}(b)$. 
        \item $|\psi_{\sf sec}|\in O(i-j+k+2^s)$ 
    \item The number of label predicates used in $\psi_{\sf sec}$ and $\psi'_{\sf sec}$ are $4(i-j)+2$. 
    \item $V(T_3)=O(|V(T_2)|\cdot 2^s k \cdot 
    \ilog{\log^{j-2} n}{i-j+1})$. 
\end{enumerate}

We substitute $\psi_{\sf sec}(x_{j-2,1}, x_{j-2,2})$ for ${\sf id}(x_{j-2,1})= {\sf id}(x_{j-2,2}$ in $\neg \phi(z,y)$. In either case (i.e., when $j$ is odd or even), the resulting formula (let us call it $\neg \phi_{\sf sub}(z,y)$) is a 
$((c_3,s_3),\ldots,(c_{j-1},s_{j-1}),(c_j+c_j',s_j'),(c_{j+1}',s_{j+1'}))$-MSO formula. 
Finally, we substitute $\neg \psi_{\sf id}$ in (\ref{eqn:FO1:s>0:i>2}) with $\neg \phi_{\sf sub}(z_1,y_1)\vee \neg \phi_{\sf sub}(z_2,y_2)$ to obtain the required formula $\psi$. The required tree in the theorem is $T_3$.   

By Lemma~\ref{lem:base-red}, we know that $|V(T_1)| = O(3^{n/\alpha}+m)$. Since $V(T_2)=O(|V(T_1)|\cdot \ilog{n}{j-2})$ (see item (f) above), $V(T_2)=O((3^{n/\alpha}+m)\cdot \ilog{n}{j-2})$. So, by item (vi) above, $V(T_3)=O((3^{n/\alpha}+m)\cdot 2^sk \cdot \ilog{n}{i-1})$. This proved property (ii) of the theorem.

The length of the formula is (\ref{eqn:FO1:s>0:i>2}) is  $O(\alpha^2)$. The size of the formulas mentioned in items (b) and (c) above, are $O(j)$. The length of the formula $\psi_{\sf sec}$ is $O(i-j+k+2^s)$. Thus, the length of the final formula $\psi$ is $O(\alpha^2+i+k+2^s)=O(\alpha^2+k+2^s)$.

The number of label predicates used in (\ref{eqn:FO1:s>0:i>2}) is $9$. This, along with items $(e)$ and (v) implies that the total number of label predicates used in the final formula is $4i+3$.   

\medskip
\noindent
{\bf Case 2: $j=1$}. In this case we apply Lemma~\ref{lem:ID-level-d} where $d=i-2$. Then, use set variables to encode id. This part of the proof follows arguments similar to Theorem~\ref{thm:low-MSO:s>0:i=2}. This completes the proof of the theorem. 
\end{proof}

Theorems~\ref{thm:low-FO-Tree}, \ref{thm:low-MSO:s>0:i=2}, and \ref{thm:low-MSO:s>0:i>2} imply Theorem~\ref{thm:low-MSO}.
Now we give a proof sketch of Theorem~\ref{thm:low-MSO-TW1}.

\begin{proof}[Proof sketch of Theorem~\ref{thm:low-MSO-TW1}]
Let $s=\min\{k,t\}$. We apply Theorem~\ref{thm:low-MSO} with $k=k'$ and  get a tree $T_1$ and an $S$-MSO formula $\phi$ for $S = ((k_1,s_1),\dots,(k_d,s_d))$ in time $O(3^{n/\alpha})$, 
with the following properties. 
    \begin{enumerate}[(1)]
        \item $G$ is a $3$-colorable if and only if $T$ satisfies $\phi$.
        \item $|V(T)| = O(3^{n/\alpha})$ and 
        $|\phi| = O(2^{s}+k+\alpha^2)$. 
        \item $k_i = k$, $k_1 = O(\alpha)$, $k_{i'} = O(1)$ for all $i' \in \{2,\dots,i-1\}$, and $k_{i'} = 0$ for all $i' \in [d] \backslash [i]$.
        \item $s_j = s$ and $s_{j'} = 0$ for all $j' \in [d] \backslash \{j\}$.
        \item The number of label predicates used in $\phi$ is $O(1)$
    \end{enumerate}
Recall that, in the proof of  Theorem~\ref{thm:low-MSO}, the set variables $W_1,\ldots,W_s$ are used to encode id of nodes where each id is an integer in $\{0,\ldots,2^s-1\}$. Now instead of using set variables, we add $s$ new nodes $\{t_1,\ldots,t_s\}$ to $T$ and all those verices are labeled with ${\sf L}$. Recall the role of set variables in $\phi$. 
For two node $u$ and $v$ in $T$, ${\sf id}(u)={\sf id}(v)$ is encoded by 

$$\bigwedge_{i'\in [s]} (u\in W_{i'}) \Leftrightarrow (v\in W_{i'}).$$

Now, to get rid of set variables, we add edges between $\{t_1,\ldots,t_s\}$ and nodes in $T$ for which ${\sf id}$ is defined as follows. Let $x$ be a node in $T$ and $b_1,\ldots,b_s$ be the binary representation of ${\sf id}(x)$. 
Then, $x$ is adjacent to $t_r$ if and only if $b_r=1$. This completes the construction of the tree $T$. Clearly, the treewidth of $T$ is at most $s$.

Now we explain the change in the formula. 
Let $u$ and $v$ be two  nodes in $T$. 
Let $a_1,\ldots,a_s$ be the binary representation of ${\sf id}(u)$ and $b_1,\ldots,b_s$ be the binary representation of ${\sf id}(v)$. Then ${\sf id}(u)={\sf id}(v)$ is encoded by 

$$\bigwedge_{i'\in [s]} ({\sf adj}(u,t_{i'})) \Leftrightarrow ({\sf adj}(v,t_{i'}).$$

Now to covert the formula $\phi$ to an FO formula, we replace the set variable $W_r$ with vertex variable $w_r\in {\sf L}$ and subformulas of the form 
$(u\in W_{i'}) \Leftrightarrow (v\in W_{i'})$
with $({\sf adj}(u,t_{i'}) \Leftrightarrow ({\sf adj}(v,t_{i'})$. The resulting formula satisfies the properties mentioned in Theorem~\ref{thm:low-MSO-TW1}. Notice that we have introduced  
one new label predicate and hence the number of label predicates used in the new formula is $O(1)$ only. 
\end{proof}

Next, we prove Theorem~\ref{thm:low-MSO-TW22}. 

\begin{proof}[Proof of Theorem~\ref{thm:low-MSO-TW22}]
Apply Lemma~\ref{lem:base-red}
and obtain a tree $T$ and an $(2\alpha+1,9)$-FO+${\sf id}$ formula $\psi$. Notice that Figure~\ref{fig:Base-tree} is an illustration of $T$. The FO+${
\sf id 
}$ formula $\psi$ is given below. 

\begin{eqnarray}
\psi  &\equiv&\exists r\in {\sf R} \exists x_1\in {\sf P} \; \exists x_2 \in {\sf P}\; \ldots \; \exists x_{\alpha}\in {\sf P} \; \exists p_1 \; \ldots \; \exists p_{\alpha}  \nonumber\\ 
 && \forall z \in {\sf EN}\; \forall z_1, z_2 \in {\sf EP}\; \forall v_1, v_2 \in {\sf N} \; \forall y_1, y_2 \in {\sf I} \; \forall c_1,c_2 \in {\sf Q}
 _1\cup {\sf Q}_2 \cup {\sf Q}_3 \nonumber\\
 && \psi_{{\sf evalid}}\wedge (\neg \psi_{{\sf uvalid}} \vee \neg\psi_{{\sf id}}\vee \psi_{{\sf color}}) \label{eqn:MSO-TW22-1}
\end{eqnarray}

All the label predicates in the formula is explained in the proof of Lemma~\ref{lem:base-red}. In the formula $\psi$, the subformulas $\psi_{{\sf evalid}}, \psi_{{\sf uvalid}}$, and $\psi_{{\sf color}}$ are quantifier free FO formulas. But $\psi_{{\sf id}}$ contains the function sysmbol ${\sf id
}$. Recall that 

$$\psi_{{\sf id}}\equiv {\sf id}(z_1)={\sf id}(y_1)\wedge {\sf id}(z_2)={\sf id}(y_2).$$

Now we explain how to get rid of the function ${\sf id}$. Notice that ${\sf id}(y)\in [n]$ for any node $y$ and $i\geq 2$. We apply Lemma~\ref{lem:ID-level-d} where $d=i-2$ and we get a tree $T_1$ which is a super graph of $T$, and $(c_3,\ldots,c_i)$-FO+{\sf id} formula $\phi$ with the following properties. Since we want to substitute for $\neg\psi_{\sf id}$ in (\ref{eqn:MSO-TW22-1}), we negated the formulas in (b) and (c).

\begin{enumerate}[(a)]
    \item $c_q=6$ for all $q\in \{3,\ldots,i\}$. 
    \item If $i-2$ is odd, then $\neg \phi(z,y)$ is of the form
    \begin{eqnarray*}
     \exists x_{1,1}, \ldots, \exists x_{1,6} && ( \phi_1(z,y, x_{1,1},\ldots,x_{1,6}) \wedge \\
     (\forall x_{2,1}\ldots,\forall x_{2,6} && ( \phi_2(x_{1,1},x_{1,2},x_{2,1},\ldots,x_{2,6}) \vee \\
     (\exists x_{3,1}, \ldots, \exists x_{3,6} &&  (\phi_3(x_{2,1},x_{2,2},x_{3,1},\ldots,x_{3,6})\wedge \\ 
     \vdots && \\
     (\exists x_{i-2,1}, \ldots, \exists x_{i-2,6} && 
     (\phi_{i-2}(x_{i-3,1},x_{i-3,2},x_{i-2,1},\ldots,x_{i-2,6})\wedge 
     ({\sf id}(x_{i-2,1})={\sf id}(x_{i-2,2})))))))))
    \end{eqnarray*}
    where ${\sf id}(x_{i-2,1}), {\sf id}(x_{i-2,2}) \in [\lceil \log^{(i-2)} n \rceil],$ and $\phi_1,\ldots,\phi_{j-2}$ are quantifier free formulas. 
    \item If $i-2$ is even, then $\neg \phi(z,y)$ is of the form
    \begin{eqnarray*}
     \exists x_{1,1}, \ldots, \exists x_{1,6} && ( \phi_1(z,y, x_{1,1},\ldots,x_{1,6}) \wedge \\
     (\forall x_{2,1}\ldots,\forall x_{2,6} && ( \phi_2(x_{1,1},x_{1,2},x_{2,1},\ldots,x_{2,6}) \vee \\
     (\exists x_{3,1}, \ldots, \exists x_{3,6} &&  (\phi_3(x_{2,1},x_{2,2},x_{3,1},\ldots,x_{3,6})\wedge \\ 
     \vdots && \\
     (\forall x_{i-2,1}, \ldots, \forall x_{i-2,6} && 
     (\phi_{i-2}(x_{i-3,1},x_{i-3,2},x_{i-2,1},\ldots,x_{i-2,6})\vee 
      ({\sf id}(x_{i-2,1})\neq {\sf id}(x_{i-2,2})))))))))
    \end{eqnarray*}
where ${\sf id}(x_{i-2,1}), {\sf id}(x_{i-2,2}) \in [\lceil \log^{(i-2)} n \rceil]$, and $\phi_1,\ldots,\phi_{j-2}$ are quantifier free formulas. 
    \item For any two nodes $z$ and $y$ in $T$, $T_1\models \neg \phi(z,y)$ if and only if ${\sf id}(z)\neq {\sf id}(y)$. 
    \item The number of label predicates used in $\phi(z,y)$ is $4i-8$. 
    \item $V(T_1)=O(|V(T)|\cdot \ilog{n}{i-2})$
\end{enumerate}

The formula $\neg \phi (z,y)$ contains exactly one subformula that uses the function sysmbol ${\sf id}$ and it is of the form  ${\sf id}(x_{i-2,1})\neq {\sf id}(x_{i-2,2})$ when $i$ is even 
and ${\sf id}(x_{i-2,1})= {\sf id}(x_{i-2,2})$ when $j$ is odd. Notice that in the base case when $i=2$, $T_1=T$ and  
$\neg \phi (z,y)\equiv {\sf id}(z)\neq {\sf id}(y)$. That is, when $i=2$, we do not apply Lemma~\ref{lem:ID-level-d}. 
Each id is a non-negative integer which is at most $\lceil \log^{(i-2)} n \rceil$. 
Since, 
$n \leq \exp^{(i-1)}(\frac{k-6}{3}\cdot \log t)$, 
we have 

$$\log^{i-2} n \leq 2^{(\frac{k-6}{3}\cdot \log t)}
=t^{\frac{k-6}{2}}. 
$$

Now, we want to encode ${\sf id}(x_{i-2,1})={\sf id}(x_{i-2,2})$ where these id values are at most $\lceil \log^{(i-2)} n \rceil$. 
Let $k'=\frac{k-6}{3}$.  Now we replace each node $w$ in $T_1$ (where ${\sf id}(w)$ is defined) with a path as explained below. We create a path $\pi_w= a_1,\ldots,a_{k'},w$ on $k' +1$ vertices. 
Now, we replace node $w$ with path $\pi_w$. 
Let $T_2$ be the tree obtained after this process. 
Let $b_1,\ldots,b_{k'}$ be the base $t$ representation of the number ${\sf id}(w)$.
Now, let us define ${\sf id}(a_j)=b_j \in \{0,\ldots,t-1\}$ for all $j\in [k']$. Moreover, we create a unary label predicate ${\sf P}_1$ and add all the vertices $a_1,\ldots,a_{k'}$ to ${\sf P}_1$. Since $t^{k'} \geq \log^{(i-2)} n$, each number in $[\log^{(i-1)} n]$ can be uniquely represented as above. 
Now, ${\sf id}(x_{i-2,1})={\sf id}(x_{i-2,2})$ can be written as

\begin{eqnarray}
 \exists a_1,a_1'\in {\sf P}_1 \ldots \exists a_{k'},a_{k'}' \in {\sf P}_1  &&  {\sf path}(a_1,\ldots,a_{k'},x_{i-2,1})
\wedge
{\sf path}(a'_1,\ldots,a'_{k'},x_{i-2,2}) \wedge \nonumber \\
&& \left( \bigwedge_{j\in [k'] } 
 {\sf id}(a_j) = {\sf id}(a'_j) \right) 
 \label{eqn:tlogkid}
\end{eqnarray}

where, ${\sf path}(w_1,\ldots,w_q) \equiv (\bigwedge_{i\in [q-1]} {\sf adj}(w_i,w_{i+1}))$ and for all $j\in [k']$, ${\sf id}(a_j)\in \{0,1,\ldots,k'\}$. Now to encode ${\sf id}(a_j)$ we add $t$ nodes $h_1,\ldots,h_t$ to $T_2$ and we add the following edges. For each node $a$ in $T_2$ such that  ${\sf id}(a_j)\in \{0,1,\ldots,k'\}$, we add an edge between $a$ and $h_{b+1}$ where $b={\sf id}(a)$. Let $G'$ be the graph constructed after this process. Since $T_2$ is a tree $\mathbf{tw}(G')=t$. We also a create a label predicate ${\sf P}_2$ and add $h_1,\ldots,h_t$ to ${\sf P}_2$. Now   ${\sf id}(a_j) = {\sf id}(a'_j)$ if and only if both $a_j$ and $a'_j$ are adjacent to the same vertex in ${\sf P}_2$. So we can modify (\ref{eqn:tlogkid}) and get an encoding 
for ${\sf id}(x_{i-2,1})={\sf id}(x_{i-2,2})$ as follows.

\begin{eqnarray}
 \exists a_1,a_1'\in {\sf P}_1 \ldots \exists a_{k'},a_{k'}' \in {\sf P}_1\; \exists q_1\in {\sf P}_2\; \ldots \; q_{k'}\in {\sf P}_2  &&  {\sf path}(a_1,\ldots,a_{k'},x_{i-2,1})
\wedge \nonumber \\
&& {\sf path}(a'_1,\ldots,a'_{k'},x_{i-2,2}) \wedge \nonumber \\
&& \left( \bigwedge_{j\in [k'] } 
 {\sf adj}(a_j,q_j) \wedge {\sf adj}(a'_j,q_j) \right) 
 \label{eqn:tlogkid2}
\end{eqnarray}

Thus, after substituting (\ref{eqn:tlogkid2}) in $\neg \phi (z,y)$, 
we get a $(c_3,\ldots,c_{i-1},k)$-FO formula. Let us call this formula 
$\neg \phi_{\sf sub} (z,y)$. Finally, we substitute $\neg \psi_{\sf id}$ in (\ref{eqn:MSO-TW22-1}) with $\neg \phi_{\sf sub}(z_1,y_1)\vee \neg \phi_{\sf sub}(z_2,y_2)$ to obtain the required formula $\psi$. The required graph in the theorem is $G'$.
By Lemma~\ref{lem:base-red}, we know that $|V(T)| = O(3^{n/\alpha}+m)$. Since $V(T_1)=O(|V(T)|\cdot \ilog{n}{i-2})$ (see item (f) above), $V(G')=O((3^{n/\alpha}+m)\cdot \ilog{n}{i-2} \cdot k \cdot t)$.  

The length of the formula is (\ref{eqn:MSO-TW22-1}) is  $O(\alpha^2)$. 
The size of the formulas mentioned in items (b) and (c) above, are $O(i)$. The length of the formula in (\ref{eqn:tlogkid2}) is $O(k)$. 
 Thus, the length of the final formula $\psi$ is $O(\alpha^2+i+k)=O(\alpha^2+k)$. This proved property (ii) of the theorem. The number of label predicates used in (\ref{eqn:MSO-TW22-1}) is $9$. Item (e) above implies that the number of label predicates in $\phi(z,y)$ is $4i-8$. We have also introduced two more predicates ${\sf P}_1$ and ${\sf P}_2$. This implies that the number of label predicates in $\psi$ is $4i+3=O(1)$. This completes the proof of the theorem. 
\end{proof}

To prove the lower bound results( Theorems~\ref{thm:intro-LW1},\ref{thm:intro-LW2},\ref{thm:intro-LW3}, and \ref{thm:intro-LW4}) we need  reductions when $i=1$. The reductions explained below are used for the case when 
$i=1$. 

\begin{theorem}
\label{thm:red-SAT-MSO}
Given a {\sc $3$-CNF SAT} formula $\phi$ on $n$ variables and two integers $k$ and $s$ such that $n\leq k \cdot s$, one can construct a $(k,s)$-MSO formula $\psi$ in polynomial time with the following specifications. 

\begin{enumerate}[(i)]
\item For any graph $G$ with $|V(G)|\geq k$, $G\models \psi$ if and only if $\phi$ is satisfiable. 
\item $|\psi|=O(|\phi|+k^2+s)$. 
\end{enumerate}
\end{theorem}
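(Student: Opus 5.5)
We encode a truth assignment of the $n$ variables by $k$ vertices and $s$ vertex sets. Partition the variables $x_1,\dots,x_n$ into $k$ groups $B_1,\dots,B_k$ of at most $s$ variables each; this is possible since $n\le ks$. Write the $r$-th variable of group $B_a$ as $x_{a,r}$.

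The formula will be
\[
\psi \equiv \exists y_1\cdots\exists y_k\;\exists W_1\cdots\exists W_s\;\Big(\bigwedge_{1\le a<b\le k} y_a\neq y_b\Big)\wedge \phi',
\]
where $\phi'$ is obtained from $\phi$ by replacing every literal $x_{a,r}$ with the atom $y_a\in W_r$ and every literal $\neg x_{a,r}$ with $\neg(y_a\in W_r)$. This is a $((k,s))$-MSO formula in prenex form with one existential block, so it is a $(k,s)$-MSO formula. Its length is $O(|\phi|)$ for the substituted matrix, plus $O(k^2)$ for the distinctness constraints, plus $O(k+s)$ for the quantifier prefix, which gives property (ii). The construction is clearly polynomial time.

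For property (i), let $G$ have at least $k$ vertices.

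\emph{If $\phi$ is satisfied by $\sigma$:} choose distinct vertices $y_1,\dots,y_k$, which exist because $|V(G)|\ge k$. Set $W_r=\{y_a : x_{a,r}\text{ exists and }\sigma(x_{a,r})=1\}$. Then $y_a\in W_r$ holds iff $\sigma(x_{a,r})=1$, so every substituted clause evaluates as under $\sigma$ and $\phi'$ holds.

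\emph{If $G\models\psi$:} take witnesses $y_a,W_r$ and define $\sigma(x_{a,r})=1$ iff $y_a\in W_r$. The truth value of each substituted literal equals that of the original literal under $\sigma$, so $\sigma$ satisfies $\phi$.

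The distinctness constraint is only needed to keep the forward direction clean, since the $y_a$ could even coincide if we dropped it. It is not essential, and it is exactly why the hypothesis $|V(G)|\ge k$ is required. The only minor obstacle is ensuring the formula is independent of $G$, which it is. There is no real difficulty here; the bookkeeping of the grouping is the main step.
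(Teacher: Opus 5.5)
Your proof is correct and follows the paper's argument: each variable $x_{a,r}$ is encoded by the atom $y_a\in W_r$ under a single existential block of $k$ vertex and $s$ set quantifiers; $k$ distinct vertices are chosen for the forward direction, and $\sigma$ is read off from any witness for the backward direction. The differences are cosmetic. You handle $n<ks$ by grouping, whereas the paper pads to exactly $ks$ variables. You also write the distinctness conjunct $\bigwedge_{a<b} y_a\neq y_b$ explicitly into $\psi$; the paper's displayed formula omits it, even though its correctness argument and its $k^2$ size term refer to it.
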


\begin{proof}
Let $x_1,x_2,\ldots,x_n$ be the variables in $\phi$. We construct an MSO $\psi$ formula with $k$ vertex variables $v_1,\ldots,v_k$ and $s$ set variables $W_1,\ldots,W_s$. Notice that the number of atomic formulas of the form $v_i\in W_j$ for all $i\in [k]$ and $j\in [s]$ is $k\cdot s \geq n$. 
Let $\psi_1$ be the MSO formula on $k+s$ free variables obtained from $\phi$ by replacing each $x_{i,j}$ with $v_i\in W_j$. Then, we define 

Since $k\cdot s \geq n$, each variable in $\phi$ can be encoded as $v_i\in W_j$ for some $i$ and $j$, and $v_i\in W_j$ is true can be interpreted as the corresponding variable in $\phi$ is set to true.
For ease of presentation, we assume that $\phi$ contains $k\cdot s$ variables and the variable set is $\{x_{i,j}~:~i\in [k], j\in [s]\}$.

$$
\psi\equiv \exists v_1 \exists v_2 \ldots \exists v_k \exists W_1\ldots \exists W_s\;\; \psi_1 $$


Notice that $|\psi|\leq |\psi_1|+k^2+s= O(|\phi|+k^2+s)$. Next, we prove the correctness of the reduction. Let $G$ be a graph on at least $k$ vertices. Suppose $\phi$ is satisfiable and let $\beta$ be a satisfying assignment. Let $u_1,\ldots,u_k$ be arbitrarily chosen $k$ distinct vertices in $G$. Now we define $Z_1,\ldots,Z_k$ as follows. For each $i\in [k]$ and $j\in [s]$, $u_i\in Z_j$ if and only if $\beta(x_{i,j})$
is true. 
This implies that 
$\psi_1(u_1,\ldots,u_k,Z_1,\ldots,Z_s)$ is true. 
Since $u_1,\ldots,u_k$ are distinct vertices, $\bigwedge_{i,i'\in [k], \; i\neq i'} u_i \neq u_{i'}$ is true. Therefore, $G\models \psi$. 

Now we prove the reverse direction. Let $G$ be a graph on at least $k$ vertices and $G\models \psi$. Then, there exists an assignment to the variables $v_1,\ldots,v_k, W_1,\ldots,W_s$ such that $G\models \psi$. Let $u_1,\ldots,u_k, Z_1,\ldots, Z_s$ be an assignment to $v_1,\ldots,v_k, W_1,\ldots,W_s$ such that $\psi$ is true for this assignment. Now we construct a satisfying assignment $\beta$ for $\phi$ as follows. For each $i\in [k]$ and $j\in [s]$, 
$\beta(x_{i,j})$ is true if and only if $u_i\in Z_j$. 
Since $\psi_1(u_1,\ldots,u_k,Z_1,\ldots,Z_s)$ is true, $\beta$ is a satisfying assignment of $\phi$. 
\end{proof}

\begin{theorem}
\label{thm:red-SAT-FO}
Given a {\sc $3$-CNF SAT} formula $\phi$ on $n$ variables and integers $k\geq 2$ and $t\geq 1$ such that $n\leq \min \{k,t\} \cdot k$, one can construct a graph $G$ and a $(k+t)$-FO formula $\psi$ in time $O(2^{\sqrt{n}}+|\phi|)$ with the following specifications. 

\begin{enumerate}[(i)]
\item $G\models \psi$ if and only if $\phi$ is satisfiable. 
\item $V(G)=O(2^{\sqrt{n}})$ and $\mathbf{tw}(G) \leq t$. 
\item $|\psi|=O(|\phi|+k+t)$. 
\item The number of label predicates used in $\psi$ is $2$. 
\end{enumerate}
\end{theorem}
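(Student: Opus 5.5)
The plan is to mimic Theorem~\ref{thm:red-SAT-MSO}, replacing the $s$ set variables by $\min\{k,t\}$ special ``bit'' vertices, as in the proof sketch of Theorem~\ref{thm:low-MSO-TW1}. Let $s'=\min\{k,t\}$ and rename the variables of $\phi$ as $x_{i,j}$ for $i\in[k]$, $j\in[s']$, padding with dummy variables if $n<ks'$; this uses $n\le s'k$. The intended semantics is that $x_{i,j}$ is true iff the $i$-th chosen vertex is adjacent to the $j$-th bit vertex.

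The graph $G$ has a set $H=\{h_1,\dots,h_{s'}\}$ of bit vertices, labeled with a predicate ${\sf P}_2$. It also has one vertex $u_M$, labeled with a predicate ${\sf P}_1$, for every subset $M\subseteq H$, with $u_M$ adjacent exactly to the vertices of $M$. For $G$ to have $O(2^{\sqrt n})$ vertices I need $2^{s'}\le 2^{O(\sqrt n)}$. This is where I expect the main obstacle. We may assume $s'=\min\{k,t\}$ is reduced so that $s'\le k$ and $s'\cdot k\ge n$ with $s'\approx \lceil\sqrt n\rceil$ whenever $k\ge\sqrt n$. Decreasing $s'$ to $\min\{s',\lceil\sqrt n\rceil\}$ is harmless because $k\ge\lceil n/s'\rceil$ still holds for the used rows: if $s'\ge\sqrt n$ then $k\ge s'\ge\sqrt n$, so $\lceil\sqrt n\rceil\cdot k\ge n$. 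If instead $s'<\sqrt n$, then $2^{s'}\le 2^{\sqrt n}$ automatically.

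Treewidth: deleting the $s'\le t$ vertices of $H$ leaves isolated vertices. Hence a tree decomposition whose bags are $H\cup\{u_M\}$, arranged along a path, has width $s'\le t$.

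The formula is
$$\psi\equiv \exists h_1\cdots\exists h_{s'}\,\exists v_1\cdots\exists v_k\;\Big(\bigwedge_{j} {\sf P}_2(h_j)\wedge\bigwedge_{j<j'}h_j\neq h_{j'}\wedge\bigwedge_i {\sf P}_1(v_i)\wedge \psi_1\Big),$$
where $\psi_1$ is $\phi$ with each literal $x_{i,j}$ replaced by ${\sf adj}(v_i,h_j)$. This uses $k+s'\le k+t$ existential vertex quantifiers, forming a single block, so $\psi$ is a $(k+t)$-FO formula. Its length is $O(|\phi|+k+t^2)$; if $O(|\phi|+k+t)$ is required literally, I will drop the pairwise distinctness constraints. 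They are unnecessary, since any assignment of the $h_j$ to vertices of $H$ still yields a truth assignment; distinctness is only needed to realize every assignment in the forward direction.

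Correctness, forward direction: given a satisfying assignment $\beta$, take the $h_j$ to be distinct bit vertices, and for each $i$ let $v_i=u_{M_i}$ with $M_i=\{h_j:\beta(x_{i,j})\}$. Backward direction: from any witness, define $\beta(x_{i,j})={\sf adj}(v_i,h_j)$; this satisfies $\phi$ because $\psi_1$ holds.

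Constructing $G$ takes $O(2^{s'}s')$ time, and $\psi$ is built in linear time, which gives the claimed running time up to the polynomial factor absorbed as above. The label predicates used are ${\sf P}_1$ and ${\sf P}_2$, two in total.
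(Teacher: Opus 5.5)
Your proposal is correct and is essentially the paper's proof: a bipartite graph with $s\le\min\{k,t\}$ bit vertices and one vertex for every subset of them, plus a single existential block over $v_1,\dots,v_k$ and $w_1,\dots,w_s$ in which $x_{i,j}$ is replaced by ${\sf adj}(v_i,w_j)$. The paper likewise imposes no distinctness on the $w_j$. Your explicit choice $s=\min\{\min\{k,t\},\lceil\sqrt n\rceil\}$ justifies the $2^{O(\sqrt n)}$ size bound more cleanly than the paper's one-line remark. The extra $O(\sqrt n)$ factor in construction time, which comes from the $s2^{s-1}$ edges, is also present in the paper's version and is harmless for its use in Theorem~\ref{thm:intro-LW3}.
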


\begin{proof}
Let $s\leq \min \{k,t\}$ such that $n\geq s\cdot k$. This implies that $s\leq \sqrt{n}$. First, we explain the construction of $G$. $G$ is a bipartite graph with bipartition $A\uplus B$. The set $A$ has $s$ vertices. Let $A=\{a_1,\ldots,a_s\}$. The set $B$ has $2^s$ vertices. For each subset $A'\subseteq A$, there is exactly one vertex $b_{A'}$ in $B$ which is adjacent to all the vertices in $A'$ and not adjacent to any vertex in $A\setminus A'$. This completes the construction of $G$. Notice that 
$\mathbf{tw}(G)\leq s \leq t$ and 
$V(G)=2^{s}+s$. 
Next, we explain the construction of $\psi$. Let $s=\min\{k,t\}$. We create two label predicates ${\sf A}$ and ${\sf B}$, where ${\sf A}$ and ${\sf B}$ contain all the vertices in $A$ and $B$, respectively. The formula $\psi$ contains $s+k$ variables $w_1,\ldots,w_s, v_1,\ldots,v_k$ such that  $w_1,\ldots,w_s\in {\sf A}$ and $v_1,\ldots,v_k\in {\sf B}$. Since $k\cdot s \geq n$, each variable in $\phi$ can be encoded as ${\sf adj}(v_i,w_j)$ for some $i$ and $j$, and ${\sf adj}(v_i,w_j)$ is true can be interpreted as the corresponding variable in $\phi$ is set to true.
For the ease of presentation, we assume that $\phi$ contains $k\cdot s$ variables and the variable set is $\{x_{i,j}~:~i\in [k], j\in [s]\}$. Let $\psi_1$ be the FO formula on $k+s$ free variables obtained from $\phi$ by replacing each $x_{i,j}$ with ${\sf adj}(v_i,w_j)$. Then, we define 
$$
\psi\equiv \exists v_1\in {\sf B} \exists v_2\in {\sf B} \ldots \exists v_k\in {\sf B} \exists w_1 \in {\sf A}\ldots \exists w_s \in {\sf A}\;\; \psi_1
$$

Notice that $|\psi|=O(|\phi|+k+s)=O(|\phi|+k+t)$. Now we prove the correctness of the reduction. Suppose $\phi$ is satisfiable and let $\beta$ be a satisfying assignment. Now we need an assignments to $v_1,\ldots,v_k$ and $w_1,\ldots,w_s$ such that $\psi$ is true under the assignment. Let us assign $a_j$ to $w_j$ for all $j\in [s]$. For each $i\in [k]$, let $u_i$ be the vertex in $B$ such that the following holds. For each $j\in [s]$, $(u_i,a_j)\in E(G)$ if and only if $\beta(x_{i,j})$ is true.  
Now, for all $i\in [k]$, assign $u_i$ to $v_i$. Then, since $\beta$ is a satifying assignment to $\phi$,  $\psi_1(u_1,\ldots,u_k,a_1,\ldots,a_s)$ is true. This implies that $G\models \psi$.



Now we prove the reverse direction. Let $G$ be a graph on at least $k$ vertices and $G\models \psi$. Then, there exists an assignment to the variables $v_1,\ldots,v_k, w_1,\ldots,w_s$ such that $\psi_1$ is true under this assignment. Let $u_1,\ldots,u_k, c_1,\ldots, c_s$ be an assignment to $v_1,\ldots,v_k, w_1,\ldots,w_s$ such that $\psi_1$ is true for this assignment. Now we construct a satisfying assignment $\beta$ for $\phi$ as follows. For each $i\in [k]$ and $j\in [s]$, $\beta(x_{i,j})$ is true if and only if $(u_i,c_j)\in E(G)$.  Since $\psi_1(u_1,\ldots,u_k,c_1,\ldots,c_s)$ is true, $\beta$ is a satisfying assignment of $\phi$. 
\end{proof}

\subsection{Removing Label Predicates}
\label{subsec:labelpredicates}

\subsubsection{Removing Labels from Theorem~\ref{thm:low-MSO}}
\label{sub:Removelabel-thm1}
First, we explain how to remove the labels in Theorem~\ref{thm:low-MSO}. Let $T$ be the tree and $\ell$ be the number of labels used in the formula of Theorem~\ref{thm:low-MSO}. 
Let ${\sf L}_1,\ldots,{\sf L}_{\ell}$ be the label predicates. 
Notice that $T$ is a rooted tree. First, we subdivide each edge in $T$ twice. See Figure~\ref{fig:T} for an illustration. The tree on the left side of the figure is an example of $T$. For each edge $\{c,p\}\in E(T)$, where $p$ is the parent of $c$ we subdivide the edge twice and the node adjacent to $c$ is colored red and the node adjacent to $p$ is colored blue. In Figure~\ref{fig:T}, the tree on the right side of the figure, is the tree we get after subdividing each edge twice. Let the resulting tree be $T'$. Let $R$ be the set of red colored vertices and $B$ be the set of blue colored vertices.

\begin{figure}
    \centering
\begin{subfigure}{.4\textwidth}
\begin{tikzpicture}[
  every node/.style={circle, draw, minimum size=6pt, inner sep=0pt},
  filled/.style={circle, fill=black, minimum size=6pt, inner sep=0pt},
  level distance=15mm,
  sibling distance=20mm
]

\node[filled] at (0,0) {} 
  child {node[filled] {}
    child {node[filled] {} 
    node[draw=none, left=2mm] {$v_3$}}
    child {node[filled] {}
    node[draw=none, left=2mm] {$v_4$}}
    node[draw=none, left=2mm] {$v_2$}
  }
  node[draw=none, left=2mm] {$v_1$};


\end{tikzpicture}
\end{subfigure}
\begin{subfigure}{.55\textwidth}
 \begin{tikzpicture}[scale=0.5,
  every node/.style={circle, draw, minimum size=6pt, inner sep=0pt},
  triangle/.style={regular polygon, regular polygon sides=3, draw, minimum size=12mm, inner sep=0pt},
  level distance=15mm,
  sibling distance=20mm
]

\node[fill=black] (n1) at (0,0) {}
  child {node[fill=blue] (n2) {}
    child {node[fill=red] (n3) {}
      child {node[fill=black] (n4) {}}
    }
  };



\path (n4) ++(-15mm,-15mm) node[fill=blue] (left1) {}
      ++(-15mm,-15mm) node[fill=red] (left2) {}
      ++(-15mm,-15mm) node[fill=black] (left3) {};



\path (n4) ++(15mm,-15mm) node[fill=blue] (right1) {}
      ++(+15mm,-15mm) node[fill=red] (right2) {}
      ++(+15mm,-15mm) node[fill=black] (right3) {};


\draw (n1) -- (n2) -- (n3) -- (n4);
\draw (n4) -- (left1) -- (left2) -- ++(-15mm,-15mm);
\draw (n4) -- (right1)--(right2) --(right3);

\end{tikzpicture}
   
\end{subfigure}

    \caption{Illustration of $T$ and $T'$}
    \label{fig:T}
\end{figure}

Next, for each label, we want to attach a {\em unique} copy of a tree. Towards that we create trees $T_1,\ldots,T_{\ell+1}$ as shown in Figure~\ref{fig:Ti}. Notice that each $T_i$ is a tree on $6(\ell+1)+1$ vertices and it is not isomorphic to a subgraph of $T'$. Moreover, for each $i\neq j$, $T_i$ and $T_j$ are not isomorphic. For each $i\in [\ell]$, $T_i$ corresponds to the label ${\sf L}_i$. Now for each vertex $v\in V(T)$, with label ${\sf L}_i$, we attach a copy of $T_i$ to $v$. For each $w\in R$, we attach a copy of $T_{\ell+1}$ to it. Let the resulting graph be $T''$.  See Figure~\ref{fig:Tdp}, for an illustation of the construction of $T''$ for the tree $T$ in Figure~\ref{fig:T}.

\begin{figure}
    \centering
\begin{tikzpicture}[level distance=15mm, sibling distance=25mm,
  every node/.style={circle, draw, minimum size=6pt, inner sep=0pt},
  level 2/.style={sibling distance=5mm}]

\node (root) {}
  child {node {} 
    child {node {}}
    child {node {}}
    child {node {}}
    child {node {}}
    child {node {}}
    node[draw=none, left=3mm] {$v_1$}
  }
  child {node {} 
    child {node {}}
    child {node {}}
    child {node {}}
    child {node {}}
    child {node {}}
    node[draw=none, left=2mm] {$v_2$}
    node[draw=none, right=10mm] {$\dots$}
  }
  child {node {} 
    child {node {}}
    child {node {}}
    child {node {}}
    child {node {}}
    child {node {}}
    node[draw=none, left=2mm] {$v_i$}
  }
  child {node {} 
  node[draw=none,left=2mm] {$v_{i+1}$}
  node[draw=none, right=10mm] {$\dots$}
  }
  child {node {} 
    node[draw=none, right=3mm] {$v_{(\ell+1)+5(\ell+1-i)}$}
  }
  ;

\end{tikzpicture}

    \caption{Graph $T_i$}
    \label{fig:Ti}
\end{figure}

\begin{figure}
    \centering
\begin{tikzpicture}[scale=0.9, 
  every node/.style={circle, draw, minimum size=6pt, inner sep=0pt},
  triangle/.style={regular polygon, regular polygon sides=3, draw, minimum size=12mm, inner sep=0pt},
  level distance=15mm,
  sibling distance=20mm
]

\node[fill=black] (n1) at (0,0) {}
  child {node[fill=blue] (n2) {}
    child {node[fill=red] (n3) {}
      child {node[fill=black] (n4) {}}
    }
  };


\node[triangle] at ([shift={(24mm,5mm)}]n3) {$T_{\ell+1}$};

\path (n4) ++(-15mm,-15mm) node[fill=blue] (left1) {}
      ++(-15mm,-15mm) node[fill=red] (left2) {}
      ++(-15mm,-15mm) node[fill=black] (left3) {}
      ++(0mm,-15mm) node[triangle] {$T_2$};

\node[triangle] at ([shift={(0mm,-15mm)}]left2) {$T_{\ell+1}$};
\draw (left2)--++(0,-5mm);

\path (n4) ++(0,-21mm) node[triangle] {$T_1$};

\path (n4) ++(15mm,-15mm) node[fill=blue] (right1) {}
      ++(+15mm,-15mm) node[fill=red] (right2) {}
      ++(+15mm,-15mm) node[fill=black] (right3) {}
      ++(0,-20mm) node[triangle] {$T_{2}$};

\node[triangle] at ([shift={(0mm,-15mm)}]right2) {$T_{\ell+1}$};
\draw (right2)--++(0,-5mm);

\draw (n1) -- (n2) -- (n3) -- (n4);
\draw (n3) -- ++(15mm,0);
\draw (n4) -- (left1) -- (left2) -- ++(-15mm,-15mm)--++(0,-10mm);
\draw (n4) -- ++(0,-15mm);
\draw (n4) -- (right1)--(right2) --(right3)-- ++(0,-15mm);

\end{tikzpicture}
    \caption{Illustration of $T''$}
    \label{fig:Tdp}
\end{figure}

Now we explain the change we need to make to the  formula $\phi$. Notice that, we already mentioned that $\exists x \in {\sf L}_i \psi$ is same as $\exists x \;({\sf L}_i(x) \wedge \psi)$. Now, this can be represented as follows. Let $v_0,v_1,\ldots,v_{6(\ell+1)}$ be the vertices in the copy of $T_i$ that is attached to $x$, where $v_0$ is the root of the copy of $T_i$. 
Let $p_x$ be the parent of $x$ in $T$ and $a,b$ be the subdivision vertices such that $xa,ab,bp_x \in E(T'')$. Let $u_0,u_1,\ldots,u_{6(\ell+1)}$ be the vertices in the copy of $T_{\ell+1}$ that is attached to $a$, where $u_0$ is the root of the copy of $T_{\ell+1}$. 
Now we encode $\exists x \;({\sf L}_i(x) \wedge \psi)$ as 

\begin{eqnarray*}
\exists x \exists a \exists b \exists p_x \exists v_0 \ldots \exists v_{6(\ell+1)} \exists u_0 \ldots \exists u_{6(\ell+1)} && {\sf adj}(x,a) \wedge {\sf adj}(x,v_0)\wedge {\sf adj}(a,b) \wedge {\sf adj}(b,p_x) \wedge {\sf adj}(a,u_0) \wedge \\
&& {\sf iso}_i (v_0,\ldots,v_{6(\ell+1)}) \wedge {\sf iso}_{\ell+1} (u_0,\ldots,u_{6(\ell+1)}) \wedge \psi'
\end{eqnarray*}

Here, ${\sf iso}_j (v_0,\ldots,v_{6(\ell+1)})$ encodes that $T''[\{v_0,\ldots,v_{6(\ell+1)}\}]$ is isomorphic to $T_j$. The formula $\psi'$ is obtained from $\psi$ by replacing ${\sf adj}(x,p_x)$ with ${\sf adj}(b,p_x)$. Similarly,  
we encode $\forall x \;({\sf L}_i(x) \Rightarrow \psi)$ as

\begin{eqnarray*}
\forall x \forall a \forall b \forall p_x \forall v_0 \ldots \forall v_{6(\ell+1)} \forall u_0 \ldots \forall u_{6(\ell+1)} && ({\sf adj}(x,a) \wedge {\sf adj}(x,v_0)\wedge {\sf adj}(a,b) \wedge {\sf adj}(b,p_x) \wedge {\sf adj}(a,u_0) \wedge \\
&& {\sf iso}_i (v_0,\ldots,v_{6(\ell+1)}) \wedge {\sf iso}_{\ell+1} (u_0,\ldots,u_{6(\ell+1)}))\Rightarrow \psi'
\end{eqnarray*}

\subsubsection{Removing Labels from Theorems~\ref{thm:low-MSO-TW1} and \ref{thm:low-MSO-TW22}}

In both these theorems, the output graph $G'$ has the following property. There is a vertex subset $W\subseteq V(G')$ such that $G'\setminus W$ is a rooted tree $W$ is either an independent set or forms a path, and $|W|=k$. Moreover, all the vertices in $W$ is labeled with exactly one label predicate. Let us call, this label predicate to be ${\sf W}_{\ell}$ 
Now we replace all the label predicates used in $G'\setminus W$ like the way we did it in Section~\ref{sub:Removelabel-thm1}. Since $W$ is either an independent set or forms a path, there is $K_5$ (complete graph on $5$ vertices) as a subgraph in $G'$. So we attach a copy of $K_5$ to each vertex. In the formula $\phi$, when we select a variable $w\in {\sf W}_{\ell}$, we also select $5$ vertices and add a condition that one of them is connected $w$ and the $5$ vertices for a clique.

\subsection{Proofs of Theorems~\ref{thm:intro-LW1}, \ref{thm:intro-LW2}, \ref{thm:intro-LW3}, and \ref{thm:intro-LW4}}
\label{subsec:lw}


\begin{proposition}
\label{prop:ETH3col}
Assuming ETH, there is a constant $c$ such that any algorithm for \threecoloring\  takes time strictly more than $2^{cn} n^{O(1)}$, where $n$ is the number of vertices in the input graph. 
\end{proposition}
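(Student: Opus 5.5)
This proposition follows by combining ETH with the standard linear-size reduction from \textsc{$3$-CNF SAT} to \threecoloring, together with the Sparsification Lemma. ETH says there is $\delta>0$ such that \textsc{$3$-CNF SAT} on $N$ variables cannot be solved in time $2^{\delta N}N^{O(1)}$. Applying the Sparsification Lemma of Impagliazzo, Paturi and Zane first, this strengthens to the statement that \textsc{$3$-CNF SAT} with $N$ variables and $M=O(N)$ clauses has no $2^{\delta' (N+M)}(N+M)^{O(1)}$ algorithm for some $\delta'>0$. This is the version of ETH whose lower bound is linear in the instance size, not just in the number of variables.

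Next we invoke the textbook polynomial-time reduction from \textsc{$3$-CNF SAT} to \threecoloring. It uses a palette triangle $\{T,F,B\}$, one pair of literal vertices $x,\bar x$ per variable, both joined to $B$ and to each other, and a constant-size OR-gadget per clause. The output graph has $n=O(N+M)$ vertices, and it is $3$-colorable exactly when the formula is satisfiable. We would verify the gadget correctness only briefly. Its only role here is the linear bound $n\le C(N+M)$ for an absolute constant $C$.

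Now suppose, for contradiction, that for every constant $c>0$ some algorithm decides \threecoloring\ in time $2^{cn}n^{O(1)}$. Choose $c=\delta'/C$. Compose the sparsification (which produces $2^{\epsilon N}$ sparse instances, for any fixed small $\epsilon$), the reduction, and this algorithm. On each sparse instance the algorithm runs in time $2^{(\delta'/C)\cdot C(N+M)}\mathrm{poly}=2^{\delta'(N+M)}\mathrm{poly}$. Multiplying by the $2^{\epsilon N}$ instances and choosing $\epsilon$ and $c$ slightly smaller absorbs this factor, so the total is $2^{\delta N}N^{O(1)}$, contradicting ETH.

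The only step needing care is this constant bookkeeping. The sparsification yields clause count $M\le K(\epsilon)N$, so we must fix $\epsilon$ first, then $K$, and only then choose $c<(\delta-\epsilon)/(C(1+K))$. Everything else is a standard citation, so I expect no real obstacle beyond stating the quantifier order correctly: the constant $c$ depends only on ETH's $\delta$ and the fixed reduction constants.
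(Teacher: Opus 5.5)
Your argument is correct. The paper states this proposition without proof, as a standard consequence of ETH, and the Sparsification Lemma combined with the linear-size reduction from \textsc{$3$-CNF SAT} to \threecoloring{} is exactly the standard justification; your final quantifier order (fix $\epsilon<\delta$, then $K(\epsilon)$, then $c<(\delta-\epsilon)/(C(1+K))$) is the right bookkeeping. The only wrinkle is your third paragraph: if you start from the already-sparsified bound with $\delta'$, the $2^{\epsilon N}$ branching has already been paid for and $c=\delta'/C$ works directly, so you should not multiply by $2^{\epsilon N}$ a second time.
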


\begin{proof}[Proof of Theorem~\ref{thm:intro-LW1}]
Suppose there is an algorithm for \textsc{MSO Testing} on trees in $f(k_1,s_1,\dots,k_d,s_d) \cdot \exp^{(i)}(T(s_j, k_i)) \cdot (n+|\phi|)^{O(1)}$ time for a function $f(k_1,s_1,\dots,k_d,s_d)$ independent of $s_j$ and $k_i$. We want to prove that $T(x,y)=\Omega(xy)$. For the sake of contradiction, assume that $T(xy)=o(xy)$. Then, for any constant $a>0$, there exist $x_a,y_a>0$ such that $T(x,y)< a x y$ for all $x\geq x_a$ and $y\geq y_a$. Let $a<1/3$ and $\alpha>1$ be two constants which we fix later.


Let $G$ be an input instance of \threecoloring. Let $n=|V(G)|$. Fix two integers $s_j$ and $k_i$ such that $x_a\leq s_j=O(\log n)$, $k_i\geq \{y_a,20\}$, and $n \leq \exp^{(i-1)}(\frac{k_i-9}{2}\cdot s_j) \leq n+O(1)$. 
We apply Theorem~\ref{thm:low-MSO}, 
and  compute in  $O(3^{n/\alpha})$ 
    time a tree $T$ and an $S$-MSO formula $\phi$ for $S = ((k_1,s_1),\dots,(k_d,s_d))$ satisfying the following.
    \begin{enumerate}[(i)]
        \item $G$ is a $3$-colorable if and only if $T$ satisfies $\phi$.
        \item $|V(T)| = O(3^{n/\alpha})$ and 
        $|\phi| = O(2^{s_j}+k_i+\alpha^2)$.         \item $k_1 = O(\alpha)$, $k_{i'} = O(1)$ for all $i'\in [d]\setminus \{1,i\}$. 
        \item $s_{j'} = 0$ for all $j' \in [d] \backslash \{j\}$.
    \end{enumerate}

Now, applying the algorithm for \textsc{MSO Testing} on trees, 
\threecoloring\ can be solved in time 

\begin{eqnarray}
R(n)&=& O(3^{n/\alpha})+\exp^{(i)}(T(s_j, k_i)) \cdot (3^{n/\alpha}+2^{s_j}+k_i+\alpha^2)^{O(1)} \nonumber\\
&=& O(3^{n/\alpha})+\exp^{(i)}(a\cdot s_j\cdot k_i) \cdot (3^{n/\alpha}+2^{s_j}+k_i+\alpha^2)^{O(1)} \nonumber
\end{eqnarray}

Notice that, for $a<1/3$, $\exp^{(i-1)}(a \cdot k_i \cdot s_j)\leq 3a \exp^{(i-1)}(\frac{k_i}{3} \cdot s_j)$. Since $k_i\geq 20$, 
$$\exp^{(i-1)}(\frac{k_i}{3}\cdot s_j) \leq \exp^{(i-1)}(\frac{k_i-9}{2}\cdot s_j) \leq n+O(1). $$ 

Therefore, 
\begin{eqnarray*}
    \exp^{(i-1)}(a \cdot k_i \cdot s_j)&\leq&  3a(n+O(1)) \\
\exp^{(i)}(a \cdot k_i \cdot s_j)&\leq&  2^{3a\cdot n+O(1)} 
\end{eqnarray*}

Then, 

\begin{eqnarray}
R(n)&=& O(3^{n/\alpha})+2^{3a\cdot n + O(1)} \cdot (3^{n/\alpha}+2^{s_j}+k_i+\alpha^2)^{O(1)}\nonumber
\end{eqnarray}


Let $c$ be a constant mentioned in Proposition~\ref{prop:ETH3col}. Now, we choose $a$ and $\alpha$ in such a way that $R(n)$ is at most $2^{cn}$. Then, this is a contradiction to Proposition~\ref{prop:ETH3col}. 
\end{proof}


Next, we prove Theorem~\ref{thm:intro-LW2}. 

\begin{proof}[Proof of Theorem~\ref{thm:intro-LW2}]
Suppose there is an algorithm that solves \textsc{MSO Testing} on trees in time $f(k_1,s_1,\dots,k_d,s_d) \cdot 2^{T(s_d, k_i)} \cdot (n+|\phi|)^{O(1)}$ for a function $f(k_1,s_1,\dots,k_d,s_d)$ independent of $s_d$ and $k_i$. 
We want to prove that $T(x,y)=\Omega(xy)$. For the sake of contradiction, assume that $T(xy)=o(xy)$. Then, for any constant $a>0$, there exist $x_a,y_a>0$ such that $T(x,y)< a x y$ for all $x\geq x_a$ and $y\geq y_a$. Let $a<1$  be a constant which we fix later.  

Now, for any constant $c$ we design an algorithm for {\sc $3$-CNF SAT} that runs in time $2^{cn}$ time, where $n$ is the number of variables in the input $3$-CNF formula. This will contradicts ETH. Fix a constant $c$.   
Let $\phi$ be a {\sc $3$-CNF SAT} formula  on $n$ variables. 
Choose two integers $k_i$ and $s_d$ such that $n\leq k_i \cdot s_d \leq n+O(1)$. Now, we apply Theorem~\ref{thm:red-SAT-MSO}, and construct a $(k_i,s_d)$-MSO formula $\psi$ in polynomial time with the following specifications. 

\begin{enumerate}[(i)]
\item For any graph $G$ with $|V(G)|\geq k_i$, $G\models \psi$ if and only if $\phi$ is satisfiable. 
\item $|\psi|=O(|\phi|+k_i^2+s_d)$. 
\end{enumerate}
We may assume that $G$ is a tree on $k_i$ vertices. Also, $\psi$ is an $S$-MSO formula, for $S=((k_1,s_1),\ldots,(k_d,s_d))$, where $k_j=0$ for all $j\neq i$ and $s_j=0$ for all $j\neq d$. 
Now, applying the algorithm for \textsc{MSO Testing} on trees, 
{\sc $3$-CNF SAT} can be solved in time 

\begin{eqnarray}
R(n)&=& 2^{T(s_d, k_i)} \cdot (n+|\psi|)^{O(1)}\nonumber\\
&=&2^{a\cdot s_d\cdot k_i} (n+|\phi|)^{O(1)}\nonumber \\
&=&2^{a (n+O(1))} (n+|\psi|)^{O(1)} \nonumber
\end{eqnarray}
By fixing $a=c$, we get an algorithm for {\sc $3$-CNF SAT} that runs in time $2^{cn}$ time, which contradicts ETH. 
\end{proof}


Next, we prove Theorem~\ref{thm:intro-LW3}. 

\begin{proof}[Proof of Theorem~\ref{thm:intro-LW3}] Let $d,i,j \in \mathbb{N}$ such that $j \leq i \leq d$.
Suppose there is an 
algorithm that solves \textsc{FO Testing} in $f(k_1,\dots,k_d,t) \cdot \exp^{(i)}(T(k_j,t,k_i)) \cdot (n+|\phi|)^{O(1)}$ time for any function $f(k_1,\dots,k_d,t)$ that is independent of $t$, $k_j$, and $k_i$. We want to prove that $T(x,y,z) = \Omega(\min\{x,y\} \cdot z)$. For the sake of contradiction, assume that $T(x,y)=o(\min\{x,y\}\cdot z)$. Then, for any constant $a>0$, there exist $x_a,y_a,z_a>0$ such that $T(x,y)< a \min\{x,y\}\cdot z$ for all $x\geq x_a$ and $y\geq y_a$ and $z\geq z_a$. Let $a<1/3$ and and $\alpha>1$ be two constants which we fix later. 
We have two cases. 

\medskip
\noindent
{\bf Case 1: $i>1$.}
Let $G$ be an input instance of \threecoloring. Let $n=|V(G)|$. Fix two integers $k$ and $t$ such that  $x_a<k$, $y_a<t$, $\max\{z_a,20\}<k'$,  $\min\{k,t\} =O(\log n)$, and $n \leq \exp^{(i-1)}(\min\{k,t\}\frac{k'-9}{2}) \leq n+O(1)$.  
We apply Theorem~\ref{thm:low-MSO-TW1} and 
compute in $O(3^{n/\alpha})$ time a graph $G'$ and an $S$-FO formula $\phi$ for $S = (k_1,\dots,k_d)$ satisfying the following.
    \begin{enumerate}[(i)]
        \item $G$ is a $3$-colorable if and only if $G'$ satisfies $\phi$.
        \item $|V(G')| = O(3^{n/\alpha})$, $\mathbf{tw}(G') \leq t$, and 
        $|\phi| = O(2^{\min\{k,t\}} +k'+\alpha^2)$.
        \item $k_{i'} = O(1)$ for all $i'\in   [d]\setminus \{1,j,i\}$         \begin{itemize}
            \item if $1 = j < i$, then $k_1 = k_j = O(\alpha+k)$ and $k_i = k'$;
            \item if $1 < j < i$, then $k_1 = O(\alpha)$, $k_j = k$, and $k_i = k'$;
            \item if $1 < j = i$, then $k_1 = O(\alpha)$, $k_j = k_i = k+k'$.
        \end{itemize}
    \end{enumerate}

Now applying the 
algorithm for \textsc{FO Testing}, we get a solution for \threecoloring in time 

\begin{eqnarray*}
R(n)&=&O(3^{n/\alpha})+\exp^{(i)}(T(k,t,k')) \cdot (3^{n/\alpha}+2^{\min\{k,t\}} +k'+\alpha^2)^{O(1)}\\
&=&O(3^{n/\alpha})+\exp^{(i)}(a\min\{k,t\}\cdot k')) \cdot (3^{n/\alpha}+n)^{O(1)}
\end{eqnarray*}

Notice that, for $a<1/3$, $\exp^{(i-1)}(a \cdot \min\{k,t\} \cdot k)\leq 3a \exp^{(i-1)}(\frac{1}{3} \min\{k,t\} \cdot k')$. Since $k\geq 20$, 
$$\exp^{(i-1)}(\frac{1}{3} \min\{k,t\} \cdot k') \leq \exp^{(i-1)}(\min\{k,t\} \cdot \frac{k'-9}{2})\leq n+O(1). $$ 

Then, 
$$
R(n)= O(3^{n/\alpha})+2^{3a\cdot n + O(1)} \cdot (3^{n/\alpha}+n)^{O(1)}.
$$

Let $c$ be a constant mentioned in Proposition~\ref{prop:ETH3col}. Now, we choose $a$ and $\alpha$ in such a way that $R(n)$ is at most $2^{cn}$. Then, this is a contradiction to Proposition~\ref{prop:ETH3col}.

\medskip
\noindent
{\bf Case 2: $i=1$.}
In this case, for any constant $c$ we design an algorithm for {\sc $3$-CNF SAT} that runs in time $2^{cn}$ time, where $n$ is the number of variables in the input $3$-CNF formula. This will contradicts ETH. Fix a constant $c$.   
Let $\phi$ be a {\sc $3$-CNF SAT} formula  on $n$ variables. 
Choose two integers $k$ and $t$ such that
$n\leq \min\{k,t\} \cdot k \leq n+O(1)$. Now, we apply Theorem~\ref{thm:red-SAT-FO}, and construct a graph $G$ and a $(k+t)$-FO formula $\psi$ in time $O(2^{\sqrt{n}}+|\phi|)$ with the following specifications. 

\begin{enumerate}[(i)]
\item $G\models \psi$ if and only if $\phi$ is satisfiable. 
\item $V(G)=O(2^{\sqrt{n}})$ and $\mathbf{tw}(G) \leq t$. 
\item $|\psi|=O(|\phi|+k+t)$. 
\end{enumerate}

Since $t\leq k$, $\psi$ is a $(2k)$-FO formula. 
To solve $\psi$, we run the algorithm for \textsc{FO Testing} and its running time will be 

\begin{eqnarray*}
   R(n)&=&2^{T(k,t,k)} (|V(G)|+|\psi|)^{O(1)}\\
   &=& 2^{a\min\{k,t\}\cdot  k} (2^{\sqrt{n}}+n)^{O(1)}\\
   &=& 2^{an+O(1)} (2^{\sqrt{n}}+n)^{O(1)}
\end{eqnarray*}
By fixing $a=c$, we get an algorithm for {\sc $3$-CNF SAT} that runs in time $2^{cn}$ time, which contradicts ETH. 
\end{proof}


Next, we prove Theorem~\ref{thm:intro-LW4}. 

\begin{proof}[Proof of Theorem~\ref{thm:intro-LW4}]Let $d,i \in \mathbb{N}$ such that $i \leq d$. Suppose there is an algorithm solves \textsc{FO Testing} in $f(k_1,\dots,k_d,t) \cdot \exp^{(i)}(T(k_i,t)) \cdot (n+|\phi|)^{O(1)}$ time for any function $f(k_1,\dots,k_d,t)$ that is independent of $t$ and $k_i$. We need to prove that $T(x,y) = \Omega(x \log y)$. 
We have two cases.

\medskip
\noindent
{\bf Case 1: $i>1$.}
For the sake of contradiction, assume that $T(x,y)=o(x\log y)$. Then, for any constant $a>0$, there exist $x_a,y_a>0$ such that $T(x,y)< a x\log y$ for all $x\geq x_a$ and $y\geq y_a$.  
Let $a<1/3$ and and $\alpha>1$ be two constants which we fix later. 

Let $G$ be an input instance of \threecoloring. Let $n=|V(G)|$. Fix two integers $k_i$ and $t$ such that  $\max\{x_a,20\}<k_i$, $y_a<t$,  and $n \leq \exp^{(i-1)}(\frac{k_i-6}{2} \log t) \leq n+O(1)$.  We apply Theorem~\ref{thm:low-MSO-TW22} with $k=k_i$ and compute in $O(3^{n/\alpha})$ time a graph $G'$ and an $S$-FO formula $\phi$ for $S = (k_1,\dots,k_d)$ satisfying the following.
    \begin{enumerate}[(i)]
        \item $G$ is a $3$-colorable if and only if $G'$ satisfies $\phi$.
        \item $|V(G')| = O(3^{n/\alpha})$, $\mathbf{tw}(G') \leq t$, and $|\phi| = O(\alpha^{2}+k_i)$.
        \item $k_i = k$, $k_1 = O(\alpha)$, $k_{i'} = O(1)$ for all $i' \in \{2,\dots,i-1\}$, and $k_{i'} = 0$ for all $i' \in [d] \backslash [i]$.
    \end{enumerate}

Now applying the 
algorithm for \textsc{FO Testing}, we solve \threecoloring\ in time 

\begin{eqnarray*}
R(n)&=&O(3^{n/\alpha})+\exp^{(i)}(T(k_i,t) \cdot (3^{n/\alpha}+ k_i+\alpha^2)^{O(1)}\\
&=&O(3^{n/\alpha})+\exp^{(i)}(a\cdot k_i \log t)) \cdot (3^{n/\alpha})^{O(1)}
\end{eqnarray*}

Notice that, for $a<1/3$, $\exp^{(i-1)}(a \cdot k_i \log t)\leq 3a \exp^{(i-1)}(\frac{1}{3} k_i \log t)$. Since $k_i\geq 20$, 
$$\exp^{(i-1)}(\frac{k_i-6}{2}  \log t) \leq \exp^{(i-1)}(\frac{k_i}{3} \log t)\leq n+O(1). $$ 

Then, 
$$
R(n)= O(3^{n/\alpha})+2^{3a\cdot n + O(1)} \cdot (3^{n/\alpha}+n)^{O(1)}.
$$

Let $c$ be a constant mentioned in Proposition~\ref{prop:ETH3col}. Now, we choose $a$ and $\alpha$ in such a way that $R(n)$ is at most $2^{cn}$. Then, this is a contradiction to Proposition~\ref{prop:ETH3col}.

\medskip
\noindent
{\bf Case 2: $i=1$.} We know that the {\sc Clique} problem (finding a clique of size $k$ on an $n$ vertex graph) can not be solved in time $n^{o(k)}$ unless ETH fails. 
Let $G'$ be a graph on $n$ vertices. We can encode the existence of a clique of size $k$ using the following FO formula. 

$$\exists v_1 \ldots \exists v_k \bigwedge_{1\leq i'<j'\leq k} {\sf adj}(v_{i'},v_{j'})$$

So, if $T(x,y)=o(x\log y)$, then using the algorithm for \textsc{FO Testing}, we can solve {\sc Clique} in time $2^{o(k\log n)}$ (because treewidth of $G'$ is at most $n$). This contradicts ETH. 
\end{proof}

\section{Conclusion and future work} \label{sec-conclusion}
In this paper, we systematically study the time complexity of Courcelle's theorem, towards understanding its dependency on the MSO formula $\phi$ and the treewidth parameter $t$ in a fine-grained way.
We prove (almost) matching upper and lower bounds for the time complexity in terms of the quantifier structure of $\phi$.
We expect this work to be a starting point of the long-term research towards thoroughly understanding the time complexity of Courcelle's theorem.

Below we pose some open questions for future study.
First, the bound in Theorem~\ref{thm-MSO} (and also Theorem~\ref{thm-MSO*}) is only \textit{almost} tight because of the $\hat{O}(\cdot)$-notation.
It is thus natural to ask whether one can replace $\hat{O}(\cdot)$ with $O(\cdot)$ to make the bound \textit{exactly} tight.
Second, we only investigated the time complexity in terms of the \textit{quantifier} structure of $\phi$, while completely ignoring the quantifier-free part of $\phi$.
It might be interesting to study Courcelle's theorem in terms of even more fine-grained structural parameter of $\phi$.
Finally, we only considered treewidth in this paper.
Variants of \textsc{MSO Testing} have been also studied with other width parameters of graphs, e.g., clique-width.
Therefore, it is also worth studying fine-grained bounds for the complexity of (variants of) Courcelle's theorem with those parameters.
\bibliographystyle{alphaurl}
\bibliography{my_bib}

\appendix
\section{Undecidability of polynomial-time testable MSO properties} \label{apx-undecidable}
Consider the following problem: given an MSO formula $\phi$, decide whether there exists a graph satisfying $\phi$ or not.
This problem, which we call \textsc{MSO Realization}, is known to be undecidable~\cite{trakhtenbrot1950impossibility}.
Next, we consider our problem: given an MSO formula $\phi$, decide whether testing $\phi$ on graphs is polynomial-time solvable or NP-hard (assuming P$\neq$NP).
We call this problem \textsc{PTime MSO Testability}.
We show the undecidability of \textsc{PTime MSO Testability} by reducing \textsc{MSO Realization} to it as follows.
Let $\phi$ be an instance of \textsc{MSO Realization}.
We construct two MSO formulas $\phi_1$ and $\phi_2$, where $\phi_1$ expresses that ``$G$ is $3$-colorable and there exists $A \subseteq V(G)$ such that $G[A]$ is connected and satisfies $\phi$'' and $\phi_2$ expresses that ``$V(G)$ can be partitioned into $A$ and $B$ such that $G[A]$ is connected and satisfies $\phi$, and $G[B]$ is $3$-colorable''.

Assume there exists an algorithm $\mathcal{A}$ that solves \textsc{PTime MSO Testability}, and we run it on $\phi_1$ and $\phi_2$.
If $\mathcal{A}$ returns P (i.e., polynomial-time solvable) for both $\phi_1$ and $\phi_2$, we conclude that there does not exist a graph satisfying $\phi$; otherwise, we conclude that there exists a graph satisfying $\phi$.
To see our conclusion is correct, consider three cases.
First, assume there does not exist a connected graph satisfying $\phi$.
In this case, no graph can satisfy $\phi_1$ or $\phi_2$, and therefore, both $\phi_1$ and $\phi_2$ can be tested on graphs trivially in polynomial time by simply returning No.
Thus, $\mathcal{A}$ returns Yes for both $\phi_1$ and $\phi_2$, and our conclusion is correct.
Second, assume there exists a connected graph $G_0$ satisfying $\phi$ which is $3$-colorable.
In this case, testing $\phi_1$ is NP-hard.
Indeed, one can reduce an instance $G$ of \textsc{$3$-Coloring} to the task of testing $\phi_1$ on the graph that is the disjoint union of $G$ and $G_0$.
So $\mathcal{A}$ returns NP for $\phi_1$, and our conclusion is correct.
Finally, assume there does not exist a connected graph satisfying $\phi$ which is $3$-colorable, but there exists a connected graph $G_0$ satisfying $\phi$ that is not $3$-colorable.
In this case, testing $\phi_2$ is NP-hard.
Again, one can reduce an instance $G$ of \textsc{$3$-Coloring} to the task of testing $\phi_2$ on the disjoint union of $G$ and $G_0$.
So $\mathcal{A}$ returns NP for $\phi_2$, and our conclusion is correct.

\section{Other related work} \label{apx-literature}
A large body of work shows that many classic NP-hard problems admit dynamic programs on tree decompositions running in time \(d^{t}\cdot n^{O(1)}\) (for some fixed constant \(d\)) or \(2^{O(t\log t)}\cdot n^{O(1)}\), where \(t\) is the treewidth. A natural question is whether these running times, and in particular the dependence on treewidth \(t\), are optimal. In 2011, Lokshtanov, Marx, and Saurabh~\cite{DBLP:journals/talg/LokshtanovMS18} established conditional lower bounds showing that, for a range of concrete graph problems including \textsc{Vertex Cover}, \textsc{Dominating Set}, \textsc{$q$-Coloring}, and \textsc{Max Cut}, the base of the single-exponential algorithms is optimal: there is no \((d-\varepsilon)^{t}\cdot n^{O(1)}\) algorithm for any fixed \(\varepsilon>0\) unless Strong Exponential Time Hypothesis (SETH) fails.  

In a related result under the Exponential Time Hypothesis (ETH), Lokshtanov, Marx, and Saurabh~\cite{DBLP:journals/siamcomp/LokshtanovMS18} showed that for problems such as \textsc{Cycle Packing} and \textsc{Chromatic Number}, the known \(2^{O(t\log t)}\cdot n^{O(1)}\) algorithms are optimal with respect to the dependence on treewidth \(t\). Equivalently, unless ETH fails, none of these problems admits a \(2^{o(t\log t)}\cdot n^{O(1)}\) algorithm.

However, for several connectivity problems such as \textsc{Hamiltonian Cycle} and \textsc{Steiner Tree}, neither of the earlier works~\cite{DBLP:journals/talg/LokshtanovMS18,DBLP:journals/siamcomp/LokshtanovMS18} settled the dependence on treewidth: algorithms of the form \(2^{O(t\log t)}\cdot n^{O(1)}\) were known, but there were neither ETH lower bounds ruling out \(2^{o(t\log t)}\cdot n^{O(1)}\) nor faster algorithms. This gap was resolved by the seminal work of Cygan et al.~\cite{DBLP:journals/talg/CyganNPPRW22}, which introduced the cut-and-count technique and obtained randomized algorithms with running time \(d^{t}\cdot n^{O(1)}\) for a fixed constant \(d\). These developments catalyzed the field, and the past decade has produced numerous algorithms with optimal dependence on treewidth~\cite{DBLP:conf/stacs/HartmannM25,DBLP:conf/icalp/EsmerFMR24,DBLP:conf/soda/FockeMINSSW23}. 

Several works focus on \emph{classes} of problems and obtain SETH-tight algorithms. 
Prominent examples arise around \textsc{Dominating Set}, yielding optimal bounds for variants such as \textsc{$r$-Domination}~\cite{DBLP:conf/iwpec/BorradaileL16} and \textsc{$(\sigma,\rho)$-Domination}~\cite{DBLP:conf/soda/FockeMINSSW23}, among many others~\cite{DBLP:conf/stacs/HartmannM25,DBLP:conf/icalp/EsmerFMR24}. There has also been an effort to identify fragments of logic such that any problem expressible in the fragment admits an algorithm running in time \(2^{O(t)} n^{O(1)}\). Pilipczuk~\cite{DBLP:conf/mfcs/Pilipczuk11} introduced such a logic that essentially captures “most” problems solvable in time \(2^{O(t)} n^{O(1)}\).

Beyond problem-specific upper and lower bounds, there has been extensive work on identifying \emph{subclasses} of graphs of treewidth \(t\) for which the running time can be bounded by an \emph{elementary} function, in contrast to the non-elementary bounds implied by Courcelle’s theorem. These results show that, on suitably restricted graph classes, many MSO-expressible problems admit algorithms whose running time is bounded by an elementary function. 

Lampis~\cite{DBLP:journals/algorithmica/Lampis12} showed that by restricting to two proper subclasses of bounded-treewidth graphs, bounded vertex cover and bounded max-leaf number, one can obtain sharper meta-theorems. In particular, every FO property is decidable on both classes with singly-exponential dependence on the parameter, and MSO properties are decidable on bounded–vertex-cover graphs with doubly-exponential dependence. Matching lower bounds indicate these dependencies are near-optimal under standard assumptions~\cite{DBLP:journals/algorithmica/Lampis12}. Ganian generalized Lampis’s result from graphs of bounded vertex cover to graphs of bounded twin-cover~\cite{DBLP:conf/iwpec/Ganian11}. In a recent work, Lampis~\cite{DBLP:conf/icalp/Lampis23} showed that every FO-expressible property can be decided with an elementary dependence on the formula size when the input graph has bounded \emph{pathwidth} (as opposed to treewidth).

Gajarský and Hliněný showed the following~\cite{DBLP:journals/corr/abs-1204-5194}. In the universe of colored trees of fixed height, any MSO-definable decision problem with \(r\) quantifiers admits a finite family of kernels, each of size bounded by an elementary function of \(r\) and the number of colors. Consequently, for any graph class \(\mathcal{G}\) with a one-dimensional MSO interpretation in this universe (i.e., shrub-depth \(h\)), (i) MSO model checking runs in time with elementary dependence on the formula size, and (ii) FO and MSO have the same expressive power on \(\mathcal{G}\).  Lampis~\cite{DBLP:journals/corr/abs-1302-4266} complemented this line of work by showing that there is no MSO model-checking algorithm with elementary parameter dependence even on paths (equivalently, unary strings), unless \(\mathrm{E}=\mathrm{NE}\). Finally, it also showed that for MSO on colored trees of depth \(d\), assuming ETH, for every fixed \(d\ge 1\) at least \(d+1\) levels of exponentiation are necessary, implying that the \((d+1)\)-fold exponential algorithm of Gajarský and Hliněný is essentially optimal~\cite{DBLP:journals/corr/abs-1204-5194}.

Bergougnoux, Chekan, and Stamoulis~\cite{bergougnoux2025logicbasedalgorithmicmetatheoremtreedepth} introduced a new logic fragment that captures a broad family of NP-hard problems and shows that any problem expressible in this logic is solvable in time \(2^{O(k)} n^{O(1)}\) and space \(n^{O(1)}\) on graphs equipped with an elimination forest of depth \(k\) (treedepth).
This fragment extends fully-existential \(\mathsf{MSO}_2\) with predicates for generalized neighborhoods, connectivity/acyclicity, and clique checks, yielding a unified model-checking algorithm with these guarantees.

Building on an earlier result of Kreutzer~\cite{DBLP:conf/csl/Kreutzer09}, Kreutzer and Tazari~\cite{DBLP:conf/soda/KreutzerT10} showed that for graph classes with mild closure properties, the presence of graphs with sufficiently large treewidth (already polylogarithmic in \(n\)) precludes polynomial-time model checking for \(\mathsf{MSO}_2\) formulas. In this sense, bounded treewidth forms the effective boundary for tractable \(\mathsf{MSO}_2\) model checking (also see~\cite{DBLP:conf/lics/KreutzerT10}).

\end{document}